\definecolor{Darkblue}{rgb}{0,0,0.4}
\definecolor{Brown}{cmyk}{0,0.61,1.,0.60}
\definecolor{Purple}{cmyk}{0.45,0.86,0,0}
\definecolor{Darkgreen}{rgb}{0.133,0.543,0.133}
\newcommand{\atodoAfter}[1]{}
\newcommand{\aidea}[1]{}
\newcommand{\submit}[1]{}
\definecolor{BrickRed}{rgb}{.72,0,0}
\def\EMPH#1{\emph{\textcolor{BrickRed} {#1}}}
\newcommand{\brick}[1]{{\color{BrickRed}#1}}
\newif\ifdraft
\newcommand{\namedref}[2]{\hyperref[#2]{#1~\ref*{#2}}}
\newcommand{\propref}[1]{\hyperref[#1]{property~(\ref*{#1})}}
\newlength{\Oldarrayrulewidth}
\newcounter{resultrow}
\crefname{resultrow}{Result}{Results}
\Crefname{resultrow}{Result}{Results}
\newtheorem{theorem}{Theorem}
\newtheorem{lemma}{Lemma}
\newtheorem{definition}{Definition}
\newtheorem{claim}{Claim}
\newtheorem{observation}{Observation}
\newtheorem{corollary}{Corollary}
\newtheorem{remark}{Remark}
\newtheorem{inductiveHypothesis}{Inductive Hypothesis}
\newcommand{\poly}{\mathrm{poly}}
\newcommand{\polylog}{\mathrm{polylog}}
\newcommand{\R}{\mathbb{R}}
\newcommand{\E}{\mathbb{E}}
\newcommand{\N}{\mathbb{N}}
\newcommand{\est}{{\rm est}}
\newcommand{\supp}{\mathrm{supp}}
\newcommand{\diam}{\mathrm{diam}}
\newcommand{\Parent}{\mathrm{Parent}}
\newcommand{\first}{{\mathrm{first}}}
\newcommand{\last}{{\mathrm{last}}}
\def\cA{\ensuremath{\mathcal{A}}}
\def\cC{\ensuremath{\mathcal{C}}}
\def\cD{\ensuremath{\mathcal{D}}}
\def\cF{\ensuremath{\mathcal{F}}}
\def\cH{\ensuremath{\mathcal{H}}}
\def\cP{\ensuremath{\mathcal{P}}}
\def\cQ{\ensuremath{\mathcal{Q}}}
\def\cR{\ensuremath{\mathcal{R}}}
\def\cT{\ensuremath{\mathcal{T}}}
\newcommand{\Texp}{\mathsf{Texp}}
\newcommand{\Exp}{\mathsf{Exp}}
\definecolor{forestgreen}{rgb}{0.13, 0.55, 0.13}
\def\eps{\varepsilon}
\DeclareMathAlphabet{\mathpzc}{OT1}{pzc}{m}{it}
\newcommand{\etal}{{\em et al. \xspace}}
\newlength{\dhatheight}
\newcommand {\ignore} [1] {}
\newcommand{\initOneLiners}{\setlength{\itemsep}{0.2pt}\setlength{\parsep }{0.2pt}\setlength{\topsep }{0.2pt}}
\title{Stochastic Embedding of Digraphs	into DAGs}
\author{Arnold Filtser \thanks{Email: \texttt{arnold.filtser@biu.ac.il}. This research was supported by the ISRAEL SCIENCE FOUNDATION (grant No. 1042/22).}\\Bar-Ilan University}
\date{}
\begin{document}
	\maketitle
	\begin{abstract}
		Given a weighted digraph $G=(V,E,w)$, a stochastic embedding into DAGs is a distribution $\mathcal{D}$ over pairs of DAGs $(D_1,D_2)$ such that for every $u,v$: (1) the reachability is preserved: $u\rightsquigarrow_G v$ (i.e., $v$ is reachable from $u$ in $G$) implies that $u\rightsquigarrow_{D_1} v$ or $u\rightsquigarrow_{D_2} v$ (but not both), and (2) distances are dominated: $d_G(u,v)\le\min\{d_{D_1}(u,v),d_{D_2}(u,v)\}$.
		The stochastic embedding $\mathcal{D}$ has expected distortion $t$ if for every $u,v\in V$,
		$$\mathbb{E}_{(D_1,D_2)\sim\mathcal{D}}\left[d_{D_1}(u,v)
		\cdot\mathds{1}[u\rightsquigarrow_{D_1}v]+d_{D_2}(u,v)
		\cdot\mathds{1}[u\rightsquigarrow_{D_2}v]\right]\le t\cdot d_G(u,v)~.$$
		Finally, the sparsity of $\mathcal{D}$ is the maximum number of edges in any of the DAGs in its support.
		Given an $n$ vertex digraph with $m$ edges, we construct a stochastic embedding into DAGs with expected distortion $\tilde{O}(\log n)$ and $\tilde{O}(m)$ sparsity, improving a previous result by Assadi, Hoppenworth, and Wein [STOC 25], which achieved expected distortion $\tilde{O}(\log^3 n)$.
		Further, we can sample DAGs from this distribution in $\tilde{O}(m)$ time.
	\end{abstract}
	\setcounter{secnumdepth}{5}
	\setcounter{tocdepth}{3}
	\begin{multicols}{2}
		\tableofcontents
	\end{multicols}
	\thispagestyle{empty}
	\newpage
	\pagenumbering{arabic}
	\section{Introduction}
	Metric embedding is a powerful algorithmic paradigm with numerous applications.
	The basic idea is to take a complicated metric space and embed its points into a simpler metric space, while approximately preserving the original structure.
	Then, an algorithmic problem can be solved in the simpler embedded space and then pulled back to the original space.
	Here we focus on low-distortion metric embeddings, where the goal is to approximately preserve pairwise distances between points.\footnote{Classically, we say that a metric embedding $f:X\rightarrow Y$, between metric spaces $(X,d_X)$ and $(Y,d_Y)$, has distortion $t$ if $\forall x,y\in X$, $d_X(x,y)\le d_Y(f(x),f(y))\le t\cdot d_X(x,y)$.} 
	There are a variety of celebrated results of this form; a partial list includes:
	embedding general metric spaces into Euclidean space \cite{Bou85,ABN11,BFT24},
	embedding high-dimensional Euclidean space into lower-dimensional Euclidean space (a.k.a. dimension reduction) \cite{JL84,AC09,LN17,NN19},
	embedding the shortest path metric of topologically restricted (undirected) graphs (e.g., planar, minor-free) into $\ell_1$ \cite{Rao99,GNRS04,CJLV08,AFGN22,KLR19,Fil25faces},
	embedding low-treewidth graphs (with additive distortion) \cite{FKS19,FL22tw},
	and embedding general graphs into their sparse/light subgraph (a.k.a. spanners) \cite{ADDJS93,CW18,FS20} (see also the book and survey \cite{NS07,ABSHJKS20}).

	There is a natural tension between the ``simplicity'' of the target metric space and the distortion that one can achieve. 
	A prominent example is the shortest path metric of trees. While tree metrics are very simple (indeed, many NP-complete problems are tractable on trees), embedding a simple metric space such as the shortest path metric of the $n$-cycle $C_n$ requires distortion $\Omega(n)$ \cite{RR98,Gupta01}. 
	Fortunately, there is a way to circumvent this barrier: stochastic metric embeddings (introduced by \cite{AKPW95}). 
	Here, instead of embedding points from $X$ to $Y$ with a worst-case guarantee, there is a distribution over metric embeddings into metric spaces in some metric family $\cF$ (e.g., tree metrics), such that every pairwise distance is preserved only in expectation.\footnote{Formally, a distribution $\cD$ over metric embeddings $f:X\rightarrow Y$, for $(Y,d_Y)\in\cF$, has expected distortion $t$ if $\forall x,y\in X$ it is dominating (i.e., $\forall f\in \supp(\cD)$, $d_X(x,y)\le d_Y(f(x),f(y))$), and $\E_{f\sim\cD}\left[d_Y(f(x),f(y))\right]\le t\cdot d_X(x,y)$. 
	} 
	Fakcharoenphol, Rao, and Talwar \cite{FRT04} (following \cite{Bar96,Bartal98}, see also \cite{Bartal04}) 
	showed that every $n$-point metric space stochastically embeds into a distribution over trees with expected distortion $O(\log n)$. 
	This result was very influential, enjoyed tremendous success, and has numerous applications.\footnote{A partial list of applications of stochastic embeddings include 
		clustering \cite{BGT12},
		network design \cite{KKMPT12,GKR00}, 
		solvers of linear systems \cite{CKMPPRX14},
		oblivious routing \cite{Racke08}, and much more.}
	It is possible to sample such an embedding in an online fashion \cite{IMSZ10,BFU20,BFT24}, dynamically maintain it \cite{FGH21}, and compute it in a distributed manner \cite{BEGL24}. 
	One can also embed an (undirected) graph into a distribution of its spanning trees \cite{EEST08,ABN08-focs,AN19}. 
	Additional follow-ups include stochastic embeddings of minor-free graphs into low-treewidth graphs with additive distortion \cite{CFKL20,FL21,FL22tw}, with multiplicative distortion \cite{CLPP23,CCCLPP25}, and stochastic embeddings of hop-constrained distances \cite{HHZ21,Fil21}.

	\EMPH{Digraphs} (directed graphs) are notoriously harder than undirected graphs and considerably less understood. 
	Indeed, their shortest path distance is not a metric, as it lacks the symmetry condition. 
	They do not admit sparse spanners.\footnote{Consider a complete bipartite digraph $\overrightarrow{K_{n,n}}$ where there are two subsets of vertices $L,R$ of size $n$ each, and the directed edges are $L\times R$. The only reachability-preserving subgraph is $\overrightarrow{K_{n,n}}$ itself with $n^2$ edges.\label{foot:Knn}} 
	Embeddings of digraphs into directed $\ell_1$ (see \cite{CMM06}) have been studied, and it is known that general digraphs require distortion $\tilde{\Omega}(n^{1/7})$ to be embedded into directed $\ell_1$ \cite{CK09} (compared with $O(\log n)$ for embeddings of undirected graphs into $\ell_1$ \cite{Bou85}). 
	On the positive side, planar digraphs embed into directed $\ell_1$ with distortion $O(\log^3 n)$ \cite{KS21} (see also \cite{SSS19}). 
	Reachability-preserving minors were also studied in \cite{GHP20} for both general and planar digraphs.
	
	Assadi, Hoppenworth, and Wein \cite{AHW25} recently introduced stochastic embeddings of digraphs. 
	As the directed analog of trees, \cite{AHW25} chose to use \EMPH{DAGs} (directed acyclic graphs). 
	This is a natural choice, as many problems are much easier on DAGs than on general digraphs. 
	For a digraph $G$, we use \EMPH{$u\rightsquigarrow_{G}v$} to denote that there is a path from $u$ to $v$ in $G$, or in other words, that $v$ is reachable from $u$ in $G$. 
	First, observe that it is impossible to get finite expected distortion by sampling a single embedding into a DAG. 
	Indeed, if $u$ and $v$ belong to the same \EMPH{SCC} (strongly connected component), in a DAG $D$, it is impossible that $u\rightsquigarrow_{D}v$ and $v\rightsquigarrow_{D}u$; thus, the distortion of one of the pairs will be $\infty$. 
	It therefore makes sense to embed a digraph into a distribution over pairs of DAGs $(D_1,D_2)$ such that for every pair $u,v$ with $u\rightsquigarrow_{G}v$, it holds that either $u\rightsquigarrow_{D_1}v$ or $u\rightsquigarrow_{D_2}v$. 
	Next, observe that we cannot hope that the DAGs in the support of the embedding will be sparse. Indeed, consider the complete bipartite digraph $\overrightarrow{K_{n,n}}$ (see \cref{foot:Knn}); one has to use $\Omega(n^2)$ edges.\footnote{\cite{AHW25} showed a much stronger lower bound. In fact, there is a digraph such that if one uses DAGs with $n^{2-o(1)}$ edges, $n^{1-o(1)}$ DAGs are required to preserve the reachability of all pairs. See \Cref{row:4} in \Cref{tab:DAGCover}.\label{foot:LBHesse}} 
	The third observation is that it is completely hopeless to embed into subgraphs. 
	Indeed, consider the directed $n$-cycle $C_n$, where there are directed edges $v_1\rightarrow v_2 \rightarrow v_3\rightarrow\dots\rightarrow v_n\rightarrow v_1$. 
	Then any DAG subgraph can satisfy reachability only for one of the pairs $(v_1,v_n),(v_2,v_1),\dots,(v_n,v_{n-1})$.
	
	\begin{definition}[Stochastic Embedding into DAGs]
		Given a weighted digraph $G=(V,E,w)$, a stochastic embedding into DAGs is a distribution $\cD$ over pairs $(D_1,D_2)$ such that $D_1,D_2$ are DAGs over $V$, and for every $u,v\in V$:
		\begin{itemize}
			\item Dominating distances: $d_G(u,v)\le\min\{d_{D_1}(u,v),d_{D_2}(u,v)\}$
			\item Reachability preservation: If $u\rightsquigarrow_{G}v$, then either $u\rightsquigarrow_{D_1}v$ or $u\rightsquigarrow_{D_2}v$ (but not both).
		\end{itemize}
		We say that $\cD$ has expected distortion $t$ if
		$$\forall u,v\in V\quad\E_{(D_1,D_2)\sim\cD}\left[d_{D_1}(u,v)
		\cdot\mathds{1}[u\rightsquigarrow_{D_1}v]+d_{D_2}(u,v)
		\cdot\mathds{1}[u\rightsquigarrow_{D_2}v]\right]\le t\cdot d_G(u,v)~.$$
		The sparsity of $\cD$ is $\max_{D\in\supp(\cD)}\{|E(D)|\}$, the maximum number of edges in any of the DAGs in the support of $\cD$.
	\end{definition}
	
	Assadi, Hoppenworth, and Wein \cite{AHW25} showed that every digraph with polynomial \EMPH{aspect ratio}\footnote{The aspect ratio of a weighted digraph $G=(V,E,w)$ is $\Phi=\frac{\max\left\{d(u,v)\mid u\rightsquigarrow_{G}v\right\}}{\min\left\{d(u,v)\mid u\rightsquigarrow_{G}v\right\}}$, the ratio between the maximum and minimum (finite) distances in $G$.\label{foot:aspect}} 
	stochastically embeds into DAGs with expected distortion $\tilde{O}(\log^3 n)$ and sparsity $\tilde{O}(m+n)$.\footnote{In the full arXiv version, \cite{AHW25} stated expected distortion $O(\log^3 n\cdot\log\log n)$ and sparsity $O((m+n)\cdot\log^2 n)$ (in the STOC proceedings the expected distortion was $O(\log^4 n)$).} 
	Further, w.h.p., \cite{AHW25} can sample from this stochastic embedding in $\tilde{O}(m)$ time. 
	The main result of this paper is a significant improvement in the expected distortion:
	
	\begin{theorem}[Main]\label{thm:mainExpectedDistortion}
		Given an $n$-vertex digraph $G=(V,E,w)$ with $m$ edges and aspect ratio$^{\ref{foot:aspect}}$ $\Phi$, there is a stochastic embedding into DAGs with expected distortion $O(\log n\cdot\log\log n)$ and sparsity $O((n+m)\log^2(n\Phi))$. 
		Further, for a digraph with polynomial aspect ratio, there is an algorithm running in $\tilde{O}(m)$ time that w.h.p. samples $(D_1,D_2)$ from the stochastic embedding above.
	\end{theorem}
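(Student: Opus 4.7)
The plan is to adapt the FRT hierarchical-decomposition framework to the directed setting, using the freedom of two DAGs to bridge the obstacle posed by strongly connected components.

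First, I would discretize via $O(\log(n\Phi))$ geometric scales $R_i = 2^i$. At scale $i$ I take the subgraph of $G$ induced by edges of weight at most $R_i$ and compute its SCCs, which form a nested laminar family as $i$ grows. For a pair $(u,v)$ with $d_G(u,v)=d$, the two vertices first lie in a common SCC at some scale $i^{\star}$ with $R_{i^{\star}}=\Theta(d)$; the distortion contribution of $(u,v)$ will come from what happens at and above that scale.

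Within each SCC at scale $i$, I would perform a randomized low-diameter partition (a directed analog of Bartal's random partitioning, or a Miller--Peng--Xu shift-based scheme) producing clusters of weak diameter $\tilde{O}(R_i)$, with the FRT-style separation guarantee that a pair $(u,v)$ is cut at scale $i$ with probability $O\bigl((d_G(u,v)/R_i)\log n\bigr)$. Inside each cluster I would sample a random permutation $\pi$ of its vertices and install a ``spine'' respecting $\pi$ in $D_1$ together with the reverse spine in $D_2$; for any intra-cluster pair $u,v$, exactly one of $D_1, D_2$ then carries a directed $u \rightsquigarrow v$ path through the spine. Gluing across scales using inter-SCC bridge arcs from the condensations, global acyclicity follows from the laminar SCC hierarchy, which induces a consistent coarse-to-fine ordering.

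The distortion bound is then obtained by the FRT charging argument: each pair pays proportionally to the cluster diameter at the scale where it is first separated, and the expected payment telescopes to $O(\log n)$. The extra $\log\log n$ factor, I expect, arises from a boosting step inside each cluster: a single random permutation only guarantees a polynomial-in-cluster-size spine stretch, so one repeats $O(\log\log n)$ independent trials of the spine and retains the best, which suffices to drive the within-cluster stretch back down to the FRT scale while paying only a $\log\log n$ multiplicative overhead.

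The main technical obstacle will be the within-cluster analysis: showing that, under a random permutation with $O(\log\log n)$-fold boosting, shortest paths of $G$ contained in a cluster embed into one of $D_1, D_2$ with only $O(\log n \cdot \log\log n)$ expected stretch while preserving the dominating-distance property and the exclusive-reachability requirement. Secondary obstacles are ensuring that the union across scales remains acyclic after all spines and bridge arcs are added (handled by the laminar ordering), attaining sparsity $O((n+m)\log^2(n\Phi))$ by amortizing spine edges across the $O(\log(n\Phi))$ scales with a further $\log(n\Phi)$ factor from the decomposition, and executing the entire construction in $\tilde{O}(m)$ time, which requires reusing SCC structure across scales together with near-linear-time directed low-diameter decomposition primitives.
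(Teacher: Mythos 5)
Your proposal shares with the paper the high-level plan of building a laminar hierarchy of SCC-based clusters and splitting reachability between two DAGs ordered by a topological order and its reverse, but the concrete mechanisms you propose for the two hardest steps do not work, and you omit the step that actually yields $\tilde O(\log n)$.

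First, a random-permutation ``spine'' with $O(\log\log n)$-fold boosting cannot give the within-cluster distance bound. For a fixed pair $u,v$ in a cluster of size $n_C$, a Hamiltonian path respecting a uniformly random permutation places $\Theta(n_C)$ hops between them in expectation, and taking the best of $O(\log\log n)$ independent permutations still leaves $\Omega(n_C/\mathrm{polylog}(n_C))$ hops with high probability. The paper instead installs a deterministic $2$-hop spanner over each contiguous cluster (Lemma~\ref{lem:2hopSpanner}), which guarantees $\min\{d_{D_1}(u,v),d_{D_2}(u,v)\}\le 2\Delta_C$ \emph{deterministically} with $O(n\log n)$ edges per order, so the $\log\log n$ in the bound has nothing to do with boosting the spine. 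Second, you do not address what happens when the shortest $s$--$t$ path $\pi$ passes through several SCCs and a sub-path of $\pi$ lies in a small cluster $C'$ whose diameter is comparable to the sub-path length: cutting inside $C'$ is likely, and then $s\not\rightsquigarrow_D t$ even though no edge of $\pi$ adjacent to $s$ or $t$ was cut. This is precisely why the paper adds the extra ``jump'' edges $(C_u^{\mathrm{last}},C_v^{\mathrm{first}})$ for surviving inter-cluster edges, and why the key lemma is a triangle-inequality for the potential $\alpha(s,t)$ (Lemma~\ref{lem:alphaAppliedRecursively}) rather than a per-pair FRT charging. Third, your claim that the FRT charging ``telescopes to $O(\log n)$'' is unsubstantiated: in the directed setting the decomposition is a directed LDD with loss $\tilde O(\log n)$ per scale, and naively summing over $O(\log\Phi)$ scales gives $\tilde O(\log^2 n)$ (exactly the bound of \cite{AHW25}). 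The paper's $\tilde O(\log n)$ comes from applying the Seymour density trick \emph{once across the entire hierarchy}, carving balls at a rate $\lambda_\mu \propto \log(m/\mu)/\Delta$ determined by local edge density, and carrying out a delicate path-wise (not edge-wise) potential argument (Lemmas~\ref{lem:etaIrhoI}, \ref{lem:NoCut}, \ref{lem:InductionOnPath}); the $\log\log n$ factor is the cost of bucketing ball densities into $O(\log\log m)$ phases in that argument, not of any boosting. Your plan, as written, would at best reproduce the $\tilde O(\log^2 n)$ bound of prior work, assuming the spine and reachability issues were otherwise repaired.
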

	
	Note that the expected distortion in \Cref{thm:mainExpectedDistortion} is independent of the aspect ratio $\Phi$. One can actually apply the algorithm of \cite{AHW25} for arbitrary aspect ratio $\Phi$, obtaining expected distortion $\polylog(n\Phi)$ (and the same sparsity as in \Cref{thm:mainExpectedDistortion}). 
	Consider, for example, the case of quasi-polynomial ($2^{\polylog(n)}$), or even slightly exponential ($2^{n^\eps}$) aspect ratio. Then \Cref{thm:mainExpectedDistortion} can still be applied with expected distortion $\tilde{O}(\log n)$ and reasonable sparsity.

	\subsection{DAG Cover}
	Roughly speaking, in a stochastic embedding into trees we construct a collection of embeddings into trees such that the distortion of every pair is small on average. In contrast, in a tree cover we embed into a collection of trees such that the minimum distortion of every pair is small (while all the trees in the cover are dominating). 
	Tree covers were introduced by Arya \etal \cite{ADMSS95} in the context of Euclidean spaces and have been extensively studied ever since. 
	Every $n$-point metric admits a tree cover with distortion $2k-1$ and $O(n^{\nicefrac1k}\log n)$ trees \cite{TZ05}. In particular, distortion $O(\log n)$ with $O(\log n)$ trees. 
	Tree covers have also been studied in the context of doubling metrics \cite{CGMZ16,BFN22,FGN24,CCLMST24socg,CCLST25}, for planar and minor-free graphs \cite{BFN22,CCLMST23Planar,CCLMST24}, into spanning trees \cite{GKR04,ACEFN20}. There are also Ramsey trees (where every vertex has a tree in the cover approximating its shortest path tree) \cite{BLMN05,BLMN05b,MN07,NT12,ACEFN20,FL21}.
	
	Assadi \etal \cite{AHW25} introduced the notion of DAG cover:
	
	\begin{definition}[DAG Cover]
		Given a weighted digraph $G=(V,E,w)$, a DAG cover is a collection of DAGs $D_1,\dots,D_g$ over $V$ such that for every $u,v\in V$ and DAG $D_i$, $d_G(u,v)\le d_{D_i}(u,v)$. 
		The size of the DAG cover is $g$, the number of DAGs. 
		The DAG cover has distortion $t$ if for every $u,v\in V$ such that $u\rightsquigarrow_{G}v$ it holds that $\min_i d_{D_i}(u,v)\le t\cdot d_G(u,v)$. 
		The sparsity of a DAG cover is the total number of additional edges not in $G$: $\left|\cup_i E(D_i)\setminus E(G)\right|$.
	\end{definition}
	
	Assadi \etal \cite{AHW25} systematically studied DAG covers. See \Cref{tab:DAGCover} for a summary and explanation of their results. 
	By taking a union of the DAGs created by $O(\log n)$ independent samples from the stochastic embedding\footnote{Assadi \etal \cite{AHW25} used the fact that by Markov, when sampling $(D_1,D_2)$ from the stochastic embedding it holds that $\Pr_{(D_{1},D_{2})\sim\cD}\left[\min\left\{ d_{D_{1}}(u,v),d_{D_{2}}(u,v)\right\} >\tilde{\Omega}(\log^{3}n)\cdot d_{G}(u,v)\right]\le\frac{1}{2}$. Thus by taking the union of $O(\log n)$ samples, w.h.p. all pairs are satisfied.}, 
	Assadi \etal \cite{AHW25} obtained an $O(\log n)$-size DAG cover with distortion $\tilde{O}(\log^3 n)$ and sparsity $O((m+n)\cdot\log^3 n)$ (assuming polynomial aspect ratio, see \Cref{row:8} in \Cref{tab:DAGCover}). 
	By repeating the same argument on top of \Cref{thm:mainExpectedDistortion}, we conclude:
	
	\begin{corollary}[DAG Cover]\label{cor:DAGcover}
		Given an $n$-vertex digraph $G=(V,E,w)$ with $m$ edges and aspect ratio$^{\ref{foot:aspect}}$ $\Phi$, there is a DAG cover with $O(\log n)$ DAGs, distortion $O(\log n\cdot\log\log n)$, and sparsity $O((n+m)\log n\log^2(n\Phi))$. 
		Further, for a digraph $G$ with polynomial aspect ratio, there is an $\tilde{O}(m)$-time algorithm that w.h.p. constructs this DAG cover. 
	\end{corollary}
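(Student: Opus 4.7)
The plan is to boost the stochastic embedding of \Cref{thm:mainExpectedDistortion} into a DAG cover by drawing $K = \Theta(\log n)$ independent samples $(D_1^{(i)},D_2^{(i)})$ from the distribution $\cD$ and declaring the DAG cover to be the collection of all $2K$ resulting DAGs. Since every DAG in the support of $\cD$ individually satisfies distance domination $d_G(u,v)\le d_{D}(u,v)$, the cover deterministically satisfies the domination condition required by the definition. It therefore suffices to show that with high probability every reachable pair admits some DAG in the cover attaining the claimed distortion.

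Fix any pair $u,v$ with $u\rightsquigarrow_G v$. By reachability preservation, for each sample exactly one of $u\rightsquigarrow_{D_1^{(i)}}v$ or $u\rightsquigarrow_{D_2^{(i)}}v$ holds, so the quantity inside the expectation in \Cref{thm:mainExpectedDistortion} equals $\min\{d_{D_1^{(i)}}(u,v),d_{D_2^{(i)}}(u,v)\}$ (using the convention $d_{D}(u,v)=\infty$ if $u\not\rightsquigarrow_D v$). Setting $t=c\log n\cdot\log\log n$ for the constant $c$ from \Cref{thm:mainExpectedDistortion}, Markov's inequality gives $\Pr[\min\{d_{D_1^{(i)}}(u,v),d_{D_2^{(i)}}(u,v)\}>2t\cdot d_G(u,v)]\le \tfrac12$. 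Independence across samples yields that the probability that every one of the $K$ samples fails for the pair $(u,v)$ is at most $2^{-K}$. Choosing $K=\Theta(\log n)$ large enough, this is at most $n^{-4}$, and a union bound over the at most $n^2$ reachable ordered pairs gives that w.h.p.\ every reachable pair is served by at least one DAG in the cover at stretch $2t=O(\log n\cdot\log\log n)$, as claimed.

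For sparsity, each of the $2K=O(\log n)$ DAGs in the cover has at most $O((n+m)\log^2(n\Phi))$ edges by \Cref{thm:mainExpectedDistortion}, so $|\cup_i E(D_i)\setminus E(G)|\le O((n+m)\log n\cdot \log^2(n\Phi))$. For the running time, when $\Phi=\poly(n)$, \Cref{thm:mainExpectedDistortion} samples a single pair in $\tilde O(m)$ time w.h.p., and running the sampler $O(\log n)$ times in succession still costs $\tilde O(m)$, since the $\tilde O$ notation absorbs polylogarithmic factors; a union bound over the $O(\log n)$ sampler calls and the $O(n^2)$ pair events preserves the overall success w.h.p.

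The proof is essentially a textbook boosting argument, so there is no substantive obstacle; the only mild subtlety is making sure that Markov is applied to the correct random variable. The definition of expected distortion sums the contributions from $D_1$ and $D_2$ weighted by their reachability indicators, but because reachability preservation guarantees that exactly one indicator is active for each reachable pair, this sum coincides with $\min\{d_{D_1}(u,v),d_{D_2}(u,v)\}$, which is precisely the quantity that $\min_i d_{D_i}(u,v)$ in the DAG cover definition must bound.
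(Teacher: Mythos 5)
Your argument is exactly the paper's: draw $O(\log n)$ independent samples from the stochastic embedding of \Cref{thm:mainExpectedDistortion}, apply Markov's inequality (noting that the expected-distortion quantity equals $\min\{d_{D_1}(u,v),d_{D_2}(u,v)\}$ by reachability preservation), and union-bound over the $O(n^2)$ reachable pairs; the sparsity and runtime bounds follow by summing over the $O(\log n)$ samples. This matches the paper's boosting argument verbatim.
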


	\begin{table}[t]
		\centering
		\begin{tabular}{|l|l|l|l|l|l|}
			\hline
			\textbf{Result} & \textbf{Distortion} & \textbf{\# of DAGs} & \textbf{Sparsity} & \textbf{Ref} & \textbf{Notes} \\
			\hline
			\refstepcounter{resultrow}\label{row:1}(\theresultrow) & $1$ & $n$ & - & Trivial & Subgraph, UB \\
			\hline
			\refstepcounter{resultrow}\label{row:2}(\theresultrow) & Reachability & At least $n$ & - & Trivial & Subgraph, LB \\
			\hline
			\refstepcounter{resultrow}\label{row:3}(\theresultrow) & $1$ & $2$ & $O(n^2)$ & Trivial & UB \\
			\hline
			\refstepcounter{resultrow}\label{row:4}(\theresultrow) & Reachability & $n^{1-o(1)}$ & At most $n^{2-o(1)}$ & Thm. 1.2 & LB \\
			\hline
			\refstepcounter{resultrow}\label{row:5}(\theresultrow) & $1$ & $n^{o(1)}$ & $O\left(n^{2}\cdot\frac{(\log\log n)^{4}}{\log^{2}n}\right)$ & Thm. 1.3 & UB \\
			\hline
			\refstepcounter{resultrow}\label{row:6}(\theresultrow) & Reachability & $2$ & $O(m)$ & Obs. 1.4 & UB \\
			\hline
			\refstepcounter{resultrow}\label{row:7}(\theresultrow) & $1$ & $\Omega(n^{1/6})$ & $O(m^{1+\eps})$ & Thm. 1.6 & LB \\
			\hline
			\refstepcounter{resultrow}\label{row:8}(\theresultrow) & $\tilde{O}(\log^3 n)$ & $O(\log n)$ & $O((m+n)\log^3 n)$ & Thm. 1.5 & UB \\
			\hline
			\rowcolor{gray!20}
			\refstepcounter{resultrow}\label{row:9}(\theresultrow) & $\tilde{O}(\log n)$ & $O(\log n)$ & $O((m+n)\log^3 n)$ & \Cref{cor:DAGcover}  & UB \\
			\hline
		\end{tabular}
		\caption{\small Summary of known results on DAG cover. 
			Results (\ref{row:1})--(\ref{row:8}) are due to \cite{AHW25} (and the references are to theorems therein). 
			\Cref{row:1} is for DAG cover using subgraphs; it is an UB (upper bound) achieved by taking all the shortest path trees from all vertices. 
			\Cref{row:2} is also for a DAG cover using subgraphs where the requirement is only for reachability (finite distortion). The LB (lower bound) is achieved on the directed $n$-cycle $C_n$ with edges $v_1\rightarrow v_2 \rightarrow v_3\rightarrow\dots\rightarrow v_n\rightarrow v_1$, where it is argued that satisfying each of the pairs $(v_{i+1},v_i)$ requires a different DAG. 
			\Cref{row:3} is an easy UB achieving distortion $1$ using only $2$ DAGs, but each with $n^2$ edges. 
			The first DAG $D_1$ is constructed by taking an arbitrary order of the vertices $v_1,v_2,\dots,v_n$ and adding all the edges $(v_i,v_j)$ such that $i<j$ and $v_i\rightsquigarrow_{G}v_j$ with weight $d_G(v_i,v_j)$. 
			The second DAG $D_2$ is constructed in the same way w.r.t. the reversed order $v_n,v_{n-1},\dots,v_1$. 
			\Cref{row:4} is nontrivial, based on the digraph constructed in \cite{Hesse03} and mentioned in \cref{foot:LBHesse}. 
			It implies that with truly sub-quadratic sparsity, the number of DAGs in the cover has to be almost linear. 
			\Cref{row:5} demonstrates that one can slightly go beyond sub-quadratic sparsity while using a sub-polynomial number of DAGs. 
			\Cref{row:6} shows that for reachability, two DAGs of linear sparsity (in the number of edges) are sufficient. 
			In \Cref{row:7}, $\eps>0$ is an absolute constant. The result demonstrates that for exact distance preservation, using nearly linear sparsity will require a polynomial number of DAGs. 
			\Cref{row:8} is obtained by taking all the DAGs created in $O(\log n)$ independent samples from the stochastic embedding of \cite{AHW25}. 
			\Cref{row:9} is the same, but using \Cref{thm:mainExpectedDistortion} instead.
		}\label{tab:DAGCover}
	\end{table}

	\section{Technical Ideas} 
	In this section we present a simplified overview of the techniques and ideas in this paper, as well as in previous work.
	Brief and oversimplified summary: 
	Bartal \cite{Bar96} constructed stochastic tree embeddings with expected distortion $O(\log^2 n)$ using LDDs, and later \cite{Bartal98} improved this to expected distortion $\tilde{O}(\log n)$ using the so-called Seymour trick. 
	Bernstein \etal \cite{BNW25} introduced and constructed directed LDDs with loss parameter $O(\log^2 n)$. 
	Later, Bringmann \etal \cite{BFHL25} improved the loss parameter to $\tilde{O}(\log n)$ using the Seymour trick. 
	Assadi \etal \cite{AHW25} constructed a DAG cover by hierarchically using the LDDs of \cite{BFHL25}. 
	Embedding digraphs into DAGs is much more delicate than embedding graphs into trees, and there are considerable challenges. 
	\cite{AHW25} constructed a stochastic embedding into DAGs with expected distortion $\tilde{O}(\log^3 n)$. The analysis is similar in spirit to \cite{Bar96}: they lose a $\tilde{O}(\log n)$ factor across $O(\log n)$ different scales, and another $\log n$ factor due to technical reasons.
	
	In this paper we introduce a simpler framework to analyze the expected distortion. Using this framework one can show that the stochastic embedding of \cite{AHW25} actually has expected distortion $\tilde{O}(\log^2 n)$. 
	We then show that one can apply the Seymour trick on top of the entire hierarchy of directed LDDs (and not only per scale). 
	Finally, this more sensitive approach requires a delicate analysis of the ball carving process w.r.t. a path (previously, it was analyzed only w.r.t. an edge). 
	Eventually, we obtain the desired $\tilde{O}(\log n)$ expected distortion.

	\subsection{The Origins: Undirected Graphs}
	\paragraph*{Low Diameter Decomposition (LDD) for Undirected Graphs.}
	Given an undirected graph $G=(V,E,w)$, a \EMPH{$(\beta,\Delta)$-LDD} (low diameter decomposition) is a distribution over partitions $\cC$ of $V$ such that every cluster $C$ in every partition $\cC$ has diameter\footnote{The diameter of a cluster $C$ is $\max_{u,v}d_G(u,v)$ — that is, the maximum pairwise distance in the original graph. This is often called weak diameter (strong diameter is $\max_{u,v}d_{G[C]}(u,v)$). We will use the same diameter definition for both graphs and digraphs.\label{foot:diameter}} 
	at most $\Delta$, and for every pair $u,v\in V$, the probability that $u,v$ belong to different clusters is at most $\beta\cdot \frac{d_G(u,v)}{\Delta}$. 
	$\beta$ is called the \EMPH{loss parameter} (sometimes called Lipschitz parameter).
	LDDs are extensively studied, and every $n$-vertex graph admits LDDs with loss parameter $\beta=O(\log n)$ (regardless of $\Delta$), which is also tight \cite{Bar96}. 
	A classic way to construct LDDs is by ball carving. 
	Denote by $A=V$ the set of active vertices. Initially, all vertices are active. 
	Iteratively, while $A\ne\emptyset$ we construct clusters as follows: 
	Take an arbitrary center $x\in A$, sample a radius $R\sim\Exp(\lambda)$ using an exponential distribution with parameter $\lambda=\Theta(\frac{\log n}{\Delta})$, and carve a cluster $C=B_{G[A]}(x,R)$ containing all active vertices at distance at most $R$ from $x$ in the induced graph $G[A]$. We then update $A\leftarrow A\setminus C$ and continue until all vertices are clustered. 
	One can show that, as we carved at most $n$ balls, w.h.p. all the sampled radii were at most $\frac{\Delta}{2}$, and thus all clusters indeed have diameter at most $\Delta$.\footnote{In fact, \cite{Bar96}, as well as this paper, use a truncated exponential distribution: sampling radii conditioned on being at most $\frac{\Delta}{2}$. The analysis is based on the fact that the difference between the two distributions is small.} 
	On the other hand, using the memoryless property of the exponential distribution, once one of $u,v$ joined a cluster, the probability that the other will not join is bounded by $1-e^{-\lambda\cdot d_{G}(u,v)}=O(\log n)\cdot\frac{d_{G}(u,v)}{\Delta}$, and thus we obtain an LDD. 
	
	\paragraph*{Stochastic Tree Embeddings with Expected Distortion $O(\log^2 n)$ \cite{Bar96}.}
	LDDs are a basic routine in many algorithms. 
	In particular, they are used in the construction of stochastic tree embeddings. 
	The framework is as follows: 
	let $\Phi=\poly(n)$ be the maximum distance in $G$ (the minimum distance is $1$). 
	Construct a laminar collection of partitions $\cC_0,\cC_1,\dots,\cC_i,\dots$ where every cluster in $\cC_i$ has diameter at most $\Phi\cdot 2^{-i}$, and $\cC_i$ refines $\cC_{i-1}$. 
	This is done by taking each cluster $C\in \cC_{i-1}$ and applying on it an $(O(\log n),\Phi\cdot 2^{-i})$-LDD to obtain a partition of $C$, and taking all the clusters in all the partitions to obtain $\cC_i$. 
	For a pair of vertices $u,v\in V$, let $\Delta_{u,v}$ be the minimum diameter of a cluster containing both $u,v$. 
	Alternatively, let $i$ be the maximum index such that $u$ and $v$ belong to the same cluster in $\cC_i$, and set $\Delta_{u,v}=\Phi\cdot 2^{-i}$. 
	One can construct a tree $T$ on top of this partition such that for every vertex pair $u,v\in V$, $d_T(u,v)=O(\Delta_{u,v})$. It follows that $\E[d_T(u,v)]=O(1)\cdot\E[\Delta_{u,v}]$. A bound on the expected distortion follows: 
	\begin{align}
		\mathbb{E}[d_{T}(u,v)] & \le\sum_{i=0}^{\log\Phi}O(1)\cdot\Phi\cdot2^{-i}\cdot\Pr\left[u\text{ and }v\text{ are separated in }\cC_{i}\right]\nonumber \\
		& \le O(1)\cdot\sum_{i=0}^{\log\Phi}\Phi\cdot2^{-i}\cdot\beta\cdot\frac{d_{X}(x,y)}{\Phi\cdot2^{-i}}\nonumber\\
		& =\beta\cdot O(\log \Phi)\cdot d_{X}(x,y)=O(\log^{2}n)\cdot d_{X}(x,y)~.\label{eq:Bartal96}
	\end{align}
	
	\paragraph*{Stochastic Tree Embeddings with Expected Distortion $\tilde{O}(\log n)$ \cite{Bartal98}.\footnote{In fact, \cite{Bartal98} is based on a deterministic algorithm and a minimax argument. Nevertheless, the result in \cite{Bartal98} can be constructed using the arguments described here.}}
	The best possible loss parameter $\beta$ for LDDs is $O(\log n)$ \cite{Bar96}. 
	If we pay $O(\log n)$ in each of the $O(\log n)$ scales, as in \cref{eq:Bartal96}, there is no way to avoid $O(\log^2 n)$ expected distortion. 
	The improvement idea is to show that it is possible to pay only $\tilde{O}(\log n)$ over all scales, following the so-called Seymour trick \cite{Seymour95}. 
	The basic intuition is as follows: 
	Suppose that, ``miraculously'', we are guaranteed that no matter how we choose the radii, every carved ball will contain about $\mu\ll m$ edges. In particular, the number of carved balls will only be $\approx\frac{m}{\mu}$. 
	This means that we can sample the radii $R$ much more aggressively, i.e., using parameter $\lambda_\mu=\frac{1}{\Delta}\cdot O(\log \frac{m}{\mu})$, instead of $\lambda=\frac{1}{\Delta}\cdot O(\log m)$. 
	As there are only $\approx\frac{m}{\mu}$ carved balls, it is still very likely that all the sampled radii are below $\frac{\Delta}{2}$, while the probability that a pair $s,t$ belong to different clusters is much smaller: $O(\log  \frac{m}{\mu})\cdot\frac{d_G(s,t)}{\Delta}$. 
	Note that if the pair $s,t$ is separated, the distance between them in the tree will be $d_T(s,t)=O(\Delta)$. 
	Assume by induction on the number of edges $|E|$ that the expected distortion is $O(\log |E|)$. 
	If both $s,t$ join the same cluster, then using the inductive hypothesis their expected distortion will be $O(\log \mu)$. We conclude:
	\begin{align}
		\E\left[d_{T}(s,t)\right] & =\Pr\left[\text{cut}\right]\cdot O(\Delta)+\Pr\left[\text{no cut}\right]\cdot\E\left[d_{T}(s,t)\mid s,t\text{ belong to a cluster with }\approx\mu\text{ edges}\right]\nonumber \\
		& \le O(\log\frac{m}{\mu})\cdot\frac{d_{G}(s,t)}{\Delta}\cdot O(\Delta)+O(\log\mu)\cdot d_{G}(s,t)\nonumber\\
		&=\left(O(\log\frac{m}{\mu})+O(\log\mu)\right)\cdot d_{G}(s,t)=O(\log m)\cdot d_{G}(s,t)~,\label{eq:Bartal98}
	\end{align}
	and thus the induction holds. 
	Removing the ``miraculous'' assumption (that all carved balls have size about $\mu$) costs another $O(\log\log n)$ factor in the analysis. 
	Roughly, this is done in $\log\log m$ phases, where in the $\ell$'th phase we choose the centers and radii so that we are guaranteed that the number of edges in the carved balls lies in the interval $\left[\frac{m}{2^{2^{\ell+1}}},\frac{m}{2^{2^{\ell}}}\right]$ (note that $2^{2^{\ell+1}}=(2^{2^{\ell}})^{2}$).

	\subsection{Directed LDDs and Laminar Topological Order}
	In undirected graphs, an alternative way to define LDDs is the following: a $(\beta,\Delta)$-LDD is a distribution over subsets of edges $S\subseteq E$, such that every connected component in $G\setminus S$ has diameter at most $\Delta$, and for every edge $e\in E$, $\Pr[e\in S]\le \beta\cdot\frac{w(e)}{\Delta}$. 
	Given such a subset $S$, it induces a partition into clusters $\cC$ — the connected components. For a pair of vertices $u,v$, note that if no edge along their shortest path was added to $S$, then $u$ and $v$ belong to the same cluster. By the triangle inequality and the union bound, it follows that the probability that $u,v$ belong to different clusters is at most $\beta\cdot \frac{d_G(u,v)}{\Delta}$.
	
	The directed counterpart of LDDs was defined only recently by Bernstein, Nanongkai, and Wulff-Nilsen \cite{BNW25}, and used for their breakthrough $\tilde{O}(m)$-time SSSP algorithm in graphs with negative edge weights. 
	Bernstein \etal \cite{BNW25} generalized the alternative LDD definition. 
	Given a digraph $G=(V,E,w)$, an \EMPH{$(\beta,\Delta)$-directed LDD} is a distribution over subsets of edges $S\subseteq E$, such that every SCC of $G\setminus S$ has diameter$^{\ref{foot:diameter}}$ at most $\Delta$, and for every edge $e\in E$, $\Pr[e\in S]\le \beta\cdot\frac{w(e)}{\Delta}$.
	
	\begin{wrapfigure}{r}{0.25\textwidth}
		\begin{center}
			\vspace{-30pt}
			\includegraphics[width=0.83\textwidth]{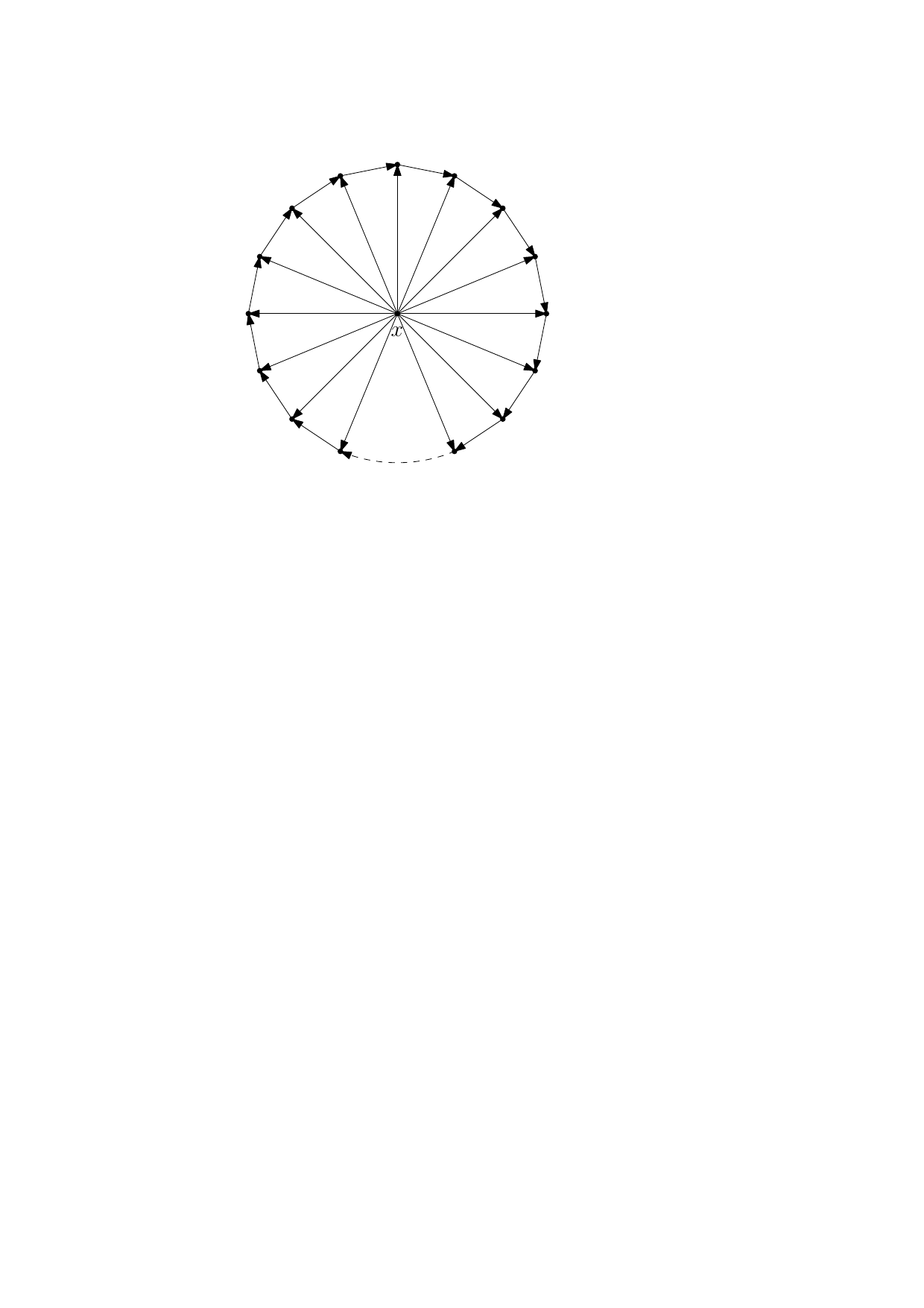}
			\vspace{-5pt}
		\end{center}
		\vspace{-30pt}
	\end{wrapfigure}
	\paragraph*{Directed LDD with $O(\log^2 n)$ loss parameter \cite{BNW25}.}
	Bernstein \etal \cite{BNW25} showed that every $n$-vertex digraph admits a directed LDD with loss parameter $\beta=O(\log^2 n)$. 
	The basic approach is similar to \cite{Bar96} — carve balls with random radii sampled using an exponential distribution with parameter $\lambda=\Theta(\frac{\log n}{\Delta})$. 
	Here one can carve either the \EMPH{out ball} $\brick{B_G^+(x,R)}=\{v\mid d_G(x,v)\le R\}$ and delete (i.e., add to $S$) all the outgoing edges from $B_G^+(x,R)$, or the \EMPH{in ball} $\brick{B_G^-(x,R)}=\{v\mid d_G(v,x)\le R\}$ and delete all the ingoing edges into $B_G^-(x,R)$. 
	Note that once we carve the ball $B_G^*(x,R)$ (for $*\in\{\pm\}$), every SCC is either internal to $B_G^*(x,R)$ or external to $B_G^*(x,R)$. 
	However, there is a significant issue — there is no bound on the diameter of SCCs contained in the ball $B_G^*(x,R)$. Indeed, consider the example on the right where $B_G^+(x,1)$ contains the directed $n$-cycle as a SCC with huge diameter. 
	It follows that, unlike in undirected graphs, the treatment of the vertices inside the ball $B_G^*(x,R)$ is not yet finished. 
	Accordingly, \cite{BNW25} partition the digraph $G$ using balls $B_G^*(x,R)$. Then, inside each ball, every SCC of diameter larger than $\Delta$ is partitioned again and again until no large-diameter SCCs remain. 
	\cite{BNW25} choose the centers $x_j$ in a clever way to ensure that the depth of the recursion is bounded by $O(\log n)$. 
	The analysis is roughly in the spirit of \cite{Bar96}: in a single level of the recursion, the edge $e$ is cut (added to $S$) with probability $\le O(\log n)\cdot\frac{w(e)}{\Delta}$. As the depth of the recursion is $O(\log n)$, implying that there are at most $O(\log n)$ attempts to cut $e$. 
	By the union bound, $e$ joins $S$ with probability $\le O(\log^2 n)\cdot\frac{w(e)}{\Delta}$.
	
	\paragraph*{Directed LDDs with loss $\tilde{O}(\log n)$ \cite{BFHL25}.}
	In a follow-up work, Bringmann, Fischer, Haeupler, and Rustam \cite{BFHL25} constructed directed LDDs\footnote{The main LDD construction in \cite{BFHL25} is deterministic and uses a multiplicative weights argument. Here we refer to their efficient algorithm.} with loss parameter $\beta=\tilde{O}(\log n)$. 
	This is done in a similar spirit to our description of \cite{Bartal98}, using the Seymour trick \cite{Seymour95}. 
	The basic intuition is as follows: 
	suppose that miraculously we are guaranteed that no matter how we choose the radii, every carved ball contains $\approx\mu$ edges. In particular, the number of carved balls, or ``cutting events,'' is only $\approx\frac{m}{\mu}$. 
	Thus we can sample the radii more aggressively, using parameter $\lambda_\mu=\frac{1}{\Delta}\cdot O(\log \frac{m}{\mu})$ (instead of $\lambda=\frac{1}{\Delta}\cdot O(\log m)$). 
	As there are only $\approx\frac{m}{\mu}$ carved balls, it is very likely that all the sampled radii are below $\frac{\Delta}{2}$, while the probability of an edge $e$ being cut in this round is only $O(\log  \frac{m}{\mu})\cdot\frac{w(e)}{\Delta}$.
	
	Suppose by induction that if $e$ currently belongs to a SCC with $|E|$ edges, then the probability of $e$ being cut in the remaining algorithm is $O(\log  |E|)\cdot\frac{w(e)}{\Delta}$. 
	Assuming the induction holds for SCCs with fewer than $m$ edges (and that all the carved balls have $\approx \mu$ edges), the probability that $e$ is cut either now or later (by induction) in the rest of the algorithm is at most $\left(O(\log\frac{m}{\mu})+O(\log\mu)\right)\cdot\frac{w(e)}{\Delta}=O(\log m)\cdot\frac{w(e)}{\Delta}$. 
	The extra $\log\log n$ factor comes from ensuring that all the carved balls have similar size (there are $\log\log m$ phases as in \cite{Bartal98}). 
	
	It is also worth mentioning that very recently, Li \cite{Li25} used a combination of the undirected LDD from \cite{CKR04} and the Seymour trick \cite{Seymour95} to obtain directed LDDs with the same loss parameter $\beta=O(\log n\cdot \log\log n)$.\footnote{The motivation of \cite{Li25} was to improve the runtime. Li's algorithm runs in $m\cdot\tilde{O}(\log^{2}n)$ time, while \cite{BFHL25} takes $m\cdot\tilde{O}(\log^{5}n)$ time.}
	
	\paragraph*{Laminar Directed LDDs $\rightarrow$ Laminar Topological Order.}
	The construction of the stochastic embedding into DAGs (following \cite{AHW25}) is based on a recursive application of LDDs. 
	Suppose for simplicity that the input digraph $G$ is strongly connected, and let $\Delta$ be its diameter. 
	Begin by running a $(\beta,\frac{\Delta}{2})$-directed LDD for $\beta=O(\log n\cdot\log\log n)$. 
	As a result, we get a set of edges $S$, where every SCC in $G\setminus S$ has diameter at most $\frac{\Delta}{2}$. 
	Continue recursively: on every SCC $C$ of $G\setminus S$ with diameter $\Delta_C$, run a $(\beta,\frac{\Delta_C}{2})$-directed LDD to obtain a subset $S_C$ of edges such that every SCC of $G[C]\setminus S_C$ has diameter at most $\frac{\Delta_C}{2}$. 
	We stop once we get a SCC with diameter $<1$, i.e., a singleton. 
	This recursion has logarithmic depth. 
	Consider the recursion tree \EMPH{$\cT$}, containing all the SCCs \EMPH{$\cP$} on which we recursively run the algorithm. Note that given two SCCs $C,C'\in \cP$, they could be either disjoint or one can contain the other. For example, if $C\subseteq C'$, then $C'$ is an ancestor of $C$ in $\cT$. 
	For every pair $s,t\in V$, let \EMPH{$C_{s,t}$} be the minimal cluster in $\cP$ (lowest in $\cT$) containing both $s,t$, and let $\brick{\Delta_{s,t}}=\Delta_C$ be the diameter of this cluster. 
	Let $\brick{\widetilde{S}}=\cup_{C\in\cP}S_C$ be the set of deleted edges throughout the algorithm. 
	Let $D=G\setminus \widetilde{S}$ be the remaining graph at the end of the algorithm. As all the remaining SCCs are singletons, $D$ is a DAG. 
	Let $<_D=(v_1,v_2,\dots,v_n)$ be a topological order over the vertices w.r.t. the DAG $D$. 
	It is possible to guarantee that every SCC $C\in\cP$ appears in $<_D$ contiguously. That is, there are $i<j$ such that $C=\{v_i,v_{i+1},\dots,v_j\}$. 
	We call the pair $(<_D,\cP)$ a \EMPH{laminar topological order} (see \Cref{def:Laminar}).

	\subsection{Stochastic Embedding into DAGs}
	Fix $s,t\in V$, and let $\pi$ be the shortest $s$-$t$ path in $G$. 
	Is it possible to bound the expected distances in $D$? Not at all. 
	Indeed, we cannot upper bound the probability that some edge of $\pi$ is deleted. That is, at some stage of the algorithm, a subpath $\pi'$ of $\pi$ might belong to a SCC $C$ of diameter comparable to the length of $\pi'$. In this case, the event that some edge along $\pi'$ is cut might even be likely. 
	If some edge of $\pi$ is deleted, in $D$ there might be no path from $s$ to $t$ (even if $s<_D t$).
	
	\paragraph*{$2$-hop spanners.}
	Following \cite{AHW25}, we construct two DAGs $D_1,D_2$ based on $<_D$. 
	In $D_1$ we add only edges respecting the topological order $<_D$ (i.e., an edge $(u,v)$ is added only if $u<_D v$), while in $D_2$ we add only edges respecting the reversed order $(v_n,v_{n-1},\dots,v_1)$. 
	A $2$-hop spanner $H$ (see \Cref{lem:2hopSpanner}) over $(v_1,v_2,\dots,v_n)$ is a collection of $O(n\log n)$ edges $(v_a,v_b)$ for $a<b$, satisfying that for every $i<j$, there exists $i\le q\le j$ such that $(v_i,v_q),(v_q,v_j)\in H$. 
	We add to $D_1$ all $(v_i,v_j)$ edges from $H$, as long as $v_i\rightsquigarrow_{G}v_j$ (using $d_G(v_i,v_j)$ as the weight). 
	Note that for every $C\in\cP$, and $v_i,v_j\in C$ with $i<j$, as the vertices of $C$ are contiguous along $<_D$, it follows that the vertex $v_q$ guaranteed by the spanner also belongs to $C$, and in particular 
	$d_{D_1}(v_i,v_j)\le d_{D_1}(v_i,v_q)+d_{D_1}(v_q,v_j)\le 2\Delta_C$. 
	Similarly for $D_2$: for every edge $(v_i,v_j)\in H$ with $i<j$, if $v_j\rightsquigarrow_{G}v_i$, we add the edge $(v_j,v_i)$ to $D_2$. 
	By the same argument, for every $v_i,v_j\in C$ with $i<j$, $d_{D_1}(v_j,v_i)\le 2\Delta_C$. 
	We conclude that for every pair of vertices,
	\begin{equation}
		d_{D_{1}}(s,t)\cdot\mathds{1}[s<_{D}t]+d_{D_{2}}(s,t)\cdot\mathds{1}[s>_{D}t]\le2\cdot\Delta_{s,t}~.\label{eq:SeprationBoundOnDiatance}
	\end{equation}
	
	\begin{wrapfigure}{r}{0.25\textwidth}
		\begin{center}
			\vspace{-30pt}
			\includegraphics[width=0.83\textwidth,page=1]{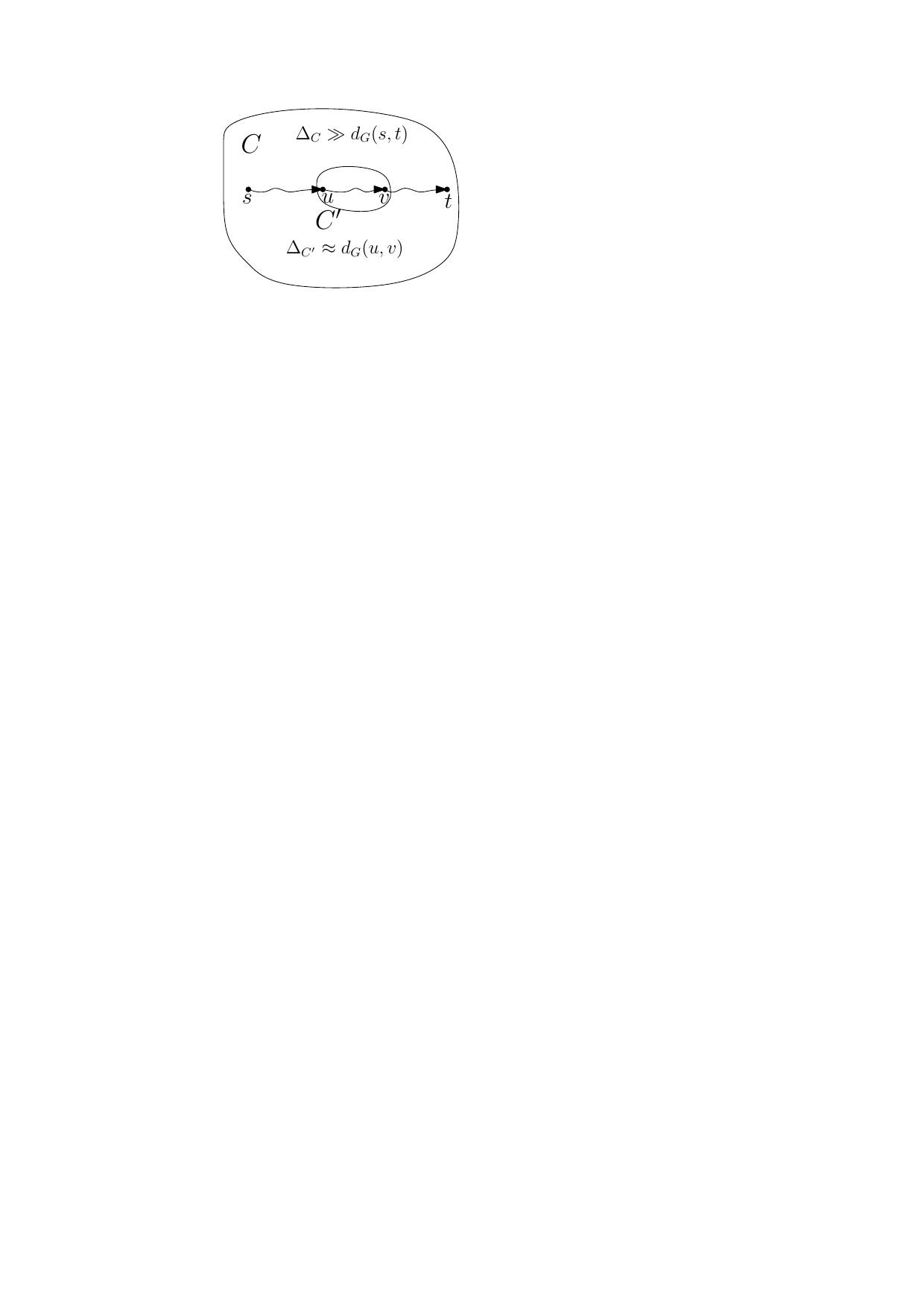}
			\vspace{-5pt}
		\end{center}
		\vspace{-15pt}
	\end{wrapfigure}
	\paragraph*{Adding Additional Edges to $D_1$.}
	Note that the guarantee in \cref{eq:SeprationBoundOnDiatance} is similar to the one used in \cite{Bar96} — the distance is bounded by the diameter of the smallest cluster containing both $s$ and $t$ (specifically $d_T(u,v)\le \Delta_{u,v}$). 
	In \cite{Bar96}, such an inequality, together with the loss parameter $O(\log n)$, implied an $O(\log^2 n)$ bound on the expected distortion. 
	Does \cref{eq:SeprationBoundOnDiatance} imply any bound on the expected distortion here? Not at all. 
	Suppose that $s$ and $t$ belong to the same SCC $C$ with $\Delta_C\gg d_G(s,t)$, and let $\pi$ be the shortest $s$-$t$ path contained in $C$. 
	The algorithm deletes a set of edges $S_C$. 
	As $\Delta_C\gg d_G(s,t)$, it is highly unlikely that some edge along $\pi$ is cut. 
	However, it is still possible (and in some cases even likely) that $s$ and $t$ end up in different SCCs of $G[C]\setminus S_C$ (see the illustration on the right). 
	Here the guarantee of \cref{eq:SeprationBoundOnDiatance} is clearly insufficient, and there is no bound on the distortion. 
	Next, we add all the edges of $D=G\setminus\widetilde{S}$ to $D_1$. 
	Is this enough? Still no. 
	Following the previous case, suppose that some subpath $\pi'=\pi[u,v]$ of $\pi$ is contained in some SCC $C'$ of diameter proportional to $d_G(u,v)$. 
	It might be a likely event that some edge of $\pi'$ is cut (added to $S_{C'}$), and thus $s\not\rightsquigarrow_{D}t$.
	
	\begin{wrapfigure}{r}{0.25\textwidth}
		\begin{center}
			\vspace{-20pt}
			\includegraphics[width=0.83\textwidth,page=2]{fig/DiameterNotEnough}
			\vspace{-5pt}
		\end{center}
		\vspace{-15pt}
	\end{wrapfigure}
	The solution (due to \cite{AHW25}) is to add some additional edges. Specifically, suppose there are two disjoint clusters $C_u,C_v$, and an edge $(u,v)\in C_u\times C_v$ such that $(u,v)\in D$. 
	Let $C_u^\last$ (resp. $C_v^\first$) be the last (resp. first) vertex w.r.t. $<_D$ in $C_u$ (resp. $C_v$). We add to $D_1$ the edge $(C_u^\last,C_v^\first)$. 
	Note that as $\{u\},\{v\}\in\cP$, the edges $(u,C_v^\first)$ and $(C_u^\last,v)$ are also added (see illustration on page \pageref{page:4AddedEdges}). 
	These edges help overcome the issue with $\pi'$ above. 
	Indeed, even if there is no direct path from $u$ to $v$, we can ``jump over'' this obstacle. 
	In the figure, $y$ is the vertex preceding $u$ in $\pi$, and $z$ is the vertex following $v$. $x_{\rm{fi}}= C'^\first$ (resp. $x_{\rm{la}}= C'^\last$) is the first (resp. last) vertex in $C'$ w.r.t. $<_D$. 
	We added to $D_1$ the edges $(y,x_{\rm{fi}}),(x_{\rm{la}},z)$ with weights $d_G(y,x_{\rm{fi}})\le w((y,u))+\Delta_{C'}$, $d_G(x_{\rm{la}},z)\le w((v,z))+\Delta_{C'}$. 
	In addition, due to the $2$-hop spanner, $d_{D_1}(x_{\rm{fi}},x_{\rm{la}})\le 2\Delta_{C'}$. 
	Overall, $d_{D_1}(y,z)\le d_G(y,z)+4\Delta_{C'}$. 
	Thus we can circumvent the edge cut along $\pi'$ while paying only $4\Delta_{C'}$. 
	This is crucial, as now the ``penalty'' ($4\Delta_{C'}$) is proportional to the ``offense'' (cutting an edge from $\pi'$ in the cluster $C'$). 
	This concludes the construction of $D_1,D_2$ (note that there are no additional edges in $D_2$ beyond the $2$-hop spanner).

	\paragraph*{\cite{AHW25}'s Analysis.}
	Other than the different LDDs we are using (to be discussed later), the sampling and construction of the DAGs $(D_1,D_2)$ is largely the same as in \cite{AHW25}. 
	However, the analysis is quite different. 
	\cite{AHW25} break the analysis into two parts. First, they bound $\E\left[d_{D_{2}}(s,t)\cdot\mathds{1}[s>_{D}t]\right]$. This analysis largely follows \cite{Bar96}, and indeed they show that 
	$\E\left[d_{D_{2}}(s,t)\cdot\mathds{1}[s>_{D}t]\right]=\tilde{O}(\log^2 n)\cdot d_G(s,t)$. 
	\cite{AHW25} analysis of $\E\left[d_{D_{1}}(s,t)\cdot\mathds{1}[s<_{D}t]\right]$ is considerably more involved. Morally, it also attempts to follow \cite{Bar96}, but due to technical reasons they accumulate an additional $\log n$ factor, and prove an upper bound of $\tilde{O}(\log^3 n)\cdot d_G(s,t)$. 
	We will not delve into the details of \cite{AHW25}'s analysis, as it would be a diversion (from an already long story). 
	Instead, we will prove that using our framework, one can show that \cite{AHW25}'s stochastic embedding into DAGs actually has expected distortion $\tilde{O}(\log^2 n)\cdot d_G(s,t)$.
	
	\paragraph*{Framework for the analysis: $\alpha(s,t)$.}
	Set $\brick{\alpha(s,t)}:=d_{D_{1}}(s,t)\cdot\mathds{1}[s\le_{D}t]+3\Delta_{st}\cdot\mathds{1}[s\ge_{D}t]$. 
	Note that due to the $2$-hop spanner, in order to bound the expected distortion, it is enough to bound $\E[\alpha(s,t)]$. 
	However, $\alpha$ has a very desirable property — a type of triangle inequality. 
	Suppose that during the execution of the algorithm, the vertices of $\pi$ were partitioned into two SCCs $C_s$ and $C_t$, such that $C_s$ contains the prefix $\pi[s,u]$ of $\pi$, $C_t$ contains the suffix $\pi[v,t]$ of $\pi$, and the edge $(u,v)\in C_s\times C_t$ does not belong to the cut set $\widetilde{S}$. 
	Here necessarily $s<_D t$, and thus $\alpha(s,t)=d_{D_{1}}(s,t)$. 
	The interesting part is that the following inequality holds (see \Cref{lem:alphaAppliedRecursively}):
	\begin{equation}
		\alpha(s,t)\le\alpha(s,u)+d_{G}(u,v)+\alpha(v,t)~.\label{eq:triangleInequalityIntro}
	\end{equation}
	The proof of this inequality largely follows by case analysis in a spirit similar to the discussion we had for the need to add additional edges to $D_1$. See \Cref{fig:cases} for a pictorial illustration.
	
	Using \cref{eq:triangleInequalityIntro}, we can easily prove that \cite{AHW25}'s stochastic embedding has expected distortion $\tilde{O}(\log^2 n)$. 
	Indeed, let $\beta=\tilde{O}(\log n)$ be the loss parameter from \cite{BFHL25}. Assume by induction that for every pair of vertices $s,t$ belonging to the same SCC $Q$ of diameter $\Delta$ it holds that $\E[\alpha(s,t)]\le \log\Delta\cdot 6\beta\cdot d_{G[Q]}(s,t)$. 
	Let $\pi$ be the shortest $s$-$t$ path in $G[Q]$. 
	We now execute \cite{BFHL25} with diameter $\frac\Delta2$, to obtain a cut set $S$. 
	If some edge of $\pi$ belongs to $S$, then by the $2$-hop spanner $\alpha(s,t)\le 3\Delta$. However, the probability of this event is only $\beta\cdot\frac{d_{G[Q]}(s,t)}{\Delta/2}$. 
	Otherwise, no edge of $\pi$ is cut. However, it is still possible that the vertices of $\pi$ are partitioned into different SCCs (if all $\pi$ vertices belong to the same SCC, the proof is only easier). Suppose for simplicity that they are partitioned into exactly two SCCs $C_s,C_t$ as above. Note that $C_u$ and $C_v$ have diameter at most $\frac{\Delta}{2}$. 
	Using the inductive hypothesis, it holds that 
	\begin{align*}
		\E[\alpha(s,t)] & \le\E[\alpha(s,u)]+d_{G}(u,v)+\E[\alpha(v,t)]\\
		& \le\log\frac{\Delta}{2}\cdot6\beta\cdot d_{G[Q]}(s,u)+d_{G}(u,v)+\log\frac{\Delta}{2}\cdot6\beta\cdot d_{G[Q]}(v,t)\\
		& \le\log\frac{\Delta}{2}\cdot6\beta\cdot d_{G[Q]}(s,t)~.
	\end{align*}
	If the path $\pi$ is broken into more than $2$ SCCs, the inequality above still holds, by applying \cref{eq:triangleInequalityIntro} between every two consecutive SCCs (see \Cref{fig:Generalcase} and \cref{eq:alphaSTforPath}). We conclude:
	\begin{align}
		\E[\alpha(s,t)] & \le\Pr\left[\pi\text{ is cut}\right]\cdot3\Delta+\Pr\left[\overline{\pi\text{ is cut}}\right]\cdot\log\frac{\Delta}{2}\cdot6\beta\cdot d_{G[Q]}(s,t)\label{eq:log2DistortionAnalysisIntro}\\
		& \le\beta\cdot\frac{d_{G[Q]}(s,t)}{\Delta/2}\cdot3\Delta+\log\frac{\Delta}{2}\cdot6\beta\cdot d_{G[Q]}(s,t)\nonumber \\
		& =\left(1+\log\frac{\Delta}{2}\right)\cdot6\beta\cdot d_{G[Q]}(s,t)=\log\Delta\cdot6\beta\cdot d_{G[Q]}(s,t)~.\nonumber 
	\end{align}
	Thus we obtain expected distortion $\tilde{O}(\log^2 n)$, as promised.

	\subsection{Getting Expected Distortion $\tilde{O}(\log n)$}
	Bringmann \etal \cite{BFHL25} constructed directed LDDs using the Seymour trick. 
	We now take a closer look at the argument, from a potential function perspective.
	Consider an edge $e=(u,v)$. We argue that the probability that $e$ is cut is bounded by $\tilde{O}(\log m)\cdot \frac{w(e)}{\Delta}$, which matches its initial potential.
	Suppose that we grow an out ball $B=B_G^+(x,R)$ such that we are guaranteed that the number of edges in the ball will be $\approx\mu$ (regardless of the sampled $R$). 
	Accordingly, we sample $R$ using an exponential distribution with parameter $\lambda_\mu=\frac1\Delta\cdot \tilde{O}(\log\frac{m}{\mu})$.
	We think of $R$ as growing gradually (it is helpful to think of a geometric distribution: flip a coin, and as long as it comes up heads, $R$ continues to grow).
	Suppose that $u$ joined $B$, while $v$ is still outside $B$.
	Using memorylessness, the probability that $v$ will not join $B$, and the edge will be cut is at most $\tilde{O}(\log\frac{m}{\mu})\cdot \frac{w(e)}{\Delta}$.
	Morally, $e$ used up here $\tilde{O}(\log\frac{m}{\mu})\cdot \frac{w(e)}{\Delta}$ of its potential.
	If $e$ was fortunate enough and was not cut, it will now belong to a cluster of size $\approx\mu$, and appropriately has a remaining potential of $O(\log\mu)\cdot \frac{w(e)}{\Delta}$.
	If $e$ now belongs to a cluster with diameter greater than $\Delta$, we will continue to carve balls from this cluster, and argue inductively that the overall probability of being cut is 
	$\tilde{O}(\log\frac{m}{\mu})\cdot\frac{w(e)}{\Delta}+\tilde{O}(\log\mu)\cdot\frac{w(e)}{\Delta}=\tilde{O}(\log m)\cdot\frac{w(e)}{\Delta}$.
	However, it might be the case that $e$ does not belong to any SCC, or that the SCC containing $e$ has diameter at most $\Delta$. In this case, the LDD algorithm will stop processing $e$, and it will be left with $\tilde{O}(\log\mu)\cdot\frac{w(e)}{\Delta}$ unused potential.
	
	\paragraph*{Laminar Topological Order with Expected Distortion $\tilde{O}(\log n)$ over edges.}
	Next we will create a laminar topological order with a more local view.
	That is, given a SCC of diameter $\Delta$, we will follow \cite{BFHL25} and carve balls using a distribution proportional to the local density ($\lambda_\mu$), while attempting to construct clusters with diameter $\le\frac\Delta2$. 
	As a result, we will delete some edges, and get SCCs $C_1,\dots,C_k$.
	On each SCC $C_i$ of diameter $\Delta_{C_i}$, we will now continue running the algorithm w.r.t. $\Delta_{C_i}$, regardless of whether $\Delta_{C_i}\le \frac\Delta2$. 
	Continuing in this manner, we will eventually get a DAG, and thus a laminar topological order.
	
	Fix an edge $e=(u,v)$, and let $\Delta_e$ be the diameter of the cluster where it was added to the cut set $\widetilde{S}$, or $\Delta_e=w(e)$ if $e\notin\widetilde{S}$.
	We argue that $\E[\Delta_e]=\tilde{O}(\log m)\cdot w(e)$.
	That is, the expected diameter of the cluster where $e$ is cut (if any) is at most $\tilde{O}(\log m)\cdot w(e)$.
	Suppose that we carve a ball $B=B_G^+(x,R)$ such that we are guaranteed that the number of edges in the ball will be $\approx\mu$ (regardless of the sampled $R$). Accordingly, we sample $R$ using parameter $\lambda_\mu=\frac1\Delta\cdot \tilde{O}(\log\frac{m}{\mu})$.
	We do it gradually as before. Suppose that $u\in B$. The probability that $e\notin B$ (that is, the edge $e$ is cut) is at most 
	$\tilde{O}(\log\frac{m}{\mu})\cdot\frac{w(e)}{\Delta}$.
	If the edge $e$ is not cut, then either $e$ will not belong to any SCC (in which case $\Delta_e=w(e)$), or $e$ belongs to a SCC with $\approx\mu$ edges, in which case, by the inductive hypothesis $\E[\Delta_e]=\tilde{O}(\log \mu)\cdot w(e)$. Note that the inductive hypothesis does not assume anything about the diameter of the SCC containing $e$. We conclude
	\begin{align*}
		\E[\Delta_{e}] & =\Pr\left[e\text{ is cut}\right]\cdot\Delta+\Pr\left[e\text{ is not cut}\right]\cdot\E[\Delta_{e}\mid e\text{ is not cut}]\\
		& \le\tilde{O}(\log\frac{m}{\mu})\cdot\frac{w(e)}{\Delta}\cdot\Delta+\tilde{O}(\log\mu)\cdot w(e)=\tilde{O}(\log m)\cdot w(e)~.
	\end{align*}
	Thus, instead of paying loss $\tilde{O}(\log n)$ in every scale, we are now paying $\tilde{O}(\log n)$ over all scales!
	Can we utilize this new gain to improve the expected distortion? Is it possible to plug it into \cref{eq:log2DistortionAnalysisIntro} to prove that $\E[\alpha(s,t)]\le\tilde{O}(\log m)\cdot d_{G[Q]}(s,t)$?
	A priori, no. 
	The issue is that the triangle inequality is not sufficient to generalize the analysis from an edge to a path.
	If miraculously it so happens that all the balls that we carve are guaranteed to have density $\approx \mu$, then yes. We can say that the probability to cut any edge along $\pi$ is $\tilde{O}(\log\frac{m}{\mu})\cdot\frac{d_G(s,t)}{\Delta}$, and the analysis (using the triangle inequality over $\alpha$) will go through.
	Unfortunately, this is not the case.
	That is, different portions of the shortest $s$-$t$ path $\pi$ might be carved using different densities.
	Indeed, consider the illustration below.
	First $x_1$ carves a ball containing $\mu_1$ edges. This ball contains the suffix $\pi[z_1,t]$. No edge of $\pi$ was cut, however the entire prefix $\pi[s,y_1]$ remains active and ready to be carved.
	After some time, the density parameter was reduced to $\mu_2\ll\mu_1$, and now we carve a ball around $x_2$.
	This ball contains the subpath $\pi[z_2,y_1]$, and so on.
	
	\begin{center}
		\includegraphics[scale=0.8]{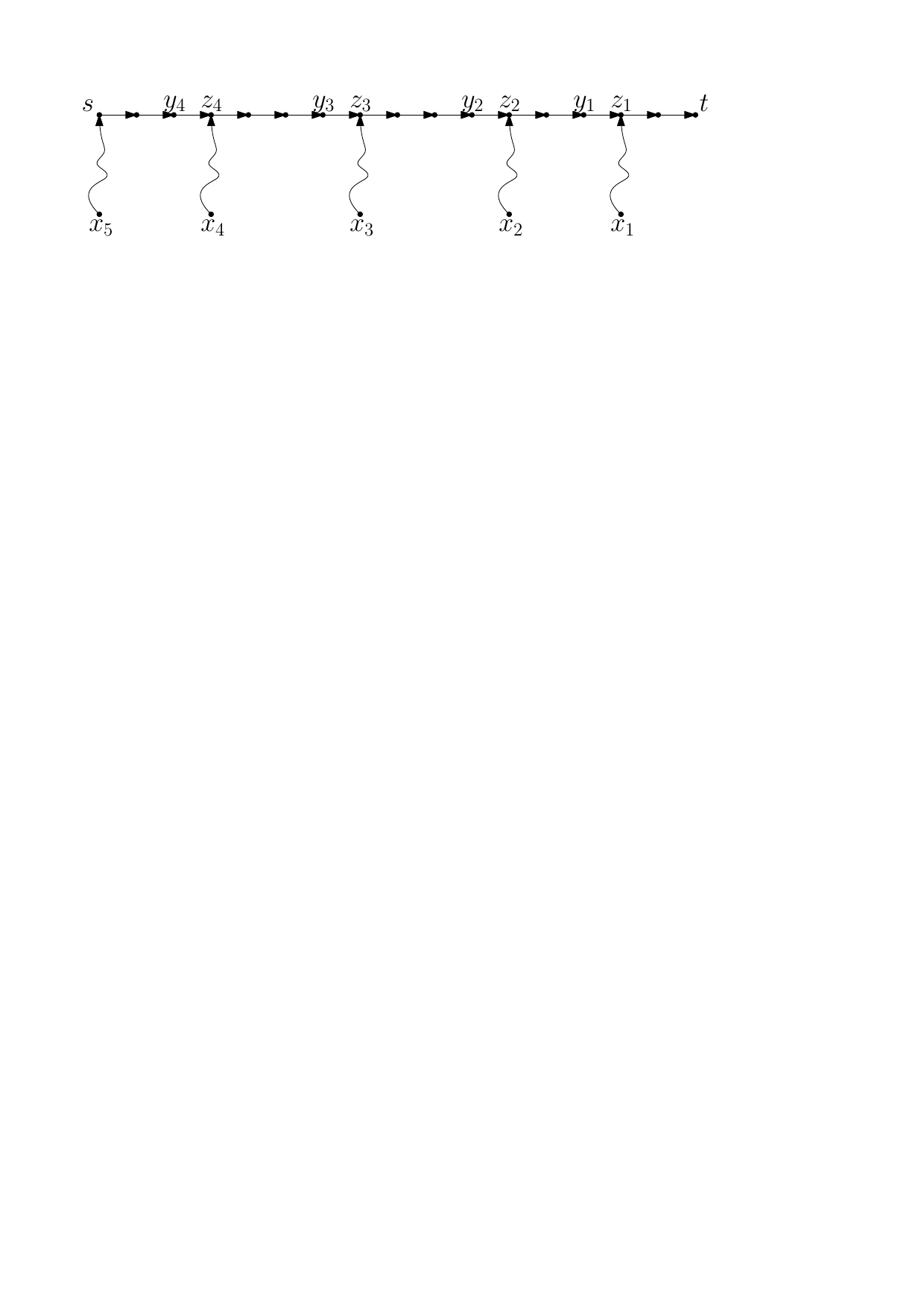}
	\end{center}
	
	\paragraph*{Analyzing w.r.t. a Path.}
	As it turns out, the way to proceed with the analysis is to consider the effects of the carving on a path as a hole (and not only w.r.t. individual edges), which is considerably more delicate.
	Assume by induction that given a SCC $Q$ with $m$ edges, such that $s,t\in Q$, it holds that $\E[\alpha(s,t)]=\tilde{O}(\log m)\cdot d_{G[Q]}(s,t)$. 
	Further, we will assume that the induction holds also in the middle of the carving process. 
	That is, if we already carved some balls, and the set of remaining vertices is $Q'\subseteq Q$, then $\E[\alpha(s,t)]=\tilde{O}(\log m)\cdot d_{G[Q']}(s,t)$ (note that we use the original cardinality $m$).
	Let $\pi$ be the shortest $s$-$t$ path in $G[Q]$.
	Suppose that we carve a ball $B=B_G^+(x,R)$ such that we are guaranteed that the number of edges in the ball will be $\approx\mu$, and thus sample $R$ using parameter $\lambda_\mu=\frac1\Delta\cdot \tilde{O}(\log\frac{m}{\mu})$.
	An edge of $\pi$ will be cut iff $B_G^+(x,R)\cap \pi$ is not a suffix of $\pi$.
	Note that $B_G^+(x,R)$ might intersect $\pi$ in many different places, and the patterns $B_G^+(x,R)\cap \pi$ for different values of $R$ might be quite complicated (see \Cref{fig:PathBrake} for an illustration).
	Let $z\in \pi$ be the closest vertex to $x$. Let $y$ be the predecessor vertex of $z$ in $\pi$.
	For simplicity, let's suppose that all the vertices in the prefix $\pi[s,y]$ are not reachable from $x$ (like in the figure above). 
	Suppose that $R$ grew enough so that $z$ joins $B_G^+(x,R)$.
	By the triangle inequality, if $R$ then grows by an additional $d_G(z,t)$, $B_G^+(x,R)\cap \pi$ will be the suffix $\pi[z,t]$ and no edge of $\pi$ will be cut.
	It follows that the probability of a cut is at most $\tilde{O}(\log\frac{m}{\mu})\cdot\frac{d_G(z,t)}{\Delta}$.
	If there was no cut, then we can apply the induction hypothesis on
	the prefix $\pi[s,y]$ w.r.t. $Q'=Q\setminus B_G^+(x,R)$.
	For simplicity, let's suppose that the sampled ball $B_G^+(x,R)$ is strongly connected, and recall that the number of edges there is about $\mu$.
	We thus can apply the induction hypothesis here as well.
	Using \cref{eq:triangleInequalityIntro} (which holds for this case as well), we conclude:
	\begin{align*}
		\E[\alpha(s,t)\mid\overline{\pi\text{ is cut}}] & \le\E[\alpha(s,y)]+d_{G[Q]}(y,z)+\E[\alpha(z,t)]\\
		& \le\tilde{O}(\log m)\cdot d_{G[Q]}(s,y)+d_{G[Q]}(y,z)+\tilde{O}(\log\mu)\cdot d_{G[Q]}(z,t)\\
		& \le\tilde{O}(\log m)\cdot\left(d_{G[Q]}(s,y)+d_{G[Q]}(y,z)+d_{G[Q]}(z,t)\right)-\tilde{O}(\log\frac{m}{\mu})\cdot d_{G[Q]}(z,t)\\
		& =\tilde{O}(\log m)\cdot d_{G[Q]}(s,t)-\tilde{O}(\log\frac{m}{\mu})\cdot d_{G[Q]}(z,t)~.
	\end{align*}
	Overall, taking into account the possibility of a cut:
	\begin{align*}
		\E[\alpha(s,t)] & \le\Pr\left[\pi\text{ is cut}\right]\cdot3\Delta+\Pr\left[\overline{\pi\text{ is cut}}\right]\cdot\E[\alpha(s,t)\mid\overline{\pi\text{ is cut}}]\\
		& \le\tilde{O}(\log\frac{m}{\mu})\cdot\frac{d_{G}(z,t)}{\Delta}\cdot3\Delta+\tilde{O}(\log m)\cdot d_{G[Q]}(s,t)-\tilde{O}(\log\frac{m}{\mu})\cdot d_{G[Q]}(z,t)\\
		& =\tilde{O}(\log m)\cdot d_{G[Q]}(s,t)~.
	\end{align*}
	
	This concludes the simplified discussion of the ideas and techniques in the paper.

	\section{Preliminaries}
	\EMPH{$\tilde{O}$} notation hides polylogarithmic factors, that is $\tilde{O}(g)=O(g)\cdot\polylog(g)$.
	$\log x=\log_2x$ will be in base $2$ (unless explicitly denoted otherwise), and we will use $\ln$ for $\log_e$.
	For an event $X$, \EMPH{$\mathds{1}[X]$} is an indicator that takes value $1$ if the event $X$ occurs, and $0$ otherwise.
	\EMPH{W.h.p.} (with high probability) stands for probability $1-\frac{1}{n^c}$ for an arbitrarily large constant $c$. We will always be able to govern the constant $c$ and make it as large as needed. Note that if we have polynomially many events that occur w.h.p., then we can say that all of them occur together w.h.p.
	
	A \EMPH{digraph} $G=(V,E,w)$ consists of a set of vertices $V$, a directed set of edges $E\subseteq V\times V$ (with no self-loops), and a positive weight function $w:E\rightarrow\R_{>0}$. By scaling, we will assume w.l.o.g. that the minimum edge weight is $1$.
	We will also use \EMPH{$V(G),E(G)$} to denote the set of vertices and edges of a digraph $G$.
	Given a subset of edges $E$, 
	$\brick{G\setminus S}=(V,E\setminus S,w)$ is the digraph obtained by deleting the set of edges $S$ (while keeping the same weights on the remaining edges).
	For a subset $A\subseteq V$ of vertices, $\brick{G[A]}=(A,E_A=E\cap A\times A,w_{\upharpoonright E_{A}})$ denotes the subgraph \EMPH{induced} by $A$, i.e., the graph with vertex set $A$ where we keep all edges between vertices of $A$.
	Given a digraph $G$, \EMPH{$s\rightsquigarrow_{G}t$} denotes that there is a path from $s$ to $t$ in $G$. Similarly, we will use $s\not\rightsquigarrow_{G}t$ to denote that there is no path from $s$ to $t$ in $G$.
	We say that $u$ and $v$ are \EMPH{strongly connected} if $u\rightsquigarrow_{G}v$ and $v\rightsquigarrow_{G}u$. 
	Note that being strongly connected is an equivalence relation. The equivalence classes are called \EMPH{SCCs} (strongly connected components).
	We say that a cluster $A\subseteq V$ is strongly connected if for every pair $u,v\in A$, $u\rightsquigarrow_{G[A]} v$.
	Note that the SCCs are maximal subsets (w.r.t. inclusion order) that are strongly connected.
	
	Given a path $\pi=(v_0,v_1,\dots,v_k)$, $\brick{\pi[v_i,v_j]}=(v_i,v_{i+1},\dots,v_j)$ denotes the subpath from $v_i$ to $v_j$. Similarly, we use $\pi(v_i,v_j)=(v_{i+1},v_{i+2},\dots,v_{j-1})$ to denote the subpath without endpoints (we might also use $\pi[v_i,v_j)$ and $\pi(v_i,v_j]$ to include only one endpoint).
	\EMPH{$d_{G}$} denotes the shortest path (quasi)metric in $G$, i.e., $d_G(u,v)$ is the minimal weight of a path from $u$ to $v$. 
	If there is no such path ($u\not\rightsquigarrow_{G}v$) then we set $d_G(u,v)=\infty$.
	The \EMPH{diameter} of a cluster $C\subseteq V$ is $\brick{\diam(C,G)}=\max_{u,v\in C}d_G(u,v)$.\footnote{This is often called \emph{weak} diameter. A related notion is the \emph{strong} diameter of a cluster $C$, defined as $\max_{u,v \in C}d_{G[C]}(u,v)$. See also \cref{foot:diameter}.} 
	Note that the diameter of a cluster is finite iff all vertices of $C$ belong to the same SCC.
	The \EMPH{aspect ratio}$^{\ref{foot:aspect}}$ is 
	$\Phi=\frac{\max\left\{d(u,v)\mid u\rightsquigarrow_{G}v\right\}}{\min\left\{d(u,v)\mid u\rightsquigarrow_{G}v\right\}}$, the ratio between the maximum and minimum (finite) distances in $G$.
	Given a center $v\in V$ and a radius $r\ge0$, the \EMPH{outgoing ball} $\brick{B^{+}_{G}(v,r)}=\{u\in V\mid d(v,u)\le r\}$ is the set of vertices at distance at most $r$ from $v$, while the \EMPH{ingoing ball} $\brick{B^{-}_{G}(v,r)}=\{u\in V\mid d(u,v)\le r\}$ is the set of vertices whose distance to $v$ is at most $r$. For $*\in\{+,-\}$, \EMPH{$|B^{*}_{G}(v,r)|$} denotes the number of edges fully contained in $B^{*}_{G}(v,r)$.
	For a set of vertices $C\subseteq V$, 
	$\brick{\delta^+(C)}=\left\{(u,v)\in E\mid u\in C,v\notin C\right\}$ denotes all edges with the source endpoint in $C$ and the target endpoint outside $C$. Similarly, $\brick{\delta^-(C)}=\left\{(u,v)\in E\mid u\notin C,v\in C\right\}$.
	
	\subsection{DAGs and Laminar Topological Order}
	A \EMPH{DAG} (directed acyclic graph) is a digraph not containing any directed cycle. In particular, the set of SCCs is the set of singleton vertices.
	Given a DAG $D=(V,E,w)$, a \EMPH{topological order} is a total order $<_{\rm TO}$ over $V$ such that for every edge $(u,v)\in E$, $u<_{\rm TO}v$.
	Note that this total order is not necessarily unique.
	Given a digraph $G$, one can construct a topological order over the SCCs. That is, for two SCCs $C_1,C_2$, if there is a path from a vertex in $C_1$ to a vertex in $C_2$, then $C_1$ must appear in the topological order before $C_2$.
	Given a digraph $G$, which is not necessarily a DAG, it is often useful to still order its vertices in a topological order w.r.t. some DAG contained in $G$.
	
	A key notion in previous work on stochastic embedding into trees is the laminar partition: a collection of partitions of $V$, $\cP_0,\cP_1,\dots,\cP_i,\dots$, such that all the clusters in $\cP_i$ have diameter at most $2^i$, and $\cP_i$ refines $\cP_{i+1}$.
	The directed analog we introduce here is the notion of laminar topological order. This is essentially a laminar partition (implicitly) equipped with an order. Here all the clusters are strongly connected, and the final order over the singleton clusters is a topological order w.r.t. some DAG contained in $G$.
	
	\begin{definition}[Laminar Topological Order]\label{def:Laminar}
		Given a digraph $G=(V,E,w)$, a laminar topological order of $G$ is a pair $(<_D,\cP)$ where $<_D$ is a total order over $V$, and $\cP\subseteq P(V)$ is a set of clusters with the following properties:
		\begin{enumerate}
			\item\label{req:singleton} Singleton clusters: For every vertex $v\in V$, $\{v\}\in\cP$.
			\item\label{req:hierarchy} Hierarchy: For every two clusters $C_{1},C_{2}\in\cP$, either the clusters
			are disjoint ($C_{1}\cap C_{2}=\emptyset$), or one contains the other
			($C_{1}\subseteq C_{2}$ or $C_{2}\subseteq C_{1}$).
			\item\label{req:continuity} Continuity: Every cluster $C\in\cP$ is contiguous w.r.t. $<_D$. That is, for every $u,v\in C$, for every $x$ such that $u<_D x<_D v$, it holds that $x\in C$.
			\item\label{req:connectivty} Connectivity: For every cluster $C\in\cP$, $G[C]$ is strongly connected.
		\end{enumerate}
	\end{definition}
	
	Given a laminar topological order $(<_D,\cP)$, and a cluster $C\in\cP$, we denote $\Delta_C=\diam(C,G)=\max_{u,v\in C}d_G(u,v)$. 
	Additionally, we denote by $C^\first$ and $C^\last$ the first and last vertices in the cluster (w.r.t. the order $<_D$).
	Given a pair of vertices $u,v$, we denote by $\Parent_{\cP}(u,v)$ the minimal cluster (w.r.t. inclusion order) containing both $u,v$. 
	If there is no such cluster then $\Parent_{\cP}(u,v)=\emptyset$.
	When the laminar topological order is clear from the context, we use the notation
	$C_{u,v}=\Parent_{\cP}(u,v)$, and denote by $\Delta_{u,v}=\Delta_{C_{u,v}}=\Delta_{\Parent_{\cP}(u,v)}$ the diameter of the minimum cluster containing $u,v$. If $\Parent_{\cP}(u,v)=\emptyset$, set $\Delta_{u,v}=\infty$.
	
	\section{Algorithm}\label{sec:algortihm}
	In order to simplify the proof, we will begin by proving a weaker version of \Cref{thm:mainExpectedDistortion}, where we assume polynomial aspect ratio and ignore the running time.
	\begin{theorem}[Bounded Aspect Ratio]\label{thm:mainExpectedDistortionAspectRatio}
		Given an $n$-vertex digraph $G=(V,E,w)$ with $m$ edges and polynomial aspect ratio, there is a stochastic embedding into DAGs with expected distortion $O(\log n\cdot\log\log n)$ such that every DAG in the support has at most $O(n\log n+m\log^2n)$ edges.
	\end{theorem}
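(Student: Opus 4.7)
The plan is to follow the blueprint laid out in the technical overview. First, build a laminar topological order $(<_D, \cP)$ by recursively applying the directed LDD of \cite{BFHL25} (with loss $\beta = O(\log n \cdot \log\log n)$): on each SCC $Q$ of diameter $\Delta_Q$ encountered in the recursion, run the LDD targeting diameter $\Delta_Q/2$, then recurse on every resulting sub-SCC using \emph{its own} diameter as the new target, until only singletons remain. Let $<_D$ be a topological order of the terminal DAG chosen so that every recursive SCC appears contiguously (satisfying requirement~\ref{req:continuity}). Construct $(D_1, D_2)$ as described in the overview: include a 2-hop spanner $H$ on $<_D$ in $D_1$ and on its reverse in $D_2$, and augment $D_1$ with the ``jump'' edges $(C_u^{\last}, C_v^{\first})$ for every edge $(u,v) \in G \setminus \widetilde{S}$ separating two distinct clusters of $\cP$.

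Next, set $\alpha(s,t) := d_{D_1}(s,t) \cdot \mathds{1}[s \le_D t] + 3\Delta_{s,t} \cdot \mathds{1}[s \ge_D t]$, and observe (using $d_{D_2}(s,t) \le 2\Delta_{s,t}$ from the reverse-order 2-hop spanner) that bounding $\E[\alpha(s,t)]$ for both orientations controls the expected distortion up to constants. The key structural lemma, the triangle-like inequality \cref{eq:triangleInequalityIntro}, would be proved by case analysis on the cluster structure of $s, u, v, t$ in the hierarchy $\cP$, with the jump edges providing the $O(\Delta_{C'})$-bypass cost sketched in the overview when an edge of $\pi$ crosses from one cluster to another.

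The core of the proof is an induction on the edge count $m$ of the current SCC $Q$ showing $\E[\alpha(s,t)] \le \tilde{O}(\log m) \cdot d_{G[Q]}(s,t)$. Importantly, the inductive hypothesis must be strengthened to hold for an arbitrary partially-carved sub-SCC $Q' \subseteq Q$ while keeping $m$ as the \emph{original} edge count, so that the density-phase accounting of \cite{Bartal98,BFHL25} applies uniformly across scales. In a phase where every carved ball contains $\approx \mu$ edges, sample radii with parameter $\lambda_\mu = \tilde{O}(\log(m/\mu))/\Delta$; for a carved ball $B = B^+_G(x,R)$, let $z \in \pi$ be the closest vertex to $x$ and $y$ its predecessor on $\pi$. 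Prove that the probability that some edge of $\pi$ is cut is at most $\tilde{O}(\log(m/\mu)) \cdot d_G(z,t)/\Delta$, using memorylessness together with the fact that once $z \in B$, a growth of $R$ by $d_G(z,t)$ engulfs the entire suffix $\pi[z,t]$. Conditioned on no cut, apply \cref{eq:triangleInequalityIntro} and the inductive hypothesis to $\pi[s,y]$ (in the residual SCC) and to $\pi[z,t]$ (in an SCC with $\approx \mu$ edges) to obtain a conditional bound with a $-\tilde{O}(\log(m/\mu)) \cdot d_{G[Q]}(z,t)$ surplus term that cancels the cut contribution, thereby closing the induction over all scales at total loss $\tilde{O}(\log m)$, in contrast to the per-scale argument of \cref{eq:log2DistortionAnalysisIntro}.

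The main obstacle is the path-wise cutting bound. Unlike the edge case, $B^+_G(x,R) \cap \pi$ can be a non-contiguous pattern whose shape changes non-monotonically in $R$ (\Cref{fig:PathBrake}), and different portions of $\pi$ can be carved in phases of different density $\mu$, as in the $\mu_1 / \mu_2$ schematic. Handling this requires defining $z$ with respect to the orientation of $\pi$ rather than globally, and tracking a joint residual potential that survives transitions between density phases so that the inductive hypothesis can still be applied to each resulting subpath in its own ambient SCC. The density-phase bookkeeping of \cite{Bartal98} then contributes the extra $O(\log\log n)$ factor, yielding the claimed $O(\log n \cdot \log\log n)$ expected distortion. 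The sparsity bound $O(n \log n + m \log^2 n)$ follows routinely: the 2-hop spanner contributes $O(n \log n)$, while each edge of $G$ generates at most one jump edge per level of the hierarchy, and the LDD construction gives $O(\log n)$ levels each of sparsity $O(m \log n)$.
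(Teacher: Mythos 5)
Your proposal follows the same overall strategy as the paper: the same definition of $\alpha(s,t)$, the same triangle-type inequality proved by case analysis on the cluster hierarchy, the same strengthened inductive hypothesis that keeps $m$ as the \emph{original} edge count while the SCC shrinks mid-carve, and the same observation that one must bound the probability of a cut over a whole sub\emph{path} rather than a single edge (with the surplus term $-\tilde{O}(\log\tfrac{m}{\mu})\cdot d_G(z,t)$ closing the induction). The sparsity accounting is also essentially the paper's. One worth-noting subtlety that your sketch does not make fully explicit: you should \emph{not} use the full LDD of \cite{BFHL25} as a black box, since that LDD keeps re-carving any SCC whose diameter still exceeds $\Delta/2$ and thereby pays the $\tilde{O}(\log n)$ loss once per scale. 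The paper instead runs a \emph{single} carving pass (\Cref{alg:DigraphPartition}) per cluster of $\cP$, so the SCCs in a carved ball may have diameter much larger than $\Delta/2$ — but crucially they have $\approx m/\mu_\ell$ edges, which is exactly what drives the Seymour-style amortization. Your phrasing ``run the LDD\ldots{} then recurse on every resulting sub-SCC using its own diameter'' is consistent with this intent, but only if ``the LDD'' is read as the single carving pass.

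The step where your proposal remains under-specified is precisely the paper's main technical contribution. You note that $B^+_G(x,R)\cap\pi$ can be a non-contiguous pattern that changes non-monotonically in $R$, and you gesture at ``defining $z$ with respect to the orientation of $\pi$'' and ``a joint residual potential.'' The paper's actual resolution is more structured: it defines a \emph{sequence} of critical radii and subpath endpoints $(z_i,\rho_i,\eta_i,s_i,t_i)_{i=1}^k$ by peeling off suffixes of $\pi$ as $R$ grows, proves the key geometric inequality $\eta_i-\rho_i\le d_G(s_i,t_i)$ (\Cref{lem:etaIrhoI}, via the ``staircase'' of local minima $y_1,\dots,y_l$ and \Cref{clm:PicksInsideIntervalBehaviour}), and then does a \emph{backward} induction over the intervals (\Cref{lem:InductionOnPath}) in which the cut probability over $[\rho_{q+1},\eta_{q+1})$ is precisely what the surplus $-c L\ln\mu_\ell\cdot d_G(s_q,t)$ absorbs. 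Without this decomposition and the inequality $\eta_i-\rho_i\le d_G(s_i,t_i)$ the memorylessness bound you invoke does not directly apply, since the ``window of danger'' is not a single interval. Finally, the paper also has to pass from the exact exponential to the truncated exponential (the $e^{\delta/\mu_\ell}$ correction at \cref{eq:ExpandingToAllValuesOfR}), which is why the potential $f$ carries an explicit $(\ell,j)$-dependent exponent; you should not rely on memorylessness alone.
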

	
	\paragraph*{Organization.}
	In \Cref{sec:algortihm} we describe the algorithm sampling DAGs $(D_1,D_2)$ satisfying the properties of \Cref{thm:mainExpectedDistortionAspectRatio}, and prove many of its basic properties.
	\Cref{sec:ExpectedDistortion} is then devoted to analyzing the expected distortion.
	Later, in \Cref{sec:RemoveAspectRatio} we remove the aspect ratio assumption.
	Finally, in \Cref{sec:runtime} we explain how to efficiently implement the algorithm, concluding \Cref{thm:mainExpectedDistortion}.
	
	\subsection{Laminar Topological Order Algorithm}
	Our algorithm \texttt{Laminar-Topological-Order} (\Cref{alg:LaminarPartition}) is a simple and natural recursive algorithm. It receives as input a digraph $G=(V,E,w)$ and a cluster $U$ such that $G[U]$ is strongly connected. It returns a laminar topological order of $G[U]$.
	In paragraph \ref{para:GeneralDigraph} below, we describe how to treat a general digraph (not necessarily strongly connected).
	
	The stopping condition of the algorithm is the case where $U$ is a singleton cluster $\{v\}$.
	In this case the algorithm returns the trivial laminar topological order $\left((v),\{U\}\right)$. 
	Otherwise, the algorithm makes a call to the \texttt{Digraph-Partition} algorithm, giving it the diameter $\Delta$ of the cluster $U$ (w.r.t. distances in $G$) and the number of edges $m$ in $G[U]$. Note that \texttt{Digraph-Partition} does not receive the original digraph $G$.\footnote{We send to the \texttt{Digraph-Partition} algorithm the number of edges $m$ of $G[U]$, even though it can be computed locally. This is done for technical reasons. Specifically, during the inductive proof of the expected distortion, it is useful to think of $m$ and $\Delta$ as given parameters.} 
	We get back a subset \EMPH{$S$} of edges (which we call the \EMPH{cut set}). 
	We then remove $S$ from $G[U]$ to obtain the SCCs $\cC$.
	We order these SCCs topologically. 
	Specifically, for $C,C'\in\cC$, if in $G[U]\setminus S$ there is a path from a vertex in $C$ to a vertex in $C'$, then $C<C'$. This defines a partial order over the SCCs $\cC$, which we arbitrarily complete to a total order; denote the SCCs $C_1,C_2,\dots,C_k$ w.r.t. the obtained total order.
	In \Cref{cor:progress}, we guarantee that for every SCC $C_i$ either (1) $|E(G[C_i])|\le\frac m2$, or (2) $\diam(G,C_i)\le\frac\Delta2$. 
	Thus in every SCC we make significant progress. In particular, it follows that the depth of the recursion is bounded by $O(\log (m\Delta))=O(\log n)$.
	
	On every SCC $C_i$ we recursively run the algorithm \texttt{Laminar-Topological-Order} to obtain a laminar topological order $(<_{D_i},\cP_i)$ of $G[C_i]$.
	The algorithm returns the union of all these laminar topological orders.
	Formally: $\cP=\{U\}\cup\bigcup_{i=1}^k\cP_i$, and $<_D$ denotes the order obtained in a lexicographic manner. That is, for $u\in C_i,v\in C_j$, $u<_Dv$ if either $i<j$ or $i=j$ and $u<_{D_i}v$.
	
	\begin{algorithm}[p]
		\caption{$(<_D,\cP)=\texttt{Laminar-Topological-Order}(G=(V,E,w),U)$}\label{alg:LaminarPartition}
		\DontPrintSemicolon
		\SetKwInOut{Input}{input}\SetKwInOut{Output}{output}
		\Input{Digraph $G=(V,E,w)$, subset $U\subseteq V$ such that $G[U]$ is strongly connected}
		\Output{Laminar topological order $(<_D,\cP)$ of the digraph $G[U]$}
		
		\If{$|U|=1$}{
			Let $v$ be the single vertex in $U$\;
			\Return{$\left((v),\{U\}\right)$}
		}
		$\cP\leftarrow \{U\}$\;
		$\Delta\leftarrow\diam(G,U)$\label{line:diamCompute}\;
		$m\leftarrow|E(G[U])|$\;
		$S=\texttt{Digraph-Partition}(G[U],\Delta,m)$\;
		Let $C_1,\dots,C_k$ be all the SCCs of $G[U]\setminus S$ topologically ordered \label{line:SCCpartition}\tcp*{s.t. if there is a path from a vertex in $C_j$ to a vertex in $C_{j'}$, then $j<j'$}
		\For{$i=1$ to $k$\label{line:LaminarFor}}{
			$(<_{D_i},\cP_i)=\texttt{Laminar-Topological-Order}(G,C_i)$\;
			$\cP\leftarrow\cP\cup\cP_i$\;
		}
		Let $<_D$ be the total order obtained by concatenating $<_{D_1}\circ<_{D_2}\circ\dots\circ<_{D_k}$\tcp*{that is, for $u\in C_i,v\in C_j$, $u<_Dv$ if either $i<j$ or $i=j$ and $u<_{D_i}v$}
		\Return{$(<_D,\cP)$}
	\end{algorithm}
	
	\begin{algorithm}[p]
		\caption{$S=\texttt{Digraph-Partition}(G=(V,E,w),\Delta,m)$}\label{alg:DigraphPartition}
		\DontPrintSemicolon
		\SetKwInOut{Input}{input}\SetKwInOut{Output}{output}
		\Input{Digraph $G=(V,E,w)$, scale parameter $\Delta>0$, edge parameter $m\ge|E|$}
		\Output{Cut set $S\subseteq E$, remaining heavy vertex set $\cR$}
		\tcp{The algorithm uses parameters $L,r_\ell,\mu_\ell,\lambda_\ell$. These parameters are determined by $m$ and $\Delta$. See the verbal description of the algorithm for their values.}
		$A\leftarrow V$,~$S\leftarrow\emptyset$\;
		
		\For{$\ell=L,L-1,\dots,1$\label{line:ForL}}{
			$j\leftarrow1$\;
			\While{There is a vertex $x_j\in A$ and $*\in\{+,-\}$ such that \hspace{200pt}\phantom{.}\qquad
				$\frac{m}{\mu_{\ell-1}}\le|B_{G[A]}^{*}(x_{j},r_{\ell-1})|\le|B_{G[A]}^{*}(x_{j},r_{\ell})|\leq\frac{m}{\mu_{\ell}}$\qquad\phantom{.}\label{line:WhileCurve}}{
				Sample  $R \sim \Texp_{[r_{\ell-1},r_\ell ]}(\lambda_\ell)$\label{line:sampleR}\;
				Set $B_{\ell,j}\leftarrow B_{G[A]}^{*}(x_j, R)$\label{line:CurveBall}\;
				Update $S \gets S \cup \delta^*(B_{\ell,j})$\;
				Unmark $A\leftarrow A\setminus B_{\ell,j}$\label{line:unmark}\;		
				$j\leftarrow j+1$\;	
			}
		}
		Set $\cR\leftarrow A$\;
		Add to $S$ all the edges in $G[\cR]$ of weight greater than $r_0$\label{line:RemoveHeavyEdges}\;
		\Return{$S$}
		
	\end{algorithm}
	
	\paragraph{Treating general digraphs. \label{para:GeneralDigraph}}
	In general, we are given a digraph $G=(V,E,w)$ which is not necessarily strongly connected. In this case, we simply run \Cref{alg:LaminarPartition} with input $\left(G=(V,E,w),U\right)$; however, here we artificially set $S=\emptyset$, and begin \Cref{alg:LaminarPartition} at \cref{line:SCCpartition}.
	That is, we immediately begin by ordering $G$’s SCCs topologically, running \texttt{Laminar-Topological-Order} on each SCC recursively, and eventually producing a global laminar topological order. 
	
	The proof of the following observation is straightforward.
	\begin{observation}
		Given a general digraph $G$, \texttt{Laminar-Topological-Order} produces a laminar topological order.
	\end{observation}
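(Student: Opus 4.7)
The plan is to induct on the recursion depth (equivalently, on $|U|$ via the progress guarantee of Cor.~1 which ensures each $C_i$ is a strict refinement), first proving the stronger statement that on a strongly connected input $G[U]$, \texttt{Laminar-Topological-Order} produces a laminar topological order of $G[U]$, and then deriving the observation by applying this to each top-level SCC of $G$. The base case $|U|=1$ returns $((v),\{U\})$, which trivially satisfies all four properties of \Cref{def:Laminar}.

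For the inductive step, let $C_1,\dots,C_k$ be the SCCs of $G[U]\setminus S$ returned by \texttt{Digraph-Partition}, topologically ordered, and let $(<_{D_i},\cP_i)$ be the laminar topological orders produced by the recursive calls, which satisfy the four properties by the inductive hypothesis (each $G[C_i]$ is strongly connected, see below). The algorithm outputs $\cP=\{U\}\cup\bigcup_i\cP_i$ and $<_D=<_{D_1}\circ\cdots\circ<_{D_k}$. I would verify the properties in order. \textbf{Singletons}: every $v\in U$ lies in a unique $C_i$, so $\{v\}\in\cP_i\subseteq\cP$ by induction. \textbf{Hierarchy}: clusters within a single $\cP_i$ are laminar by induction; clusters from $\cP_i$ and $\cP_j$ for $i\neq j$ are disjoint because $C_i\cap C_j=\emptyset$, and every cluster is contained in $U$. \textbf{Continuity}: by construction $C_i$ occupies a contiguous interval of $<_D$, any cluster of $\cP_i$ is contiguous in $<_{D_i}$ (by induction) hence in $<_D$, and $U$ is all of $<_D$. \textbf{Connectivity}: $G[U]$ is strongly connected by hypothesis; each $C_i$ is an SCC of $G[U]\setminus S$, and since $G[C_i]$ is a supergraph of $(G[U]\setminus S)[C_i]$ on the same vertex set, $G[C_i]$ is strongly connected, which both feeds the inductive hypothesis for the recursive call and verifies connectivity for $C_i$ itself; the remaining clusters in $\cP_i$ are handled by induction.

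For the general-digraph case, as described in paragraph~\ref{para:GeneralDigraph}, the algorithm sets $S=\emptyset$ and decomposes $V$ directly into the SCCs $C_1,\dots,C_k$ of $G$ itself (topologically ordered), then recursively processes each $C_i$, which is strongly connected by definition of SCC, using the case above. The final output is $\cP=\bigcup_i\cP_i$ and $<_D=<_{D_1}\circ\cdots\circ<_{D_k}$. The exact same verifications go through: singletons are covered since $V=\bigsqcup_i C_i$; hierarchy holds because distinct top-level SCCs are disjoint and sub-clusters within each $\cP_i$ are laminar; continuity follows from the lexicographic concatenation; connectivity holds for each $C_i$ by definition and for the sub-clusters by the strongly connected case. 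The only difference from the SCC case is that $V$ itself need not lie in $\cP$, which is consistent with \Cref{def:Laminar} (it has no top-cluster requirement).

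There is essentially no obstacle: the argument is a routine structural induction. The one subtlety worth flagging is that connectivity of each $C_i$ must be verified in $G[C_i]$ and not merely in $(G[U]\setminus S)[C_i]$, but this is immediate since edge deletion cannot increase connectivity. Termination and thus well-definedness of the induction is guaranteed by the progress statement of Cor.~1 (each recursive $C_i$ is strictly smaller in either diameter or edge count), which bounds the recursion depth by $O(\log n)$.
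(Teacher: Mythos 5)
Your proof is correct and follows essentially the same structural induction as the paper: base case $|U|=1$, inductive step verifying the four properties of Definition~\ref{def:Laminar} for the concatenated order and union of cluster families, with a separate remark for the non-strongly-connected top-level call. Your extra care in spelling out that each $C_i$ (an SCC of $G[U]\setminus S$) is also strongly connected in the larger graph $G[C_i]$ — so the recursive call is well-posed — is a detail the paper leaves implicit, but both arguments are the same in substance.
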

	\begin{proof}
		The proof is by induction on the number of vertices $|U|$. The base case where $|U|=1$ is trivial.
		For the inductive step, assume first that $G[U]$ is strongly connected. We run the algorithm on $(G,U)$, obtain a cut set $S$, and SCCs $C_1,\dots,C_k$ of $G[U]\setminus S$ topologically ordered.
		For every SCC $C_i$, by induction, we obtain a laminar topological order $(<_{D_i},\cP_i)$.
		The global laminar topological order $(<_D,\cP)$ is set as $\cP=\{U\}\cup\bigcup_{i=1}^k\cP_i$, and $<_D$ is the concatenation $<_{D_1}\circ\cdots <_{D_k}$.
		We argue that $(<_D,\cP)$ fulfills the $4$ required properties, and thus it is a laminar topological order for $G$.
		\begin{itemize}[noitemsep, topsep=1pt]
			\item Singleton clusters: every vertex $v\in U$ belongs to some SCC $C_i$. By induction $\{v\}\in\cP_i$, and thus $\{v\}\in\cP$, as required.
			\item Hierarchy: we added a new cluster $U$, that contains every other cluster. For any two clusters $X,X'$ (other than $U$), if they belong to the same laminar topological order $(<_{D_i},\cP_i)$, the hierarchy property holds by induction. Otherwise, they belong to different SCCs, and thus are disjoint.
			\item Continuity: the cluster $U$ is contiguous (as it contains all the vertices). Next, consider any other cluster $X$, belonging to some laminar topological order $(<_{D_i},\cP_i)$. By the inductive hypothesis, $X$ is contiguous w.r.t. $<_{D_i}$. As the vertices of $C_i$ are contiguous in $<_D$, $X$ remains contiguous.
			\item Connectivity: the cluster $U$ is strongly connected (by assumption).
			Any other cluster $X$ is strongly connected by the induction hypothesis.
		\end{itemize}
		
		Finally, consider the case where $G[U]$ is not strongly connected.
		This can happen only at the beginning of the algorithm (in the first call), and thus $U=V$. 
		Here, following paragraph \ref{para:GeneralDigraph}, $C_1,\dots,C_k$ are the SCCs of $G$.
		The proof and construction are the same, with the only difference that $V$ is not added as a cluster to $\cP$.
		Accordingly, the entire proof goes through (as we do not need to argue that $V$ is strongly connected).
	\end{proof}
	
	By induction from \Cref{cor:progress} (proved later) it follows:
	\begin{observation}\label{obs:RecursionDepth}
		Every vertex $v\in V$ belongs to at most $O(\log (n\Delta))=O(\log n)$ different clusters in $\cP$. 
	\end{observation}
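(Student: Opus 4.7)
The plan is a straightforward potential argument using \Cref{cor:progress}. Fix a vertex $v\in V$, and consider the chain of clusters in $\cP$ containing $v$. By the hierarchy property (\ref{req:hierarchy}) of \Cref{def:Laminar}, these clusters form a nested chain $U_0\supsetneq U_1\supsetneq\cdots\supsetneq U_d=\{v\}$, where $U_{j+1}$ is the SCC produced by the recursive call of \texttt{Laminar-Topological-Order} on $U_j$ that contains $v$. Our goal is to upper bound $d$ by $O(\log(n\Delta))$, and then observe that under polynomial aspect ratio (so that the initial diameter is $\poly(n)$), this is $O(\log n)$.

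The key observation is that \Cref{cor:progress} guarantees, at each level of the recursion, that either the edge count or the diameter drops by a factor of at least $2$ in going from $U_j$ to $U_{j+1}$. I would therefore define the potential
\[
\Phi_j \;=\; \bigl\lceil\log|E(G[U_j])|\bigr\rceil + \bigl\lceil\log\diam(G,U_j)\bigr\rceil ,
\]
and argue that $\Phi_{j+1}\le\Phi_j-1$ for every $j<d$. This immediately yields $d\le\Phi_0$, and since at the top of the recursion $|E(G[U_0])|\le m\le n^2$ and $\diam(G,U_0)\le\Delta$, we obtain $d \le \log(n^2)+\log\Delta = O(\log(n\Delta))$. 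Under the polynomial aspect ratio assumption of \Cref{thm:mainExpectedDistortionAspectRatio} (and since minimum edge weight is normalized to $1$), $\Delta=\poly(n)$, giving $d=O(\log n)$.

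The one subtlety, and the only real content beyond invoking \Cref{cor:progress}, is to make sure the potential inequality is applied correctly: when \texttt{Laminar-Topological-Order} is called on $U_j$ with parameters $(m_j,\Delta_j)$ and produces SCCs $C_1,\dots,C_k$, \Cref{cor:progress} is stated with respect to the parameters $(m_j,\Delta_j)$ passed in, which by construction satisfy $m_j=|E(G[U_j])|$ and $\Delta_j=\diam(G,U_j)$ (see \cref{line:diamCompute} of \Cref{alg:LaminarPartition}). Hence $U_{j+1}=C_i$ for some $i$ with either $|E(G[U_{j+1}])|\le m_j/2$ or $\diam(G,U_{j+1})\le\Delta_j/2$, so that at least one of the two summands in $\Phi$ decreases by at least one (and the other does not increase, since $U_{j+1}\subseteq U_j$ implies both $|E(G[U_{j+1}])|\le|E(G[U_j])|$ and $\diam(G,U_{j+1})\le\diam(G,U_j)$). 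The induction on the depth of the recursion is then immediate. I do not foresee any real obstacle here; the proof is essentially a one-line consequence of \Cref{cor:progress} once the potential is written down.
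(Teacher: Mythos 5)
Your argument is correct and follows exactly the route the paper intends: the paper states the observation follows ``by induction from \Cref{cor:progress}'' without elaborating, and your potential function $\Phi_j=\lceil\log|E(G[U_j])|\rceil+\lceil\log\diam(G,U_j)\rceil$ is precisely the natural formalization of that induction. The only nitpick is the degenerate bottom level (for the singleton $U_d$ both quantities vanish, so $\Phi_d$ is undefined); this is handled by running the decrement argument only for $j<d-1$ and using $\Phi_{d-1}\ge 1$ to conclude $d\le\Phi_0+1=O(\log(n\Delta))$.
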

	
	\paragraph*{Global cut set and induced DAG.}
	\sloppy Denote by \EMPH{$\widetilde{S}$} all the edges that have been removed in all the recursive calls produced by the call $\texttt{Laminar-Topological-Order}(G,V)$.
	Specifically, denote by $\widetilde{S}_i$ all the edges that have been removed in all the recursive calls produced by the call ${\texttt{Laminar-Topological-Order}(G,C_i)}$, then $\widetilde{S}=S\cup\bigcup_{i=1}^k\widetilde{S}_i$ is the \EMPH{global cut set}. 
	
	\begin{claim}\label{clm:DagTotalOrder}
		Set $\brick{D}=G\setminus\widetilde{S}$. $D$ is a DAG, and the total order returned by the laminar topological order $(<_D,\cP)$ is a topological order of $D$. 
	\end{claim}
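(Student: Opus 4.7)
The plan is to prove both statements together by a single induction on the recursion, since showing that $<_D$ is a topological order of $D$ immediately yields that $D$ is a DAG (any directed cycle would require some edge $(u,v)$ with $v <_D u$, contradicting that $<_D$ is a total order respected by all edges). So the real work is to verify that every surviving edge of $G$ respects the order produced by the recursion.

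Concretely, I would prove by induction on $|U|$ (equivalently on the recursion depth, bounded by \Cref{obs:RecursionDepth}) the following inductive statement: for every call $\texttt{Laminar-Topological-Order}(G,U)$ producing $(<_D,\cP)$ and cut set $\widetilde{S}_U = S \cup \bigcup_{i=1}^k \widetilde{S}_{C_i}$, every edge $(u,v) \in E(G[U]) \setminus \widetilde{S}_U$ satisfies $u <_D v$. The base case $|U|=1$ is immediate as there are no edges.

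For the inductive step, fix an edge $(u,v) \in E(G[U]) \setminus \widetilde{S}_U$. Since $(u,v) \notin S$, it appears in $G[U]\setminus S$, so both endpoints lie in the SCC decomposition $C_1,\dots,C_k$ of that graph; say $u \in C_i$ and $v \in C_j$. If $i \ne j$, the edge witnesses a path from $C_i$ to $C_j$ in $G[U]\setminus S$, so the topological ordering of the SCCs chosen in line~\ref{line:SCCpartition} forces $i<j$; by the lexicographic definition of $<_D$ in the concatenation step, every vertex of $C_i$ precedes every vertex of $C_j$, giving $u<_D v$. If $i = j$, then $(u,v) \in E(G[C_i]) \setminus \widetilde{S}_{C_i}$, and the inductive hypothesis applied to the recursive call on $C_i$ gives $u <_{D_i} v$, whence $u <_D v$ by concatenation.

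This handles the strongly connected case; for the general initial call described in paragraph~\ref{para:GeneralDigraph}, the argument is identical, with $S=\emptyset$ and $C_1,\dots,C_k$ being the SCCs of $G$ itself. The main obstacle, if any, is simply careful bookkeeping to check that the two ways an edge can survive (cross-SCC versus intra-SCC) match the two ways $<_D$ can order a pair of vertices in $U$ (by concatenation block or by the nested order within a block); once this correspondence is stated, both directions of the proof are routine. Having verified that every edge of $D = G \setminus \widetilde{S}$ respects the total order $<_D$, we conclude that $D$ has no directed cycle and that $<_D$ is by definition a topological order of $D$.
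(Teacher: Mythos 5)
Your proof is correct and follows the same induction on $|U|$ and the same case split ($u,v$ in distinct SCCs of $G[U]\setminus S$ versus the same SCC $C_i$) as the paper's proof, but it carries a cleaner, edge-local invariant: every surviving edge $(u,v)\in E(G[U])\setminus\widetilde S_U$ satisfies $u<_D v$. The paper instead carries the reachability-level invariant that $s<_D t$ implies $t\not\rightsquigarrow_D s$; consequently, in the $i=j$ case the paper cannot directly invoke the inductive hypothesis (which only rules out $t\rightsquigarrow_{D_i}s$) and must run a short contradiction argument showing that a hypothetical $t$-to-$s$ path in $D$ passing outside $C_i$ would force a vertex outside $C_i$ to lie in the SCC $C_i$. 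Your edge-local formulation sidesteps that step entirely, since a surviving edge inside $C_i$ is literally an object to which the inductive hypothesis applies, and it yields both the DAG property and the topological-order property in one stroke (since the latter is \emph{defined} as every edge respecting the order). The two arguments are logically equivalent — the reachability invariant follows from the edge invariant by induction on path length, and conversely — so neither is more general, but yours is a bit more direct. One small cosmetic point: you should note (as you implicitly do) that the definition of topological order used here is exactly ``every edge $(u,v)\in E(D)$ has $u<_D v$,'' so that establishing the edge-level statement is by itself the full claim and not merely a sufficient condition for it.
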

	\begin{proof}
		The proof is by induction on the number of vertices $|U|$. The base case where $|U|=1$ is trivial.
		For the inductive step, we run the algorithm on $(G,U)$, obtain a cut set $S$, and SCCs $C_1,\dots,C_k$ of $G[U]\setminus S$ topologically ordered.%
		\footnote{If $G$ is not strongly connected (possible only at the start of the algorithm), then following paragraph \ref{para:GeneralDigraph}, we let $S=\emptyset$, and let $C_1,\dots,C_k$ be its SCCs.\label{foot:RefToParagraphNotConnected}}
		Let $s,t\in U$ such that $s<_Dt$. It suffices to show that $t\not\rightsquigarrow_{D}s$. Let $i,j\in[k]$ such that $s\in C_i$ and $t\in C_j$.
		
		Consider first the case where $i\ne j$. As $s<_Dt$, it necessarily holds that $i<j$. As $C_i$ comes before $C_j$ in the topological order of $G[U]\setminus S$, there is no path in $G[U]\setminus S$ from a vertex in $C_j$ to a vertex in $C_i$. In particular, $t\not\rightsquigarrow_{G[U]\setminus S}s$. 
		As $D=G[U]\setminus \widetilde{S}$ is a subgraph of $G[U]\setminus S$, it follows that $t\not\rightsquigarrow_{D}s$.
		
		Next, consider the case where $i=j$. By the inductive hypothesis, it holds that $D_i=G[C_i]\setminus \widetilde{S}_i$ is a DAG, and the order returned by the laminar topological order $(<_{D_i},\cP_i)$ is a topological order of $D_i$. 
		As $s<_Dt$, it follows that $s<_{D_i}t$, and thus by induction $t\not\rightsquigarrow_{D_i}s$.
		Suppose for the sake of contradiction that there is a path $\pi$ from $t$ to $s$ in $D$. 
		As there are no additional edges between $C_i$ vertices in $D$ (compared with $D_i$), 
		$\pi$ necessarily contains a vertex $v\notin C_i$.
		Note that $\pi$ is also a path in $G[U]\setminus S$.
		Thus $G[U]\setminus S$ contains a path from the SCC $C_i$ to $v$, and from $v$ to the SCC $C_i$. It follows that $v\in C_i$, a contradiction.
		We conclude that $t\not\rightsquigarrow_{D}s$, as required.	
	\end{proof}
	
	As \Cref{alg:LaminarPartition} is recursive on the SCCs, and thus does not add to the cut set additional edges between the SCCs, we observe:
	\begin{observation}\label{obs:noCutThenSurvive}
		Consider an edge $(u,v)\in G[U]$ such that $(u,v)\notin S$, and $u,v$ belong to different SCCs of $G\setminus S$. Then $(u,v)\in D$.
	\end{observation}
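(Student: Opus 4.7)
The proof plan is essentially a direct unpacking of the recursive structure of \texttt{Laminar-Topological-Order} together with the definition of the global cut set $\widetilde{S}$. The claim is that an edge $(u,v)$ in $G[U]$ that survives the current level (i.e., $(u,v) \notin S$) and whose endpoints end up in different SCCs of $G[U]\setminus S$ can never be added to $\widetilde{S}$ at any later recursive level, and therefore remains in $D = G \setminus \widetilde{S}$.

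First I would recall that by construction $\widetilde{S} = S \cup \bigcup_{i=1}^{k} \widetilde{S}_i$, where $\widetilde{S}_i$ is the cut set produced by the recursive call $\texttt{Laminar-Topological-Order}(G,C_i)$. Since $(u,v) \notin S$ by hypothesis, it suffices to show that $(u,v) \notin \widetilde{S}_i$ for every $i \in [k]$. I would argue this by induction on the recursion depth: at every recursive level, the call on a cluster $C_i$ only ever inspects the induced subgraph $G[C_i]$ (both \texttt{Digraph-Partition} and the further recursive calls are given input restricted to $G[C_i]$), so every edge eventually added to $\widetilde{S}_i$ must have both endpoints inside $C_i$.

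Now, by hypothesis $u$ and $v$ lie in distinct SCCs of $G[U]\setminus S$; say $u \in C_i$ and $v \in C_j$ with $i \neq j$. Then $(u,v)$ is not an edge of any $G[C_\ell]$ (it crosses between $C_i$ and $C_j$), so by the observation of the previous paragraph, $(u,v) \notin \widetilde{S}_\ell$ for any $\ell$. Combining, $(u,v) \notin \widetilde{S}$, and hence $(u,v) \in G[U]\setminus\widetilde{S} \subseteq D$.

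There is no real obstacle here: the statement is a structural bookkeeping observation about the recursion, not a probabilistic or metric claim. The only thing to be careful about is stating precisely that the recursive calls are carried out on induced subgraphs $G[C_i]$, so that cross-SCC edges are literally invisible to subsequent recursive invocations and cannot be placed into any $\widetilde{S}_\ell$. Once that is set up, the argument is one line by induction on depth.
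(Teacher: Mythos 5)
Your proof is correct and is a careful expansion of exactly the argument the paper offers (the paper merely remarks, in the sentence preceding the observation, that the algorithm recurses on the SCCs and thus never cuts cross-SCC edges). The key observation that every recursive call only inspects the induced subgraph $G[C_i]$, so $\widetilde{S}_i \subseteq E(G[C_i])$ and cross-SCC edges can never enter $\widetilde{S}$, is precisely the intended reasoning.
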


	\subsection{Single Partition Algorithm}
	The \texttt{Digraph-Partition} algorithm is described in \Cref{alg:DigraphPartition}.
	This algorithm is the basis for the creation of our laminar topological order, which leads to the creation of the DAGs.
	The algorithm is similar to the cutting procedure from \cite{BFHL25} (which in turn was used to create directed low-diameter decompositions).
	The algorithm receives as input a weighted digraph $G=(V,E,w)$, scale $\Delta>0$, and edge parameter $m$. 
	In our implementation, $\Delta$ is the weak diameter of $G$ (in some graph not given to the algorithm), while $m$ is the number of edges $|E|$.
	Nevertheless, we will think of them as general parameters, with the guarantee $m\ge|E|$. This freedom will be useful for our inductive arguments.
	Likewise, the algorithm will only be executed on strongly connected digraphs. However, for the same considerations, we will not assume that during the analysis.
	
	We denote by $\brick{A}\subseteq V$ the set of \EMPH{active vertices}, and by $\brick{S}\subseteq E$ the set of \EMPH{cut edges}.
	Initially, all the vertices are active $A\leftarrow V$, and the cut set is empty $S\leftarrow \emptyset$. 
	We say that an \EMPH{edge is active} if both its endpoints are active.
	The algorithm repeatedly carves a ball $B_{\ell,j}$ of (a random) radius $R$. The vertices of the carved ball $B_{\ell,j}$ cease to be active, while all the (one of two options: ingoing or outgoing) edges from $B_{\ell,j}$ join the cut set $S$.
	There are $L$ different \EMPH{phases} of ball carving, where in each phase the algorithm uses different parameters to find an appropriate center (as a function of ball density) and a different parameter to sample the radii. In each phase, as long as the algorithm can find centers fulfilling the density requirement, it carves balls around these centers. We call each such carving an \EMPH{iteration} of the algorithm.
	Once no appropriate center is found, the algorithm moves to the next phase.
	After the $L$ phases are exhausted, the algorithm denotes the set of remaining active vertices \EMPH{$\cR$}. It adds to $S$ all the edges between $\cR$ vertices of weight at least $r_0=\Omega(\frac{\Delta}{\log m})$, and returns the obtained cut set $S$.
	It is guaranteed (see \Cref{clm:PartitionIterationRDelta}) that every SCC in $G[\cR]\setminus S$ has diameter at most $\frac{\Delta}{2}$ (while there is no guarantee on the diameter of other SCCs in $G\setminus S$).
	
	The heart of the algorithm is the random choice of the radii of the carved balls. 
	The algorithm uses the following parameters (depending only on $m$ and $\Delta$):
	\begin{itemize}
		\item $\brick{L} = \lceil\log\log (m+1)\rceil$,
		\item $\brick{\delta}=(\log(m+1))^{-2}$,
		\item $\brick{r_0} := \frac{\Delta}{2^{L+4}}$ and $\brick{r_\ell} := r_{\ell-1} + \frac{\Delta}{2^{L-\ell+4}} + \frac{\Delta}{8 L}$ (for $1 \leq \ell \leq L$),
		\item $\brick{\mu_\ell} := 2^{2^{L-\ell}}$ (for $0 \leq \ell \leq L$),
		\item $\brick{\lambda_\ell}=\frac{1}{r_{\ell}-r_{\ell-1}}\cdot\ln(\frac{2\mu_{\ell}}{\delta})$.
	\end{itemize}
	Note that $L,\delta,r_\ell,\mu_\ell,\lambda_\ell$ are functions of $m$ and $\Delta$. In most cases the values of $m$ and $\Delta$ will be clear from the context. Where this is not the case, we will address the issue explicitly. 
	
	\begin{observation}\label{obs:rellSize}
		$r_L=\frac\Delta4-\frac{\Delta}{2^{L+4}}<\frac\Delta4$.
	\end{observation}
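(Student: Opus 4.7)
This is a direct computation, so the plan is simply to unfold the recurrence and evaluate a geometric sum. First, I would write $r_L$ as a telescoping sum starting from $r_0$:
\[
r_L = r_0 + \sum_{\ell=1}^{L}(r_\ell - r_{\ell-1}) = \frac{\Delta}{2^{L+4}} + \sum_{\ell=1}^{L}\left(\frac{\Delta}{2^{L-\ell+4}} + \frac{\Delta}{8L}\right).
\]
The additive $\frac{\Delta}{8L}$ terms contribute exactly $L\cdot\frac{\Delta}{8L} = \frac{\Delta}{8}$, so that part is immediate.

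The only remaining work is the geometric sum $\sum_{\ell=1}^{L}\frac{\Delta}{2^{L-\ell+4}}$. I would reindex via $k = L-\ell$, turning it into $\frac{\Delta}{16}\sum_{k=0}^{L-1} 2^{-k} = \frac{\Delta}{16}\cdot(2 - 2^{-(L-1)}) = \frac{\Delta}{8} - \frac{\Delta}{2^{L+3}}$. Plugging back in gives
\[
r_L = \frac{\Delta}{2^{L+4}} + \frac{\Delta}{8} - \frac{\Delta}{2^{L+3}} + \frac{\Delta}{8} = \frac{\Delta}{4} + \frac{\Delta}{2^{L+4}} - \frac{\Delta}{2^{L+3}} = \frac{\Delta}{4} - \frac{\Delta}{2^{L+4}},
\]
where the last equality uses $\frac{1}{2^{L+3}} - \frac{1}{2^{L+4}} = \frac{1}{2^{L+4}}$. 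Finally, since $\Delta > 0$ and $L \geq 0$, the subtracted term $\frac{\Delta}{2^{L+4}}$ is strictly positive, yielding $r_L < \frac{\Delta}{4}$.

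The main (and only) potential pitfall is a reindexing error in the geometric sum, so I would be careful to verify the endpoints $k=0$ and $k=L-1$ both arise correctly from $\ell = L$ and $\ell = 1$ respectively. Otherwise, no clever argument is needed.
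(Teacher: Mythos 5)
Your proposal is correct and follows essentially the same route as the paper's own proof: unfold the recurrence into a telescoping sum, split off the $L\cdot\frac{\Delta}{8L}=\frac{\Delta}{8}$ contribution, and evaluate the geometric series (the paper reindexes to $\sum_{\ell=1}^{L}2^{-\ell}$ while you reindex to $\sum_{k=0}^{L-1}2^{-k}$, which is the same computation). Your arithmetic checks out, and the final strict inequality follows exactly as you say.
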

	\begin{proof}
		\begin{align*}
			r_{L} & =r_{0}+\sum_{\ell=L}^{1}\left(\frac{\Delta}{2^{L-\ell+4}}+\frac{\Delta}{8L}\right)=r_{0}+\frac{\Delta}{8}+\frac{\Delta}{8}\cdot\sum_{\ell=1}^{L}\frac{1}{2^{\ell}}\\
			& =r_{0}+\frac{\Delta}{8}+\frac{\Delta}{8}\cdot\left(1-\frac{1}{2^{L}}\right)=\frac{\Delta}{4}+r_{0}-\frac{\Delta}{2^{L+3}}=\frac{\Delta}{4}-\frac{\Delta}{2^{L+4}}~.
		\end{align*}
	\end{proof}
	\begin{observation}\label{obs:lambdaUB}
		$\lambda_\ell\le\frac{64}{\Delta}\cdot L\cdot\log\mu_{\ell}$.
	\end{observation}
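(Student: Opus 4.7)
The inequality is a direct computation from the definitions of $\lambda_\ell$, $\mu_\ell$, $\delta$, $r_\ell$ and $L$. I would start from
$\lambda_\ell = \frac{1}{r_\ell - r_{\ell-1}}\cdot\ln\!\bigl(\tfrac{2\mu_\ell}{\delta}\bigr)$
and bound the two factors separately.

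For the logarithm, I would expand
$\ln(2\mu_\ell/\delta) = \ln 2 + \ln\mu_\ell + \ln(1/\delta) = \ln 2 + 2^{L-\ell}\ln 2 + 2\ln\log(m+1)$,
and then use the key observation that $L = \lceil\log\log(m+1)\rceil \ge \log\log(m+1)$, so $\log(m+1) \le 2^L$ and hence $\ln\log(m+1) \le L\ln 2$. This yields
$\ln(2\mu_\ell/\delta) \le \ln 2\cdot(1 + 2^{L-\ell} + 2L)$.
For the reciprocal of the denominator, since $r_\ell - r_{\ell-1} = \frac{\Delta}{2^{L-\ell+4}} + \frac{\Delta}{8L}$ is a sum of two positive terms, we have $r_\ell - r_{\ell-1} \ge \max\bigl\{\tfrac{\Delta}{2^{L-\ell+4}},\tfrac{\Delta}{8L}\bigr\}$, so
$\frac{1}{r_\ell - r_{\ell-1}} \le \min\bigl\{\tfrac{2^{L-\ell+4}}{\Delta},\tfrac{8L}{\Delta}\bigr\}$.

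I would then split into two cases according to which of $2^{L-\ell+4}$ and $8L$ is smaller, i.e.\ whether $2^{L-\ell} \le L/2$ or not. In the first case ($2^{L-\ell}\le L/2$), the reciprocal is at most $\tfrac{16\cdot 2^{L-\ell}}{\Delta}$, so it suffices to check $\ln(2\mu_\ell/\delta) \le 4L$; the bounds $\ln\mu_\ell \le (L/2)\ln 2$ and $\ln(1/\delta)\le 2L\ln 2$ give this comfortably for $L\ge 1$. In the second case ($2^{L-\ell} > L/2$), the reciprocal is at most $\tfrac{8L}{\Delta}$, so it suffices to check $\ln(2\mu_\ell/\delta) \le 8\cdot 2^{L-\ell}$; here $\ln\mu_\ell < 2^{L-\ell}$ contributes the leading term, and $2L < 4\cdot 2^{L-\ell}$ (by the case hypothesis) handles the $\ln(1/\delta)$ term. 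Either way, multiplying the two bounds gives $\lambda_\ell \le \frac{64\,L\cdot 2^{L-\ell}}{\Delta} = \frac{64}{\Delta}\cdot L\cdot\log\mu_\ell$.

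The only obstacle is bookkeeping: it is essentially an arithmetic consequence of the parameter choices, which were tuned so that the factor $L$ appears exactly once (and not twice), and the main care is to verify that the constant $64$ goes through in both regimes of the case split rather than being replaced by a worse absolute constant.
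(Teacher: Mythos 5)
Your proof is correct and is essentially the paper's argument. Both proceed by the same bound $\ln(2\mu_\ell/\delta)\le O(2^{L-\ell}+L)$ (using $\ln(1/\delta)\le 2L\ln 2$) and the same bound $\frac{1}{r_\ell-r_{\ell-1}}\le\min\bigl\{\frac{2^{L-\ell+4}}{\Delta},\frac{8L}{\Delta}\bigr\}$; the only difference is that the paper collapses your two-case analysis into the single identity $(a+b)\cdot\min\{a,b\}\le 2ab$ applied with $a=2^{L-\ell}$, $b=L$, which is a compact way of encoding exactly the case split you carry out explicitly.
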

	\begin{proof}
		\begin{align*}
			\lambda_{\ell} & =\frac{1}{r_{\ell}-r_{\ell-1}}\cdot\ln\left(\frac{2\mu_{\ell}}{\delta}\right)=\frac{1}{\frac{\Delta}{2^{L-\ell+4}}+\frac{\Delta}{8L}}\cdot\left(\ln(2\mu_{\ell})+\ln\frac{1}{\delta}\right)\\
			& \overset{(*)}{\le}\frac{1}{\Delta\cdot\max\left\{ \frac{1}{2^{L-\ell+4}},\frac{1}{8L}\right\} }\cdot\left(2^{L-\ell+1}+2L\right)\\
			& \le\frac{2}{\Delta}\cdot\left(2^{L-\ell}+L\right)\cdot16\cdot\min\left\{ 2^{L-\ell},L\right\} \\
			& \overset{(**)}{\le}\frac{64}{\Delta}\cdot L\cdot2^{L-\ell}=\frac{64}{\Delta}\cdot L\cdot\log\mu_{\ell}~,
		\end{align*}
		inequality $^{(*)}$ follows as $\ln\frac{1}{\delta}=\ln\left((\log(m+1))^{2}\right)=2\cdot\ln(\log(m+1))\le2\cdot L$.
		Inequality $^{(**)}$ follows since $(a+b)\cdot\min\{a,b\}\le2\cdot\max\{a,b\}\cdot\min\{a,b\}=2ab$. 
	\end{proof}

	The algorithm has $L$ phases, where we use $\ell$ to denote the current phase. Initially $\ell=L$, and it gradually decreases, until in the last phase $\ell=1$.
	There are radii parameters $r_\ell>r_{\ell-1}>\dots>r_1>r_0$, where during the $\ell$'th phase we sample radii from $[r_{\ell-1},r_\ell]$.
	There are also density parameters $\mu_\ell>\mu_{\ell-1}>\dots>\mu_1>\mu_0$.
	During the $\ell$'th phase, we look for a center vertex $x$ such that $\frac{m}{\mu_{\ell-1}}\le|B_{G[A]}^{*}(x_{j},r_{\ell-1})|\le|B_{G[A]}^{*}(x_{j},r_{\ell})|\leq\frac{m}{\mu_{\ell}}$ (for $*\in\{+,-\}$).
	When such a center vertex is found, we sample $R\in[r_{\ell-1},r_\ell]$, and carve the ball $B_{\ell,j}\leftarrow B_{G[A]}^{*}(x_j, R)$. All the vertices in $B_{\ell,j}$ cease to be active, while all the edges in $\delta^*(B_{\ell,j})$ join $S$.
	Once there is no such center vertex $x_j$, the $\ell$'th phase ends, and we move to the $\ell+1$'th phase.
	
	The sampling of the radius $R$ is done using \EMPH{truncated exponential distribution}:
	an exponential distribution conditioned on the event that the outcome lies in a certain interval.
	This distribution is often used in the context of metric embeddings and low-diameter decompositions; see, e.g., \cite{Bar96,Bartal04,AGGNT19,Fil19padded,BFT24,FFIKLMZ24,CF25}.
	It is useful because, while it gives a strict bound on the sampled radii, it is still “almost” memoryless.
	Formally, the $[\theta_1,\theta_2]$-truncated exponential distribution with
	parameter $\lambda$, denoted \EMPH{$\Texp_{[\theta_1, \theta_2]}(\lambda)$}, is the exponential distribution with parameter $\lambda$ conditioned on the event that the outcome is in $[\theta_1, \theta_2]$. The density function is
	$f(y)= \frac{ \lambda\, e^{-\lambda y} }{e^{-\lambda \theta_1} - e^{-\lambda \theta_2}}$ for $y \in [\theta_1, \theta_2]$.
	During the $\ell$'th phase we sample $R$ from $\Texp_{[r_{\ell-1},r_\ell ]}(\lambda_\ell)$.
	One should think of the $\delta$ parameter in the setting of $\lambda_\ell$ as an error parameter. Intuitively, we analyze the process w.r.t. the exponential distribution instead of the truncated exponential distribution, and $\delta$ is used to bound the difference between the two distributions.
	
	We continue by describing the basic properties of \Cref{alg:DigraphPartition}. The proofs are similar to \cite{BFHL25}.
	Denote by $\brick{\cC_\ell}=\cup_{j\ge1}B_{\ell,j}$ the set of all the vertices belonging to balls that were carved during the $\ell$'th phase of the for loop (\cref{line:ForL} of \Cref{alg:DigraphPartition}).
	\begin{claim}\label{clm:PartitionIteration}
		Every SCC $C$ of $G\setminus S$ is fully contained in one of the sets $\cC_L,\cC_{L-1},\dots,\cC_2,\cC_1,\cR$. Further, if $C\subseteq \cC_\ell$, then $C$ is fully contained in some $B_{\ell,j}$, and $|E(C)|\le\frac{m}{\mu_\ell}\le\frac m2$.
	\end{claim}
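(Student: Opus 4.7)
The plan is to derive both parts of the claim from a single one-way sealing observation: when the algorithm carves a ball $B_{\ell,j}$ with direction $*\in\{+,-\}$, it adds all of $\delta^*(B_{\ell,j})$ to $S$. Hence in $G\setminus S$, no edge leaves $B_{\ell,j}$ if $*=+$ and no edge enters $B_{\ell,j}$ if $*=-$. From this I will show that any SCC of $G\setminus S$ meeting some carved ball is entirely contained in that ball.

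Concretely, let $C$ be an SCC of $G\setminus S$ and suppose $u\in C\cap B_{\ell,j}$. For any other $v\in C$, strong connectivity in $G\setminus S$ gives both a $u\rightsquigarrow v$ path and a $v\rightsquigarrow u$ path in $G\setminus S$. If $*=+$, the first path cannot cross the boundary of $B_{\ell,j}$ outward (every such edge was added to $S$), so $v\in B_{\ell,j}$. If $*=-$, the second path cannot cross the boundary inward, again forcing $v\in B_{\ell,j}$. In either case $C\subseteq B_{\ell,j}$.

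Given this, the first statement of the claim is immediate by dichotomy: if $C$ intersects some carved ball, pick the (unique by the argument above) $B_{\ell,j}$ containing it, obtaining $C\subseteq B_{\ell,j}\subseteq \cC_\ell$ and in fact $C$ is contained in a single carved ball, giving the second statement too; otherwise $C$ is disjoint from every $B_{\ell,j}$, so $C\subseteq V\setminus\bigcup_\ell \cC_\ell=\cR$.

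For the edge bound, I use the while-loop guard on \cref{line:WhileCurve}. Since $R\le r_\ell$, the vertex set $B_{\ell,j}=B^{*}_{G[A]}(x_j,R)$ is contained in $B^{*}_{G[A]}(x_j,r_\ell)$. Every edge of $G[C]$ has both endpoints in $C\subseteq B^{*}_{G[A]}(x_j,r_\ell)$ and is therefore counted by $|B^{*}_{G[A]}(x_j,r_\ell)|$; the while-loop condition bounds the latter by $m/\mu_\ell$. Finally, $\mu_\ell=2^{2^{L-\ell}}\ge \mu_L=2$ for every $\ell\in\{1,\dots,L\}$, yielding $|E(C)|\le m/\mu_\ell\le m/2$. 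There is no substantive obstacle in the argument; the proof is essentially bookkeeping around the definitions of $\delta^{\pm}$ together with the while-loop guard, and the key conceptual step is the one-way sealing observation above.
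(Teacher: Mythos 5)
Your proof is correct and establishes the claim. It also takes a slightly different route from the paper's: the paper proves the sealing statement by induction on the creation time of the carved balls — it first invokes the inductive hypothesis to show that every vertex of an SCC $C$ meeting $B_{\ell,j}$ must still be active when $B_{\ell,j}$ is carved (otherwise $C$ would already have been swallowed by an earlier, disjoint ball), and only then argues that a path of $G\setminus S$ cannot exit $B_{\ell,j}$ because the outgoing edges to the remaining active vertices are placed in $S$. You instead use the literal definition of $\delta^*$ from the preliminaries, under which \emph{every} boundary edge of $B_{\ell,j}$ in the cut direction — not only those to currently active vertices — joins $S$ the moment $B_{\ell,j}$ is carved; from there your one-way sealing observation lets you conclude directly that any $G\setminus S$ path starting inside a $+$-carved ball (resp.\ ending inside a $-$-carved ball) stays inside it, with no appeal to the chronology of the algorithm. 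Both arguments rest on the same sealing insight, but yours is more economical; the paper's inductive version is slightly more robust in that it would also go through if one read $\delta^*(B_{\ell,j})$ as restricted to the induced active subgraph $G[A]$, which is precisely the potential ambiguity its extra step neutralizes. The edge-count part of your argument coincides with the paper's.
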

	\begin{proof}
		For simplicity of notation, denote $\cC_0=\cR$.
		During the $\ell$'th phase of the for loop (\cref{line:ForL}), we carve balls $B_{\ell,1},B_{\ell,2},\dots,B_{\ell,j_\ell}$. 
		We argue by induction on the creation time that every SCC in $G\setminus S$ that contains some vertex in $B_{\ell,j}$ is fully contained in $B_{\ell,j}$. This implies the first proposition of the claim.
		The base of the induction is before we start carving balls and thus holds trivially. 
		Suppose that the induction holds just before we carve the ball $B_{\ell,j}\leftarrow B_{G[A]}^{*}(x_j, R)$.
		Suppose that $*=+$ (the case $*=-$ is symmetric).
		All the outgoing edges from $B_{\ell,j}$ are added to $S$. 
		Consider a SCC $C$ of $G\setminus S$ which contains some vertex 
		$v\in B_{\ell,j}$. 
		By the induction hypothesis, all the vertices in $C$ are active at the time of the creation of $B_{\ell,j}$ (otherwise $C$ would not contain $v$).
		All the outgoing edges from $B_{\ell,j}$ to other active vertices are added to $S$. Thus $C$ cannot contain any active vertices outside $B_{\ell,j}$. We conclude that $C\subseteq B_{\ell,j}$, as required.
		
		Next, consider a SCC $C\subseteq B_{\ell,j}= B_{G[A]}^{*}(x_j, R)$.
		As $R\le r_\ell$, it follows that $|E(C)|\le  |B_{G[A]}^{+}(x_j, R)|\le  |B_{G[A]}^{+}(x_j, r_\ell)|\le\frac{m}{\mu_\ell}$.	
	\end{proof}
	
	\begin{observation}\label{obs:NumberOfBalls}
		During the $\ell$'th phase of the for loop (\cref{line:ForL} of \Cref{alg:DigraphPartition}), at most $\mu_{\ell-1}$ balls are carved.
	\end{observation}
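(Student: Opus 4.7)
The plan is a direct edge-counting argument that exploits the lower threshold in the while loop condition. The key point is that the condition on \cref{line:WhileCurve} requires $|B_{G[A]}^{*}(x_j, r_{\ell-1})| \ge \frac{m}{\mu_{\ell-1}}$, meaning that every carved ball in phase $\ell$ contains at least $\frac{m}{\mu_{\ell-1}}$ edges fully inside it (recall that $|B_G^*(v,r)|$ denotes the number of edges fully contained in the ball, per the notation in Preliminaries).

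Next, I would observe that after carving $B_{\ell,j}$, the algorithm updates $A \leftarrow A \setminus B_{\ell,j}$ on \cref{line:unmark}. Thus every vertex in $B_{\ell,j}$ is permanently removed from $A$, and in particular every edge fully contained in $B_{\ell,j}$ is permanently removed from the active subgraph $G[A]$ (since both endpoints leave $A$). Consequently, the sets of edges ``consumed'' by distinct balls $B_{\ell,j}$ and $B_{\ell,j'}$ carved within phase $\ell$ are disjoint from each other and from all edges consumed in earlier phases.

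Let $N_\ell$ be the number of balls carved during phase $\ell$. Combining the two points above, the total number of edges removed from $G[A]$ during phase $\ell$ is at least $N_\ell \cdot \frac{m}{\mu_{\ell-1}}$. Since the algorithm's input guarantees $m \ge |E|$, the number of edges available to be consumed is at most $m$. Therefore
\[
N_\ell \cdot \frac{m}{\mu_{\ell-1}} \;\le\; m,
\]
which rearranges to $N_\ell \le \mu_{\ell-1}$, as claimed. I do not anticipate any significant obstacle: the argument is a one-line pigeonhole once the definition of $|B^*_G(v,r)|$ and the effect of \cref{line:unmark} are pinned down correctly.
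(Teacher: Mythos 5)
Your proof is correct and follows the same pigeonhole argument as the paper: each carved ball contains at least $\frac{m}{\mu_{\ell-1}}$ edges (since $R \ge r_{\ell-1}$ and the while condition lower-bounds the ball at radius $r_{\ell-1}$), these edges are permanently deactivated, and there are at most $m$ edges in total. No meaningful difference from the paper's proof.
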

	\begin{proof}
		We say that an edge $(u,v)$ is active iff both endpoints $u,v$ are active.
		Consider the $\ell$ phase of the For loop (\cref{line:ForL}).
		At iteration $j$ of the While loop (\cref{line:WhileCurve}), we carve the ball $B_{\ell,j}\leftarrow B_{G[A]}^{*}(x_j, R)$ by sampling a radius $R\in[r_{\ell-1},r_\ell]$.
		By the choice of the center $x_j$ and the radius $R$, it holds that $|B_{G[A]}^{*}(x_j, R)|\ge  |B_{G[A]}^{*}(x_j, r_{\ell-1})|\ge\frac{m}{\mu_{\ell-1}}$.
		In particular, after this $j$'th iteration, at least $\frac{m}{\mu_{\ell-1}}$ previously active edges become inactive.
		As at the beginning of the $\ell$'th phase there are at most $m$ active edges, it follows that during the $\ell$'th phase at most $\mu_{\ell-1}$ balls are carved, as required.
	\end{proof}
	
	\begin{claim}\label{clm:PartitionIterationRDelta}
		For every SCC $C$ in $G\setminus S$ that is contained in $\cR$, it holds that $\diam(C,G)\le2r_L\le\frac{\Delta}{2}$.
	\end{claim}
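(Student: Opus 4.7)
The right inequality $2 r_L \le \Delta/2$ is immediate from \Cref{obs:rellSize} (which gives $r_L < \Delta/4$). The substance of the claim lies in proving $\diam(C, G) \le 2 r_L$ for every SCC $C$ of $G \setminus S$ that is contained in $\cR$.

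The plan is to fix some $v \in C$ and establish the containment $C \subseteq B^+_{G[\cR]}(v, r_L) \cap B^-_{G[\cR]}(v, r_L)$. Since $G[\cR]$ is a subgraph of $G$, distances in $G[\cR]$ dominate those in $G$, so the containment forces $d_G(v, u) \le r_L$ and $d_G(u, v) \le r_L$ for every $u \in C$, and then the triangle inequality in $G$ yields $d_G(u_1, u_2) \le 2 r_L$ for all $u_1, u_2 \in C$. Two structural facts feed into the argument. First, the only edges of $G[\cR]$ that end up in $S$ are the heavy ones removed at \cref{line:RemoveHeavyEdges}: every edge added to $S$ during the ball-carving loop lies in $\delta^+(B_{\ell,j})$ or $\delta^-(B_{\ell,j})$ and therefore has at least one endpoint inside a carved ball, hence outside $\cR$. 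Consequently every edge inside $(G \setminus S)[C] = (G[\cR]) \setminus S$ has weight at most $r_0$. Second, since $v \in \cR$ survived as a potential center through every phase, the density check must have failed at $v$ at the end of each phase $\ell$ for each direction $*\in\{+,-\}$: writing $A_\ell$ for the active set at the end of phase $\ell$, either $|B^*_{G[A_\ell]}(v, r_{\ell-1})| < m/\mu_{\ell-1}$ (too sparse) or $|B^*_{G[A_\ell]}(v, r_\ell)| > m/\mu_\ell$ (too dense).

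At phase $1$, I would exploit that $\mu_0 = 2^{2^L} \ge m+1$, which makes the sparse alternative $|B^*_{G[\cR]}(v, r_0)| < m/\mu_0$ collapse to the statement that the ball contains no edges. Combined with the weight cap $r_0$ on $(G[\cR]) \setminus S$, this forces $v$ to have no $*$-neighbor in $(G[\cR]) \setminus S$ at all, so the SCC of $v$ would be the singleton $\{v\}$ and the claim is vacuous in this case. Otherwise the dense alternative $|B^*_{G[\cR]}(v, r_1)| > m/\mu_1$ must hold for both directions $*$.

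The hard part will be leveraging the dense alternative to establish the containment $C \subseteq B^*_{G[\cR]}(v, r_L)$ in the non-singleton case. I anticipate a contradiction argument: supposing some $u \in C \setminus B^+_{G[\cR]}(v, r_L)$, trace a shortest path from $v$ to $u$ inside $(G[\cR]) \setminus S$ (whose length exceeds $r_L$ and whose edge weights are each at most $r_0$), and locate on this path an intermediate vertex whose ball profile would have satisfied the phase-$\ell$ density condition for some $\ell$, contradicting that phase's termination. The design of the radii $r_\ell$ (with the $\Delta/(8L)$ padding between consecutive ones) and the double-exponentially shrinking thresholds $\mu_\ell = 2^{2^{L-\ell}}$ should leave enough slack to close the argument; the delicate bookkeeping, particularly reconciling balls in $G[A_\ell]$ (used by the density check) with balls in $G[\cR]$ (governing the SCC structure, since $A_\ell \supseteq \cR$ makes the former strictly larger), is the technical heart of the proof.
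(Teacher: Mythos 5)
Your framing takes a genuinely different route from the paper, and the route you pick hits a wall. The paper does \emph{not} try to show that the SCC $C$ fits inside a single ball of radius $r_L$ around some $v \in C$; that containment is strictly stronger than the statement $\diam(C,G)\le 2r_L$, and there is no reason for it to hold --- a set of diameter close to $2r_L$ need not sit inside any radius-$r_L$ ball centered at one of its own points. The paper's key device, which your proposal never invokes, is a pigeonhole overlap: define $\cH$ to be the set of $v\in\cR$ with \emph{both} $|B^+_G(v,r_L)|>m/2$ and $|B^-_G(v,r_L)|>m/2$. For any $u,v\in\cH$, the edge sets $B^+_G(u,r_L)$ and $B^-_G(v,r_L)$ each hold more than half of $G$'s $m$ edges, so they intersect, giving $d_G(u,v)\le 2r_L$ with no containment claim needed. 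The remaining work is to show that any $v\in\cR\setminus\cH$ is a singleton SCC in $G\setminus S$, and then every SCC contained in $\cR$ is either a singleton or a subset of $\cH$.

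You do independently rediscover the heart of that second part: the observation that $\mu_0\ge m+1$ collapses the sparse alternative at the innermost radius to ``zero edges,'' and that line \ref{line:RemoveHeavyEdges} then cuts all of $v$'s remaining incident edges in $G[\cR]$, so $v$ is isolated. But the paper runs this as a \emph{downward} induction from $\ell=L$: assume $|B^+_G(v,r_L)|\le m/2=m/\mu_L$, then show $|B^+_{G[A_\ell]}(v,r_\ell)|\le m/\mu_\ell$ for all $\ell$ down to $0$, exploiting $A_\ell\subseteq A_{\ell+1}$ to chain the inequalities through the while-condition in \cref{line:WhileCurve}. Your sketch instead starts at phase $1$ and tries to escalate a ``dense alternative'' $|B^*_{G[\cR]}(v,r_1)|>m/\mu_1$ upward; that chain does not compose cleanly, because the nesting of active sets only pushes ball-size inequalities in the downward direction. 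Finally, the ``contradiction argument along a shortest path'' you anticipate for the hard containment step is both unnecessary (once you use the $\cH$ pigeonhole) and unlikely to close, precisely because the thing it tries to establish is false in general. The fix is to drop the containment goal entirely, set up $\cH$ with the $m/2$ thresholds in the original graph $G$ so the pigeonhole works against the global edge count, and run the density induction downward for the complementary singleton case.
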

	\begin{proof}
		Denote by 
		\[
		\cH=\left\{ v\in\cR\;\middle|\;|B_{G}^{+}(v,r_L)|>\frac{m}{2}\ \text{and}\ |B_{G}^{-}(v,r_L)|>\frac{m}{2}\right\} 
		\]
		the set of $\cR$ vertices whose both in- and out-balls of radius $r_L$ contain more than $\frac m2$ edges. 
		Note that for every $u,v\in \cH$, the balls $B_{G}^{+}(u,r_L)$ and $B_{G}^{-}(v,r_L)$ intersect, and thus $d_G(u,v)\le2r_L$. 
		We argue that for every vertex $v\in \cR\setminus\cH$, the SCC of $v$ in $G\setminus S$ is $\{v\}$.
		In other words, the SCCs of $G\setminus S$ that are fully
		contained in $\cR$ are either singletons or fully contained in $\cH$. In both cases, the diameter is at most $2r_L$, which by \Cref{obs:rellSize} is bounded by $\frac\Delta2$.
		
		Consider $v\in\cR\setminus\cH$, and suppose that $|B_{G}^{+}(v,r_L)|\leq\frac{m}{2}$ (the case $|B_{G}^{-}(v,r_L)|\leq\frac{m}{2}$ is symmetric).
		We will prove that in $G[\cR]\setminus S$, $v$ has no outgoing edges, and hence by \Cref{clm:PartitionIteration} the SCC of $v$ in $G\setminus S$ is $\{v\}$.
		Denote by $A_\ell$ the set of active vertices at the beginning of the $\ell$'th phase of the algorithm. That is, $A_L=V$. Denote also $A_0=\cR$.
		We argue by induction on $\ell$ that  $|B_{G[A_\ell]}^{+}(v,r_{\ell})|\leq\frac{m}{\mu_{\ell}}$.
		The base case ($\ell =L$): note that $\mu_L=2$, and by our assumption $|B_{G}^{+}(v,r_L)|\leq\frac{m}{2}=\frac{m}{\mu_{L}}$.
		Next, suppose the induction holds for $\ell+1$, i.e., $|B_{G[A_{\ell+1}]}^{+}(v,r_{\ell+1})|\leq\frac{m}{\mu_{\ell+1}}$.
		As $v\in\cR\subseteq A_\ell$, $v$ does not belong to any of the balls carved during the $(\ell+1)$'th phase.
		It follows that $|B_{G[A_{\ell}]}^{+}(v,r_{\ell})|<\frac{m}{\mu_{\ell}}$, as otherwise 
		$\frac{m}{\mu_{\ell}}\le|B_{G[A_{\ell}]}^{+}(v,r_{\ell})|\le|B_{G[A_{\ell}]}^{+}(v,r_{\ell+1})|\le|B_{G[A_{\ell+1}]}^{+}(v,r_{\ell+1})|\leq\frac{m}{\mu_{\ell+1}}$,	
		and thus $v$ would have been chosen as a ball center in the while loop.
		
		Applying our inductive hypothesis at $\ell=0$, we conclude that $|B_{G[\cR]}^{+}(v,r_{0})|=|B_{G[A_{0}]}^{+}(v,r_{0})|\leq\frac{m}{\mu_{0}}<1$, where the last inequality holds as $\mu_{0}=2^{2^{L}}\ge2^{2^{\log\log(m+1)}}=m+1$.
		That is, all the outgoing edges from $v$ in $G[\cR]$ have weight greater than $r_0$.
		As all these edges are added to the cut set $S$ in \cref{line:RemoveHeavyEdges}, we conclude that in $G\setminus S$, $v$ has no outgoing edges, as required.
	\end{proof}
	
	Using \Cref{clm:PartitionIteration} and \Cref{clm:PartitionIterationRDelta} we conclude:
	\begin{corollary}\label{cor:progress}
		For every SCC $C$ of $G\setminus S$ either (1) $|E(G[C])|\le\frac m2$, or (2) $\diam(G,C)\le\frac\Delta2$.
	\end{corollary}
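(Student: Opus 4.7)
The plan is simply to combine the two preceding claims directly. By \Cref{clm:PartitionIteration}, every SCC $C$ of $G\setminus S$ is fully contained either in some $\cC_\ell$ (for $\ell\in\{1,\dots,L\}$) or in $\cR$, so I would split the proof into these two cases.

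In the first case, $C\subseteq \cC_\ell$ for some $\ell\ge 1$. Then the second part of \Cref{clm:PartitionIteration} immediately gives $|E(G[C])|\le \frac{m}{\mu_\ell}$. Since $\mu_\ell$ is increasing in $\ell$ down to $\mu_L=2^{2^{0}}=2$, and in particular $\mu_\ell\ge \mu_L = 2$ for all $\ell\in\{1,\dots,L\}$, this yields $|E(G[C])|\le\frac{m}{2}$, establishing alternative (1).

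In the second case, $C\subseteq\cR$, and \Cref{clm:PartitionIterationRDelta} directly gives $\diam(C,G)\le 2r_L\le \frac{\Delta}{2}$, establishing alternative (2).

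There is no real obstacle here: the lemma is a one-line combination of \Cref{clm:PartitionIteration} (which handles the SCCs carved during the phases) and \Cref{clm:PartitionIterationRDelta} (which handles the SCCs surviving in $\cR$). The only minor care point is noting that $\mu_L=2$, so the edge bound $\frac{m}{\mu_\ell}$ for $\ell\ge 1$ automatically implies $\frac{m}{2}$; this is immediate from the definition $\mu_\ell=2^{2^{L-\ell}}$.
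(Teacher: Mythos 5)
Your proof is correct and follows exactly the route the paper intends: the paper simply states the corollary ``Using \Cref{clm:PartitionIteration} and \Cref{clm:PartitionIterationRDelta} we conclude,'' and your case split on whether $C$ lies in some $\cC_\ell$ or in $\cR$ is precisely that combination. (One small wording slip: $\mu_\ell=2^{2^{L-\ell}}$ is \emph{decreasing} in $\ell$, not increasing, but your conclusion $\mu_\ell\ge\mu_L=2$ is correct and is in fact already asserted inside \Cref{clm:PartitionIteration} itself.)
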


	\subsection{DAG construction}
	After executing the algorithm $\texttt{Laminar-Topological-Order}(G=(V,E,w),V)$, we obtain a laminar topological order $(<_D,\cP)$. In addition, by \Cref{clm:DagTotalOrder} we have a cut set $\widetilde{S}\subseteq E$ such that $D=G\setminus\widetilde{S}$ is a DAG, and $<_D$ is a topological order w.r.t. $D$.
	Following \cite{AHW25}, we create two DAGs $D_1,D_2$.
	$D_1$ will be an amplified version of $D$, while $D_2$ will be a DAG constructed with the reversed order of $<_D$.
	
	An important component of our construction is a $2$-hop $1$-directed spanner for the path graph. 
	It will be used to ensure that for every cluster $C\in \cP$ and $u,v\in C$, $\min\{d_{D_1}(s,t),d_{D_2}(s,t)\}\le 2\Delta_{C}$ (see \Cref{clm:DistanceUsing2HopSpanner}).
	The following lemma is well known \cite{AS87,BTS94}. We state the result in the context of our paper, and sketch its proof.
	\begin{lemma}[\cite{AS87,BTS94}]\label{lem:2hopSpanner}
		Consider the directed path $(v_1,v_2,\dots,v_n)$. There is a directed set of $O(n\log n)$ edges $H\subseteq\{(v_i,v_j)\mid i<j\}$, such that for every $i<j$
		either $(v_i,v_j)\in H$,
		or there is $q\in(i,j)$ such that $(v_i,v_q),(v_q,v_j)\in H$.
	\end{lemma}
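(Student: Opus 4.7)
The plan is a standard divide-and-conquer construction on the index interval $[1,n]$. I would define $H$ recursively as follows: for an interval $[a,b]$ with $a\le b$, pick the midpoint $m=\lfloor(a+b)/2\rfloor$, add to $H$ every edge $(v_i,v_m)$ for $a\le i<m$ and every edge $(v_m,v_j)$ for $m<j\le b$, and then recurse on the two sub-intervals $[a,m-1]$ and $[m+1,b]$. The base case is $a=b$, where nothing is added. Running this starting from $[1,n]$ yields the edge set $H$.

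For the covering property, consider any pair $i<j$. I would argue by induction on the length $b-a$ of the current interval containing both indices. If $i=m$ or $j=m$, then $(v_i,v_j)\in H$ is added directly at this level of the recursion. Otherwise, if $i<m<j$, then both $(v_i,v_m)$ and $(v_m,v_j)$ are added at this level, so $q=m$ works. In the remaining case, both $i$ and $j$ lie strictly on the same side of $m$, so they are contained in the smaller sub-interval passed to the recursive call, and the inductive hypothesis gives either the edge $(v_i,v_j)$ or a valid intermediate $q$ strictly between $i$ and $j$ in $H$.

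For the size bound, let $T(n)$ denote the number of edges produced on an interval of length $n$. At the top level we add at most $n-1$ edges (one edge from/to $v_m$ for every other index), and then recurse on two intervals each of length at most $\lceil n/2\rceil$. Thus $T(n)\le 2T(\lceil n/2\rceil)+n$, which solves to $T(n)=O(n\log n)$ by the standard master-theorem argument.

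There is essentially no serious obstacle here: the only small subtlety is the base/boundary handling (verifying that $q$ lies strictly inside $(i,j)$, not at an endpoint), which is immediate because we only use $q=m$ when $i<m<j$, and we invoke induction only on the strictly smaller sub-intervals, so the guarantee propagates cleanly. The construction is deterministic and yields the stated $O(n\log n)$ bound.
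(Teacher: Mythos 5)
Your proof is correct and takes essentially the same divide-and-conquer midpoint construction as the paper, with the same case analysis for the covering property; the only cosmetic difference is that you bound the edge count via the recurrence $T(n)\le 2T(\lceil n/2\rceil)+n$ rather than the paper's observation that each vertex contributes at most one edge per recursion level. Both yield $O(n\log n)$.
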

	\begin{proof}
		Fix $k=\lceil\frac n2\rceil$. For every $i<k$, add the edge $(i,k)$ to $H$, while for every $i>k$, add the edge $(k,i)$ to $H$.
		Continue this process recursively on the subpaths $(v_1,\dots,v_{k-1})$, $(v_{k+1},\dots,v_n)$.
		The stopping condition of the recursion is when we get a path consisting of a single vertex, and then we add no edges.
		
		Clearly, as the depth of the recursion is $\log n$, and each vertex is responsible for at most a single edge in each level of the recursion, it holds that $|H|=O(n\log n)$.
		The proof of the $2$-hop path property is done by induction. 
		Consider $i<j$. 
		If either $i=k$ or $j=k$, then $(i,j)\in H$ and we are done.
		Otherwise, if $i<k<j$, then $(i,k),(k,j)\in H$, as required.	
		Finally, it holds that either $i,j< k$, or $i,j> k$.
		In both cases, using the inductive hypothesis, either $(i,j)\in H$, or there is $q\in(i,j)$ such that $(i,q),(q,j)\in H$.
	\end{proof}
	
	\paragraph*{Construction of $D_1$.}
	Let $(v_1,v_2,\dots,v_n)$ be the vertices $V$ ordered w.r.t. the total order $<_D$.
	The DAG $D_1$ contains the following edges:
	\begin{itemize}
		\item Let $H_1$ be the set of edges obtained from \Cref{lem:2hopSpanner} w.r.t. $(v_1,v_2,\dots,v_n)$. We add to $D_1$ the following set of edges:
		$$E_{H_1}=\left\{(v_i,v_j)\in H_1\mid v_i\rightsquigarrow_{G}v_j \right\}~.$$
		That is, we add to $D_1$ every edge $(v_i,v_j)\in H_1$ such that there is a $v_i$ to $v_j$ path in $G$.
		
		\item \sloppy For every edge $e=(u,v)\in D=E\setminus\widetilde{S}$, and every two disjoint clusters $C_u,C_v\in\cP$ such that $u\in C_u$ and $v\in C_v$, we add to $D_1$ the edge $(C_u^\last,C_v^\first)$.
		Recall that for a cluster $C\in\cP$, $C^\first$ and $C^\last$ are the first and last vertices in $C$ w.r.t. $<_D$, respectively.
		Note that as by the singleton property (of \Cref{def:Laminar}) $\{u\},\{v\}\in\cP$, the edges $e=(u,v)$, $(u,C_v^\first)$, and $(C_u^\last,v)$ are also added to $D_1$. It follows that $D\subseteq D_1$.
		See illustration below:	
		\begin{center}
			\includegraphics[scale=0.8]{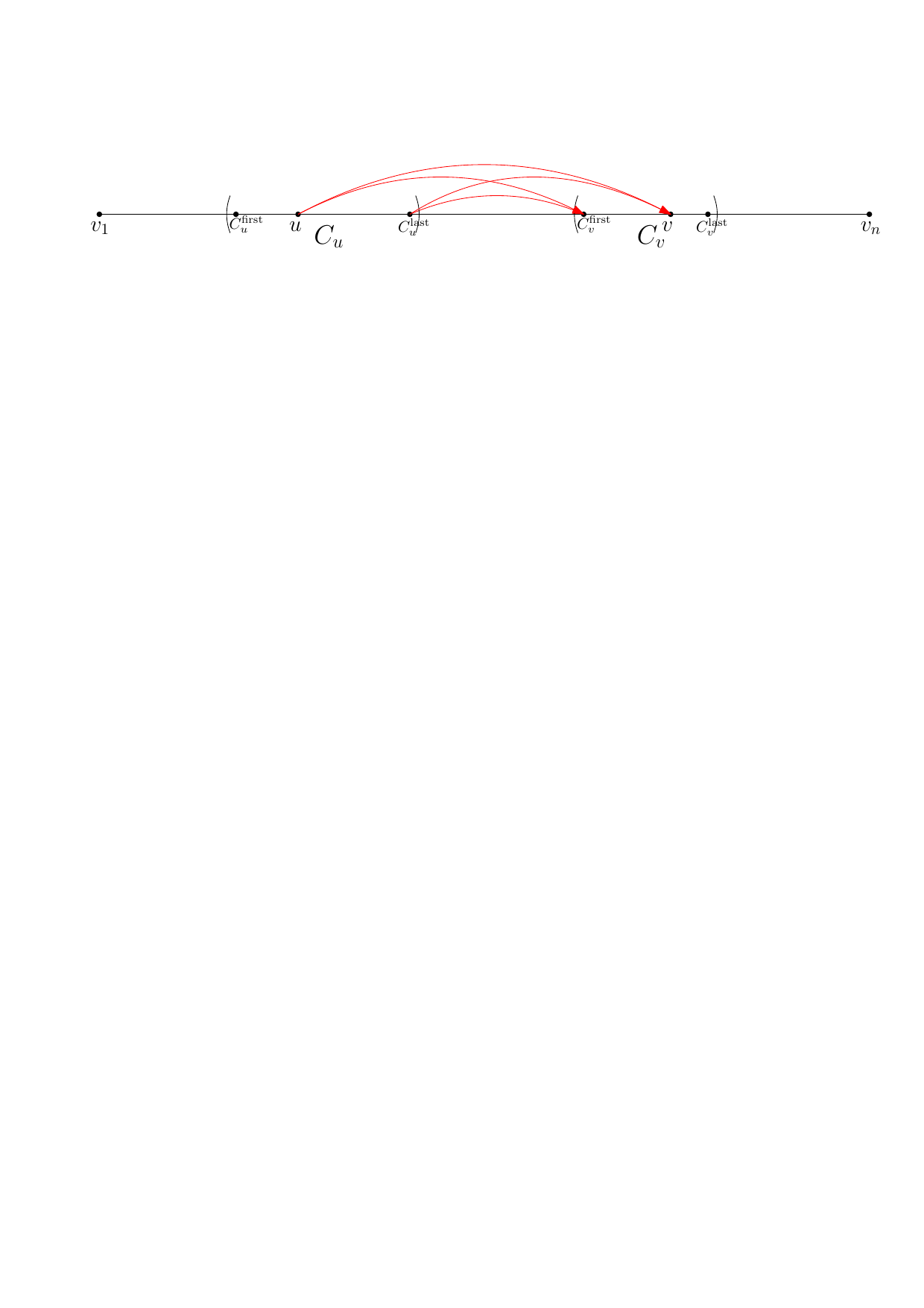}
		\end{center}
		\label{page:4AddedEdges}
	\end{itemize}
	The weight of every edge $(u,v)$ added to $D_1$ is simply $d_G(u,v)$.
	
	\paragraph*{Construction of $D_2$.} The construction of the DAG $D_2$ is similar to $D_1$, but w.r.t. the reverse order, and using only the first component. Formally, let $H_2$ be the set of edges obtained from \Cref{lem:2hopSpanner} w.r.t. $(v_n,v_{n-1},\dots,v_1)$. We add to $D_2$ the following set of edges:
	$$E_{H_2}=\left\{(v_j,v_i)\in H_2\mid v_j\rightsquigarrow_{G}v_i \right\}.$$
	The weight of every edge $(u,v)$ added to $D_2$ is simply $d_G(u,v)$.
	
	We continue by proving some basic properties regarding $D_1$ and $D_2$. First we argue that $D_1,D_2$ are indeed DAGs.
	
	\begin{claim}\label{clm:zD1D2Dags}
		$D_1$ and $D_2$ are DAGs.
	\end{claim}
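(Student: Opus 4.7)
My plan is to show that both $D_1$ and $D_2$ are DAGs by exhibiting an explicit topological order for each: $<_D$ for $D_1$, and the reversal of $<_D$ for $D_2$. Since any digraph that admits a topological order is acyclic, it suffices to verify that every edge in $D_1$ goes forward in $<_D$, and every edge in $D_2$ goes forward in the reversed order.

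First, for $D_1$, I would handle the two sources of edges separately. The edges coming from the $2$-hop spanner set $E_{H_1}$ are by construction of the form $(v_i, v_j)$ with $i < j$, so these trivially respect $<_D$. The more interesting edges are the $(C_u^\last, C_v^\first)$ edges, added for each edge $(u,v) \in D$ and each pair of disjoint clusters $C_u, C_v \in \cP$ containing $u$ and $v$ respectively. Here I would argue: by \Cref{clm:DagTotalOrder}, $<_D$ is a topological order of $D$, so $u <_D v$. Then I would invoke the Continuity property of \Cref{def:Laminar} to note that both $C_u$ and $C_v$ are contiguous intervals with respect to $<_D$, and since they are disjoint, one interval must lie entirely before the other. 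Since $u \in C_u$, $v \in C_v$, and $u <_D v$, it must be $C_u$ that lies entirely before $C_v$, so in particular $C_u^\last <_D C_v^\first$. This is the step I would expect to be the main (small) obstacle — it relies crucially on the Continuity property being used in tandem with the fact that $<_D$ topologically orders $D$.

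For $D_2$, the argument is even simpler: every edge in $E_{H_2}$ is of the form $(v_j, v_i)$ with $j > i$ by the construction of \Cref{lem:2hopSpanner} applied to the reversed sequence $(v_n, v_{n-1}, \dots, v_1)$. Thus every edge of $D_2$ goes forward in the reverse of $<_D$, so the reverse order is a topological order for $D_2$.

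In both cases, having exhibited a topological order over the vertex set, we conclude that $D_1$ and $D_2$ contain no directed cycle, hence are DAGs, completing the proof.
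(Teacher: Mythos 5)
Your proof is correct and follows essentially the same route as the paper: show every edge of $D_1$ goes forward in $<_D$ (using \Cref{clm:DagTotalOrder} to get $u<_D v$ and then the continuity property to place $C_u$ entirely before $C_v$), and similarly every edge of $D_2$ goes forward in the reverse of $<_D$. The only cosmetic difference is that you spell out the (easy) $D_2$ case, which the paper dispatches by symmetry.
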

	\begin{proof}
		We argue that for every edge $(v_i,v_j)\in D_1$ it holds that $i<j$. It follows that $D_1$ is indeed a DAG. The proof for $D_2$ is similar.
		Clearly, this proposition holds for all the edges in $H_1$, and thus also for the edges in $E_{H_1}\subseteq H_1$.
		Next, consider an edge $(u,v)\in D$. By \Cref{clm:DagTotalOrder}, $D$ is a DAG and $u<_Dv$.
		Consider two clusters $C_u,C_v\in\cP$ such that $C_u$ contains $u$ but not $v$, and $C_v$ contains $v$ but not $u$.
		By the hierarchy property (property \ref{req:hierarchy} of \Cref{def:Laminar}), $C_u$ and $C_v$ are disjoint. By the continuity property (property \ref{req:continuity}), and as $u<_Dv$, all the vertices in $C_u$ appear before all the vertices in $C_v$ w.r.t. $<_D$.
		It follows that the added edges $(u,C_v^\first),(C_u^\last,v),(C_u^\last,C_v^\first)$ respect our proposition.
		The claim follows.
	\end{proof}
	
	Next, we argue that the two DAGs $D_1,D_2$ are dominating.
	\begin{claim}\label{clm:zDominating}
		For every $s,t\in V$, $d_G(s,t)\le d_{D_1}(s,t),d_{D_2}(s,t)$.
	\end{claim}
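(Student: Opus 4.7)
The plan is to observe that every edge in $D_1$ (and symmetrically $D_2$) is assigned weight equal to the shortest-path distance in $G$ between its endpoints. Once this is established, dominance follows from the (directed) triangle inequality in $G$ applied to any path in $D_1$ or $D_2$.

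More concretely, I would first catalogue the edges of $D_1$. There are two types: (i) spanner edges $(v_i,v_j) \in E_{H_1}$, which by construction are only included when $v_i \rightsquigarrow_G v_j$, and receive weight $d_G(v_i,v_j)$; and (ii) bridge edges of the form $(C_u^{\last},C_v^{\first})$, $(u,C_v^{\first})$, $(C_u^{\last},v)$ arising from some $(u,v)\in D$ with $u\in C_u$, $v\in C_v$ disjoint clusters in $\cP$, again assigned weight $d_G(\cdot,\cdot)$. I would briefly verify that these distances are finite: by the connectivity property of \Cref{def:Laminar}, $G[C_u]$ and $G[C_v]$ are strongly connected, so $C_u^{\last} \rightsquigarrow_G u \rightsquigarrow_G v \rightsquigarrow_G C_v^{\first}$, and the analogous statements hold for the other bridge edges. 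Thus every edge $(x,y)$ of $D_1$ has weight exactly $d_G(x,y) \in \R_{>0}$.

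Given this, let $P=(s=x_0,x_1,\ldots,x_k=t)$ be any $s$-to-$t$ path in $D_1$. Its total weight is
\[
w_{D_1}(P) \;=\; \sum_{i=0}^{k-1} d_G(x_i,x_{i+1}) \;\ge\; d_G(s,t),
\]
where the inequality is the directed triangle inequality for $d_G$: concatenating the shortest $x_i$-to-$x_{i+1}$ walks in $G$ yields an $s$-to-$t$ walk in $G$ of total weight $\sum_i d_G(x_i,x_{i+1})$, whose weight is at least $d_G(s,t)$. Taking the infimum over all such paths $P$ gives $d_{D_1}(s,t) \ge d_G(s,t)$ (and the bound is vacuously true if no such path exists, since then $d_{D_1}(s,t)=\infty$). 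The argument for $D_2$ is identical, using $H_2$ in place of $H_1$ and noting that $D_2$ contains only spanner-type edges, each again weighted by $d_G(\cdot,\cdot)$.

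There is no real obstacle here: the entire content is that we defined edge weights in the DAGs to be $G$-distances rather than, say, original edge weights, which makes dominance immediate from the triangle inequality. The only point that requires even a moment's thought is checking that the bridge edges' endpoints are actually connected in $G$, which follows from the strong connectivity of clusters in the laminar topological order.
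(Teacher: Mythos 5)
Your proof is correct and follows essentially the same approach as the paper's: both arguments establish that every edge $(x,y)$ of $D_1\cup D_2$ has weight $d_G(x,y)<\infty$ (invoking the connectivity property of \Cref{def:Laminar} for the bridge edges), and then conclude by the directed triangle inequality applied along paths in $D_1,D_2$. The only cosmetic difference is that the paper first phrases the finiteness check as a reachability statement ($s\rightsquigarrow_{D_i}t\Rightarrow s\rightsquigarrow_G t$) and then appeals to the triangle inequality in one line, whereas you spell the path-summation step out explicitly.
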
 
	\begin{proof}
		We begin by arguing that for every edge $(u,v)\in D_1\cup D_2$ it holds that $u\rightsquigarrow_{G}v$.
		This clearly holds for all the edges in $E_{H_1}\cup E_{H_2}$ (as we only added edges $(u,v)\in H_1\cup H_2$ such that $u\rightsquigarrow_{G}v$).
		All the other edges were added due to some edge $(u,v)\in D=E\setminus\widetilde{S}$ and two clusters $C_u,C_v\in \cP$.
		Note that by the connectivity property (property \ref{req:connectivty} of \Cref{def:Laminar}) both $C_u$ and $C_v$ are strongly connected. 
		As $(u,v)\in D\subseteq G$, for every $u'\in C_u$ and $v'\in C_v$, $u'\rightsquigarrow_G v'$. Thus our proposition holds.
		
		By the transitivity of the reachability relation in a digraph, it follows that for every $s,t\in V$, if either $s\rightsquigarrow_{D_1}t$, or $s\rightsquigarrow_{D_2}t$, then $s\rightsquigarrow_{G}t$. That is, if $t$ is reachable from $s$ in either $D_1$ or $D_2$, then $t$ is reachable from $s$ in $G$.
		Finally, the claim follows by the triangle inequality, as for every edge $(u,v)\in D_1\cup D_2$ we used the weight $d_G(u,v)$.
	\end{proof}
	
	We finish the section by bounding the size of the created DAGs:
	\begin{claim}\label{clm:SparsityOfDAGs}
		Let $m$ be the number of edges in $G$. Then $|E(D_1)|,|E(D_2)|\le O(n\log n+m\log^2(n\Delta))=O((n+m)\log^2n)$.
	\end{claim}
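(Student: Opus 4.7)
The plan is to bound the contribution of each type of edge added to $D_1$ and $D_2$ separately. First, $D_2$ only contains the spanner edges $E_{H_2}\subseteq H_2$ produced by \Cref{lem:2hopSpanner} applied to $(v_n,v_{n-1},\dots,v_1)$, and that lemma directly gives $|H_2|=O(n\log n)$, so the bound for $D_2$ is immediate.

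For $D_1$, the edges split into two types. The first are the spanner edges $E_{H_1}\subseteq H_1$, which contribute at most $O(n\log n)$ by \Cref{lem:2hopSpanner}. The second are the auxiliary edges of the form $(C_u^\last,C_v^\first)$ (as well as $(u,C_v^\first)$ and $(C_u^\last,v)$, which arise from the singleton clusters in $\cP$) that are added for each triple consisting of an edge $e=(u,v)\in D=E\setminus\widetilde{S}$ and a pair of disjoint clusters $C_u,C_v\in\cP$ with $u\in C_u$ and $v\in C_v$. I will bound the number of such triples by charging them to the underlying edge $e$.

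Fix an edge $e=(u,v)\in D$. By \Cref{obs:RecursionDepth}, $u$ belongs to at most $O(\log(n\Delta))$ clusters of $\cP$, and likewise for $v$. Hence the number of pairs $(C_u,C_v)\in\cP^2$ with $u\in C_u$ and $v\in C_v$ (whether disjoint or not — we only need an upper bound) is at most $O(\log^2(n\Delta))$. Since $|E(D)|\le|E(G)|=m$, summing over all edges $e\in D$ yields $O(m\log^2(n\Delta))$ auxiliary edges. Combining the two types,
\[
|E(D_1)| \;\le\; |E_{H_1}| + O(m\log^2(n\Delta)) \;=\; O\bigl(n\log n + m\log^2(n\Delta)\bigr).
\]

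Finally, the assumption of polynomial aspect ratio (together with the convention that the minimum edge weight is $1$) gives $\Delta=\poly(n)$ and therefore $\log(n\Delta)=O(\log n)$, which simplifies both bounds to $O((n+m)\log^2 n)$. There is no real obstacle here; the proof is pure counting, and the only subtlety worth highlighting is that the hierarchy property of \Cref{def:Laminar} is not used to prune the count — it is enough that each vertex lies in $O(\log(n\Delta))$ clusters, which is exactly what \Cref{obs:RecursionDepth} provides.
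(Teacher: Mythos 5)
Your proof is correct and is essentially identical to the paper's argument: bound the $2$-hop spanner edges by $O(n\log n)$ via \Cref{lem:2hopSpanner}, charge each auxiliary edge to a triple $(e,C_u,C_v)$, use \Cref{obs:RecursionDepth} to bound the number of such pairs per edge by $O(\log^2(n\Delta))$, and sum over the at most $m$ edges of $D$. The only (very minor) extra content you add is to spell out the charging and to note that disjointness need not be used, which the paper leaves implicit.
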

	\begin{proof}
		By \Cref{lem:2hopSpanner}, $|E_{H_1}|,|E_{H_2}|\le O(n\log n)$.
		By \Cref{obs:RecursionDepth} every vertex belongs to at most $O(\log n\Delta)$ clusters in $\cP$.
		It follows that for every edge $(u,v)\in D=E\setminus\widetilde{S}$ we add at most $O(\log^2(n\Delta))$ edges to $D_1$.
		The claim follows.	
	\end{proof}

\section{Expected Distortion Analysis}\label{sec:ExpectedDistortion}
For every pair $s,t\in V$, either $d_{D_1}(s,t)=\infty$ or $d_{D_2}(s,t)=\infty$ (or both).
Recall that $C_{s,t}=\Parent_{\cP}(s,t)$ is the minimal cluster (w.r.t. inclusion order) containing both $s,t$, and that $\Delta_{st}=\Delta_{C_{s,t}}$ is the diameter of $C_{s,t}$.
Next, we argue that, using only the edges from $E_{H_1},E_{H_2}$, we can bound one of $d_{D_1}(s,t),d_{D_2}(s,t)$ by $2\Delta_{st}$.
\begin{claim}\label{clm:DistanceUsing2HopSpanner}
	$\forall s,t\in V$, $\min\{d_{D_1}(s,t),d_{D_2}(s,t)\}\le 2\Delta_{st}$.
\end{claim}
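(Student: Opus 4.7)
The plan is to work through the two-hop spanner directly on a single intermediate vertex inside $C_{s,t}$. Let $(v_1,\dots,v_n)$ be the enumeration of $V$ induced by $<_D$, and write $s = v_i$, $t = v_j$. By the continuity property of the laminar topological order, the vertices of $C_{s,t}$ occupy a contiguous block $v_a,v_{a+1},\dots,v_b$ in $<_D$, and in particular $a\le i,j\le b$. Assume without loss of generality that $s<_D t$, i.e.\ $i<j$; the case $i>j$ is treated symmetrically using $D_2$ and the spanner $H_2$ on the reversed order.

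First I would handle the trivial subcase $(v_i,v_j)\in H_1$. Since $v_i,v_j\in C_{s,t}$ and $G[C_{s,t}]$ is strongly connected by the connectivity property, we have $v_i\rightsquigarrow_G v_j$, so $(v_i,v_j)\in E_{H_1}\subseteq E(D_1)$ with weight $d_G(s,t)\le \Delta_{st}$, giving $d_{D_1}(s,t)\le \Delta_{st}\le 2\Delta_{st}$.

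The main case applies Lemma~\ref{lem:2hopSpanner} to the pair $(i,j)$: there exists $q\in(i,j)$ with $(v_i,v_q),(v_q,v_j)\in H_1$. The key step is to upgrade membership in $H_1$ to membership in $E_{H_1}$. Since $i<q<j$ and $v_i,v_j\in C_{s,t}$, the continuity property forces $v_q\in C_{s,t}$. Then the connectivity of $G[C_{s,t}]$ gives $v_i\rightsquigarrow_G v_q$ and $v_q\rightsquigarrow_G v_j$, so both edges lie in $E_{H_1}$. Each of these edges carries weight $d_G(v_i,v_q)$ or $d_G(v_q,v_j)$, and each is at most $\diam(C_{s,t},G)=\Delta_{st}$ because $v_i,v_q,v_j\in C_{s,t}$. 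Summing along the two-hop path yields $d_{D_1}(s,t)\le 2\Delta_{st}$, as required.

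I do not anticipate a real obstacle here; the proof is essentially a bookkeeping exercise that combines three ingredients: (i)~continuity, which places the spanner midpoint $v_q$ inside $C_{s,t}$; (ii)~connectivity, which promotes the raw spanner edges in $H_1$ (or $H_2$) to actual edges in $E_{H_1}$ (or $E_{H_2}$); and (iii)~the definition of $\Delta_{st}$, which bounds the weight of each of the at most two hops. The only point requiring care is the symmetric treatment of the case $s>_D t$, where $H_2$ is built on the reversed enumeration $(v_n,\dots,v_1)$: the midpoint produced by Lemma~\ref{lem:2hopSpanner} still lies strictly between $j$ and $i$ in the original indexing, hence still inside $C_{s,t}$ by continuity, and the same two-hop bound of $2\Delta_{st}$ transfers to $d_{D_2}(s,t)$.
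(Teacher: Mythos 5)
Your proof takes essentially the same route as the paper's: invoke the $2$-hop spanner of \Cref{lem:2hopSpanner} to find a midpoint $v_q$, use the continuity property to place $v_q$ inside $C_{s,t}$, use connectivity to upgrade $H_1$-membership to $E_{H_1}$-membership, bound each of the two hop weights by $\Delta_{st}$, and handle the case $t<_D s$ symmetrically via $H_2$ and $D_2$. The only thing you omit is the degenerate case $\Parent_{\cP}(s,t)=\emptyset$ (which occurs, e.g., when $s\not\rightsquigarrow_G t$), where $\Delta_{st}=\infty$ by convention and the claim is vacuous; since your opening sentence already presumes $C_{s,t}$ is a nonempty contiguous block, you should dispose of that case up front, exactly as the paper does before assuming $s\rightsquigarrow_G t$.
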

\begin{proof}
	Consider a pair $s,t\in V$. If $s\not\rightsquigarrow_G t$, then there is no cluster $C\in\cP$ containing both $s,t$ (as such a cluster would not fulfill the connectivity property of \Cref{def:Laminar}). Thus $\Parent_{\cP}(s,t)=\emptyset$, and $\Delta_{st}=\infty$. The claim holds trivially. We will thus assume that $s\rightsquigarrow_G t$.
	
	Suppose first that $s<_Dt$.
	Denote $C_{st}=\Parent_{\cP}(s,t)$.
	By \Cref{lem:2hopSpanner}, either (1) $(s,t)\in H_1$, or (2) there exists a vertex $u$ such that $s<_Du<_Dt$ and $(s,u),(u,t)\in H_1$.
	In case (2), by the continuity property of \Cref{def:Laminar}, $u\in C_{st}$. Further, by the connectivity property of \Cref{def:Laminar} $s\rightsquigarrow_G u$ and $u\rightsquigarrow_G t$.
	It follows that $(s,u),(u,t)\in D_1$. Note that the weight of all these edges is at most $\Delta_{st}$, and hence $d_{D_1}(s,t)\le 2\Delta_{st}$.
	By a symmetric argument, in the case $t<_Ds$ it holds that $d_{D_2}(s,t)\le 2\Delta_{st}$.
\end{proof}

Fix a pair $s,t\in V$ such that $s\rightsquigarrow_G t$, and let $\pi$ be the shortest $s$–$t$ path in $G$.
In our analysis, we will bound the distance in $D_2$ using only the diameter of the minimal cluster $C_{st}$ containing both $s$ and $t$. However, in $D_1$ the situation is considerably more delicate.
For instance, it might be that we delete a set of edges $S$, disjoint from $\pi$, such that $s$ and $t$ belong to different SCCs $C_s,C_t$ of $G\setminus S$, respectively. 
Note that as $\pi\subseteq G\setminus S$, it still holds that $d_{G\setminus S}(s,t)=d_G(s,t)$. Further, as no edge of $\pi$ was deleted, it necessarily holds that $s<_Dt$ (\Cref{obs:noCutThenSurvive}).
We will bound the following term inductively:
\[
\brick{\alpha(s,t)}:=d_{D_{1}}(s,t)\cdot\mathds{1}[s\le_{D}t]+3\Delta_{st}\cdot\mathds{1}[s\ge_{D}t]~.
\]
We use $s\le_{D}t$ to denote either $s=t$, or $s<_{D}t$.
Note that $\alpha$ is reflexive: for every vertex $v$, $d_{D_1}(v,v)=0$ (by definition), and $\Delta_{vv}=0$ (by the singleton property of \Cref{def:Laminar}), hence $\alpha(v,v)=0$.
Observe that, in order to prove \Cref{thm:mainExpectedDistortionAspectRatio}, it is enough to bound $\E[\alpha(s,t)]$.
\begin{observation}\label{obs:alphaDominating}
	For every $s,t\in V$,
	\[
	d_{D_{1}}(s,t)\cdot\mathds{1}[s\le_{D}t]+d_{D_{2}}(s,t)\cdot\mathds{1}[s\ge_{D}t]\le\alpha(s,t).
	\]
	In addition, if $s$ and $t$ belong to the same SCC with diameter $\Delta$, then $\alpha(s,t)\le3\Delta$.
\end{observation}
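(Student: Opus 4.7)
Both assertions will follow from Claims~\ref{clm:zD1D2Dags} and~\ref{clm:DistanceUsing2HopSpanner}, together with the structural fact that $D_1$ only carries edges that go up in $<_D$ while $D_2$ only carries edges that go down in $<_D$ (this is exactly the content of the proof of \Cref{clm:zD1D2Dags}). A direct consequence is that, for $s\ne t$, at most one of $d_{D_1}(s,t)$ and $d_{D_2}(s,t)$ is finite: if $s<_D t$ then $d_{D_2}(s,t)=\infty$, and if $s>_D t$ then $d_{D_1}(s,t)=\infty$.

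For the first inequality, I would split by the relative order of $s$ and $t$. If $s\le_D t$ the $D_2$ indicator on the left vanishes (or $s=t$ makes both distances zero), and both sides reduce to $d_{D_1}(s,t)\cdot\mathds{1}[s\le_D t]$. If $s>_D t$, the $D_1$ indicator on the left vanishes and the right-hand side equals $3\Delta_{st}$; since $d_{D_1}(s,t)=\infty$, \Cref{clm:DistanceUsing2HopSpanner} forces $d_{D_2}(s,t)\le 2\Delta_{st}\le 3\Delta_{st}$, giving the inequality.

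For the second assertion, the hypothesis that $s$ and $t$ share an SCC of diameter $\Delta$ translates, in our framework, to the existence of a cluster of $\cP$ of diameter at most $\Delta$ containing both vertices (each SCC that arises in the recursion is either a cluster of $\cP$ or a subset of one, and in either case its vertices lie in some ancestor cluster of $\cP$ of no larger diameter), hence $\Delta_{st}\le\Delta$. If $s\ge_D t$ then $\alpha(s,t)=3\Delta_{st}\le 3\Delta$. If $s<_D t$, then $d_{D_2}(s,t)=\infty$, so \Cref{clm:DistanceUsing2HopSpanner} yields $d_{D_1}(s,t)\le 2\Delta_{st}\le 2\Delta$ and therefore $\alpha(s,t)=d_{D_1}(s,t)\le 3\Delta$.

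There is no real obstacle here: the observation is essentially bookkeeping. The only mildly delicate step is noticing that, since one of $d_{D_1}(s,t)$, $d_{D_2}(s,t)$ is always $\infty$ when $s\ne t$, the minimum in \Cref{clm:DistanceUsing2HopSpanner} is always attained by the finite one, which is exactly what the two cases need.
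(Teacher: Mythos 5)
Your proof is correct and follows essentially the same route as the paper's: both parts reduce to \Cref{clm:DistanceUsing2HopSpanner} after splitting on whether $s\le_D t$ or $s>_D t$, and the fact you highlight (that $d_{D_1}(s,t)=\infty$ when $s>_D t$, and symmetrically for $D_2$, because each DAG only carries edges in one direction of $<_D$) is exactly the implicit justification the paper uses to pass from $\min\{d_{D_1},d_{D_2}\}\le 2\Delta_{st}$ to $d_{D_2}(s,t)\cdot\mathds{1}[s>_D t]\le 2\Delta_{st}$. Your treatment of the $s=t$ edge cases is, if anything, slightly more careful than the paper's.
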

\begin{proof}
	By \Cref{clm:DistanceUsing2HopSpanner}, $d_{D_{2}}(s,t)\cdot\mathds{1}[s>_{D}t]\le 2\Delta_{st}$, and hence
	\[
	d_{D_{1}}(s,t)\cdot\mathds{1}[s<_{D}t]+d_{D_{2}}(s,t)\cdot\mathds{1}[s>_{D}t]\le \alpha(s,t).
	\]
	Next, suppose that $s$ and $t$ belong to the same SCC with diameter $\Delta$. Then $\Delta_{st}\le\Delta$. By \Cref{clm:DistanceUsing2HopSpanner},
	\[
	\alpha(s,t)=d_{D_{1}}(s,t)\cdot\mathds{1}[s\le_{D}t]+3\Delta_{st}\cdot\mathds{1}[s\ge_{D}t]\le2\Delta\cdot\mathds{1}[s\le_{D}t]+3\Delta\cdot\mathds{1}[s\ge_{D}t]\le3\Delta.
	\]
\end{proof}

What makes $\alpha(s,t)$ useful is that it respects a certain type of triangle inequality.

\begin{lemma}\label{lem:alphaAppliedRecursively}
	Suppose that there are two different SCCs $C_s,C_t$ of $G[U]\setminus S$ such that there is an edge $(u,v)\in \pi$ with $(u,v)\notin S$, $\pi[s,u]\subseteq C_s$, and $\pi[v,t]\subseteq C_t$. Then 
	$$\alpha(s,t)\le\alpha(s,u)+d_G(u,v)+\alpha(v,t)~.$$
\end{lemma}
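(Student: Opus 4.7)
The first step is to observe that necessarily $s <_D t$. Since $C_s$ and $C_t$ are distinct SCCs of $G[U]\setminus S$, the edge $(u,v)\in\pi$ lies in $G[U]\setminus S$ (because $(u,v)\notin S$), $u\in C_s$, and $v\in C_t$, the SCC topological ordering used in \cref{line:SCCpartition} must place $C_s$ strictly before $C_t$, so $s<_D t$ by the concatenation rule defining $<_D$. Consequently $\alpha(s,t)=d_{D_1}(s,t)$, and it suffices to exhibit an $s$-to-$t$ path in $D_1$ of length at most $\alpha(s,u)+d_G(u,v)+\alpha(v,t)$. Moreover, by \Cref{obs:noCutThenSurvive}, $(u,v)\in D$, which is what triggers the $D_1$-construction rule adding ``jump'' edges $(C^{\last}, C'^{\first})$ for every pair of disjoint clusters $C, C'\in\cP$ with $u\in C$ and $v\in C'$.

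\textbf{Case analysis.} I would split into four cases based on whether $s\le_D u$ or $u<_D s$, and symmetrically on whether $v\le_D t$ or $t<_D v$. In each case I exhibit an explicit $s$-to-$t$ path in $D_1$ of the form $s\to x\to y\to t$, where $(x,y)$ is a jump edge produced by the $D_1$-construction applied to $(u,v)\in D$ with a pair of disjoint clusters taken from $\{\{u\}, C_{su}\}$ on the left and $\{\{v\}, C_{vt}\}$ on the right. Specifically: (i) if $s\le_D u$ and $v\le_D t$, use $x=u$, $y=v$; (ii) if $s\le_D u$ and $t<_D v$, use $x=u$, $y=C_{vt}^{\first}$; (iii) if $u<_D s$ and $v\le_D t$, use $x=C_{su}^{\last}$, $y=v$; (iv) if $u<_D s$ and $t<_D v$, use $x=C_{su}^{\last}$, $y=C_{vt}^{\first}$. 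Disjointness of the two chosen clusters in each case follows from $C_{su}\subseteq C_s$, $C_{vt}\subseteq C_t$, and $C_s\cap C_t=\emptyset$, and the singleton property of \Cref{def:Laminar} ensures $\{u\},\{v\}\in\cP$.

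\textbf{Summing the lengths.} For the first leg, $d_{D_1}(s,u)=\alpha(s,u)$ directly in cases (i)--(ii), while in cases (iii)--(iv) the 2-hop-spanner argument from \Cref{clm:DistanceUsing2HopSpanner} yields $d_{D_1}(s,C_{su}^{\last})\le 2\Delta_{su}$ (using $s,C_{su}^{\last}\in C_{su}$ and $s\le_D C_{su}^{\last}$). The middle (jump) edge carries weight $d_G(x,y)$, which by the triangle inequality in $G$ and strong connectivity of the surrounding clusters is bounded by $d_G(u,v)$ plus $\Delta_{su}$ (only when $x=C_{su}^{\last}$) and plus $\Delta_{vt}$ (only when $y=C_{vt}^{\first}$). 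The third leg is handled symmetrically, contributing $d_{D_1}(v,t)=\alpha(v,t)$ when $v\le_D t$, and $d_{D_1}(C_{vt}^{\first},t)\le 2\Delta_{vt}$ when $t<_D v$. Adding the three pieces, the crossing sides neatly give $2\Delta+\Delta=3\Delta$, which matches the $3\Delta$ value of $\alpha$ on the crossing side, so the total in every case is exactly $\alpha(s,u)+d_G(u,v)+\alpha(v,t)$.

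\textbf{Main obstacle.} The one place where care is needed is to route through the \emph{minimal} enclosing clusters $C_{su}$ and $C_{vt}$, rather than through $C_s$ and $C_t$: if one naively used $C_s^{\last}$ and $C_t^{\first}$, the 2-hop legs would contribute $2\Delta_{C_s}$ and $2\Delta_{C_t}$, which may be strictly larger than $3\Delta_{su}=\alpha(s,u)$ and $3\Delta_{vt}=\alpha(v,t)$ whenever $C_{su}$ is a proper sub-cluster of $C_s$ (and analogously on the right). Luckily the $D_1$-construction adds jump edges for \emph{every} pair of disjoint clusters containing the endpoints of $(u,v)\in D$, so the edges $(u,C_{vt}^{\first})$, $(C_{su}^{\last},v)$, and $(C_{su}^{\last},C_{vt}^{\first})$ are all already present in $D_1$, and invoking them against $C_{su}$ and $C_{vt}$ is precisely what makes the tight triangle-like inequality close.
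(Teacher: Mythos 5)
Your proposal is correct and follows essentially the same route as the paper's proof: both establish $s<_D t$ (so $\alpha(s,t)=d_{D_1}(s,t)$), split into the same four cases on the relative $<_D$-orders of $s,u$ and $v,t$, and in each case exhibit a three-edge path through the jump edge between $\{u\}$ or $C_{su}^{\last}$ and $\{v\}$ or $C_{vt}^{\first}$, bounding the two outer legs via \Cref{clm:DistanceUsing2HopSpanner} and the middle leg by the triangle inequality. Your observation that one must route through the minimal clusters $C_{su},C_{vt}$ rather than through $C_s,C_t$ is exactly the point that makes the bound tight.
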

\begin{proof}
	We have $\pi=\pi[s,u]\circ(u,v)\circ\pi[v,t]$.	
	By \Cref{obs:noCutThenSurvive} $(u,v)\in D$. In particular, $u<_Dv$, and by the continuity property of \Cref{def:Laminar}, as $C_s,C_t\in\cP$, $s<_Dt$. 
	Hence $\alpha(s,t)=d_{D_1}(s,t)$.
	The algorithm proceeds independently on $C_s,C_t$, and orders the vertices in $C_s,C_t$ accordingly.
	We continue by case analysis (see \Cref{fig:cases} for an illustration of the cases, and the path used in each case). We next make observations for each case:
	\begin{enumerate}
		\item $s\le_{D}u$ and $v\le_{D}t$: As $(u,v)\in D_1$, by the triangle inequality,
		$$d_{D_1}(s,t)\le d_{D_{1}}(s,u)+d_{G}(u,v)+d_{D_{1}}(v,t)~.$$
		
		\item $s\le_{D}u$ and $v>_{D}t$: recall that  $C_{vt}=\Parent_{\cP}(v,t)$ denotes the minimal cluster in $\cP$ containing both $v,t$, and that $\Delta_{vt}=\diam(C_{vt},G)$. The algorithm added the edge $(u,C^{\first}_{vt})$, with weight
		$w(u,C^{\first}_{vt})=d_G(u,C^{\first}_{vt})\le d_G(u,v)+d_G(v,C^{\first}_{vt})\le d_G(u,v)+\Delta_{vt}$.
		By \Cref{clm:DistanceUsing2HopSpanner}, $d_{D_1}(C^{\first}_{vt},t)\le2\Delta_{vt}$.
		Thus,
		$$d_{D_1}(s,t)\le d_{D_{1}}(s,u)+w(u,C^{\first}_{vt})+d_{D_{1}}(C^{\first}_{vt},t)\le  d_{D_{1}}(s,u)+ d_G(u,v)+3\Delta_{vt}~.$$
		
		\item $s>_{D}u$ and $v\le_{D}t$: Let $C_{su}=\Parent_{\cP}(s,u)$ with diameter $\Delta_{su}$. 
		The algorithm added the edge $(C^{\last}_{su},v)$, with weight
		$w(C^{\last}_{su},v)=d_G(C^{\last}_{su},v)\le d_G(C^{\last}_{su},u)+d_G(u,v)\le \Delta_{su}+d_G(u,v)$.
		By \Cref{clm:DistanceUsing2HopSpanner}, $d_{D_1}(s,C^{\last}_{su})\le2\Delta_{su}$.
		Thus,
		$$d_{D_1}(s,t)\le d_{D_{1}}(s,C^{\last}_{su})+w(C^{\last}_{su},v)+d_{D_{1}}(v,t)\le  3\Delta_{su}+ d_G(u,v)+d_{D_{1}}(v,t)~.$$
		
		\item $s>_{D}u$ and $v>_{D}t$: Here we will use both the clusters $C_{su}$ and $C_{vt}$ of respective diameters $\Delta_{su}$ and $\Delta_{vt}$.
		The algorithm added the edge $(C^{\last}_{su},C^{\first}_{vt})$, of weight $w(C^{\last}_{su},C^{\first}_{vt})=d_G(C^{\last}_{su},C^{\first}_{vt})\le d_G(C^{\last}_{su},u)+d_G(u,v)+d_G(v,C^{\first}_{vt})\le \Delta_{su}+d_G(u,v)+\Delta_{vt}$.
		By \Cref{clm:DistanceUsing2HopSpanner}, $d_{D_1}(s,C^{\last}_{su})\le2\Delta_{su}$, and $d_{D_1}(C^{\first}_{vt},t)\le2\Delta_{vt}$.
		We conclude that 
		\[
		d_{D_{1}}(s,t)\le d_{D_{1}}(s,C_{su}^{\last})+w(C_{su}^{\last},C_{vt}^{\first})+d_{D_{1}}(C_{vt}^{\first},t)\le3\Delta_{su}+d_{G}(u,v)+3\Delta_{vt}~.
		\]

	\end{enumerate}
	\begin{figure}[t]
		\begin{center}
			\includegraphics[width=0.9\textwidth]{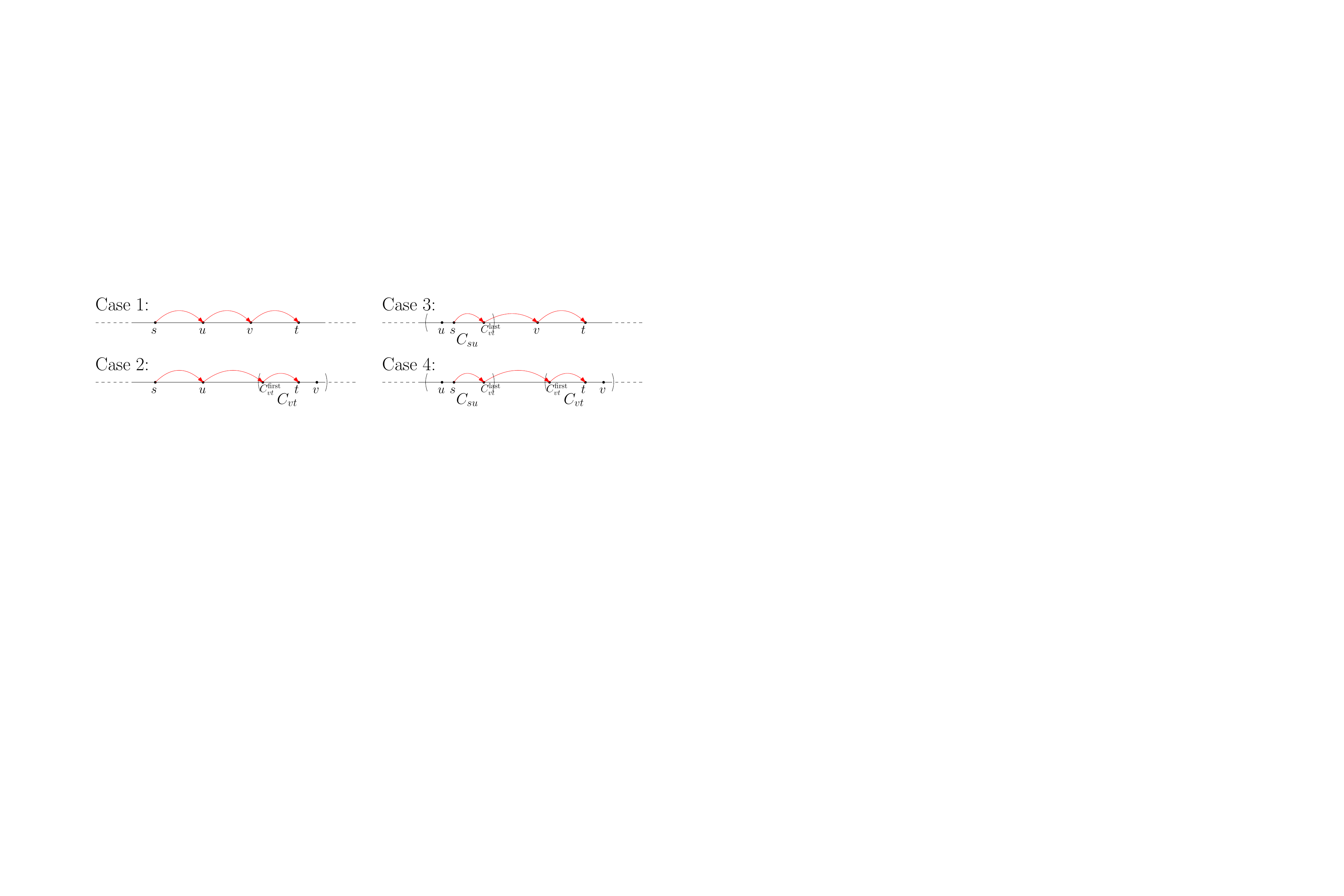}\caption{\emph{\small Illustration of the 4 cases in the proof of \Cref{lem:alphaAppliedRecursively}. In each case the vertices are ordered left to right according to $<_D$. }
			}\label{fig:cases}
		\end{center}
	\end{figure}
	Combining all four cases, we obtain
	\begin{align*}		
		\alpha(s,t)~=~d_{D_{1}}(s,t)& \quad\leq\quad\left(d_{D_{1}}(s,u)+d_{G}(u,v)+d_{D_{1}}(v,t)\right)\boldsymbol{\mathds{1}}[s\le_{D}u]\boldsymbol{\mathds{1}}[v\le_{D}t] \\
		& \phantom{\quad\leq\quad}+\left(d_{D_{1}}(s,u)+d_{G}(u,v)+3\Delta_{vt}\right)\boldsymbol{\mathds{1}}[s\le_{D}u]\boldsymbol{\mathds{1}}[v>_{D}t] \\
		& \phantom{\quad\leq\quad}+\left(3\Delta_{su}+d_{G}(u,v)+d_{D_{1}}(v,t)\right)\boldsymbol{\mathds{1}}[s>_{D}u]\boldsymbol{\mathds{1}}[v\le_{D}t] \\
		& \phantom{\quad\leq\quad}+\left(3\Delta_{su}+d_{G}(u,v)+3\Delta_{vt}\right)\boldsymbol{\mathds{1}}[s>_{D}u]\boldsymbol{\mathds{1}}[v>_{D}t] \\
		& \quad=\quad d_{D_{1}}(s,u)\cdot\boldsymbol{\mathds{1}}[s\le_{D}u]+3\Delta_{su}\cdot\boldsymbol{\mathds{1}}[s>_{D}u]~~+~~d_{G}(u,v)\\&\phantom{\quad\leq\quad}+d_{D_{1}}(v,t)\cdot\boldsymbol{\mathds{1}}[v\le_{D}t]+3\Delta_{vt}\cdot\boldsymbol{\mathds{1}}[v>_{D}t] \\
		& \quad=\quad\alpha(s,u)+d_{G}(u,v)+\alpha(v,t)~.
	\end{align*}
	
\end{proof}

In fact, the triangle inequality of \Cref{lem:alphaAppliedRecursively} can also be applied w.r.t. the temporary active sets created during \Cref{alg:DigraphPartition}.

\begin{corollary}\label{cor:alphaAppliedRecursivelyMiddle}
	Suppose that during the execution of \Cref{alg:DigraphPartition} all the vertices in the shortest $s$–$t$ path $\pi$ are active, i.e., $\pi\subseteq A$.
	Assume a suffix $\pi[v,t]$ of $\pi$ joins the carved ball $B_{\ell,j}$ while the prefix $\pi[s,u]$ remains active, where $(u,v)\in\pi$ and $(u,v)\notin S$. Then 
	$$\alpha(s,t)\le\alpha(s,u)+d_G(u,v)+\alpha(v,t)~.$$
	Similarly, if the prefix $\pi[s,u]$ joins $B_{\ell,j}$, the suffix $\pi[v,t]\subseteq A$ remains active, and $(u,v)\in\pi$ with $(u,v)\notin S$, then $\alpha(s,t)\le\alpha(s,u)+d_G(u,v)+\alpha(v,t)$.
\end{corollary}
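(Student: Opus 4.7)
The plan is to reduce both cases to the setting of Lemma~\ref{lem:alphaAppliedRecursively}. Concretely, I will show that at the end of the current call to \texttt{Digraph-Partition}, the SCCs $C'_s, C'_t$ of $G[U]\setminus S$ containing $s$ and $t$ are distinct, the edge $(u,v)$ survives between them, and $s<_D t$, $u<_D v$. Once these structural facts are in place, the four-case analysis of Lemma~\ref{lem:alphaAppliedRecursively} applies verbatim and yields the desired triangle inequality.

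I would start with the first case, where the suffix $\pi[v,t]$ joins $B_{\ell,j}$ while the prefix $\pi[s,u]$ remains active. By \Cref{clm:PartitionIteration}, every SCC of $G[U]\setminus S$ that meets $B_{\ell,j}$ is wholly contained in it, so $C'_v, C'_t \subseteq B_{\ell,j}$, while $C'_s, C'_u$ (formed from subsequent carvings or from $\cR$) are contained in $V\setminus B_{\ell,j}$. In particular $C'_u\neq C'_v$; together with $(u,v)\notin S$ and \Cref{obs:noCutThenSurvive}, this gives $(u,v)\in D$ and therefore $u<_D v$.

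The subtlest step, and the one I would check most carefully, is deriving $s<_D t$. The hypothesis $(u,v)\notin S$ pins down the type of ball being carved: since $u$ lies outside $B_{\ell,j}$ and $v$ inside, the edge $(u,v)$ belongs to $\delta^-(B_{\ell,j})$ but not to $\delta^+(B_{\ell,j})$, so the only way it escapes the cut is if $B_{\ell,j}=B^+_{G[A]}(x_j,R)$ is an out-ball and $\delta^+$ (not $\delta^-$) is added to $S$. Consequently no edge of $G[U]\setminus S$ leaves $B_{\ell,j}$, which forces every SCC inside $B_{\ell,j}$ to come after every SCC outside in the topological order fixed on \cref{line:SCCpartition} of \texttt{Laminar-Topological-Order}. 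Concatenating the recursive orderings, this yields $s<_D t$, as needed.

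With $u<_D v$, $s<_D t$, and $(u,v)\in D$ in hand, I would finish by replaying the proof of \Cref{lem:alphaAppliedRecursively} without modification: the cases $s\lessgtr_D u$ combined with $v\lessgtr_D t$, the shortcut edges $(C_{su}^{\last},v)$, $(u,C_{vt}^{\first})$, $(C_{su}^{\last},C_{vt}^{\first})$ added to $D_1$ (whose weights are bounded via the triangle inequality in $G$), and the $2$-hop bounds from \Cref{clm:DistanceUsing2HopSpanner}, all give $\alpha(s,t)\le \alpha(s,u)+d_G(u,v)+\alpha(v,t)$. The parent clusters $\Parent_\cP(s,u)$ and $\Parent_\cP(v,t)$ exist since $U\in\cP$ contains all four vertices. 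The symmetric second case, where the prefix enters $B_{\ell,j}$, is handled identically: now $(u,v)$ would lie in $\delta^+$, so $(u,v)\notin S$ forces $B_{\ell,j}=B^-_{G[A]}(x_j,R)$, which in turn puts every SCC inside $B_{\ell,j}$ before every SCC outside, again giving $s<_D t$ and $u<_D v$, and the same four-case analysis concludes.
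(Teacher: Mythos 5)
Your plan has two genuine gaps, both stemming from the assumption that the structure after carving $B_{\ell,j}$ is as clean as in the hypotheses of Lemma~\ref{lem:alphaAppliedRecursively}.

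First, the derivation of $s<_D t$ is incorrect. You argue that since no edge of $G[U]\setminus S$ leaves $B_{\ell,j}$, ``every SCC inside $B_{\ell,j}$ comes after every SCC outside in the topological order.'' But the absence of inside-to-outside edges only rules out placing an inside SCC \emph{before} an outside SCC \emph{when there is a path between them}. For a pair of incomparable SCCs, the topological completion on \cref{line:SCCpartition} is arbitrary. Since edges of the prefix $\pi[s,u]$ may be cut by later carvings, $s$ can end up in an SCC with no path to or from the SCC of $t$, and the completion may then place $s$ after $t$. The paper's proof explicitly handles this: in the case where $s$ and $u$ land in different SCCs with $u\le_D s$, it considers the sub-case $s>_D t$ and bounds $\alpha(s,t)$ by $3\Delta_{st}=3\Delta_{su}=\alpha(s,u)$ using the observation that $s,t,u$ all lie in distinct SCCs, so $\Delta_{su}=\Delta_{st}=\Delta$.

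Second, even granting $s<_D t$, you cannot ``replay the proof of Lemma~\ref{lem:alphaAppliedRecursively} without modification.'' That proof assumes $s,u$ share an SCC $C_s$ and $v,t$ share an SCC $C_t$ with $C_s\cap C_t=\emptyset$. This is exactly what makes $C_{su}=\Parent_\cP(s,u)$ a \emph{proper} subcluster of $U$ disjoint from $\{v\}$, so the shortcut edge $(C_{su}^{\last},v)$ is genuinely added to $D_1$. In the corollary's setting, later carvings can put $s$ and $u$ in different SCCs; then $\Parent_\cP(s,u)=U$, which contains $v$, and the shortcut edge does not exist. So Cases 3 and 4 of the lemma's proof (which rely on $(C_{su}^{\last},v)$) break down precisely when $s>_D u$ and $s,u$ are separated. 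The paper circumvents this by a refined outer case analysis on whether $s,u$ (resp.\ $v,t$) end up in the same SCC, and when they do not, it invokes the lemma only on tuples where the needed sharing holds, namely $(s,u,v,v)$ or $(u,u,v,t)$, chaining with a direct triangle inequality afterward. Your proof needs this extra layer; without it the reduction to the lemma is not sound.
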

\begin{proof}
	We will prove the first proposition of the Corollary, as the second one is symmetric.
	After carving the ball $B_{\ell,j}$, \Cref{alg:DigraphPartition} continues to carve balls and eventually creates the cut set $S$.
	Note that after we carved $B_{\ell,j}$, its vertices cease to be active. In particular, all the edges in the suffix $(u,v)\circ\pi[v,t]$ do not belong to $S$, and thus
	$u\rightsquigarrow_{G\setminus S}v\rightsquigarrow_{G\setminus S}t$.
	As every SCC of $G\setminus S$ is either disjoint from $B_{\ell,j}$ (\Cref{clm:PartitionIteration}) or contained in $B_{\ell,j}$, it follows that $u<_Dv$ and $u<_Dt$.
	
	After carving $B_{\ell,j}$, the algorithm continues carving balls. $s$ and $u$ can end up in the same SCC or not. 
	We continue with a case analysis:
	\begin{itemize}
		\item $s$ and $u$ belong to the same SCC.  
		Note that since $u<_Dt$, by the continuity property of \Cref{def:Laminar}, $s<_Dt$, and thus $\alpha(s,t)=d_{D_1}(s,t)$. We divide into cases:
		\begin{itemize}
			\item $v$ and $t$ belong to the same SCC.  
			This is exactly the case described in \Cref{lem:alphaAppliedRecursively}. It follows that $\alpha(s,t)\le\alpha(s,u)+d_G(u,v)+\alpha(v,t)$.
			
			\item $v$ and $t$ belong to different SCCs. Here $v<_Dt$ (as $\pi[v,t]\subseteq G\setminus S$), and thus $\alpha(v,t)=d_{D_1}(v,t)$.  
			We apply \Cref{lem:alphaAppliedRecursively} on the four vertices $(s,u,v,v)$ (that is, $v$ also takes the role of $t$) to obtain $d_{D_{1}}(s,v)=\alpha(s,v)\le\alpha(s,u)+d_{G}(u,v)+\alpha(v,v)=\alpha(s,u)+d_{G}(u,v)$. It follows:
			\[
			\alpha(s,t)=d_{D_{1}}(s,t)\le d_{D_{1}}(s,v)+d_{D_{1}}(v,t)\le\alpha(s,u)+d_{G}(u,v)+\alpha(v,t)~.
			\]
		\end{itemize}
		
		\item $s$ and $u$ belong to different SCCs and $s\le_Du$.  
		By transitivity, it also holds that $s<_Dv$ and $s<_Dt$. Thus $\alpha(s,t)=d_{D_1}(s,t)$.  
		We divide into cases:
		\begin{itemize}
			\item $v$ and $t$ belong to different SCCs. Here $v<_Dt$ (as $\pi[v,t]\subseteq G\setminus S$), and thus $\alpha(v,t)=d_{D_1}(v,t)$. Using the triangle inequality, we have 
			\[
			\alpha(s,t)=d_{D_{1}}(s,t)\le d_{D_{1}}(s,u)+d_{G}(u,v)+d_{D_{1}}(v,t)=\alpha(s,u)+d_{G}(u,v)+\alpha(v,t)~.
			\]
			
			\item $v$ and $t$ belong to the same SCC.  
			We apply \Cref{lem:alphaAppliedRecursively} on the four vertices $(u,u,v,t)$ (that is, $u$ also takes the role of $s$) to obtain $d_{D_{1}}(u,t)=\alpha(u,t)\le\alpha(u,u)+d_{G}(u,v)+\alpha(v,t)= d_{G}(u,v)+\alpha(v,t)$. It follows:
			\[
			\alpha(s,t)=d_{D_{1}}(s,t)\le d_{D_{1}}(s,u)+d_{D_{1}}(u,t)\le\alpha(s,u)+d_{G}(u,v)+\alpha(v,t)~.
			\]
		\end{itemize}
		
		\item $s$ and $u$ belong to different SCCs and $u\le_Ds$.  
		Here $\alpha(s,u)=3\Delta_{su}$.  
		Let $\Delta$ be the diameter of the graph $G$ we are currently using in \Cref{alg:DigraphPartition}.
		Since $s,t$, and $u$ all belong to different SCCs, while still belonging to the graph $G$ representing some cluster of $\cP$, we have $\Delta_{st}=\Delta_{su}=\Delta$.
		We divide into cases:
		\begin{itemize}
			\item $s<_Dt$.  
			By \Cref{clm:DistanceUsing2HopSpanner}, $\alpha(s,t)=d_{D_1}(s,t)\le 2\Delta_{st}\le 3\Delta_{su}=\alpha(s,u)\le \alpha(s,u)+d_{G}(u,v)+\alpha(v,t)$.
			
			\item $s>_Dt$. Here
			$\alpha(s,t)=3\Delta_{st}=3\Delta_{su}=\alpha(s,u)\le \alpha(s,u)+d_{G}(u,v)+\alpha(v,t)$.
		\end{itemize}
	\end{itemize}
\end{proof}

\subsection{Proof of \Cref{thm:mainExpectedDistortionAspectRatio}}\label{subsec:ProofOfMainThm}
In \Cref{subsec:ProofOfMainThm,subsec:InductiveProof} we will bound $\E[\alpha(s,t)]$, which will conclude the proof of \Cref{thm:mainExpectedDistortionAspectRatio}. 
In essence, the proof is done using a potential function argument. Fix $\brick{c}=384$\submit{\todo{$c$ def}}.
We will bound the expected value of $\alpha$ using the following potential function $f:\N\times\R_{\ge0}\rightarrow\R_{\ge0}$:
\[
\brick{f(m,\Delta)}:=c\cdot L\cdot\ln m\cdot e^{\delta\cdot L\cdot\log(2m\Delta)}~.
\]
In addition, if either $m=0$ or $\Delta=0$, set $f(m,\Delta)=1$.
Note that as we assume $\Delta=\poly(n)$,  $f(m,\Delta)=O(\log n\cdot \log \log n)$.
The following is our inductive hypothesis.
\begin{inductiveHypothesis}\label{ih:main}
	Consider a digraph $G=(V,E,w)$, and a cluster $U\subseteq V$ such that (1) $G[U]$ is strongly connected, (2) $\diam(G,U)=\Delta$, and (3) $|E(G[U])|=m$.
	Let $(<_D,\cP)$ be the laminar topological order returned by \Cref{alg:LaminarPartition}, and let $D_1,D_2$ be the corresponding DAGs. Then for every $s,t\in U$, 
	$$\E[\alpha(s,t)]\le f(m,\Delta)\cdot d_{G[U]}(s,t)~.$$
\end{inductiveHypothesis}

We will next prove our main \Cref{thm:mainExpectedDistortionAspectRatio} using \Cref{ih:main}.
Note that the main difference between the two is that \Cref{thm:mainExpectedDistortionAspectRatio} also holds for digraphs that are not strongly connected.

Fix $s,t\in V$. If $s\not\rightsquigarrow_Gt$, then $d_G(s,t)=\infty$, and the theorem holds trivially.
We will thus assume that $s\rightsquigarrow_Gt$.
By \Cref{obs:alphaDominating}, in order to prove \Cref{thm:mainExpectedDistortionAspectRatio} it is enough to bound $\E[\alpha(s,t)]$.

Let $\pi$ be the shortest $s$–$t$ path in $G$.
Suppose first that all the vertices in $\pi$ belong to the same SCC $C$ in $G$. 
As the algorithm is executed independently on each SCC, by \Cref{ih:main} we have
\[
\E\left[\alpha(s,t)\right]\le f(m,\Delta)\cdot d_{G[C]}(s,t)
=f(m,\Delta)\cdot d_{G}(s,t)=O(\log n\cdot\log\log n)\cdot d_{G}(s,t)~.
\]
Next, we assume that the vertices of $\pi$ belong to several SCCs. The following claim from \cite{AHW25} will be useful. 
See the illustration in \Cref{fig:Generalcase}.
We provide a proof sketch.
\begin{claim}[\cite{AHW25}]\label{clm:path_struct}
	Let $\pi=(z_1,\dots,z_q)$ be an $s$–$t$ path in $G$ for some $s, t \in V$. There exists a sequence of SCCs
	$C_1, \dots, C_k$ of $G$ with the following properties:
	\begin{enumerate}
		\item Path $\pi$ contains exactly one edge $e_j = (u_j, v_{j+1})$ of the form $e_j \in C_{j} \times C_{j+1}$ for all $j \in [1, k-1]$. Moreover,
		\begin{itemize}
			\item  $\pi[s, u_1] \subseteq C_1$,
			\item $\pi[v_j, u_{j}] \subseteq C_{j}$ for all $j \in [2, k-1]$, and 
			\item $\pi[v_{k}, t] \subseteq C_{k}$. 
		\end{itemize}
		\item  $C_{j_1} \neq C_{j_2}$ for all $j_1, j_2 \in [1, k]$ such that $j_1 \neq j_2$.
	\end{enumerate}
\end{claim}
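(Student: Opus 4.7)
My plan is to construct the sequence $C_1,\ldots,C_k$ simply by walking along $\pi=(z_1,\ldots,z_q)$ and recording the SCC of $G$ containing each vertex. For each index $i$, let $\mathrm{SCC}(z_i)$ denote the (unique) SCC of $G$ containing $z_i$. Consecutive vertices $z_i,z_{i+1}$ either belong to the same SCC or to different SCCs; I will collapse every maximal run of consecutive vertices sharing an SCC into a single \emph{block}, producing a sequence whose SCCs $C_1,C_2,\ldots,C_k$ are distinct on consecutive positions by design. I will then set $v_j$ (resp.\ $u_j$) to be the first (resp.\ last) $\pi$-vertex in the $j$-th block, with the conventions $v_1=s$ and $u_k=t$.

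With this construction, item~(1) of the claim will be immediate from the definitions. Inside each block, every edge of $\pi$ has both endpoints in the same $C_j$, which gives $\pi[s,u_1]\subseteq C_1$, $\pi[v_j,u_j]\subseteq C_j$ for $j\in[2,k-1]$, and $\pi[v_k,t]\subseteq C_k$. The unique edge of $\pi$ between the $j$-th and $(j{+}1)$-th blocks is exactly $e_j=(u_j,v_{j+1})$, which by the blocking construction lies in $C_j\times C_{j+1}$.

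The substantive part will be item~(2), namely \emph{global} distinctness of the $C_j$'s, rather than merely consecutive distinctness (which is built in). For this I plan to invoke the standard fact that the \emph{condensation} of $G$ — the graph obtained by contracting each SCC to a single vertex — is a DAG. The path $\pi$ projects to a directed walk on the condensation visiting, in order, the nodes $C_1,\ldots,C_k$ via the arcs onto which the transition edges $e_j$ map. Since a DAG contains no directed cycle, this walk cannot revisit a node, so the $C_j$ are pairwise distinct. Contrapositively, if $C_{j_1}=C_{j_2}$ for some $j_1<j_2$, then $\pi[v_{j_1},u_{j_2}]$ together with the reachability inside $C_{j_1}$ would witness a directed cycle through $C_{j_1},C_{j_1+1},\ldots,C_{j_2}$ in the condensation, forcing all these distinct-consecutive SCCs to merge into a single SCC — a contradiction.

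I do not expect any serious obstacle: the entire argument is purely combinatorial and rests solely on the acyclicity of the SCC-condensation. The only mild care needed is the degenerate case $k=1$, in which all of $\pi$ lies in a single SCC $C_1$, so there are no transition edges $e_j$ and both parts of the claim hold vacuously beyond the containment $\pi[s,t]\subseteq C_1$.
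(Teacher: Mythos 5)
Your proposal is correct and reaches the conclusion, but by a genuinely different route from the paper. The paper argues by induction on $|\pi|$: it takes $C_1$ to be the SCC of $z_1$, lets $u_1$ be the last vertex of the maximal prefix of $\pi$ lying in $C_1$, shows that the remaining suffix $\pi[v_2,t]$ is entirely disjoint from $C_1$ (if some later $z$ were back in $C_1$, then $\pi[s,v_2]$ and $\pi[v_2,z]$ would give forward and backward paths forcing $v_2\in C_1$), and then recurses on $\pi[v_2,t]$. You instead build the whole block decomposition in one pass (getting consecutive distinctness for free) and then get \emph{global} distinctness of $C_1,\dots,C_k$ in one stroke from the standard fact that the SCC condensation of $G$ is a DAG. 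The underlying combinatorial content — a directed path cannot re-enter an SCC it has left — is the same in both arguments, but your version proves it for all pairs at once rather than peeling off the first SCC and inducting, and it outsources the work to a well-known black box. Either is fine; yours is arguably cleaner to state, the paper's is more self-contained. One small logical wrinkle to be careful about when you write it up in full: the ``exactly one edge $e_j\in C_j\times C_{j+1}$'' part of item~(1) is not quite immediate from the blocking alone — an edge of $\pi$ between blocks $\ell$ and $\ell+1$ lies in $C_\ell\times C_{\ell+1}$, and this could coincide with $C_j\times C_{j+1}$ for $\ell\ne j$ only if the ordered pair $(C_j,C_{j+1})$ repeated, which is ruled out precisely by item~(2). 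So item~(1) as stated (including ``exactly one'') should be presented as a consequence of item~(2), not as preceding it.
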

\begin{proof}[Proof sketch]
	The proof is by induction on $|\pi|=q$.
	Let $C_1$ be the SCC of $z_1$. 
	If $\pi\subseteq C_1$, then the claim trivially holds, and there is nothing to prove.
	Otherwise, let $v_2$ be the first vertex in $\pi$ not in $C_1$.
	Let $u_1$ be the vertex preceding $v_2$ in $\pi$.
	Note that $\pi[v_2,t]$ is disjoint from $C_1$. Indeed, if there exists some vertex $z\in \pi[v_2,t]\cap C_1$, then $\pi[s,v_2]$ is a path from $C_1$ to $v_2$, and $\pi[v_2,z]$ is a path from $v_2$ to $C_1$, implying that $v_2$ belongs to the SCC $C_1$, a contradiction.
	The proof is concluded by applying the induction hypothesis on $\pi[v_2,t]$ to obtain a sequence $C_2,\dots,C_k$ of SCCs with the required properties. Now $C_1,\dots,C_k$ is the sequence we need. 
\end{proof}

\begin{figure}[t]
	\begin{center}
		\includegraphics[width=0.9\textwidth]{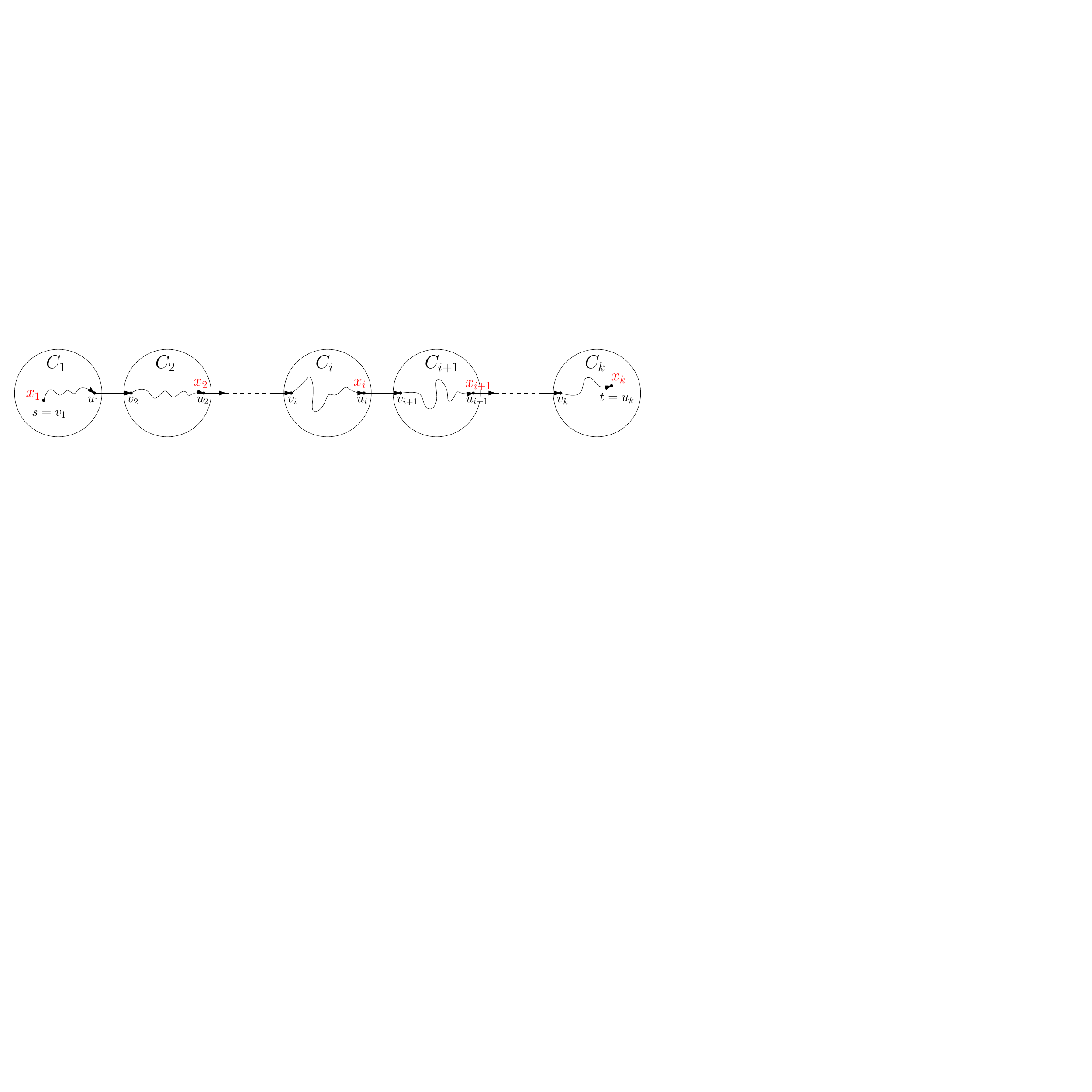}\caption{\emph{\small The shortest $s$–$t$ path is broken into the SCCs $C_1,\dots,C_k$. The intersection of $\pi$ with each SCC $C_i$ is the contiguous sub-path $\pi[u_i,v_i]$.}
		}\label{fig:Generalcase}
	\end{center}
\end{figure}

Using \Cref{clm:path_struct}, let $C_1,\dots,C_k$ be the SCCs into which the shortest $s$–$t$ path $\pi$ is divided, with the respective vertices $v_1=s,u_1,v_2,u_2,\dots,v_k,u_k=t$.
See the illustration in \Cref{fig:Generalcase}.
Set $x_1=s$, $x_2=u_2$, $x_3=u_3$, $\dots$, $x_{k-1}=u_{k-1}$, $x_k=t$.
Note that since $C_1,\dots,C_k$ are SCCs, and the edges $(u_1,v_2),(u_2,v_3),\dots,(u_{k-1},v_k)$ all belong to $D$ (as the algorithm only adds to $\widetilde{S}$ edges inside SCCs), it necessarily holds that $x_1<_D x_2<_D\dots<_Dx_k$.
We apply \Cref{lem:alphaAppliedRecursively} on the tuples $\left\{(x_i,u_i,v_{i+1},x_{i+1})\right\}_{i=1}^{k-1}$ to obtain:
\begin{equation}
	\alpha(s,t)=d_{D_{1}}(s,t)=d_{D_{1}}(x_{1},x_{k})=\sum_{i=1}^{k-1}d_{D_{1}}(x_{i},x_{i+1})\le\sum_{i=1}^{k-1}d_{G}(u_{i},v_{i+1})+\sum_{i=1}^{k}\alpha(u_{i},v_{i})~.\label{eq:alphaSTforPath}
\end{equation}
Using \Cref{ih:main}, we obtain 
\begin{align}
	\E[\alpha(s,t)] & \le\sum_{i=1}^{k-1}d_{G}(u_{i},v_{i+1})+\sum_{i=1}^{k}\E[\alpha(u_{i},v_{i})]\nonumber \\
	& \le\sum_{i=1}^{k-1}d_{G}(u_{i},v_{i+1})+ O(\log n\cdot\log\log n)\cdot\sum_{i=1}^{k}d_{G}(u_{i},v_{i})\nonumber \\
	& =O(\log n\cdot\log\log n)\cdot d_{G}(s,t)~.\label{eq:PathBreakThm1}
\end{align}

\subsection{Proof of \Cref{ih:main}}\label{subsec:InductiveProof}
The algorithm \texttt{Laminar-Topological-Order} is executed on $(G,U)$, where $G=(V,E,w)$ is a digraph and $U\subseteq V$ is a strongly connected cluster. 
The algorithm computes the diameter $\Delta=\diam(G,U)$ and the number of edges $m=|E(G[U])|$. It then makes a call to \texttt{Digraph-Partition}$(G[U],\Delta,m)$.
As a result, we obtain a subset of edges $S\subset E(G[U])$. The algorithm is then run recursively on all the SCCs of $G[U]\setminus S$. 
The algorithm \texttt{Digraph-Partition} has $L$ phases, where in each phase it carves balls using different parameters. After the $L$ phases end, the set of vertices that remain active is denoted by $\cR$.
The algorithm adds to $S$ some additional “long” edges. It is then guaranteed that every SCC in $G[\cR]\setminus S$ has diameter at most $\frac\Delta2$.

\begin{algorithm}[t]
	\caption{$S=\texttt{Digraph-Partition-recursive}(G[A],\Delta,m,S,\ell,j)$}\label{alg:DigraphPartitionAlt}
	\DontPrintSemicolon
	\If{There is a vertex $x_j\in A$ and $*\in\{+,-\}$ such that
		\hspace{200pt}\phantom{.}\qquad $\frac{m}{\mu_{\ell-1}}\le|B_{G[A]}^{*}(x_{j},r_{\ell-1})|\le|B_{G[A]}^{*}(x_{j},r_{\ell})|\leq\frac{m}{\mu_{\ell}}$\qquad\phantom{.}\label{line:ifConditionAlt}}{
		Sample $R \sim \Texp_{[r_{\ell-1},r_\ell]}(\lambda_\ell)$\;
		Set $B_{\ell,j}\leftarrow B_{G[A]}^{*}(x_j, R)$\;
		\Return{$\texttt{Digraph-Partition-recursive}(G[A\setminus B_{\ell,j}],\Delta,m, S \cup \delta^*(B_{\ell,j}),\ell,j+1)$}
	}
	\ElseIf{$\ell\ge2$}{
		\Return{$\texttt{Digraph-Partition-recursive}(G[A],\Delta,m,S,\ell-1,1)$}
	}
	\Else{
		Add to $S$ all the edges in $G[A]$ of weight greater than $r_0$\;
		\Return{$S$}	
	}
\end{algorithm}  

Our proof will be inductive; furthermore, the inductive argument will be applied internally in \texttt{Digraph-Partition}, per iteration of the algorithm (inside the while loop, \cref{line:WhileCurve}). For this, we will need a slightly stronger inductive hypothesis.
For the sake of analysis, we will formulate the algorithm \texttt{Digraph-Partition} differently. Specifically, we will replace the for and while loops by recursive calls.
See \Cref{alg:DigraphPartitionAlt} for the exact formulation.
Here, in addition to the parameters $G,\Delta,m$, the algorithm also receives: $A$, the set of active vertices; $S$, the set of edges deleted so far; $\ell$, the current phase of the algorithm; and $j$, the current iteration of the algorithm in the $\ell$th phase.%
\footnote{Actually, the algorithm can ignore $j$. However, it is useful to keep track of $j$ during the analysis.}
The call from \texttt{Laminar-Topological-Order} will be $\texttt{Digraph-Partition-recursive}(G[U],\Delta,m,\emptyset,L,1)$. 
\Cref{alg:DigraphPartitionAlt} checks whether there is a vertex $x_j$ fulfilling the condition $\frac{m}{\mu_{\ell-1}}\le|B_{G[A]}^{*}(x_{j},r_{\ell-1})|\le|B_{G[A]}^{*}(x_{j},r_{\ell})|\leq\frac{m}{\mu_{\ell}}$. If so, it carves the ball $B_{\ell,j}$ in exactly the same manner as in \Cref{alg:DigraphPartition}, updates the active set $A$ and the cut set $S$ accordingly, and makes a recursive call to itself, while updating the iteration to $j+1$.
Otherwise (no such vertex $x_j$), if $\ell>1$ (that is, we are not in the last phase), the algorithm makes a recursive call to itself while moving to phase $\ell-1$ and resets the iteration counter $j$ to $1$.
Otherwise ($\ell=1$), the algorithm adds all the edges in $G[A]$ of weight greater than $r_0$ and returns $S$.

Note that the output of \Cref{alg:DigraphPartition} and \Cref{alg:DigraphPartitionAlt} is exactly the same. We will thus prove \Cref{ih:main} as if we were using \Cref{alg:DigraphPartitionAlt}.
We will use a more delicate inductive hypothesis that takes into account the exact phase and iteration we are currently in. For this, we will update $f$ to depend also on the phase and iteration counters:
\[
\brick{f(m,\Delta,\ell,j)}=c\cdot L\cdot\ln m\cdot e^{\delta\cdot\left(L\cdot\log(m\Delta)+(\ell-1)+\frac{\mu_{\ell}+1-j}{\mu_{\ell}}\right)}~.
\]
In addition, if either $m=0$ or $\Delta=0$, set $f(m,\Delta,\ell,j)=1$.
We will prove the following inductive hypothesis: 
\begin{inductiveHypothesis}\label{ih:mainAlt}
	Suppose that during the execution of our algorithm there is a call $\texttt{Digraph-Partition-recursive}(G[A],\Delta,m,S,\ell,j)$.
	Then for every $s,t\in A$
	$$\E[\alpha(s,t)]\le f(m,\Delta,\ell,j)\cdot d_{G[A]}(s,t)~.$$
\end{inductiveHypothesis}

\sloppy Note that \Cref{ih:mainAlt} implies \Cref{ih:main}.
Indeed, \texttt{Laminar-Topological-Order} makes the call $\texttt{Digraph-Partition-recursive}(G[U],\Delta,m,S,L,1)$.
Thus for $s,t\in U$, by \Cref{ih:mainAlt} it follows that \submit{\todo{$f(m,\Delta,L,1)\ge f(m,\Delta)$}}
\begin{align*}
	\E[\alpha(s,t)] & \le f(m,\Delta,L,1)\cdot d_{G[A]}(s,t)=c\cdot L\cdot\log m\cdot e^{\delta\cdot\left(L\cdot\log(m\Delta)+(L-1)+\frac{\mu_{L}+1-1}{\mu_{L}}\right)}\cdot d_{G[A]}(s,t)\\
	& =c\cdot L\cdot\log m\cdot e^{\delta\cdot\left(L\cdot\log(m\Delta)+L\right)}\cdot d_{G[A]}(s,t)=f(m,\Delta)\cdot d_{G[A]}(s,t)~,
\end{align*}
implying \Cref{ih:main}. 
The rest of this subsection is devoted to proving \Cref{ih:mainAlt}. 
The proof will be by induction on all four parameters $m,\Delta,\ell,j$ in that order. 
That is, when proving for a call with parameters $(m,\Delta,\ell,j)$, we will assume \Cref{ih:mainAlt} holds for calls on $(m',\Delta',\ell',j')$ such that one of the following holds: (1) $m'<m$, (2) $m'=m$ and $\Delta'<\Delta$, (3) $m'=m$, $\Delta'=\Delta$, and $\ell'<\ell$, or (4) $m'=m$, $\Delta'=\Delta$, $\ell'=\ell$, and $j'<j$.

If $m=0$, then the cluster is a singleton, and clearly the distortion is $f(m,\Delta,\ell,j)=1$.
Otherwise, $m\ge1$, and $\Delta\ge1$ (as the minimum edge weight is $1$).
Fix $\brick{s,t}\in A$, and let $\brick{\pi}=(s=v_{0},v_{1},\dots,v_{q}=t)$ be the shortest $s$–$t$ path in $G[A]$. The base of the induction is when $d_{G[A]}(s,t)\ge \min\{r_0,\frac{1}{\lambda_\ell}\}$. In this case the induction holds trivially.
\begin{observation}\label{obs:stLong}
	If $d_{G[A]}(s,t)\ge \min\{r_0,\frac{1}{\lambda_\ell}\}$ then $\alpha(s,t)\le f(m,\Delta,\ell,j)\cdot d_{G[A]}(s,t)$.
	\submit{\todo{Constant defining $c$. Note that we also need to change from $\log$ to $\ln$}}
\end{observation}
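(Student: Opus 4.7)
The plan is to reduce this observation to a routine constants-check. The two key inputs — both already available — are a deterministic upper bound $\alpha(s,t)\le 3\Delta$, and a lower bound $\min\{r_0,1/\lambda_\ell\}=\Omega(\Delta/(L\log m))$ coming directly from the definitions of the scheme parameters. The constant $c=384$ has been chosen so that these combine cleanly with a modest amount of slack supplied by the exponential factor in $f$.

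First, I would argue $\alpha(s,t)\le 3\Delta$ unconditionally. Since $s,t\in A\subseteq U$ and $G[U]$ is strongly connected with $\diam(G,U)=\Delta$, the minimal cluster $C_{s,t}\in\cP$ containing both vertices satisfies $\Delta_{s,t}\le\Delta$, so \Cref{clm:DistanceUsing2HopSpanner} together with the definition of $\alpha$ (the calculation is the same one used in the second half of \Cref{obs:alphaDominating}) gives $\alpha(s,t)\le 3\Delta_{s,t}\le 3\Delta$. Next I would plug in the parameter estimates. Since $L=\lceil\log\log(m+1)\rceil$ yields $2^L\le 2\log(m+1)$, we get $r_0=\Delta/2^{L+4}\ge\Delta/(32\log(m+1))$; by \Cref{obs:lambdaUB}, $\lambda_\ell\le (64L\log\mu_\ell)/\Delta$, and since $\log\mu_\ell=2^{L-\ell}\le 2^L\le 2\log(m+1)$ we also obtain $1/\lambda_\ell\ge\Delta/(128L\log(m+1))$. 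Combining, $d_{G[A]}(s,t)\ge\Delta/(128L\log(m+1))$, so
\[
\frac{\alpha(s,t)}{d_{G[A]}(s,t)}\;\le\;\frac{3\Delta}{\Delta/(128L\log(m+1))}\;=\;384\,L\log(m+1).
\]

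Finally, I would compare with $f(m,\Delta,\ell,j)=cL\ln m\cdot e^{\delta(L\log(m\Delta)+(\ell-1)+(\mu_\ell+1-j)/\mu_\ell)}\ge 384L\ln m$; it remains to check $\log(m+1)\le\ln m\cdot e^{\delta(\cdots)}$. The ratio $\log_2(m+1)/\ln m$ is bounded by a small absolute constant (at most $\approx 2.89$ for $m\ge 2$); for moderately large $m$ this $\log$--vs--$\ln$ conversion is trivially absorbed by $e^{\delta(\cdots)}\ge 1$, and for very small $m$ the prefactor $\delta=1/\log^2(m+1)$ is itself of order one, so a single term in the exponent such as $\delta\cdot L\log(m\Delta)$ already provides the needed slack. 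This purely arithmetic bookkeeping is the only real obstacle; there is no structural or probabilistic content beyond \Cref{clm:DistanceUsing2HopSpanner}, \Cref{obs:lambdaUB}, and the definitions of $r_0$, $\lambda_\ell$, and $L$.
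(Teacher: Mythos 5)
Your proof takes essentially the same route as the paper's own proof: bound $\alpha(s,t)\le 3\Delta_{st}\le 3\Delta$ via \Cref{obs:alphaDominating} (equivalently, \Cref{clm:DistanceUsing2HopSpanner}), lower-bound both $r_0$ and $1/\lambda_\ell$ by $\Omega\bigl(\Delta/(L\log m)\bigr)$ using \Cref{obs:lambdaUB} and the definitions of $L,r_0,\mu_\ell$, and then divide. Your numerical estimates (e.g., $r_0\ge \Delta/(32\log(m+1))$ and $1/\lambda_\ell\ge\Delta/(128\,L\log(m+1))$) differ marginally from the paper's ($r_0\ge\Delta/(64\log m)$, $1/\lambda_\ell\ge\Delta/(128\,L\log m)$) but are equally valid.

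The one place where your writeup goes wrong is the final step checking $\log(m+1)\le\ln m\cdot e^{\delta(\cdots)}$ — and, to be fair, this is precisely the constant-factor bookkeeping that the paper itself flags as unfinished in a draft comment placed next to the statement. You claim that for ``moderately large $m$'' the $\log_2$-to-$\ln$ conversion is ``trivially absorbed by $e^{\delta(\cdots)}\ge 1$.'' That cannot be right as stated: the ratio $\log_2(m+1)/\ln m$ tends to $1/\ln 2\approx 1.44>1$ as $m$ grows, so you need $e^{\delta(\cdots)}\ge 1/\ln 2$, i.e.\ $\delta\cdot L\cdot\log(m\Delta)\gtrsim 0.37$, and this \emph{fails} for large $m$ because $\delta=(\log(m+1))^{-2}$ shrinks much faster than $L\cdot\log(m\Delta)$ grows (for example, at $m\approx 2^{16}$ with $\Delta$ of comparable magnitude one gets $\delta\cdot L\cdot\log(m\Delta)\approx 0.25<0.37$). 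You have the regimes reversed: the small-$m$ case is fine because there $\delta$ is of constant order, but the large-$m$ case is where the exponential cushion vanishes and the bare inequality $\log_2 m\le\ln m$ is false. The repair is a one-liner and does not touch the structure of the argument: either enlarge the constant so that $c\ge 384/\ln 2$, or carry $\ln$ rather than $\log_2$ through the estimates of $r_0$ and $\lambda_\ell$ so that no base conversion arises in the final comparison.
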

\begin{proof}
	Note first that $r_{0}=\frac{\Delta}{2^{L+4}}=\frac{\Delta}{2^{\lceil\log\log(m+1)\rceil+4}}\ge\frac{\Delta}{2^{\log\log m+6}}=\frac{\Delta}{64\cdot\log m}$.
	Next, by \Cref{obs:lambdaUB} we have $\frac{1}{\lambda_{\ell}}\ge\frac{\Delta}{64\cdot L\cdot\log\mu_{\ell}}\ge\frac{\Delta}{128\cdot L\cdot\log m}$, where the last inequality holds as $\log\mu_{\ell}=2^{L-\ell}\le2^{\lceil\log\log(m+1)\rceil-1}\le\log(m+1)\le2\cdot\log m$.
	We conclude that if $d_{G[A]}(s,t)\ge\min\{r_{0},\frac{1}{\lambda_{\ell}}\}\ge\frac{\Delta}{128\cdot L\cdot\log m}$, then by \Cref{obs:alphaDominating}
	\[
	\alpha(s,t)\le3\Delta_{st}\le3\Delta\le384\cdot L\cdot\log m\cdot d_{G[A]}(s,t)\le f(m,\Delta,\ell,j)\cdot d_{G[A]}(s,t)~.\qedhere
	\]
\end{proof}

We will thus assume that $d_{G[A]}(s,t)< \min\{r_0,\frac{1}{\lambda_\ell}\}$.
Next, suppose that when executing $\texttt{Digraph-Partition-recursive}(G[A],\Delta,m,S,\ell,j)$, there is no center $x_j$ fulfilling the condition in \cref{line:ifConditionAlt} (of \Cref{alg:DigraphPartitionAlt}). There are two possible cases:
\begin{itemize}
	\item First suppose that $\ell>1$. Here the algorithm makes the call $\texttt{Digraph-Partition-recursive}(G[A],\Delta,m,S,\ell-1,1)$. 
	Note that by \Cref{obs:NumberOfBalls}, during the $\ell$th phase, the algorithm carved at most $\mu_\ell$ balls. In particular, $j\le\mu_\ell+1$ (as $j$ is the counter for the current iteration).
	It follows that 
	\submit{\todo{$\frac{f(m,\Delta,\ell,j)}{f(m,\Delta,\ell-1,1)}\ge1$}}
	\begin{align*}
		\frac{1}{\delta}\cdot\ln\left(\frac{f(m,\Delta,\ell,j)}{f(m,\Delta,\ell-1,1)}\right) & =\left((\ell-1)+\frac{\mu_{\ell}+1-j}{\mu_{\ell}}\right)-\left((\ell-2)+\frac{\mu_{\ell-1}+1-1}{\mu_{\ell-1}}\right)\\
		& =\frac{\mu_{\ell}+1-j}{\mu_{\ell}}\ge0~.
	\end{align*}
	Using \cref{ih:mainAlt} we conclude 
	\[
	\E[\alpha(s,t)]\le f(m,\Delta,\ell-1,1)\cdot d_{G[A]}(s,t)\le f(m,\Delta,\ell,j)\cdot d_{G[A]}(s,t)~.
	\]
	
	\item Otherwise, $\ell=1$. Again, by \Cref{obs:NumberOfBalls}, $j\le\mu_1+1$.
	At this stage, the algorithm adds to $S$ all edges of weight at least $r_0$. However, as we are assuming $d_{G[A]}(s,t)<r_0$, no edge on the shortest $s$–$t$ path $\pi$ is added to $S$.
	Next, the algorithm returns $S$ to \texttt{Laminar-Partition}. Denote $\cR= A$. Note that $\pi\subseteq G[\cR]\setminus S$.
	The algorithm \texttt{Laminar-Partition} has a graph $G$ and removes all $S$ edges. All vertices of $\pi$ belong to SCCs that are contained in $\cR$.
	
	Consider a pair of vertices $u,v\in \pi$ that belong to the same SCC $C$ of $G\setminus S$.
	Let $\Delta_C=\diam(G,C)$, and let $m_C$ be the respective number of edges in the cluster $C$.
	By \Cref{clm:PartitionIterationRDelta}, $\Delta_C\le\frac\Delta2$, while clearly $m_C\le m$. Denote $L_C=\lceil\log\log (m_C+1)\rceil$.
	The algorithm \texttt{Laminar-Partition} makes a call to $\texttt{Digraph-Partition-recursive}$ for every SCC, and in particular the call $\texttt{Digraph-Partition-recursive}(G[C]\setminus S,\Delta_C,m_C,\emptyset,L_{C},1)$.
	It thus follows by \Cref{ih:mainAlt} that \submit{\todo{$\frac{f(m,\Delta,1,\mu_{1}+1)}{f(m,\frac{\Delta}{2},L,1)}\ge1$}}
	\begin{equation}
		\E[\alpha(u,v)]\le f(m_{C},\Delta_{C},L_{C},1)\cdot d_{G[C]\setminus S}(s,t)\le f(m,\Delta,1,j)\cdot d_{G[A]}(u,v)~,\label{eq:ell1case}
	\end{equation}
	where the last inequality holds as $f(m_{C},\Delta_{C},L_{C},1)\le f(m,\Delta_{C},L_{C},1)\le f(m,\frac{\Delta}{2},L,1)\overset{(*)}{=}f(m,\Delta,1,\mu_{1}+1)\le f(m,\Delta,1,j)$.
	Here the equality $^{(*)}$ follows since 
	\[
	L\cdot\log(m\Delta)+(1-1)+\frac{\mu_{1}+1-(\mu_{1}+1)}{\mu_{1}}=L\cdot\log\!\left(m\frac{\Delta}{2}\right)+(L-1)+\frac{\mu_{L}+1-1}{\mu_{L}}~.
	\]
	As $\pi\subseteq G\setminus S$, and \cref{eq:ell1case} holds for every two vertices $u,v$ in $\pi$ that belong to the same SCC of $G\setminus S$, 
	we apply the exact same arguments as during \Cref{subsec:ProofOfMainThm}, leading to \cref{eq:alphaSTforPath,eq:PathBreakThm1}, where we replace $O(\log n\cdot\log\log n)$ by \cref{eq:ell1case}. Overall:
	$$\E[\alpha(s,t)]\le  f(m,\Delta,\ell,j)\cdot d_{G[A]}(s,t)~.$$
\end{itemize}

Finally, we arrive at the case where the algorithm found a center $x_j$ fulfilling the condition in \cref{line:ifConditionAlt} (of \Cref{alg:DigraphPartitionAlt}). 
We will assume w.l.o.g.\ that $*=+$. That is, $\frac{m}{\mu_{\ell-1}}\le|B_{G[A]}^{+}(x_{j},r_{\ell-1})|\le|B_{G[A]}^{+}(x_{j},r_{\ell})|\leq\frac{m}{\mu_{\ell}}$. 
The case where $*=-$ is symmetric.
The algorithm then samples a radius $R\sim\Texp_{[r_{\ell-1},r_\ell]}(\lambda_\ell)$ and carves the ball $B_{\ell,j}=B^+_{G[A]}(x_j,R)$.
For simplicity of notation, in the rest of the proof we will write $\brick{G}$ instead of $G[A]$. 

For every value $R\ge r_{\ell-1}$, let $\brick{\pi(R)}=\left\{v\in \pi\mid d_G(x_j,v)\le R\right\}$ be the subset of vertices of $\pi$ that join $B_{\ell,j}$ if we sample radius $R$.
Let $\brick{z_{1}}\in\pi$ be the closest vertex to the center $x_j$ in $G$ (i.e., the vertex $v$ minimizing $d_{G}(x_{j},v)$).
Let $\brick{\rho_1}=\min\{d_{G}(x_{j},v),r_{\ell-1}\}$ be the minimum value such that if $R\ge\rho_1$, then $z_1\in B_{\ell,j}$.
In particular, $z_1\in\pi(\rho_1)$, while for every $r_{\ell-1}\le R<\rho_1$, $\pi(R)=\emptyset$.
Let $\brick{\eta_1}\ge\rho_1$ be the minimum value such that $\pi(\eta_1)$ is a suffix of $\pi$. 
Note that $\eta_1$ indeed exists and is finite (this is because every vertex that is reachable from $x_j$ in $G$ has a finite distance from $x_j$, and for every reachable vertex $y\in \pi$, all the vertices in $\pi[y,t]$ are reachable).
We allow $\eta_1$ to have a value above $r_\ell$.
Let $\brick{s_1}$ be the leftmost vertex (w.r.t. $\pi$) in $\pi(\eta_1)$. That is, $\pi(\eta_1)=\pi[s_1,t]$. Note that it might be the case that $s_1\ne z_1$. We will also denote $\brick{t_1}=t$ and $\brick{Q_1}= \pi[s_1,t_1]$. See \Cref{fig:PathBrake} for illustration.

Suppose that there are vertices in $\pi\setminus \pi(\eta_1)$ that are reachable from $x_j$.
Denote by $\brick{t_2}$ the predecessor of $s_1$ in $\pi$.
Let $\brick{z_2}$ be the closest vertex to $x_j$ in $\pi[s,t_2]= \pi\setminus \pi(\eta_1)$, and denote by $\brick{\rho_2}=d_G(x_j,z_2)$. Note that $\rho_2$ is the minimum value such that if $R\ge\rho_2$, then $z_2\in B_{\ell,j}$.
That is, for values $R\in[\eta_1,\rho_2)$, it holds that $\pi(R)=Q_1$, while for $R\ge\rho_2$, we have $z_2\in\pi(R)$.
Let $\brick{\eta_2}\ge\rho_2$ be the minimum value such that $\pi(\eta_2)$ is a suffix of $\pi$.
Let $\brick{s_2}$ be the leftmost vertex (w.r.t. $\pi$) in $\pi(\eta_2)$. That is, $\pi(\eta_2)=\pi[s_2,t]$. We will also denote $\brick{Q_2}= \pi[s_2,t_2]$. 

\begin{figure}[t]
	\begin{center}
		\includegraphics[width=0.956\textwidth]{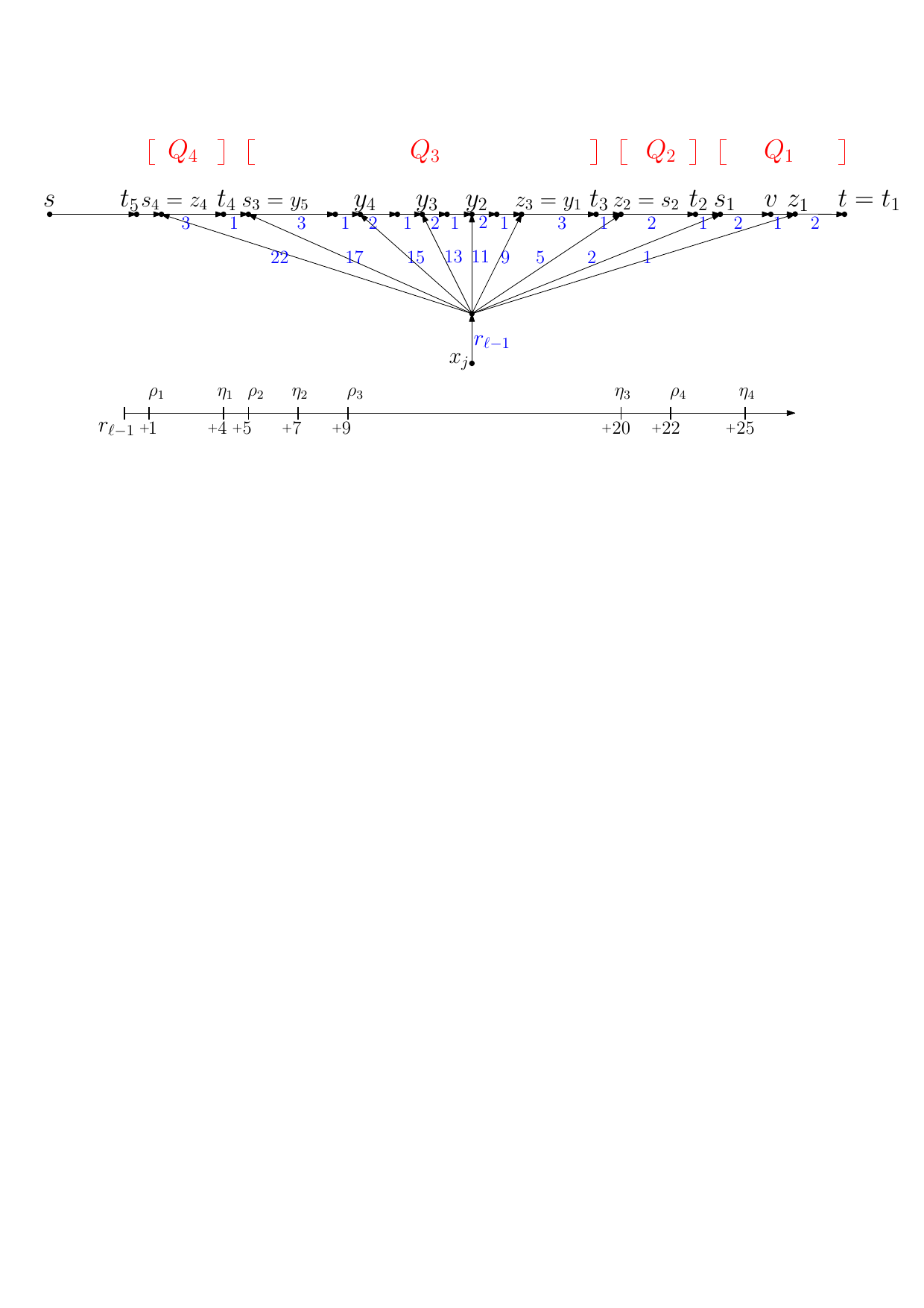}
		\caption{\emph{\small The shortest $s$–$t$ path $\pi$ is illustrated at the top. 
				The ball $B^+_G(x_j,R)=B_{\ell,j}$ is carved by sampling a radius $R\in[r_{\ell-1},r_\ell]$.
				The edge weights are noted in blue.
				The closest vertex on $\pi$ to $x_j$ is $z_1$ at distance $\rho_1=d_G(x_j,z_1)=r_{\ell-1}+1$. 
				Note that for $R=r_{\ell-1}+3$ the subset of carved vertices is $\pi(r_{\ell-1}+3)=\{s_1,z_1,t\}$, which is not a suffix of $\pi$.
				The first time we get a suffix is for the value $\eta_1=r_{\ell-1}+4$, where $Q_1=\pi(\eta_1)=\{s_1,v,z_1,t\}$.
				We continue in this manner to define the intervals $Q_2,Q_3,Q_4$.
				The prefix $\{s,t_5\}$ is not reachable from $x_j$.
				At the bottom we illustrate an axis containing the values of $\rho_i,\eta_i$ for $i\in\{1,\dots,4\}$. Here $+x$ denotes $R_{\ell-1}+x$. }
		}\label{fig:PathBrake}
	\end{center}
\end{figure}

We continue in this manner, defining $(z_3,\rho_3,\eta_3,s_3,t_3,Q_3)$, $(z_4,\rho_4,\eta_4,s_4,t_4,Q_4)$, $\dots$, and so on.
In general, after creating $\brick{(z_i,\rho_i,\eta_i,s_i,t_i,Q_i)}$, suppose that there are vertices in $\pi\setminus \pi(\eta_i)$ that are reachable from $x_j$.
Denote by $\brick{t_{i+1}}$ the predecessor of $s_i$ in $\pi$.
Let $\brick{z_{i+1}}$ be the closest vertex to $x_j$ in $\pi[s,t_{i+1}]= \pi\setminus \pi(\eta_i)$, and denote by $\brick{\rho_{i+1}}=d_G(x_j,z_{i+1})$ the minimum value such that if $R\ge\rho_{i+1}$, then $z_{i+1}\in B_{\ell,j}$.
That is, for values $R\in[\eta_i,\rho_{i+1})$, it holds that $\pi(R)=Q_1\cup\dots\cup Q_i=\pi[s_i,t]$, while for $R\ge\rho_{i+1}$, we have $z_{i+1}\in\pi(R)$.
Let $\brick{\eta_{i+1}}\ge\rho_{i+1}$ be the minimum value such that $\pi(\eta_{i+1})$ is a suffix of $\pi$.
Let $\brick{s_{i+1}}$ be the leftmost vertex (w.r.t. $\pi$) in $\pi(\eta_{i+1})$. That is, $\pi(\eta_{i+1})=\pi[s_{i+1},t]$. We will also denote $\brick{Q_{i+1}}= \pi[s_{i+1},t_{i+1}]$.

In this manner, we create $\left\{(z_i,\rho_i,\eta_i,s_i,t_i,Q_i)\right\}_{i=1}^{\brick{k}}$.
At the end there are two options:
\begin{itemize}
	\item $s_k=s$. In particular, $\pi(\eta_k)=\pi$.
	\item There are vertices in $\pi$ not in $\pi(\eta_k)$. These vertices are not reachable from $x_j$ and form a prefix of $\pi$.
	We denote $\brick{t_{k+1}}$ to be the predecessor of $s_k$ in $\pi$. 
\end{itemize}

With respect to the order of $\pi$, it holds that $z_{q}\le_{\pi}z_{q-1}\le_{\pi}\dots\le_{\pi}z_{1}$, where $v\le_\pi u$ denotes that $v$ appears before $u$ on $\pi$.
It holds that $\rho_{1}\le\eta_{1}<\rho_{2}\le\eta_{2}<\dots<\rho_{k}\le\eta_{k}$.
In the following lemma we bound the distance between $\eta_i$ and $\rho_i$. This lemma will be crucial to bound the probability that some edge along $\pi$ is added to the cut set $S$.

\begin{lemma}\label{lem:etaIrhoI}
	For $i\in[1,k]$, $\eta_i-\rho_i\le d_G(s_i,t_i)$.
\end{lemma}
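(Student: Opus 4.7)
The plan is to show that $\pi(R)$ is a suffix of $\pi$ at $R := \rho_i + d_G(s_i,t_i)$; by the minimality of $\eta_i$ this will give $\eta_i \le R$, i.e., the claimed $\eta_i - \rho_i \le d_G(s_i,t_i)$. Equivalently, I would argue $\pi[s_i,t] \subseteq \pi(R)$, after which (together with $\pi(R) \subseteq \pi(\eta_i) = \pi[s_i,t]$ in the case $R \le \eta_i$) we get $\pi(R) = \pi[s_i,t]$, a suffix.

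I decompose $\pi[s_i,t] = \pi[s_i,z_i) \cup \pi[z_i,t_i] \cup \pi[s_{i-1},t]$ and handle the three pieces separately. For $v \in \pi[s_{i-1}, t]$, $v \in \pi(\eta_{i-1})$ with $\eta_{i-1} < \rho_i \le R$, so $v \in \pi(R)$. For $v \in \pi[z_i, t_i]$, since $\pi$ is a shortest $s$-to-$t$ path in $G$, its subpath $\pi[z_i,v]$ is a shortest $z_i$-to-$v$ path of length $d_G(z_i,v) \le d_G(s_i,t_i)$; the triangle inequality then yields $d_G(x_j,v) \le d_G(x_j,z_i) + d_G(z_i,v) \le \rho_i + d_G(s_i,t_i) = R$. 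Together these give $\pi[z_i, t] \subseteq \pi(R)$.

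For the remaining piece $v \in \pi[s_i, z_i)$, the delicacy is that one cannot traverse $\pi$ in reverse in a directed graph, so the walk-along-$\pi$ trick is unavailable. Here I would pick $v^\ast \in \pi[s_i,t_i]$ attaining $\eta_i = d_G(x_j,v^\ast)$ and examine a shortest $x_j$-to-$v^\ast$ path $P$. Letting $u$ be the first vertex of $P$ lying on $\pi$, the fact that $s_i$ is the leftmost vertex of $\pi(\eta_i)$ forces $u \in \pi[s_i,t]$; optimal substructure of $P$ gives $\eta_i = d_G(x_j,u) + d_G(u,v^\ast)$, and when $u \le_\pi v^\ast$ the subpath bound yields $d_G(u,v^\ast) \le d_G(s_i,t_i)$, so $\eta_i \le d_G(x_j,u) + d_G(s_i,t_i)$.

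The main obstacle is then tightening $\eta_i \le d_G(x_j,u) + d_G(s_i,t_i)$ to $\eta_i \le \rho_i + d_G(s_i,t_i)$, since $d_G(x_j,u)$ is a priori only known to be $\ge \rho_i$ from the definition of $z_i$ as the closest vertex in $\pi \setminus \pi(\eta_{i-1})$, not $= \rho_i$. I expect to resolve this by contradiction: if some $v \in \pi[s_i, z_i)$ had $d_G(x_j,v) > R$, then tracking how $\pi(R')$ evolves as $R'$ grows from $\rho_i$ up to $\eta_i$ (using that $\pi[z_i,t] \subseteq \pi(R')$ once $R' \ge \rho_i + d_G(s_i,t_i)$, and that any vertex of $\pi[s_i,z_i)$ joining $\pi(R')$ must by the leftmost property of $s_i$ extend the suffix continuously), one obtains that $\pi(R')$ is already a suffix for some $R' < \eta_i$, contradicting the minimality in the definition of $\eta_i$.
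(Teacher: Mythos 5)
Your overall plan — show that $\pi[s_i,t]\subseteq\pi\bigl(\rho_i+d_G(s_i,t_i)\bigr)$, so that by minimality $\eta_i\le\rho_i+d_G(s_i,t_i)$ — is a valid reformulation of the lemma, and your treatment of the pieces $\pi[z_i,t_i]$ (triangle inequality through $z_i$ together with the subpath bound) and $\pi[s_{i-1},t]$ (already inside $\pi(\eta_{i-1})$ with $\eta_{i-1}<\rho_i$) is correct. However, the handling of the third piece $\pi[s_i,z_i)$ — which is where all the difficulty of this lemma lives, precisely because those vertices precede $z_i$ on $\pi$ and so cannot be reached from $z_i$ by walking along $\pi$ in a digraph — is not a proof.

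Your first attempt (take a shortest $x_j$-to-$v^\ast$ path $P$ with $\eta_i=d_G(x_j,v^\ast)$, let $u$ be the first vertex of $P$ on $\pi$) reduces the bound to $d_G(x_j,u)\le\rho_i$, but by definition of $z_i$ and $\rho_i$ you only get $d_G(x_j,u)\ge\rho_i$, and there is no reason for $u$ to be a closest vertex in $\pi[s,t_i]$. Moreover $u\le_\pi v^\ast$ is not forced, nor is $u\in\pi[s_i,t_i]$, since $u$ may lie in the old suffix $\pi[s_{i-1},t]$. Your second attempt — the contradiction sketch — hinges on the assertion that ``any vertex of $\pi[s_i,z_i)$ joining $\pi(R')$ must by the leftmost property of $s_i$ extend the suffix continuously,'' which is false as stated: vertices in $\pi[s_i,z_i)$ can join $\pi(R')$ in an arbitrary left/right pattern as $R'$ grows, so the leftmost vertex of $\pi(R')$ can jump leftward past unabsorbed gaps. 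Running through the contradiction concretely (take $R'=\rho_i+d_G(s_i,t_i)$, let $u$ be the leftmost vertex of $\pi(R')$, and pick $w\in\pi(u,z_i)$ with $d_G(x_j,w)>R'$) only yields $d_G(x_j,w)\le d_G(x_j,u)+d_G(s_i,t_i)\le\rho_i+2\,d_G(s_i,t_i)$, which is off by a factor of $2$ and therefore does not contradict anything.

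The missing ingredient is the set $\cA$ of ``record'' vertices in $\pi[s_i,t_i]$: those $u_q$ that are strictly closer to $x_j$ than every vertex preceding them in $\pi[s_i,t_i]$. Listing these from right to left as $y_1=z_i,\dots,y_l=s_i$, the paper first proves by contradiction that each $y_q$ satisfies $d_G(x_j,y_q)\le\max_{q'<q}\bigl\{d_G(x_j,y_{q'})+d_G(y_{q'},y_{q'-1})\bigr\}$ (otherwise $\pi$ restricted to the interval would already be a suffix at a radius strictly below $\eta_i$, contradicting minimality), and separately that $\eta_i\le\max_q\bigl\{d_G(x_j,y_q)+d_G(y_q,y_{q-1})\bigr\}$. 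An induction along the $y_q$'s then telescopes these into $\eta_i\le d_G(x_j,y_1)+d_G(s_i,t_i)=\rho_i+d_G(s_i,t_i)$. Your contradiction attempt is groping toward this, but without the record-vertex structure it cannot close the gap of a lone far vertex sitting between two absorbed regions.
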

\begin{proof}
	Denote $\pi[s_i,t_i]=(u_1,\dots,u_\beta)$.
	Let $\cA=\{u_q\in \pi[s_i,t_i]\mid \forall q'<q,~d_G(x_j,u_q)<d_G(x_j,u_{q'})\}$ be all vertices $u_q$ of $\pi[s_i,t_i]$ that are closer to $x_j$ than all vertices preceding them in $\pi[s_i,t_i]$ (that is, the prefix $\pi[s_i,u_q]$).
	For illustration, in \Cref{fig:PathBrake}, for $i=3$, $\cA=\{y_1=z_3,y_2,y_3,y_4,y_5=s_3\}$.
	Clearly $z_i\in\cA$ as $z_i$ is the closest vertex in $\pi[s,t_i]$ to $x_j$.
	Further, $s_i\in \cA$, as it is the only vertex in the prefix $\pi[s_i,s_i]$.
	Let $y_1,y_2,\dots,y_l$ be the vertices of $\cA$ sorted from right to left (w.r.t. $\pi$).
	Note that $d_G(x_j,y_1)<d_G(x_j,y_2)<\dots<d_G(x_j,y_l)$.
	In particular, $y_1=z_i$ and $y_l=s_i$. For simplicity of notation, denote $y_0=t_i$. 
	
	\begin{claim}\label{clm:PicksInsideIntervalBehaviour}
		For every $q\in[2,l]$,
		$d_{G}(x_{j},y_{q})\le\max\left\{ d_{G}(x_{j},y_{q'})+d_{G}(y_{q'},y_{q'-1})\right\} _{q'=1}^{q-1}$.
		Further, it also holds that $\eta_{i}\le\max\left\{ d_{G}(x_{j},y_{q})+d_{G}(y_{q},y_{q-1})\right\} _{q=1}^{l}$.
	\end{claim}
	\begin{proof}
		Denote $\tilde{\eta}=\max\left\{ d_{G}(x_{j},y_{q'})+d_{G}(y_{q'},y_{q'-1})\right\} _{q'=1}^{q-1}$, and suppose for the sake of contradiction that $d_{G}(x_{j},y_{q})>\tilde{\eta}$.
		Denote by $\pi_{i}(R)=\pi(R)\cap Q_{i}$ all the vertices in the interval
		$Q_{i}=\pi[s_{i},t_{i}]$ that belong to $\pi(R)$. We argue that
		$\pi_{i}(\tilde{\eta})=\pi[y_{q-1},t_{i}]$. Thus $\pi(\tilde{\eta})$ is a suffix of $\pi$, while $\rho_i\le\tilde{\eta}<\eta_i$, a contradiction to the definition of $\eta_i$ as the minimum value above $\rho_i$ such that $\pi(\eta_i)$ is a suffix.
		We now turn to proving $\pi_{i}(\tilde{\eta})=\pi[y_{q-1},t_{i}]$. Consider first a vertex $u\in\pi[y_{q-1},t_{i}]$. Then
		there is some $q'\in[1,q-1]$ such that $u\in\pi[y_{q'},y_{q'-1}]$.
		It holds that $d_{G}(x_{j},u)\le d_{G}(x_{j},y_{q'})+d_{G}(y_{q'},u)\le d_{G}(x_{j},y_{q'})+d_{G}(y_{q'},y_{q'-1})\le\tilde{\eta}$,
		thus indeed $u\in\pi_{i}(\tilde{\eta})$, and hence $\pi[y_{q-1},t_{i}]\subseteq\pi_{i}(\tilde{\eta})$.
		
		Next we argue that for every vertex $u\in Q_i\setminus \pi[y_{q-1},t_{i}]$, $u\notin \pi_i(\tilde{\eta})$.
		Suppose first that $u\in \pi[s_{i},y_{q}]$. Then, as $y_{q}\in\cA$, we have $d_{G}(x_{j},u)>d_{G}(x_{j},y_{q})>\tilde{\eta}$, implying $u\notin\pi(\tilde{\eta})$.
		Next suppose that $u\in\pi(y_{q},y_{q-1})$ (if any), and let $u'\in\pi(y_{q},y_{q-1})$ be the closest vertex to
		$x_{j}$. As $u'\notin\cA$, $u'$ is the closest vertex to $x_{j}$
		in $\pi(y_{q},y_{q-1})$ but not in $\pi[s_{i},y_{q}]$. It follows
		that $d_{G}(x_{j},u)\ge d_{G}(x_{j},u')\ge d_{G}(x_{j},y_{q})>\tilde{\eta}$,
		implying $u\notin\pi_{i}(\tilde{\eta})$. 
		
		Finally, using the exact same arguments as above, one can show that for ${\eta'=\max\left\{ d_{G}(x_{j},y_{q})+d_{G}(y_{q},y_{q-1})\right\} _{q=1}^{l}}$, it holds that $Q_i\subseteq \pi(\eta')$.
		As $\eta_i$ is the minimum value such that $Q_i\subseteq\pi(\eta_i)$, it follows that $\eta_i\le\eta'$, as required.
	\end{proof}
	
	Next, we argue by induction on $q\in[1,l]$ that $d_{G}(x_{j},y_{q})\le d_{G}(x_{j},y_{1})+d_{G}(y_{q-1},t_{i})$.
	The base case $q=1$ is trivial as $d_{G}(x_{j},y_{1})=d_{G}(x_{j},y_{1})+d_{G}(y_{0},t_{i})$.
	Next, assume that for every $q'<q$, $d_{G}(x_{j},y_{q'})\le d_{G}(x_{j},y_{1})+d_{G}(y_{q'-1},t_{i})$.
	Using \Cref{clm:PicksInsideIntervalBehaviour} and the inductive hypothesis, we conclude 
	\begin{align*}
		d_{G}(x_{j},y_{q}) & \le\max\left\{ d_{G}(x_{j},y_{q'})+d_{G}(y_{q'},y_{q'-1})\right\} _{q'=1}^{q-1}\\
		& \le\max\left\{ d_{G}(x_{j},y_{1})+d_{G}(y_{q'-1},t_{i})+d_{G}(y_{q'},y_{q'-1})\right\} _{q'=1}^{q-1}\\
		& =d_{G}(x_{j},y_{1})+\max\left\{ d_{G}(y_{q'},t_{i})\right\} _{q'=1}^{q-1}\\
		& =d_{G}(x_{j},y_{1})+d_{G}(y_{q-1},t_{i})~.
	\end{align*}
	Finally, using \Cref{clm:PicksInsideIntervalBehaviour} again, we conclude
	\begin{align*}
		\eta_{i} & \le\max\left\{ d_{G}(x_{j},y_{q})+d_{G}(y_{q},y_{q-1})\right\} _{q=1}^{l}\\
		& \le\max\left\{ d_{G}(x_{j},y_{1})+d_{G}(y_{q-1},t_{i})+d_{G}(y_{q},y_{q-1})\right\} _{q=1}^{l}\\
		& =d_{G}(x_{j},y_{1})+\max\left\{ d_{G}(y_{q},t_{i})\right\} _{q=1}^{l}\\
		& =d_{G}(x_{j},y_{1})+d_{G}(y_{l},t_{i})=d_{G}(x_{j},y_{1})+d_{G}(s_{i},t_{i})=\rho_{i}+d_{G}(s_{i},t_{i})~.\qedhere
	\end{align*}
\end{proof}

\paragraph*{Using real exponential distribution.} The radius $R$ is sampled using the truncated exponential distribution $\Texp_{[r_{\ell-1},r_\ell]}(\lambda_\ell)$. This is an exponential distribution with parameter $\lambda_\ell$ conditioned on the sampled value being in the interval $[r_{\ell-1},r_\ell]$.
This distribution is not memoryless, and hence considerably more complicated to analyze compared with the standard exponential distribution.
We will analyze the expected value of $\alpha(s,t)$ in two steps. 
First, we will ignore truncation and analyze the process as if we were to sample $R\ge r_{\ell-1}$ using the standard exponential distribution with parameter $\lambda_\ell$.
Afterward, we will incorporate the truncation into our analysis.

\begin{remark}\label{rem:ExponentialSizeArgument}
	To compute $\mathbb{E}\left[\alpha(s,t)\right]$ we will integrate over $\mathbb{E}\left[\alpha(s,t)\mid R=x\right]$ for all possible values $x$ of $R$ (see \cref{eq:ExpandingToAllValuesOfR}).
	In \Cref{lem:NoCut,lem:InductionOnPath}, we compute $\mathbb{E}\left[\alpha(s,t)\right]$ while using the real exponential distribution.	
	In particular, we will consider values $R> r_\ell$, which cannot occur. As such, for $x>r_\ell$, we can assign $\mathbb{E}\left[\alpha(s,t)\mid R=x\right]$ an arbitrary positive value (since \cref{eq:ExpandingToAllValuesOfR} will hold for any such value).
	Our usage of this fact will be by assuming that \Cref{clm:PartitionIteration} (bounding the number of edges in an SCC contained in $B_{\ell,j}$) holds even when using $R>r_j$. 
	This is acceptable because the computed value $\mathbb{E}\left[\alpha(s,t)\mid R=x\right]$ will still be positive.
\end{remark}

Next, we analyze $\E[\alpha(s,t)]$ for different values of $R$.
Note that by \Cref{obs:alphaDominating}, $\alpha(s,t)$ is bounded by $3\Delta$ (regardless of the sampled radius $R$).
For simplicity of notation, denote $\brick{\eta_0}=r_{\ell-1}$, $\brick{\rho_{k+1}}=\infty$, and $\brick{s_0}=t$.

\begin{lemma}\label{lem:NoCut}
	Suppose that $R\in [\eta_q,\rho_{q+1})$ for some $q\in[0,k]$. Then 
	\begin{equation*}
		\E\left[\alpha(s,t)\mid R\in[\eta_{q},\rho_{q+1})\right]
		\le f(m,\Delta,\ell,j+1)\cdot d_{G}(s,t)-c\cdot L\cdot\log\mu_{\ell}\cdot d_{G}(s_{q},t)~.
	\end{equation*}
\end{lemma}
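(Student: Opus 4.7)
The plan is to condition on the precise value of $R$ inside the interval $[\eta_q,\rho_{q+1})$, apply the triangle-type inequality for $\alpha$ from \Cref{cor:alphaAppliedRecursivelyMiddle} to split $\alpha(s,t)$ into contributions from the part of $\pi$ that remains active and the part that is carved into $B_{\ell,j}$, and bound each piece by invoking \Cref{ih:mainAlt} at a strictly smaller (in the induction order) tuple of parameters.

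The case $q=0$ is immediate. For $R\in[r_{\ell-1},\rho_1)$ no vertex of $\pi$ lies in $B_{\ell,j}$, so the path survives inside $A':=A\setminus B_{\ell,j}$ with $d_{G[A']}(s,t)=d_G(s,t)$; the continuation call $\texttt{Digraph-Partition-recursive}(G[A'],\Delta,m,\cdot,\ell,j+1)$ together with the inductive hypothesis yields $\E[\alpha(s,t)\mid R\in[r_{\ell-1},\rho_1)]\le f(m,\Delta,\ell,j+1)\cdot d_G(s,t)$, and since $s_0=t$ the subtractive term vanishes.

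For $q\ge 1$, the event $R\in[\eta_q,\rho_{q+1})$ carves exactly the suffix $\pi[s_q,t]$ into $B_{\ell,j}$, leaves the prefix $\pi[s,t_{q+1}]$ in $A\setminus B_{\ell,j}$, and keeps the bridging edge $(t_{q+1},s_q)$ out of $S$: it is an \emph{ingoing} edge of $B_{\ell,j}$ and so is not in $\delta^+(B_{\ell,j})$. \Cref{cor:alphaAppliedRecursivelyMiddle} then gives
\begin{equation*}
\alpha(s,t)\le \alpha(s,t_{q+1})+d_G(t_{q+1},s_q)+\alpha(s_q,t),
\end{equation*}
with the convention that the first term is zero if $s_q=s$. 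The prefix contribution is handled by applying \Cref{ih:mainAlt} to the continuation at $(m,\Delta,\ell,j+1)$ on the active set $A\setminus B_{\ell,j}$: since $\pi[s,t_{q+1}]$ survives there, $d_{G[A\setminus B_{\ell,j}]}(s,t_{q+1})=d_G(s,t_{q+1})$ and $\E[\alpha(s,t_{q+1})]\le f(m,\Delta,\ell,j+1)\cdot d_G(s,t_{q+1})$. The suffix contribution is handled inside $B_{\ell,j}$, which contains at most $m/\mu_\ell$ edges by the choice of center; the vertices $s_q,t$ will be processed by $\texttt{Laminar-Topological-Order}$ restricted to the SCCs of $G[U]\setminus S$ contained in $B_{\ell,j}$, each with edge count at most $m/\mu_\ell<m$ (\Cref{clm:PartitionIteration}). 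If $\pi[s_q,t]$ crosses several such SCCs, apply \Cref{clm:path_struct} and \Cref{lem:alphaAppliedRecursively} in the telescoping manner of \cref{eq:alphaSTforPath,eq:PathBreakThm1} to obtain $\E[\alpha(s_q,t)]\le f(m/\mu_\ell,\Delta,L',1)\cdot d_G(s_q,t)$, where $L'=\lceil\log\log(m/\mu_\ell+1)\rceil$.

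Combining the three bounds, using $d_G(s,t)=d_G(s,t_{q+1})+d_G(t_{q+1},s_q)+d_G(s_q,t)$, and absorbing the bridging edge into the first coefficient (since $f(m,\Delta,\ell,j+1)\ge 1$), the claim reduces to the algebraic inequality
\begin{equation*}
f(m/\mu_\ell,\Delta,L',1)+cL\log\mu_\ell\le f(m,\Delta,\ell,j+1).
\end{equation*}
I expect this step to be the main technical obstacle, since it is where the Seymour-style discharging is realised in the directed setting. Unfolding $f$, shrinking its first argument from $m$ to $m/\mu_\ell$ drops the $L\log(m\Delta)$ contribution to the exponent by $L\log\mu_\ell$, while the shift of the phase/iteration coordinates from $(\ell,j+1)$ to $(L',1)$ changes the remaining exponent by $O(1)$. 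Using $e^{E_1}-e^{E_2}\ge (E_1-E_2)e^{E_2}$ for $E_1\ge E_2$, the exponential gap converts into an additive slack that, after multiplication by the prefactor $cL\ln m$ and with the calibrated constant $c=384$, dominates $cL\log\mu_\ell$; the verification is a direct but careful computation using the specific forms $\delta=(\log(m+1))^{-2}$ and $\mu_\ell=2^{2^{L-\ell}}$.
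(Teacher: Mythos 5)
Your overall decomposition matches the paper exactly: the $q=0$ case, the split via \Cref{cor:alphaAppliedRecursivelyMiddle} into a prefix, a surviving bridge edge $(t_{q+1},s_q)$, and a suffix; the application of \Cref{ih:mainAlt} to the continuation at $(m,\Delta,\ell,j+1)$ for the prefix; and the application of \Cref{clm:PartitionIteration}, \Cref{clm:path_struct}, and the telescoping of \cref{eq:alphaSTforPath} to handle the suffix inside $B_{\ell,j}$. You also correctly isolate the algebraic inequality
\[
f\!\left(\tfrac{m}{\mu_\ell},\Delta,L',1\right)+c\,L\,\log\mu_\ell\ \le\ f(m,\Delta,\ell,j+1)
\]
as the remaining step, and this is exactly what the paper proves inside \Cref{clm:DistNoCutsqTot}.

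However, the mechanism you sketch for this inequality is not the right one, and it would in fact fail. You propose holding the prefactor fixed at $c\,L\,\ln m$, extracting the slack from the exponential gap via $e^{E_1}-e^{E_2}\ge(E_1-E_2)\,e^{E_2}$, and noting that shrinking $m\to m/\mu_\ell$ decreases the exponent by roughly $\delta\cdot L\log\mu_\ell$. This yields slack of order
\[
c\,L\,\ln m\cdot\delta\cdot L\log\mu_\ell\ \approx\ c\,L\,\log\mu_\ell\cdot\frac{\ln m\cdot\log\log m}{(\log m)^2}\ ,
\]
which is smaller than the required $c\,L\,\log\mu_\ell$ by a factor of $\Theta(\log m/\log\log m)$. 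The exponential factor $e^{\delta(\cdot)}$ is, by design, kept $1+o(1)$; it is there to absorb the $O(\log\log m)$ phase transitions and the truncation error, and it cannot carry the main discharging. The actual source of slack is the prefactor: $\ln(m/\mu_\ell)=\ln m-\ln\mu_\ell$, so
\[
f\!\left(\tfrac{m}{\mu_\ell},\Delta,L,1\right)
= c\,L\,(\ln m-\ln\mu_\ell)\cdot e^{\delta(\cdot)}
= c\,L\,\ln m\cdot e^{\delta(\cdot)} - c\,L\,\ln\mu_\ell\cdot e^{\delta(\cdot)}
\ \le\ c\,L\,\ln m\cdot e^{\delta(\cdot)} - c\,L\,\ln\mu_\ell,
\]
because $e^{\delta(\cdot)}\ge1$. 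One then only needs the exponent of $f(m/\mu_\ell,\Delta,L,1)$ to be at most that of $f(m,\Delta,\ell,j+1)$, which reduces to $L(1-\log\mu_\ell)\le(\ell-1)+(\mu_\ell-j)/\mu_\ell$; this holds since $\mu_\ell\ge 2$ gives $\log\mu_\ell\ge1$. This is the Seymour/Bartal discharging: the budget lives in the $\ln(\text{edge count})$ prefactor, which drops by exactly $\ln\mu_\ell$ when descending into a ball of density $m/\mu_\ell$, and the exponential correction term is only a bookkeeping device. Your proposal is structurally aligned with the paper, but the piece you flag as the main obstacle would not close as sketched; you would need to switch the source of slack from the exponent to the prefactor.
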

\begin{proof}
	Consider first the case $q=0$, that is, $R\in [\eta_0,\rho_{1})$. 
	In this case none of the vertices of $\pi$ join $B_{\ell,j}$.%
	\footnote{If $\rho_1=\eta_0$, then $[\eta_0,\rho_{1})=\emptyset$. In this case the lemma holds trivially (as one cannot suppose that $R\in\emptyset$).} In particular, $\pi\subseteq G[A\setminus B_{\ell,j}]$. The algorithm will make the call $\texttt{Digraph-Partition-recursive}(G[A\setminus B_{\ell,j}],\Delta,m,S,\ell,j+1)$. By \Cref{ih:mainAlt} it holds that 
	\[
	\E[\alpha(s,t)\mid R<\rho_{1}]\le f(m,\Delta,\ell,j+1)\cdot d_{G[A\setminus B_{\ell,j}]}(s,t)
	= f(m,\Delta,\ell,j+1)\cdot d_{G}(s,t)\,-\,c\cdot L\cdot\log\mu_{\ell}\cdot d_{G}(s_{0},t)~.
	\]
	where in the last inequality we used $s_0=t$.
	
	Next, consider the case $R\in [\eta_q,\rho_{q+1})$ for some $q\in[1,k]$.
	In this case, all the vertices in the suffix $\pi(\eta_q)=\pi[s_q,t]$ join $B_{\ell,j}$, while the vertices in the prefix $\pi[s,t_{q+1}]$ remain active in $A\setminus B_{\ell,j}$.
	All edges in $\delta^+(B_{\ell,j})$ are added to $S$.
	However, the edge $(t_{q+1},s_q)$ is not added to $S$ and will be in $D$. 	
	\begin{claim}\label{clm:DistNoCutsqTot}
		$\E\left[\alpha(s_{q},t)\mid R\in[\eta_{q},\rho_{q+1})\right]\le \left(f(m,\Delta,\ell,j+1)-c\cdot L\cdot\ln\mu_{\ell}\right)\cdot d_{G}(s_q,t)$.
	\end{claim}
	\begin{proof}
		Recall that by \Cref{clm:PartitionIteration}, every SCC of $G\setminus S$ containing vertices from $B_{\ell,j}$ is fully contained in $B_{\ell,j}$ and has at most $\frac{m}{\mu_\ell}$ edges (see \Cref{rem:ExponentialSizeArgument}).
		Consider a pair of vertices $u,v$ in $\pi[s_q,t]$ that belong to the same SCC $C$ of $G\setminus S$. Denote by $m_C=|E(C)|\le \frac{m}{\mu_\ell}$ its number of edges, by $\Delta_C\le\Delta$ its diameter, and also $L_C=\lceil\log\log (m_C+1)\rceil$.
		The algorithm \texttt{Laminar-Partition} will execute $\texttt{Digraph-Partition-recursive}(G[C],\Delta_C,m_C,S,L_C,1)$. 
		Using \Cref{ih:mainAlt}, it holds that
		\[
		\E[\alpha(u,v)]\le f(m_C,\Delta_C,L_C,1)\cdot d_{G[C]}(u,v)\le f\!\left(\tfrac{m}{\mu_\ell},\Delta,L,1\right)\cdot d_{G[C]}(u,v)~.
		\]
		Next, we apply the exact same arguments as during \Cref{subsec:ProofOfMainThm}, leading to \cref{eq:alphaSTforPath,eq:PathBreakThm1}, where we replace $O(\log n\cdot\log\log n)$ by $f(\tfrac{m}{\mu_\ell},\Delta,L,1)$. 
		It follows that
		\[
		\E\left[\alpha(s_{q},t)\mid R\in[\eta_{q},\rho_{q+1})\right]\le f\!\left(\tfrac{m}{\mu_{\ell}},\Delta,L,1\right)\cdot d_{G}(s_q,t)~.
		\]
		To get the claim, note that
		\[
		\begin{aligned}
			f\!\left(\tfrac{m}{\mu_{\ell}},\Delta,L,1\right) 
			&= c\cdot L\cdot\ln\!\tfrac{m}{\mu_{\ell}}\cdot e^{\delta\cdot\left(L\cdot\log(\tfrac{m}{\mu_{\ell}}\Delta)+(L-1)+\tfrac{\mu_{L}+1-1}{\mu_{L}}\right)} \\
			&\le c\cdot L\cdot\ln m\cdot e^{\delta\cdot\left(L\cdot\log(m\Delta)+L\cdot(1-\log\mu_{\ell})\right)}-c\cdot L\cdot\ln\mu_{\ell}\\
			&\le f(m,\Delta,\ell,j+1)-c\cdot L\cdot\ln\mu_{\ell}~,
		\end{aligned}
		\]
		where the last inequality holds since (1) $j\le\mu_\ell$ (by \Cref{clm:PartitionIteration}, as we found a center $x_j$), and (2) $\mu_\ell\ge\mu_1=2$, hence $\log\mu_\ell\ge1$. 
	\end{proof}
	
	If $q=k$ and $s_q=s_k=s$, then \Cref{clm:DistNoCutsqTot} is exactly what we have to prove, and thus we are done. We will thus assume $s_q\ne s$ (regardless of the value of $q$).
	The algorithm continues to run on the active vertices in the prefix subpath $\pi[s,t_{q+1}]$. 
	By \Cref{ih:mainAlt} it holds that 
	\[
	\E[\alpha(s,t_{q+1})\mid R\in[\eta_{q},\rho_{q+1})]\le f(m,\Delta,\ell,j+1)\cdot d_{G}(s,t_{q+1})~.
	\]
	Using \Cref{cor:alphaAppliedRecursivelyMiddle}, we conclude
	\begin{align*}
		\E\left[\alpha(s,t)\mid R\in[\eta_{q},\rho_{q+1})\right] 
		& \le\E\left[\alpha(s,t_{q+1})\mid R\in\dots\right]+d_{G}(t_{q+1},s_{q})+\E\left[\alpha(s_{q},t)\mid R\in\dots\right]\\
		& \le f(m,\Delta,\ell,j+1)\cdot d_{G}(s,t_{q+1})+d_{G}(t_{q+1},s_{q})\\
		& \qquad+\left(f(m,\Delta,\ell,j+1)-c\cdot L\cdot\ln\mu_{\ell}\right)\cdot d_{G}(s_{q},t)\\
		& \le f(m,\Delta,\ell,j+1)\cdot d_{G}(s,t)-c\cdot L\cdot\ln\mu_{\ell}\cdot d_{G}(s_{q},t)~.\qedhere
	\end{align*}
\end{proof}

Next, we also consider the cases where some edges of $\pi$ join the cut set $S$.

\begin{lemma}\label{lem:InductionOnPath}
	Recall $\eta_{0}=r_{\ell-1}$ and $s_{0}=t$. For every $q\in[0,k]$ it holds that \\
	\phantom{.}\hfill $\mathbb{E}\left[\alpha(s,t)\mid R\ge\eta_{q}\right]\le f(m,\Delta,\ell,j+1)\cdot d_{G}(s,t)-c\cdot L\cdot\ln \mu_{\ell}\cdot d_{G}(s_{q},t)$.
\end{lemma}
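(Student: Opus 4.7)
I would prove the lemma by backward induction on $q$, starting from the base case $q = k$ and descending to $q = 0$. For $q = k$ the conclusion is immediate: by the convention $\rho_{k+1} = \infty$, the event $\{R \ge \eta_k\}$ coincides with $\{R \in [\eta_k, \rho_{k+1})\}$, so the bound follows directly from \Cref{lem:NoCut}.

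For the inductive step with $q < k$, assume the statement holds for $q+1$. The plan is to partition $\{R \ge \eta_q\}$ into three sub-events according to the value of $R$: (I) $R \in [\eta_q, \rho_{q+1})$, where $\pi(R)$ is still the suffix $\pi[s_q, t]$, so no edge of $\pi$ is cut and \Cref{lem:NoCut} applies; (II) $R \in [\rho_{q+1}, \eta_{q+1})$, where the vertex $z_{q+1}$ has joined the ball but $\pi(R)$ is not yet a suffix of $\pi$, so some edge of $\pi$ is necessarily added to $S$; and (III) $R \ge \eta_{q+1}$, where the inductive hypothesis gives the desired bound with $s_{q+1}$ in place of $s_q$. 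Then I would compute $\mathbb{E}[\alpha(s,t) \mid R \ge \eta_q]$ as the convex combination of the three conditional expectations, weighted by the conditional probabilities of (I), (II), (III).

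In case (II) there is no triangle-inequality-style bound, but since $s, t$ still lie in the same cluster of diameter at most $\Delta$, \Cref{obs:alphaDominating} gives the uniform bound $\alpha(s,t) \le 3\Delta$. The key probability estimate uses the memorylessness of the real exponential distribution, invoked via \Cref{rem:ExponentialSizeArgument}: conditional on $R \ge \eta_q$, the probability of case (II) is $e^{-\lambda_\ell(\rho_{q+1}-\eta_q)}\cdot(1-e^{-\lambda_\ell(\eta_{q+1}-\rho_{q+1})})$, and using $1-e^{-x}\le x$ together with \Cref{lem:etaIrhoI}, this is bounded by $e^{-\lambda_\ell(\rho_{q+1}-\eta_q)}\cdot \lambda_\ell\cdot d_G(s_{q+1}, t_{q+1})$.

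The final telescoping ingredient is the identity $d_G(s_{q+1}, t) = d_G(s_{q+1}, t_{q+1}) + w(t_{q+1}, s_q) + d_G(s_q, t) \ge d_G(s_q, t) + d_G(s_{q+1}, t_{q+1})$, valid since $s_{q+1}, t_{q+1}, s_q, t$ appear in this order along the shortest $s$-$t$ path $\pi$. This shows that the slack $cL\ln\mu_\ell\cdot(d_G(s_{q+1},t) - d_G(s_q, t))$ gained by promoting the $(q{+}1)$-inductive bound to the $q$-target is at least $cL\ln\mu_\ell\cdot d_G(s_{q+1}, t_{q+1})$, which is precisely the scale at which the case (II) contribution $3\Delta\cdot \lambda_\ell\cdot d_G(s_{q+1}, t_{q+1})$ must be absorbed. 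The absorption works because \Cref{obs:lambdaUB} gives $\Delta\lambda_\ell = O(L\log\mu_\ell)$, provided the absolute constant $c$ is chosen sufficiently large. I expect the main technical obstacle to be the careful algebraic bookkeeping between the three case contributions: in particular, matching $3\Delta\lambda_\ell$ against $cL\ln\mu_\ell$ with the right multiplicative constant, and controlling the exponential factor $e^{-\lambda_\ell(\rho_{q+1}-\eta_q)}$ relative to $p_3 = e^{-\lambda_\ell(\eta_{q+1}-\eta_q)}$, whose ratio is $e^{-\lambda_\ell(\eta_{q+1}-\rho_{q+1})} \ge e^{-1}$ (using $\lambda_\ell d_G(s_{q+1}, t_{q+1}) \le \lambda_\ell d_G(s,t) < 1$ in the nontrivial regime fixed before \Cref{obs:stLong}).
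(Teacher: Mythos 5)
Your proposal is essentially the same argument as the paper's: backward induction from $q=k$, with case (I) handled by \Cref{lem:NoCut}, case (II) by the uniform bound $\alpha(s,t)\le 3\Delta$ from \Cref{obs:alphaDominating} combined with $1-e^{-x}\le x$ and \Cref{lem:etaIrhoI}, and case (III) by the inductive hypothesis; and the telescoping $d_G(s_{q+1},t)-d_G(s_q,t)\ge d_G(s_{q+1},t_{q+1})$ along $\pi$ is exactly the step the paper uses to absorb the case-(II) penalty.

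The one place where you genuinely deviate, and where some care is needed, is in how you apportion the conditional probabilities. You decompose $\{R\ge\eta_q\}$ into all three sub-events at once with weights $p_1,p_2,p_3$ and then need $p_2\cdot 3\Delta\le p_3\cdot cL\ln\mu_\ell\cdot(d_G(s_{q+1},t)-d_G(s_q,t))$, which hinges on the ratio $p_2/p_3=e^{\lambda_\ell(\eta_{q+1}-\rho_{q+1})}-1$. With $x:=\lambda_\ell(\eta_{q+1}-\rho_{q+1})\le\lambda_\ell d_G(s,t)<1$ you get $p_2/p_3\le(e-1)x$, so an extra constant factor of $e-1\approx1.72$ creeps in relative to the paper's calculation, and with the paper's fixed $c=384$ the arithmetic then falls short ($(e-1)\cdot\tfrac{192}{\ln 2}\approx 476>384$). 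The paper sidesteps this entirely: after disposing of case (I) it conditions on $R\ge\rho_{q+1}$, where memorylessness restarts the exponential, giving $\Pr[\text{case (II)}\mid R\ge\rho_{q+1}]\le 1-e^{-x}\le x$ with no $e^x$ prefactor, and it then simply uses the naive bound $\Pr[R\ge\eta_{q+1}\mid R\ge\rho_{q+1}]\le 1$ on the case-(III) coefficient (valid because $f\cdot d_G(s,t)-cL\ln\mu_\ell\cdot d_G(s_{q+1},t)\ge 0$), avoiding any comparison of $p_2$ against $p_3$. Since you explicitly leave the absolute constant $c$ free, your version is still a correct proof, just with a slightly worse constant; but if you want to hit the paper's $c=384$ you should adopt the two-step conditioning.
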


\begin{proof}
	The proof is by induction on $q$. The base case is $q=k$. Given
	that $R\ge\eta_{k}$, by \Cref{lem:NoCut},
	\[
	\E\left[\alpha(s,t)\mid R\ge\eta_{k}\right]\le f(m,\Delta,\ell,j+1)\cdot d_{G}(s,t)-c\cdot L\cdot\ln \mu_\ell\cdot d_{G}(s_{k},t)~.
	\]	
	For the inductive step, assume $\mathbb{E}\left[\alpha(s,t)\mid R\ge\eta_{q+1}\right]\le f(m,\Delta,\ell,j+1)\cdot d_{G}(s,t)-c\cdot L\cdot\ln\mu_{\ell}\cdot d_{G}(s_{q+1},t)$.
	Suppose that $R\ge\eta_{q}$. 
	If $R\in[\eta_q,\rho_{q+1})$, then by \Cref{lem:NoCut} we have $\E\left[\alpha(s,t)\mid R\in[\eta_{q},\rho_{q+1})\right]\le f(m,\Delta,\ell,j+1)\cdot d_{G}(s,t)-c\cdot L\cdot\ln\mu_{\ell}\cdot d_{G}(s_{q},t)$, as required.
	Thus we assume $R\ge\rho_{q+1}$. 	
	Using the law of total expectation, 
	\begin{align*}
		\E\left[\alpha(s,t)\mid R\ge\rho_{q+1}\right] 
		& \le\Pr\!\left[R\in[\rho_{q+1},\eta_{q+1})\mid R\ge\eta_{q}\right]\cdot\E\left[\alpha(s,t)\mid R\in[\rho_{q+1},\eta_{q+1})\right]\\
		& \quad+\Pr\!\left[R\ge\eta_{q+1}\mid R\ge\eta_{q}\right]\cdot\E\left[\alpha(s,t)\mid R\ge\eta_{q+1}\right]~.
	\end{align*}
	To bound the first summand, note that by \Cref{lem:etaIrhoI}
	$\eta_{q+1}-\rho_{q+1}\le d_G(s_{q+1},t_{q+1})$. In addition, we are assuming $d_G(s_{q+1},t_{q+1})\le d_G(s,t)\le \frac{1}{\lambda_\ell}$, and thus $\lambda_{\ell}\cdot(\eta_{q+1}-\rho_{q+1})\le1$.
	Recall that by \Cref{obs:alphaDominating}, $\E\left[\alpha(s,t)\mid R\in[\rho_{q+1},\eta_{q+1})\right]\le3\Delta$.
	As we currently assume that $R$ is sampled using the ordinary (untruncated) exponential distribution with parameter $\lambda_\ell$, it holds that%
	\submit{\todo{$d_G(s,t)\le\frac{1}{\lambda_\ell}$}}
	\begin{align*}
		& \Pr\left[R\in[\rho_{q+1},\eta_{q+1})\mid R\ge\rho_{q+1}\right]\cdot\E\left[\alpha(s,t)\mid R\in[\rho_{q+1},\eta_{q+1})\right]\\
		& \qquad\quad\le\quad\left(1-e^{-\lambda_{\ell}\cdot(\eta_{q+1}-\rho_{q+1})}\right)\cdot3\Delta\quad\le\quad\lambda_{\ell}\cdot d_{G}(s_{q+1},t_{q+1})\cdot3\Delta~.
	\end{align*}
	Using the inductive assumption (and the naive bound $\Pr\!\left[ R\ge\eta_{q+1}\mid R\ge\eta_{q}\right]\le 1$), it follows that
	\begin{align*}
		\E\left[\alpha(s,t)\mid R\ge\rho_{q+1}\right] 
		& \le \lambda_{\ell}\cdot d_{G}(s_{q+1},t_{q+1})\cdot3\Delta + f(m,\Delta,\ell,j+1)\cdot d_{G}(s,t) - c\cdot L\cdot\ln\mu_{\ell}\cdot d_{G}(s_{q+1},t)\\
		& \overset{(*)}{\le} f(m,\Delta,\ell,j+1)\cdot d_{G}(s,t) - c\cdot L\cdot\ln\mu_{\ell}\cdot d_{G}(s_{q},t)~.
	\end{align*}
	To see why inequality $^{(*)}$ holds, recall from \Cref{obs:lambdaUB} that 
	$\lambda_\ell\le\frac{64}{\Delta}\cdot L\cdot\log\mu_{\ell}$,
	or equivalently $\lambda_{\ell}\cdot3\Delta\le192\cdot L\cdot\log\mu_{\ell}=\frac{192}{\ln2}\cdot L\cdot\ln\mu_{\ell}<c\cdot L\cdot\log\mu_{\ell}$ (recall $c=384>\frac{192}{\ln2}$).
	Thus 
	\begin{align*}
		c\cdot L\cdot\ln\mu_{\ell}\cdot d_{G}(s_{q+1},t) & =c\cdot L\cdot\ln\mu_{\ell}\cdot d_{G}(s_{q+1},s_{q})+c\cdot L\cdot\ln\mu_{\ell}\cdot d_{G}(s_{q},t)\\
		& \ge\lambda_{\ell}\cdot3\Delta\cdot d_{G}(s_{q+1},t_{q+1})+c\cdot L\cdot\ln\mu_{\ell}\cdot d_{G}(s_{q},t)~.
	\end{align*}
	and the lemma follows.	
\end{proof}

Applying \Cref{lem:InductionOnPath} to $q=0$ (recall $\eta_{0}=r_{\ell-1}$ and $s_{0}=t$),
\[
\mathbb{E}\left[\alpha(s,t)\mid R\ge r_{\ell-1}\right]\le f(m,\Delta,\ell,j+1)\cdot d_{G}(s,t)-c\cdot L\cdot\log\mu_{\ell}\cdot d_{G}(s_{0},t)=f(m,\Delta,\ell,j+1)\cdot d_{G}(s,t)~.
\]

Finally, it is time to incorporate truncation. Let $\Psi$ be the
event that we sampled a value $ R\ge r_{\ell}$. The probability of
this event is
\[
\Pr\left[\Psi\right]=\Pr\left[ R>r_{\ell}\mid R\ge r_{\ell-1}\right]=e^{-\lambda_{\ell}\cdot(r_{\ell}-r_{\ell-1})}=e^{-\ln(\tfrac{2\mu_{\ell}}{\delta})}=\tfrac{\delta}{2\mu_{\ell}}~.
\]
In particular,
\[
\frac{1}{\Pr\left[\overline{\Psi}\right]}=\frac{1}{1-\frac{\delta}{2\mu_{\ell}}}<1+\frac{\delta}{\mu_{\ell}}\le e^{\frac{\delta}{\mu_{\ell}}}~.
\]
Let $g$ be the density function of the ordinary exponential distribution
we are using (formally, $g(x)=\lambda_\ell\cdot e^{-\lambda_\ell\cdot (x-r_{\ell-1})}$ for $x\ge r_{\ell-1}$).
Note that the density function of $\Texp_{[r_{\ell-1},r_\ell ]}(\lambda_\ell)$ is $\frac{g(x)}{\Pr\left[\overline{\Psi}\right]}$. 
We conclude
\begin{align}
	\mathbb{E}\left[\alpha(s,t)\mid\overline{\Psi}\right] 
	&= \int_{r_{\ell-1}}^{r_{\ell}}\frac{g(x)}{\Pr\left[\overline{\Psi}\right]}\cdot\mathbb{E}\left[\alpha(s,t)\mid R=x\right]\,dx \nonumber\\
	& \le \frac{1}{\Pr\left[\overline{\Psi}\right]}\cdot\int_{r_{\ell-1}}^{\infty}g(x)\cdot\mathbb{E}\left[\alpha(s,t)\mid R=x\right]\,dx \label{eq:ExpandingToAllValuesOfR}\\
	& \le e^{\frac{\delta}{\mu_{\ell}}}\cdot f(m,\Delta,\ell,j+1)\cdot d_{G}(s,t)\nonumber\\
	& = f(m,\Delta,\ell,j)\cdot d_{G}(s,t)~,\nonumber
\end{align}
as required.

\section{Removing the Polynomial Aspect Ratio Assumption}\label{sec:RemoveAspectRatio}
During the proof of \Cref{thm:mainExpectedDistortionAspectRatio} we assumed that all the edge weights are polynomial (and at least $1$).
In this section we remove this assumption.
We will slightly change \Cref{alg:DigraphPartition}. 
Specifically, the only change is that we will use an alternative weight function on the edges: 
$$\brick{w'(e)}:=\begin{cases}
	w(e) & w(e)\ge\frac{\Delta}{n^{7}}\\
	0 & w(e)<\frac{\Delta}{n^{7}}
\end{cases}~.$$
That is, for every edge with weight less than $\frac{\Delta}{n^{7}}$, its weight is changed to $0$.
In more detail: the parameter $\Delta$ is provided to \Cref{alg:DigraphPartition} by \Cref{alg:LaminarPartition} w.r.t.\ the real weight function $w$.
Then, \Cref{alg:DigraphPartition} internally uses a different weight function (to look for potential centers in \cref{line:WhileCurve}, and to carve balls in \cref{line:CurveBall}).
This is the only change.
Note that as a result of this change, in \Cref{alg:DigraphPartition}, an edge $e=(u,v)$ of original weight at most $\frac{\Delta}{n^7}$ will never enter the cut set $S$.
Indeed, every outgoing ball $B^+_{G[A]}(x_j,R)$ containing $u$ will also contain $v$ (resp.\ every ingoing ball $B^-_{G[A]}(x_j,R)$ containing $v$ will also contain $u$).

All our claims, observations, and lemmas in \Cref{sec:algortihm} go through, as they do not depend on the random choices made by the algorithm.
The only claim that requires attention is \Cref{clm:PartitionIterationRDelta}.
Specifically, in \Cref{clm:PartitionIterationRDelta} we proved that for every SCC $C$ fully contained in $\cR$, it holds that $\diam(C,G)\le2r_L\le\frac\Delta2$.
However, this will now hold only w.r.t.\ the weight function $w'$ we are using during the proof.
Note that w.r.t.\ the actual weight function $w$, the weight of a shortest path might increase by up to $n\cdot \frac{\Delta}{n^{7}}=\frac{\Delta}{n^{6}}$.
It follows that
$$\diam(C,G)\le2r_L+\frac{\Delta}{n^6}\le\frac\Delta2~,$$
where in the last inequality we used \Cref{obs:rellSize} ($r_{L}=\frac{\Delta}{4}-\frac{\Delta}{2^{L+4}}\le\frac{\Delta}{4}-\frac{\Delta}{32\log(m+1)}$).
Thus the part we need from \Cref{clm:PartitionIterationRDelta} still holds. We conclude that all of \Cref{sec:algortihm} goes through.
In particular, the number of edges in $D_1$ is bounded by $O(n\log n+m\log^2(n\Delta))$ (see \Cref{clm:SparsityOfDAGs}).

Before we turn to \Cref{sec:ExpectedDistortion}, we need an additional definition. 
We will think of the DAGs $D_1,D_2$ as being constructed gradually.
Recall that the DAGs are constructed w.r.t.\ the laminar topological order $(\cP,<_D)$.
The DAG $D_2$ will remain as is. For $D_1$ we will use a different digraph (for the sake of analysis only).
Let \EMPH{$G_\xi$} be the digraph $D_1$, where for every cluster $C\in\cP$ of diameter $\Delta_C<\xi$, we keep in $G_\xi$ all the internal edges between vertices of $C$. 
Formally, 
Let $\brick{\cP^\xi_{\min}}$
be the maximal clusters in $\cP$ of diameter less than $\xi$. In other words, these are clusters $C\in\cP$ such that $\Delta_C<\xi$, and for every $C'\in\cP$ with $C'\supsetneq C$, $\Delta_{C'}\ge\xi$.
Note that $\cP^\xi_{\min}$ is a partition of $V$, since every vertex is contained in some cluster of diameter less than $\xi$ (by the singleton property of \Cref{def:Laminar}), and the set of maximal clusters is disjoint.
Let $\brick{\cP^\xi}=\left\{C\in\cP\mid\exists C'\in \cP^\xi_{\min} \mbox{ s.t. } C'\subseteq C\right\}$ be all clusters of $\cP$ that are either in $\cP^\xi_{\min}$, or have diameter at least $\xi$.
The digraph $G_\xi$ consists of the following edges:
\begin{itemize}
	\item The $2$-hop spanner $E_{H_1}$ (same as in $D_1$).
	\item For every edge $e=(u,v)\in D=E\setminus\widetilde{S}$, and every two clusters $C_u,C_v\in\cP^\xi$ such that $C_u$ contains $u$ but not $v$, and $C_v$ contains $v$ but not $u$, we add to $G_\xi$ the four edges ${e=(u,v)},(u,C_v^\first),(C_u^\last,v),(C_u^\last,C_v^\first)$.
	\item Internal edges in small clusters: $E_{\xi}=\left\{ (u,v)\in E\mid\exists C\in\cP^{\xi}_{\min}\text{ s.t. }u,v\in C\right\}$.
\end{itemize}

One can think of the algorithm constructing $D_1$ as follows:
Initially $\cP$ consists of all the SCCs in $G$.
Then for every SCC $C\in\cP$ we run \Cref{alg:DigraphPartition} to find a set of edges $S_C$. We remove the edges $\cup_{C\in\cP} S_C$ and obtain additional SCCs which we add to $\cP$. For every edge $e\notin \widetilde{S}$ between two different SCCs we add some additional edges to $D_1$. We then continue this process on all the new SCCs. Eventually, once all the created SCCs are treated, we obtain $D_1$. If we stop this process on each cluster $C\in\cP$ that reaches diameter below $\xi$, we obtain $G_\xi$.
In this sequential process, all the edges added to $D_1$ are only between vertices already reachable. We observe:
\begin{observation}\label{obs:reachabiltyDimpliesGxi}
	$u\rightsquigarrow_{D_1}v$ $\Rightarrow$ $u\rightsquigarrow_{G_\xi}v$.
\end{observation}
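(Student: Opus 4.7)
The plan is to prove the stronger statement that every edge $(a,b)\in E(D_1)$ is realized by a directed path from $a$ to $b$ in $G_\xi$; the observation then follows by concatenation along any $u$-to-$v$ walk in $D_1$. Edges of $E_{H_1}$ are common to $D_1$ and $G_\xi$, so the task reduces to the ``cluster edges'' of $D_1$: for each $(u,v)\in D$ and each disjoint pair $C_u,C_v\in\cP$ with $u\in C_u$, $v\in C_v$, the four edges $(u,v),(u,C_v^\first),(C_u^\last,v),(C_u^\last,C_v^\first)$. For $C\in\cP$, define $\widehat{C}$ as the smallest cluster of $\cP^\xi$ containing $C$; so $\widehat{C}=C$ when $C\in\cP^\xi$ and otherwise $\widehat{C}$ is the unique $\cP^\xi_{\min}$-cluster containing $C$. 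I will case-split on whether $\widehat{C_u}=\widehat{C_v}$.

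First consider $\widehat{C_u}=\widehat{C_v}=:C'$. A short hierarchy argument (using that $\cP^\xi_{\min}$ is a partition of $V$ and that $C_u\cap C_v=\emptyset$) forces $C'\in\cP^\xi_{\min}$: otherwise $C'$ would have diameter at least $\xi$, hence equal both $C_u$ and $C_v$ (since $\widehat{C_u}$ is smallest above $C_u$, etc.), violating disjointness. All six vertices $u,v,C_u^\first,C_u^\last,C_v^\first,C_v^\last$ then lie in $C'$; every edge of $G[C']$ is in $E_\xi\subseteq E(G_\xi)$ because $C'\in\cP^\xi_{\min}$, and $G[C']$ is strongly connected by the connectivity property of $\cP$, so every one of the four $D_1$-edges is realized by a path inside $G[C']\subseteq G_\xi$.

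Next consider $\widehat{C_u}\ne\widehat{C_v}$. Combining $C_u\cap C_v=\emptyset$ with the laminar structure of $\cP$ and the partition property of $\cP^\xi_{\min}$ rules out nested containment of $\widehat{C_u}$ and $\widehat{C_v}$, so they must be disjoint members of $\cP^\xi$ containing $u$ and $v$ respectively. By the definition of $G_\xi$, the four edges $(u,v),(u,\widehat{C_v}^\first),(\widehat{C_u}^\last,v),(\widehat{C_u}^\last,\widehat{C_v}^\first)$ all lie in $G_\xi$. To realize the $D_1$-edge $(C_u^\last,C_v^\first)$, I prepend a $G[\widehat{C_u}]$-path from $C_u^\last$ to $\widehat{C_u}^\last$ and append a $G[\widehat{C_v}]$-path from $\widehat{C_v}^\first$ to $C_v^\first$; these detours are either trivial (when $\widehat{C}=C$) or live inside a $\cP^\xi_{\min}$-cluster, which is strongly connected with all its internal edges in $E_\xi\subseteq E(G_\xi)$. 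The edges $(u,C_v^\first)$ and $(C_u^\last,v)$ are handled identically, and $(u,v)$ is already in $G_\xi$. The main obstacle is the purely combinatorial verification that distinct $\widehat{C_u},\widehat{C_v}$ cannot be nested; once this is done each $D_1$-edge becomes a concatenation of at most one $G_\xi$-edge with two short detours inside maximal small clusters.
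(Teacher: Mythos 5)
Your proof is correct, and it takes a genuinely different route than the paper. The paper does not actually give a formal proof of this observation: it appeals to the sequential viewpoint in which one ``watches'' the algorithm build $D_1$ stage by stage, notes that $G_\xi$ is a snapshot of the construction at the moment every cluster of diameter $<\xi$ would be processed, and then remarks that the remaining edges added to $D_1$ only link vertices already mutually reachable inside some $\cP^{\xi}_{\min}$-cluster — so they cannot create any new reachable pair. Your argument is a static, edge-by-edge verification: you show that every edge of $D_1$ is realized by a $G_\xi$-path by lifting each cluster $C$ to the smallest $\cP^\xi$-cluster $\widehat{C}$ above it and splitting on whether $\widehat{C_u}=\widehat{C_v}$ or not. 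The lemma you need — that $\widehat{C_u}\ne\widehat{C_v}$ are never nested, and that the detours $C_u^{\mathrm{last}}\rightsquigarrow\widehat{C_u}^{\mathrm{last}}$ and $\widehat{C_v}^{\mathrm{first}}\rightsquigarrow C_v^{\mathrm{first}}$ live entirely inside a strongly connected $\cP^{\xi}_{\min}$-cell whose edges sit in $E_\xi$ — is verified correctly, using the characterization $\widehat{C}=C$ if $\Delta_C\ge\xi$ and otherwise $\widehat{C}$ is the $\cP^{\xi}_{\min}$-cell containing $C$. What the paper's phrasing buys is brevity and intuition; what your argument buys is a fully checkable, self-contained proof that does not rely on re-imagining the algorithm as a step-by-step animation. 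Both are valid, and your version would stand as a rigorous replacement for the one-sentence remark in the paper.
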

In addition, we observe from the sequential construction of $G_\xi$:
\begin{observation}\label{obs:GxiSCC}
	The set of SCCs in $G_\xi$ is $\cP^\xi_{\min}$.
\end{observation}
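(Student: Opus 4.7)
The plan is to show the two required containments separately: every cluster $C \in \cP^\xi_{\min}$ is a strongly connected set in $G_\xi$, and no strongly connected set of $G_\xi$ contains vertices from two different clusters of $\cP^\xi_{\min}$. Combined with the fact that $\cP^\xi_{\min}$ partitions $V$ (which follows from the singleton property together with the hierarchy property of \Cref{def:Laminar}), this identifies the SCCs.

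For the first direction, fix $C\in\cP^\xi_{\min}$. By the connectivity property (property \ref{req:connectivty} of \Cref{def:Laminar}), $G[C]$ is strongly connected. By the definition of $E_\xi$, every edge of $G[C]$ is included in $G_\xi$ (as $C\in\cP^\xi_{\min}$ and both endpoints lie in $C$). Hence $G_\xi$ contains $G[C]$ as a subgraph and $C$ is strongly connected in $G_\xi$.

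For the second direction, the key is that the only edges of $G_\xi$ whose endpoints can lie in different clusters of $\cP^\xi_{\min}$ are the ``forward'' edges — those coming from $E_{H_1}$ and from the $D_1$-style jumping edges — and these all respect the total order $<_D$. Indeed, the proof of \Cref{clm:zD1D2Dags} shows that every edge added during the construction of $D_1$ satisfies $u<_D v$; the same argument applies verbatim to the edges added to $G_\xi$ in the first two bullets of its definition. The $E_\xi$ edges are by definition internal to some cluster $C\in\cP^\xi_{\min}$. By the continuity property (property \ref{req:continuity} of \Cref{def:Laminar}) each $C\in\cP^\xi_{\min}$ is contiguous in $<_D$, so the clusters of $\cP^\xi_{\min}$ can be linearly ordered $C_1,\dots,C_k$ consistently with $<_D$. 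Consequently, any closed walk in $G_\xi$ using an edge that crosses from $C_i$ to some $C_j$ with $j>i$ would need another edge returning from $C_{j'}$ to $C_{i'}$ with $i'<j'$; but no such returning edge exists, since cross-cluster edges respect $<_D$. Therefore no directed cycle of $G_\xi$ can leave a cluster of $\cP^\xi_{\min}$, and every SCC of $G_\xi$ is contained in a single such cluster.

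The main (minor) obstacle is the bookkeeping in the second direction — verifying that the four edges $(u,v),(u,C_v^\first),(C_u^\last,v),(C_u^\last,C_v^\first)$ added for every $(u,v)\in D$ in the construction of $G_\xi$ all respect $<_D$ regardless of which pair of clusters $C_u,C_v\in\cP^\xi$ we use. This is exactly the content of \Cref{clm:zD1D2Dags}, whose argument (using the hierarchy and continuity properties to conclude that disjoint clusters appear contiguously in $<_D$ in the correct order) carries over without change.
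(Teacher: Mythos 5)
Your proof is correct. The paper states this observation without an explicit argument, deferring to the informal ``sequential construction'' description of $G_\xi$ just above it; your argument (every $C\in\cP^\xi_{\min}$ is strongly connected in $G_\xi$ because $E_\xi\supseteq E(G[C])$, and no SCC can cross a $\cP^\xi_{\min}$ boundary because every cross-cluster edge of $G_\xi$ respects $<_D$ and $\cP^\xi_{\min}$-clusters are contiguous in $<_D$) is a rigorous rendering of exactly that intuition, correctly reusing the order argument from \Cref{clm:zD1D2Dags}.
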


We proceed to modify \Cref{sec:ExpectedDistortion}, where the main idea is to work with $G_\xi$ instead of $D_1$ (for an appropriate value of $\xi$). 
\Cref{clm:DistanceUsing2HopSpanner} ($\min\{d_{D_1}(s,t),d_{D_2}(s,t)\}\le 2\Delta_{st}$) goes through (as we added the $2$-hop spanner).
We define $\alpha$ w.r.t.\ $G_\xi$:
$$\brick{\alpha_\xi(s,t)}:=d_{G_{\xi}}(s,t)\cdot\mathds{1}[s\rightsquigarrow_{G_\xi}t]+3\Delta_{st}\cdot\mathds{1}[s\not\rightsquigarrow_{G_\xi}t]~.$$
That is, we replace the order $<_D$ with reachability in $G_\xi$, and replace distance in $D_1$ by distance in $G_\xi$.
Consider \Cref{obs:alphaDominating}; clearly $\alpha_\xi(s,t)\le3\Delta$ (as we added the $2$-hop spanner edges).
The main part of \Cref{obs:alphaDominating} requires an additional argument. We state it below in \Cref{clm:alphaXiDominating}.

\Cref{lem:alphaAppliedRecursively} follows by the exact same lines (replacing $\le_D$ by $\rightsquigarrow_{G_\xi}$). 
This assumes $\xi$ is smaller than the current diameter.
\begin{lemma}\label{lem:alphaXiAppliedRecursively}
	Consider a cluster $U\in\cP$, on which we execute \Cref{alg:LaminarPartition}, where $\Delta_U\ge\xi$.	
	Consider two different SCCs $C_s,C_t$ of $G[U]\setminus S$ such that there is an edge $(u,v)\in \pi$ with $(u,v)\notin S$, $\pi[s,u]\subseteq C_s$ and $\pi[v,t]\subseteq C_t$. Then 
	$$\alpha_\xi(s,t)\le\alpha_\xi(s,u)+d_G(u,v)+\alpha_\xi(v,t)~.$$
\end{lemma}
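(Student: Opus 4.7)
The plan is to mirror the structure of the proof of Lemma~\ref{lem:alphaAppliedRecursively} line by line, systematically replacing the order $<_D$ with reachability $\rightsquigarrow_{G_\xi}$ and the metric $d_{D_1}$ with $d_{G_\xi}$. The first task is to verify that the construction rule for $G_\xi$ supplies all the edges that the original proof relied on. The three ``crossing'' edges $(u, C_t^{\first})$, $(C_s^{\last}, v)$, $(C_s^{\last}, C_t^{\first})$, together with $(u,v)\in D$, are added to $G_\xi$ provided $C_s, C_t \in \cP^{\xi}$. Both memberships follow from the hypothesis $\Delta_U \ge \xi$: if $\Delta_{C_s} < \xi$ then $C_s$ has no strict $\cP$-ancestor of diameter $<\xi$ (its parent in the recursion is $U$, and diameters are monotone along the recursion tree), so $C_s \in \cP^{\xi}_{\min} \subseteq \cP^{\xi}$; if $\Delta_{C_s} \ge \xi$ then for any $v \in C_s$ the maximal small cluster $M_v$ containing $v$ is, by laminarity, a subset of $C_s$ (otherwise $M_v \supsetneq C_s$ would force $\Delta_{C_s} \le \Delta_{M_v} < \xi$), giving $C_s \in \cP^{\xi}$. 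The same argument applies to $C_t$.

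Two auxiliary facts about $G_\xi$ are then needed, both cheap. First, for any $C \in \cP^{\xi}$ and $x, y \in C$ with $x \le_D y$, one has $d_{G_\xi}(x,y) \le 2\Delta_C$; this is immediate from $E_{H_1} \subseteq G_\xi$, since the intermediate vertex guaranteed by \Cref{lem:2hopSpanner} lies in $C$ by continuity and the required reachability in $G$ holds by connectivity of $C$. Second, the contrapositive: if $x, y \in C \in \cP^{\xi}$ and $x \not\rightsquigarrow_{G_\xi} y$, then $x >_D y$. These two facts play the role that Claim~\ref{clm:DistanceUsing2HopSpanner} played in the original proof, and they let us convert every use of $<_D$ into a statement about reachability (and distance) in $G_\xi$.

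With these tools in place, perform a four-way case analysis on whether $s \rightsquigarrow_{G_\xi} u$ and whether $v \rightsquigarrow_{G_\xi} t$, exactly mirroring the four cases of Lemma~\ref{lem:alphaAppliedRecursively}. In each case an $s$-to-$t$ path in $G_\xi$ is exhibited using: (a) the $2$-hop spanner inside $C_s$ whenever $s \not\rightsquigarrow_{G_\xi} u$ (which forces $s \le_D C_s^{\last}$ and gives $d_{G_\xi}(s, C_s^{\last}) \le 2\Delta_{C_s}$), (b) symmetrically the $2$-hop spanner inside $C_t$ when $v \not\rightsquigarrow_{G_\xi} t$, (c) a single crossing edge whose weight is bounded via the triangle inequality by $\Delta_{C_s} + d_G(u,v) + \Delta_{C_t}$. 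Summing contributes at most $3\Delta_{C_s} + d_G(u,v) + 3\Delta_{C_t}$ in the worst (fourth) case, and less in the others. Finally, invoking $3\Delta_{C_s} \ge 3\Delta_{su} = \alpha_\xi(s,u)$ whenever $s \not\rightsquigarrow_{G_\xi} u$ (and $d_{G_\xi}(s,u) = \alpha_\xi(s,u)$ otherwise), and symmetrically for $v,t$, collapses the four bounds into the uniform inequality $d_{G_\xi}(s,t) \le \alpha_\xi(s,u) + d_G(u,v) + \alpha_\xi(v,t)$. The very path we exhibited also witnesses $s \rightsquigarrow_{G_\xi} t$, so $\alpha_\xi(s,t) = d_{G_\xi}(s,t)$ and the lemma follows.

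The main obstacle, and the only piece of genuine new content, is the first paragraph: confirming that the crossing edges added for $D_1$ survive in $G_\xi$ is not automatic, and rests precisely on the hypothesis $\Delta_U \ge \xi$, which forces every SCC arising in the call of \texttt{Digraph-Partition} on $U$ to lie in $\cP^{\xi}$. Once that is established, the case analysis is a mechanical translation of the original proof.
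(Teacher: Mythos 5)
Your plan of mirroring Lemma~\ref{lem:alphaAppliedRecursively} under the substitutions $<_D\to\rightsquigarrow_{G_\xi}$, $d_{D_1}\to d_{G_\xi}$ is the right one and matches the paper's (terse) proof, and your two auxiliary facts about $G_\xi$ and the verification that $C_s,C_t\in\cP^\xi$ are correct. However, there is a genuine gap: the four cases of Lemma~\ref{lem:alphaAppliedRecursively} do \emph{not} route through $C_s,C_t$; they route through the \emph{minimal} clusters $C_{su}=\Parent_\cP(s,u)$ and $C_{vt}=\Parent_\cP(v,t)$, which can be strict subsets of $C_s,C_t$ with strictly smaller diameters. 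By substituting $C_s,C_t$ you get, in case 4, the bound $d_{G_\xi}(s,t)\le 3\Delta_{C_s}+d_G(u,v)+3\Delta_{C_t}$, and you then try to close via ``$3\Delta_{C_s}\ge 3\Delta_{su}=\alpha_\xi(s,u)$''. That inequality runs the wrong way: an upper bound $d_{G_\xi}(s,t)\le 3\Delta_{C_s}+\dots$ with $3\Delta_{C_s}\ge\alpha_\xi(s,u)$ does not imply $d_{G_\xi}(s,t)\le\alpha_\xi(s,u)+\dots$. The bound you produce is simply looser than what the lemma demands.

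The fix is to use $C_{su},C_{vt}$ exactly as the original proof does, and verify membership in $\cP^\xi$ for \emph{these} clusters (in the cases where they are needed). That membership check is not the one you gave for $C_s,C_t$, and rests on a different mechanism. Suppose $s\not\rightsquigarrow_{G_\xi}u$. By Observation~\ref{obs:GxiSCC}, $s$ and $u$ lie in different clusters of $\cP^\xi_{\min}$. Let $M_s\in\cP^\xi_{\min}$ be the one containing $s$. Since $s\in M_s\cap C_{su}$ and $u\in C_{su}\setminus M_s$, laminarity forces $M_s\subsetneq C_{su}$, and therefore $C_{su}\in\cP^\xi$ by definition. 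Symmetrically $C_{vt}\in\cP^\xi$ whenever $v\not\rightsquigarrow_{G_\xi}t$. With this in hand, the crossing edges $(u,C_{vt}^\first)$, $(C_{su}^\last,v)$, $(C_{su}^\last,C_{vt}^\first)$ are all present in $G_\xi$ (for the first and second, instantiate the unused side of the rule with $C_s$ or $C_t$, which you did verify lie in $\cP^\xi$), and the case analysis reproduces the exact bounds $3\Delta_{su}$ and $3\Delta_{vt}$, giving the lemma. So the obstruction you identified (``confirm the crossing edges survive'') is the right one, but the clusters whose survival matters are $C_{su},C_{vt}$, not $C_s,C_t$, and the reason they survive is the partition structure of $\cP^\xi_{\min}$, not the hypothesis $\Delta_U\ge\xi$ alone.
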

\Cref{cor:alphaAppliedRecursivelyMiddle} holds in the same manner (we will not state it explicitly).
We are now ready to prove the appropriate version of \Cref{obs:alphaDominating}.

\begin{claim}\label{clm:alphaXiDominating}
	For every $s,t\in V$, $d_{D_{1}}(s,t)\cdot\mathds{1}[s\le_{D}t]+d_{D_{2}}(s,t)\cdot\mathds{1}[s\ge_{D}t]\le\alpha_\xi(s,t)+3n\xi$.
\end{claim}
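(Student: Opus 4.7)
The proof proceeds by case analysis. If $s = t$ both sides vanish. If $s \not\rightsquigarrow_{G_\xi} t$, then \Cref{obs:reachabiltyDimpliesGxi} gives $d_{D_1}(s,t) = d_{D_2}(s,t) = \infty$; additionally, if $s,t$ shared an SCC of $G$ then the $2$-hop spanner edges $E_{H_1} \subseteq G_\xi$ would already connect them in $G_\xi$, so they must lie in different SCCs, forcing $\Delta_{st} = \alpha_\xi(s,t) = \infty$ and the inequality is vacuous. When $s \ge_D t$ and $s \rightsquigarrow_{G_\xi} t$, the fact that every cross-SCC edge of $G_\xi$ (spanner in $E_{H_1}$ or jump edge for a $\cP^\xi$ pair) also lies in $D_1$ and hence respects $<_D$ forces $s$ and $t$ into a common $\cP^\xi_{\min}$ cluster $C$, and \Cref{clm:DistanceUsing2HopSpanner} then gives $d_{D_2}(s,t) \le 2\Delta_{st} \le 2\Delta_C < 2\xi \le 3n\xi$.

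The substantive case is $s <_D t$ with $s \rightsquigarrow_{G_\xi} t$, where $\alpha_\xi(s,t) = d_{G_\xi}(s,t)$ and I must show $d_{D_1}(s,t) \le d_{G_\xi}(s,t) + 3n\xi$. The plan is to fix a shortest $s$--$t$ path $\pi_\xi$ in $G_\xi$ and lift it cluster by cluster to a $D_1$-walk. By \Cref{obs:GxiSCC}, $\pi_\xi$ visits a sequence of distinct $\cP^\xi_{\min}$ clusters $C_{i_1}, \dots, C_{i_k}$ in $<_D$ order (since cross-cluster edges of $G_\xi$ also lie in $D_1$ and thus respect $<_D$), with $k \le n$ and each within-cluster sub-path of weight less than $\Delta_{C_{i_j}} < \xi$. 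Inside each cluster I route from the arrival vertex to $C_{i_j}^{\last}$ using the $2$-hop spanner in $D_1$ (cost $\le 2\Delta_{C_{i_j}}$, valid since the arrival vertex precedes $C_{i_j}^{\last}$ in $<_D$), and then cross to $C_{i_{j+1}}$ as follows: when the cross-edge $e_j$ of $\pi_\xi$ is a jump edge induced by a $D$-edge $(u,v)$, the finest-grained $D_1$ jump construction supplies an edge $(C_{i_j}^{\last}, C_{i_{j+1}}^{\first})$ whose weight $d_G(C_{i_j}^{\last},C_{i_{j+1}}^{\first})$ is at most $\Delta_{C_{i_j}} + w(e_j)$; when $e_j = (v_a,v_b) \in E_{H_1}$ is a spanner edge, I adjust the routing to enter $C_{i_j}$ at $C_{i_j}^{\first}$, $2$-hop-walk to $v_a$, and take $(v_a,v_b) \in D_1$ directly at cost $w(e_j)$. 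Summing over clusters, the overhead beyond the $w(e_j)$ contribution is at most $3\Delta_{C_{i_j}} < 3\xi$ per cluster, yielding a total $D_1$-walk of weight at most $d_{G_\xi}(s,t) + 3k\xi \le d_{G_\xi}(s,t) + 3n\xi$, as required.

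The main obstacle is the cross-cluster case analysis. Jump edges of $G_\xi$ come in four variants $(u,v)$, $(u,C_v^{\first})$, $(C_u^{\last},v)$, $(C_u^{\last},C_v^{\first})$ arising from a single $D$-edge $(u,v)$ and any pair of disjoint $\cP^\xi$ clusters $C_u \ni u, C_v \ni v$, and the generating $D$-edge may have endpoints in strict $\cP^\xi$-ancestors of the visited $\cP^\xi_{\min}$ clusters. Verifying in every case that a usable $(C_{i_j}^{\last}, C_{i_{j+1}}^{\first}) \in D_1$ edge of the claimed weight exists relies on the hierarchy property of \Cref{def:Laminar} together with the combinatorial observation that whenever the representative $C_u^{\last}$ happens to lie in the minimal cluster $C_{i_j}$, the maximality built into $\cP^\xi_{\min}$ forces $C_u^{\last} = C_{i_j}^{\last}$ (and symmetrically $C_v^{\first} = C_{i_{j+1}}^{\first}$), which allows the finest-grained $D_1$ jump construction at $C_u' = C_{i_j}, C_v' = C_{i_{j+1}}$ to be invoked with weight bounded by $\Delta_{C_{i_j}} + w(e_j)$. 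The spanner-to-spanner transitions likewise require that each intermediate cluster be entered at its $C_{i_j}^{\first}$, arranged by selecting the appropriate variant of the preceding jump edge; this is a consistency condition along the lifted walk rather than a deep obstruction, and once it is carried through, the additive $3n\xi$ slack emerges by straightforward telescoping over the at-most-$n$ visited clusters.
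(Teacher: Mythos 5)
Your case split has a genuine error in the $s\not\rightsquigarrow_{G_\xi}t$ branch.  From $s\not\rightsquigarrow_{G_\xi}t$, \Cref{obs:reachabiltyDimpliesGxi} only gives $d_{D_1}(s,t)=\infty$; it says nothing about $D_2$, so the assertion $d_{D_2}(s,t)=\infty$ is unjustified and in fact false in general.  The follow-up deduction is also wrong: the $E_{H_1}$ edges witness $s\rightsquigarrow_{G_\xi}t$ only when $s<_D t$, so sharing an SCC of $G$ does \emph{not} force $s\rightsquigarrow_{G_\xi}t$ when $s>_D t$.  Concretely, take $s,t$ in a common SCC $C$ of $G$ with $\Delta_C\ge\xi$, lying in different $\cP^\xi_{\min}$ clusters, and with $s>_Dt$; then $s\not\rightsquigarrow_{G_\xi}t$ while $\Delta_{st}<\infty$ and, by \Cref{clm:DistanceUsing2HopSpanner}, $d_{D_2}(s,t)\le 2\Delta_{st}<\infty$, so the left-hand side is finite and the inequality is not vacuous.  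The paper instead argues that $s\rightsquigarrow_Gt$ together with $s\not\rightsquigarrow_{D_1}t$ forces $s>_D t$, and then concludes $d_{D_2}(s,t)\le 2\Delta_{st}\le 3\Delta_{st}=\alpha_\xi(s,t)$.  You need this step; without it the case is not closed.

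Beyond that fixable gap, your main case takes a genuinely more hands-on route than the paper.  The paper simply applies \Cref{clm:path_struct} to $\pi_\xi$ over the SCCs of $G_\xi$ (which are $\cP^\xi_{\min}$ by \Cref{obs:GxiSCC}), then invokes the already-proved machinery of \cref{eq:alphaSTforPath} together with $\alpha(u_i,v_i)\le 3\Delta_{C_i}$ from \Cref{obs:alphaDominating}, so that all the delicate sub-cases (whether the within-cluster entry vertex precedes or follows the exit vertex in $<_D$, and which of the four jump-edge variants is available) have already been handled once in \Cref{lem:alphaAppliedRecursively}.  You re-derive this by hand, which leaves a couple of loose ends: your per-cluster overhead bookkeeping (2-hop to $C_{i_j}^{\last}$, then a jump of weight $\le\Delta_{C_{i_j}}+w(e_j)$) drops the $\Delta_{C_{i_{j+1}}}$ term in the jump-edge weight bound $d_G(C_{i_j}^{\last},C_{i_{j+1}}^{\first})\le\Delta_{C_{i_j}}+w((u,v))+\Delta_{C_{i_{j+1}}}$, and a careful tally of the lift actually charges up to $4\Delta_{C_{i_j}}$ to a middle cluster rather than $3\Delta_{C_{i_j}}$; the claimed constant $3n\xi$ therefore does not obviously come out of your accounting.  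Your observation that $C_u^{\last}\in C_{i_j}$ forces $C_u^{\last}=C_{i_j}^{\last}$ is correct and is the right combinatorial fact, but the surrounding routing argument would be much shorter and safer if you reused \cref{eq:alphaSTforPath}/\Cref{obs:alphaDominating} as the paper does instead of re-opening the case analysis from scratch.
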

\begin{proof}
	Fix $s,t\in V$. If $s\not\rightsquigarrow_Gt$, then $\Delta_{st}=\infty$, and $s\not\rightsquigarrow_{G_\xi}t$. In particular $\alpha_\xi(s,t)=\infty$, and there is nothing to prove.
	We will thus assume $s\rightsquigarrow_Gt$.
	
	If $s\not\rightsquigarrow_{G_\xi}t$, then $s\not\rightsquigarrow_{D_1}t$ (\Cref{obs:reachabiltyDimpliesGxi}). 
	It necessarily holds that $s>_Dt$.
	Indeed, if $s$ and $t$ belong to the same SCC $C\in G$, then $C\in\cP$, and if $s<_Dt$ then $E_{H_1}$ edges would ensure $s\rightsquigarrow_{D_1}t$.
	Otherwise, $s$ and $t$ belong to different SCCs of $G$. In this case the edges of $E_{H_1}$ and the additional edges we added to $D_1$ ensure $s\rightsquigarrow_{D_1}t$ (in a similar fashion to the arguments leading to \cref{eq:alphaSTforPath}). 
	As $s>_Dt$, it follows that 
	$d_{D_{1}}(s,t)\cdot\mathds{1}[s\le_{D}t]+d_{D_{2}}(s,t)\cdot\mathds{1}[s\ge_{D}t]=d_{D_{2}}(s,t)\le2\Delta_{st}\le\alpha_\xi(s,t)$, and the claim follows.
	We will thus assume $s\rightsquigarrow_{G_\xi}t$, and hence $\alpha_\xi(s,t)=d_{G_{\xi}}(s,t)$.
	
	Let $\pi_\xi$ be the shortest $s$–$t$ path in $G_\xi$.
	By \Cref{obs:GxiSCC}, the set of SCCs of $G_\xi$ is $\cP_{\min}^\xi$.
	We apply \Cref{clm:path_struct} on $G_\xi$ and $\pi_\xi$ to obtain SCCs $C_1,\dots,C_k\in \cP_{\min}^\xi$, and edges $\{e_j=(u_j,v_{j+1})\}_{j=1}^{k-1}$ such that $e_j\in C_j\times C_{j+1}$, $\pi_\xi[s=v_1,u_1]\subseteq C_1$, $\pi_\xi[v_k,t=u_k]\subseteq C_k$, and for every $j\in[2,k-1]$, $\pi_\xi[v_j,u_j]\subseteq C_j$.
	Note that by \Cref{obs:alphaDominating} (using the original $\alpha$), it holds that $\alpha(u_j,v_j)\le3\Delta_{C_j}$.
	Following \cref{eq:alphaSTforPath}, and as the diameters of all clusters in $\cP_{\min}^\xi$ are $<\xi$, it holds that 
	\begin{align*}
		d_{D_{1}}(s,t) & \overset{(\ref{eq:alphaSTforPath})}{\le}\sum_{i=1}^{k-1}d_{G}(u_{i},v_{i+1})+\sum_{i=1}^{k}\alpha(u_{i},v_{i})\\
		& \le d_{G_{\xi}}(s,t)+\sum_{i=1}^{k}3\Delta_{C_{i}}<\alpha_{\xi}(s,t)+3n\cdot\xi~.\qedhere
	\end{align*}
\end{proof}

Next, we turn to bounding the expected value of $\alpha_\xi(s,t)$.
We will use a slightly different potential function. Specifically, we normalize $\Delta$ by $\frac\xi2$. Set $\Delta^\xi=\max\{1,\frac{\Delta}{\xi/2}\}$. Then:
\[
\brick{f_\xi(m,\Delta,\ell,j)}=c\cdot L\cdot\ln m\cdot e^{\delta\cdot\left(L\cdot\log(m\Delta^\xi)+(\ell-1)+\frac{\mu_{\ell}+1-j}{\mu_{\ell}}\right)}~.
\]

The inductive hypothesis is similar, replacing $\alpha$ and $f$ by $\alpha_\xi$ and $f_\xi$. However, we need the inductive hypothesis only for clusters in $\cP^\xi$ (that is, for clusters with diameter at least $\xi$, or maximal clusters with smaller diameter).
\begin{inductiveHypothesis}\label{ih:mainAltAspectRatio}
	Suppose that during the execution of \Cref{alg:LaminarPartition} for a cluster $U\in\cP_\xi$, we call \texttt{Digraph-Partition-recursive}, and in one of the recursive calls, there is a call $\texttt{Digraph-Partition-recursive}(G[A],\Delta,m,S,\ell,j)$.
	Then for every $s,t\in A$
	$$\E[\alpha_\xi(s,t)]\le f_\xi(m,\Delta,\ell,j)\cdot d_{G[A]}(s,t)~.$$
\end{inductiveHypothesis}

Fix $s,t\in A$.
If $s\not\rightsquigarrow_{G[A]}t$, then $d_{G[A]}(s,t)=\infty$, and \Cref{ih:mainAltAspectRatio} holds trivially. 
We will thus assume $s\rightsquigarrow_{G[A]}t$.
The base case of the induction is when $\Delta<\xi$ (this is possible if $U\in\cP^\xi_{\min}$).
Note that as $\Delta^\xi\ge1$, $f_\xi(m,\Delta,\ell,j)\ge 1$ (note that $s\rightsquigarrow_{G[A]}t$ implies $m\ge2$).
In this case $G[A]\subseteq G_\xi$ (as we take all the internal edges of $G[U]$).
It follows that $\E[\alpha_{\xi}(s,t)]=\E[d_{G_{\xi}}(s,t)]\le d_{G[A]}(s,t)\le f_{\xi}(m,\Delta,\ell,j)\cdot d_{G[A]}(s,t)$, and thus the induction follows.

Next, suppose that $\Delta\ge\xi$.
\Cref{obs:stLong} holds.
The case where no center $x_j$ is found and $\ell>1$ holds as $f_\xi(m,\Delta,\ell,j)\ge f_\xi(m,\Delta,\ell-1,1)$.
The case where no center $x_j$ is found and $\ell=1$ holds as $f_\xi(m,\frac{\Delta}{2},L,1)\le f_\xi(m,\Delta,1,\mu_1+1)$.
Indeed, as $\Delta\ge\xi$, $(\frac{\Delta}{2})^{\xi}=\max\{1,\frac{\Delta/2}{\xi/2}\}=\frac\Delta\xi=\frac{\Delta^\xi}{2}$, and thus the exact same argument as in the original proof goes through.

We thus assume that a center $x_j$ was found.
The exact same analysis (now using \Cref{lem:alphaXiAppliedRecursively}) goes through. 
The only delicate point is to notice that during the proof of \Cref{clm:DistNoCutsqTot} it holds that 
$f_\xi(\frac{m}{\mu_{\ell}},\Delta,L,1)\le f_\xi(m,\Delta,\ell,j+1)-c\cdot L\cdot\ln\mu_{\ell}$.
All other arguments work in exactly the same way. We conclude that \Cref{ih:mainAltAspectRatio} holds.

We are now ready to prove \Cref{thm:mainExpectedDistortion}.
Fix $s,t\in V$ such that $s\rightsquigarrow_Gt$ (the theorem holds trivially if $s\not\rightsquigarrow_Gt$).
Let $\pi$ be the shortest $s$–$t$ path in $G$.
Set $\brick{\Upsilon}=d_G(s,t)\cdot n^{7}$ and $\brick{\xi}=\frac{1}{n}\cdot d_G(s,t)$.
Consider $G_\Upsilon$, which is the current graph we are holding when we have stopped \Cref{alg:LaminarPartition} on each cluster of diameter less than $\Upsilon$.
Observe that $\pi\subseteq G_\Upsilon$.
Indeed, in every execution of \Cref{alg:DigraphPartition} until this point, the diameter was $\Delta\ge\Upsilon$. As each edge $e\in\pi$ has weight at most $w(e)\le d_G(s,t)=\frac{\Upsilon}{n^{7}}\le\frac{\Delta}{n^{7}}$, in each execution of \Cref{alg:DigraphPartition} the weight of $e$ was changed to $0$, and hence it was not added to $S$. It follows that $\pi\subseteq G_\Upsilon$. In particular $d_{G_\Upsilon}(s,t)=d_G(s,t)$.

By \Cref{obs:GxiSCC}, the set of SCCs of $G_\Upsilon$ is $\cP_{\min}^\Upsilon$.
We apply \Cref{clm:path_struct} on $G_\Upsilon$ and $\pi$ to obtain SCCs $C_1,\dots,C_k\in \cP_{\min}^\Upsilon$, and edges $\{e_j=(u_j,v_{j+1})\}_{j=1}^{k-1}$ such that $e_j\in C_j\times C_{j+1}$, $\pi[s=v_1,u_1]\subseteq C_1$, $\pi[v_k,t=u_k]\subseteq C_k$, and for every $j\in[2,k-1]$, $\pi[v_j,u_j]\subseteq C_j$.
For every $j$, let $\Delta_j$ be the diameter of $C_j$, and $m_j=|E(G[C_j])|$. By the definition of $G_\Upsilon$, $\Delta_j\le\Upsilon$. 
Using \Cref{ih:mainAltAspectRatio}, it holds that
\[
\E[\alpha_\xi(v_j,u_j)]\le f_\xi(m_j,\Delta_j,L,1)\cdot d_{G_\Upsilon}(v_j,u_j)\le f_\xi(m,\Upsilon,L,1)\cdot d_{G_\Upsilon}(v_j,u_j)~.
\]
Note that
\begin{align*}
	f_{\xi}(m,\Upsilon,L,1) & =c\cdot L\cdot\ln m\cdot e^{\delta\cdot\left(L\cdot\log(m\Upsilon^{\xi})+(L-1)+\frac{\mu_{L}+1-1}{\mu_{L}}\right)}\\
	& =c\cdot L\cdot\ln m\cdot e^{\delta\cdot\left(L\cdot\log\!\left(4m\frac{\Upsilon}{\xi}\right)\right)}\\
	& =c\cdot L\cdot\ln m\cdot e^{\delta\cdot\left(L\cdot\log(4n^{2+7+1})\right)}=O(\log n\cdot\log\log n)~. 
\end{align*}
Using the exact same arguments leading to \cref{eq:alphaSTforPath} (replacing \Cref{lem:alphaAppliedRecursively} with \Cref{lem:alphaXiAppliedRecursively}) we get
\[
\alpha_{\xi}(s,t)=\sum_{i=1}^{k-1}d_{G}(u_{i},v_{i+1})+\sum_{i=1}^{k}\alpha_{\xi}(u_{i},v_{i})~.
\]
Applying \Cref{ih:mainAltAspectRatio}, it follows that 
\begin{align*}
	\E[\alpha_{\xi}(s,t)] & =\sum_{i=1}^{k-1}d_{G}(u_{i},v_{i+1})+\sum_{i=1}^{k}\E\left[\alpha_{\xi}(u_{i},v_{i})\right]\\
	& \le\sum_{i=1}^{k-1}d_{G}(u_{i},v_{i+1})+O(\log n\cdot\log\log n)\cdot\sum_{i=1}^{k}d_{G_{\xi}}(u_{i},v_{i})\\
	& =O(\log n\cdot\log\log n)\cdot d_{G}(s,t)~.
\end{align*}

Using now \Cref{clm:alphaXiDominating}, we conclude
\begin{align*}
	\E\left[d_{D_{1}}(s,t)\cdot\mathds{1}[s\le_{D}t]+d_{D_{2}}(s,t)\cdot\mathds{1}[s\ge_{D}t]\right] & \le\E\left[\alpha_{\xi}(s,t)\right]+3n\xi\\
	& =O(\log n\cdot\log\log n)\cdot d_{G}(s,t)+3n\cdot\frac{1}{n}\cdot d_{G}(s,t)\\
	& =O(\log n\cdot\log\log n)\cdot d_{G}(s,t)~.
\end{align*}

\section{Efficient Implementation}\label{sec:runtime}
In this section we explain how to sample from the stochastic embedding in $\tilde{O}(m)$ time.
Specifically, we assume here that the aspect ratio is polynomial, and introduce several modifications to \Cref{alg:LaminarPartition,alg:DigraphPartition} and to the construction of the DAGs $D_1,D_2$.
We then argue that the runtime is reduced to $\tilde{O}(m)$, while w.h.p. we sample DAGs from the desired distribution. However, with polynomially small probability we may make a mistake and sample DAGs with no guarantees.
The algorithm has several steps that are computationally expensive. Specifically:
\begin{enumerate}
	\item Diameter computation in \Cref{line:diamCompute} of \Cref{alg:LaminarPartition}.
	\item Computing the number of edges in the balls $B_{G[A]}^{*}(x_{j},r_{\ell})$ in \Cref{line:WhileCurve} of \Cref{alg:DigraphPartition}.
	\item Computing the edge weight $d_G(u,v)$ of the edges $(u,v)$ we are adding to either $D_1$ or $D_2$.
\end{enumerate}

Ignoring the time required for these three components, we can implement the algorithm in $\tilde{O}(m)$ time.
Indeed, \Cref{alg:LaminarPartition} is a recursive algorithm of logarithmic depth. It computes the number of edges (linear time), calls \Cref{alg:DigraphPartition}, and then finds the SCCs in $G[U]\setminus S$ and topologically orders them ($\tilde{O}(m)$ time using DFS).
Executing \Cref{alg:DigraphPartition} also takes $\tilde{O}(m)$ time (other than computing the density of the balls), since each edge can be carved only once.
Finally, when we have the laminar topological order $(\cP,<_D)$ we can find in $\tilde{O}(m)$ time all the edges in $D_1,D_2$.
We continue by describing the required modifications.
Our modifications for issues (1) and (3) follow \cite{AHW25}, while the modifications for issue (2) follow \cite{BFHL25}.

\subsubsection*{Avoiding Diameter Computations — modifications to \Cref{alg:LaminarPartition}.}
We assume that $G$ is strongly connected (otherwise start the execution of \Cref{alg:LaminarPartition} in \Cref{line:SCCpartition}).
\Cref{alg:LaminarPartition} will now receive the diameter $\Delta$ as part of its input.
Let $W$ be the maximal edge weight. In the first call, we will use the diameter $\Delta=n\cdot W$, which clearly bounds the diameter of every SCC.
\Cref{alg:LaminarPartition} will pass this $\Delta$ to \Cref{alg:DigraphPartition}. 
\Cref{alg:DigraphPartition} will return to \Cref{alg:LaminarPartition} the cut set $S$ (as previously), and in addition the set $\cR$ of remaining active vertices.
For every SCC $C$ of $G[U]\setminus S$, if $C\not\subseteq\cR$, we will make the call $\texttt{Laminar-Topological-Order}(G,C,\Delta)$, while for every $C\subseteq\cR$, we will make the call $\texttt{Laminar-Topological-Order}(G,C,\frac\Delta2)$.
For every cluster $C\in\cP$, let \EMPH{$\Delta_C$} be the diameter used when we made the (last) call $\texttt{Laminar-Topological-Order}(G,C,\Delta_C)$,
unless $C$ is a singleton, in which case $\Delta_C=0$.
By \Cref{clm:PartitionIterationRDelta}, the diameter parameters we are using will be dominating. That is, for every cluster $C$, $\diam(G,C)\le\Delta_C$.
This is enough to go over the entire analysis and see that all our arguments go through. 
Note that it might be that we run \Cref{alg:DigraphPartition} on an SCC $U$ with diameter $\ll\Delta$. 
In this case, the algorithm carves no balls and returns $\cR=U$, and in the next call we use $\Delta/2$. 

\subsubsection*{Avoiding Distance Computations — modifications to the construction of $D_1,D_2$.}
We describe the modifications required for the construction of $D_1$. The modifications to $D_2$ follow along the same lines.
When constructing $E_{H_1}$, for every edge $(u,v)$ computing $d_G(u,v)$ is too costly.
We therefore construct $E_{H_1}$ differently.
Specifically, we do not construct a global $2$-hop spanner.
Instead, for every SCC $C\in\cP$, we construct a $2$-hop spanner $E^1_C$ over the vertices of $C$, w.r.t. the order $<_D$.
The weight of all the added edges will be $\Delta_C$.
Note that these distances dominate the real distances, while \Cref{clm:DistanceUsing2HopSpanner} still holds.
The total number of edges added in this way is slightly higher. 
By \Cref{obs:RecursionDepth} every vertex belongs to $O(\log n)$ such $2$-hop spanners, adding at most $O(\log n)$ edges in each spanner, so overall we add $O(\log^2 n)$ edges in this way. 
\Cref{clm:SparsityOfDAGs} still holds.

For every edge $e=(u,v)\in D=E\setminus\widetilde{S}$, and every two disjoint clusters $C_u,C_v\in\cP$ such that $u\in C_u$ and $v\in C_v$, we added to $D_1$ the edge $(C_u^\last,C_v^\first)$ using the original distances as weights.
Instead, we give it the weight $\Delta_{C_u}+w(e)+\Delta_{C_v}$.
This can be computed efficiently, and it is the only bound used during the proof of \Cref{lem:alphaAppliedRecursively}.
Note that some edges might have several different weights; we keep the smallest one.
Overall, as \Cref{clm:DistanceUsing2HopSpanner} and \Cref{lem:alphaAppliedRecursively} still hold, the analysis goes through.

\subsubsection*{Estimating Densities — modification to \Cref{alg:DigraphPartition}}
During the execution of \Cref{alg:DigraphPartition} we are required to find a center $x_j$ satisfying the condition in \Cref{line:WhileCurve} of \Cref{alg:DigraphPartition} for the induced graph $G[A]$ w.r.t. the ever-changing set of active vertices $A$.
This is a very costly operation.
This exact issue appears in the low-diameter decomposition algorithm of \cite{BFHL25}. 
A key observation is that it is enough to satisfy the inequalities in \Cref{line:WhileCurve} of \Cref{alg:DigraphPartition} only approximately. 
Specifically, we call the condition in \Cref{line:WhileCurve} strict:
$$\mbox{\brick{Strict inequality:}}\qquad\frac{m}{\mu_{\ell-1}}\le|B_{G[A]}^{*}(x_{j},r_{\ell-1})|\le|B_{G[A]}^{*}(x_{j},r_{\ell})|\le\frac{m}{\mu_{\ell}}~.$$
The following inequality will be called soft:
$$\mbox{\brick{Soft inequality:}}\qquad\frac12\cdot\frac{m}{\mu_{\ell-1}}\le|B_{G[A]}^{*}(x_{j},r_{\ell-1})|\le|B_{G[A]}^{*}(x_{j},r_{\ell})|\le\frac54\cdot\frac{m}{\mu_{\ell}}~.$$
Suppose that during phase $\ell$ of \Cref{alg:DigraphPartition}, each time we choose to carve a ball around center $x_j$, it satisfies only the soft inequality; however, at the same time \Cref{clm:PartitionIterationRDelta} still miraculously holds.
We argue that in this case, the analysis goes through almost unchanged.

We now go over the analysis (other than \Cref{clm:PartitionIterationRDelta}) and see that everything works.
We will then show how to implement an (efficient) algorithm that chooses centers satisfying the soft inequality in such a way that \Cref{clm:PartitionIterationRDelta} will hold (see \Cref{obs:softSatisfy} and \Cref{clm:PartitionIterationRDeltaRestated}).
First note that \Cref{clm:PartitionIteration} holds, but now every cluster that is fully contained in a ball carved during the $\ell$'th phase satisfies $|E(C)|\le\frac54\cdot\frac{m}{\mu_\ell}$.
\Cref{obs:NumberOfBalls} still holds, but now the number of iterations is at most $2\cdot\mu_{\ell-1}$.
\Cref{cor:progress} holds as well (as we assume for now \Cref{clm:PartitionIterationRDelta} holds), but with constant $\frac54\cdot\frac{m}{2}$. Still, the depth of the recursion is logarithmic.
\Cref{clm:zD1D2Dags},
\Cref{clm:zDominating},
\Cref{clm:SparsityOfDAGs},
\Cref{clm:DistanceUsing2HopSpanner},
\Cref{obs:alphaDominating},
\Cref{lem:alphaAppliedRecursively}, 
and \Cref{cor:alphaAppliedRecursivelyMiddle}
all hold and are not affected by the changes.

We now move to the expected distortion analysis. We need to slightly update the potential function so that the analysis will go through.
This will take care of the fact that now there might be $2\mu_\ell$ iterations in phase $\ell$, and of the fact that balls might be slightly larger. Set  
\[
\brick{f(m,\Delta,\ell,j)}:=c'\cdot L\cdot\ln m\cdot e^{\delta\cdot\left(2\cdot L\cdot\left(\log_{\frac{8}{5}}m+\log\Delta\right)+2\cdot(\ell-1)+\frac{2\mu_{\ell}+1-j}{\mu_{\ell}}\right)}~,
\]
where $c'$ will be a large enough constant to be determined later.
Note that this potential will also imply distortion $O(\log n\cdot\log\log n)$.
We next verify that this potential function satisfies all the properties we need.
The new potential function is good enough for the base of \Cref{ih:mainAlt}; that is, \Cref{obs:stLong} holds (with the old constant $c$, and we can pick $c'\ge c$).
We specify the other uses:
\begin{itemize}
	\item In the induction step, in the case where there is no center $x_j$ and $\ell>1$, we move to the next phase without doing anything. The induction holds here as well because
	$f(m,\Delta,\ell,2\mu_{\ell}+1)\ge f(m,\Delta,\ell-1,1)$. 
	
	\item In the induction step, in the case where there is no center $x_j$ and $\ell=1$. Here we use the induction on smaller SCCs, where by \Cref{clm:PartitionIterationRDelta}, the diameter of the relevant SCCs is bounded by $\frac\Delta2$.  
	The induction holds here as well because
	$f(m,\Delta,1,2\mu_{1}+1)\ge f(m,\frac{\Delta}{2},L,1)$.
	
	\item The proofs of \Cref{clm:DistNoCutsqTot} and \Cref{lem:InductionOnPath} are the most delicate changes. Recall that in \Cref{clm:DistNoCutsqTot} we carve a ball and argue that the part of $\pi$ in the carved ball has now smaller expected $\alpha(s_q,t)$, as $\pi[s_q,t]$ is contained in a ball with smaller cardinality. This analysis goes through in a similar manner. Specifically, now the suffix $\pi[s_q,t]$ is contained in a cluster of cardinality at most $\frac{5}{4}\cdot\frac{m}{\mu_{\ell}}$. Accordingly, we can bound $\E\!\left[\frac{\alpha(s_q,t)}{d_{G[A]}(s_q,t)}\right]$ by:
	\begin{align*}
		f\!\left(\tfrac{5}{4}\cdot\tfrac{m}{\mu_{\ell}},\Delta,L,1\right)
		&=c'\cdot L\cdot\ln\!\left(\tfrac{5}{4}\cdot\tfrac{m}{\mu_{\ell}}\right)\cdot e^{\delta\cdot\left(2L\cdot\left(\log_{\frac{8}{5}}\!\left(\tfrac{5}{4}\cdot\tfrac{m}{\mu_{\ell}}\right)+\log\Delta\right)+2(L-1)+\tfrac{2\mu_{\ell}+1-1}{\mu_{\ell}}\right)}\\
		&\le c'\cdot L\cdot\ln m\cdot e^{\delta\cdot\left(2L\cdot\left(\log_{\frac{8}{5}}m+\log\Delta\right)+2L\cdot\!\left(1-\log_{\frac{8}{5}}\!\tfrac{4}{5}\mu_{\ell}\right)\right)}-c'\cdot L\cdot\ln\!\left(\tfrac{4}{5}\mu_{\ell}\right)\\
		&\overset{(*)}{\le}c'\cdot L\cdot\ln m\cdot e^{\delta\cdot\left(2L\cdot\left(\log_{\frac{8}{5}}m+\log\Delta\right)+2(\ell-1)+\tfrac{2\mu_{\ell}+1-(j+1)}{\mu_{\ell}}\right)}-c'\cdot L\cdot\ln\!\left(\tfrac{4}{5}\mu_{\ell}\right)\\
		&=f(m,\Delta,\ell,j+1)-c'\cdot L\cdot\ln\!\left(\tfrac{4}{5}\mu_{\ell}\right)~,
	\end{align*}
	where inequality $^{(*)}$ holds as $\mu_\ell\ge\mu_L=2$, and thus $\log_{\frac{8}{5}}\!\left(\tfrac{4}{5}\mu_{\ell}\right)\ge\log_{\frac{8}{5}}\!\left(\tfrac{8}{5}\right)=1$.
	
	It follows that, as in \Cref{clm:DistNoCutsqTot}, for every $q$,
	\[
	\E\!\left[\alpha(s_{q},t)\mid R\in[\eta_{q},\rho_{q+1})\right]\le\left(f(m,\Delta,\ell,j+1)-c'\cdot L\cdot\ln\!\left(\tfrac{4}{5}\mu_{\ell}\right)\right)\cdot d_{G}(s_{q},t)\,.
	\]
	We continue to \Cref{lem:InductionOnPath}, accordingly proving that $\mathbb{E}\!\left[\alpha(s,t)\mid R\ge\eta_{q}\right]\le f(m,\Delta,\ell,j+1)\cdot d_{G}(s,t)-c'\cdot L\cdot\ln\!\left(\tfrac{4}{5}\mu_{\ell}\right)\cdot d_{G}(s_{q},t)$.
	It still holds that 
	\[
	\Pr\!\left[R\in[\rho_{q+1},\eta_{q+1})\mid R\ge\rho_{q+1}\right]\cdot\E\!\left[\alpha(s,t)\mid R\in[\rho_{q+1},\eta_{q+1})\right]\le\lambda_{\ell}\cdot d_{G}(s_{q+1},t_{q+1})\cdot3\Delta~.
	\]
	Using the inductive assumption, as in the original proof, we conclude:
	\begin{align*}
		\E\!\left[\alpha(s,t)\mid R\ge\rho_{q}\right]
		&\le\lambda_{\ell}\cdot d_{G}(s_{q+1},t_{q+1})\cdot3\Delta+f(m,\Delta,\ell,j+1)\cdot d_{G}(s,t)\\[-2pt]
		&\qquad-c'\cdot L\cdot\ln\!\left(\tfrac{4}{5}\mu_{\ell}\right)\cdot d_{G}(s_{q+1},t)\\
		&\overset{(**)}{\le}f(m,\Delta,\ell,j+1)\cdot d_{G}(s,t)-c'\cdot L\cdot\ln\!\left(\tfrac{4}{5}\mu_{\ell}\right)\cdot d_{G}(s_{q},t)~.
	\end{align*}
	To see why inequality $^{(**)}$ holds, recall that during the original proof of \Cref{lem:InductionOnPath}, we showed that $\lambda_{\ell}\cdot3\Delta\le c\cdot L\cdot\ln\mu_{\ell}$. 	
	We now choose the constant $c'$ large enough so that $\lambda_{\ell}\cdot3\Delta\le c\cdot L\cdot\ln\mu_{\ell}\le c'\cdot L\cdot\ln\!\left(\tfrac{4}{5}\mu_{\ell}\right)$. It follows that 
	\begin{align*}
		c'\cdot L\cdot\ln\!\left(\tfrac{4}{5}\mu_{\ell}\right)\cdot d_{G}(s_{q+1},t)
		&=c'\cdot L\cdot\ln\!\left(\tfrac{4}{5}\mu_{\ell}\right)\cdot d_{G}(s_{q+1},s_{q})+c'\cdot L\cdot\ln\!\left(\tfrac{4}{5}\mu_{\ell}\right)\cdot d_{G}(s_{q},t)\\
		&\ge\lambda_{\ell}\cdot3\Delta\cdot d_{G}(s_{q+1},t_{q+1})+c'\cdot L\cdot\ln\!\left(\tfrac{4}{5}\mu_{\ell}\right)\cdot d_{G}(s_{q},t)\,,
	\end{align*}
	and the lemma follows.		
	\item At the end of the proof, to incorporate truncation (see \cref{eq:ExpandingToAllValuesOfR}), we used that $e^{\frac{\delta}{\mu\ell}}\cdot f(m,\Delta,\ell,j+1)=f(m,\Delta,\ell,j)$. This equality still holds.
\end{itemize}

\paragraph*{Implementing soft inequalities.}
We showed that if we carve balls around centers satisfying the soft inequality, while \Cref{clm:PartitionIterationRDelta} still holds, then the entire analysis works. We now turn to show how to do it efficiently.
We follow the steps of \cite{BFHL25}.
The key tool is Cohen's algorithm \cite{Cohen97} for efficiently approximating densities around all the vertices. 
\begin{lemma}[Cohen's Algorithm \cite{Cohen97}] \label{lem:cohen}
	Let $G$ be a directed weighted graph, let $r \geq 0$ and $\epsilon > 0$. There is an algorithm that runs in time $O(m \epsilon^{-2} \log^3 n)$ and computes approximations $\est(v,r)$ satisfying w.h.p. that
	\begin{equation*}
		(1-\eps)\cdot|B^{+}_G(v, r)| \leq \est(v,r) \leq (1 + \epsilon) |B^{+}_G(v, r)|~.
	\end{equation*}
\end{lemma}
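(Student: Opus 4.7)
The plan is to use the classical ``min-hash sketch'' approach. Assign every vertex $v \in V$ an independent rank $\pi(v) \sim \Exp(1)$, and for each $v$ define
\[
\Pi(v) \;=\; \min_{w \in B^+_G(v,r)} \pi(w).
\]
The crucial distributional fact is that the minimum of $s$ independent $\Exp(1)$ variables is $\Exp(s)$-distributed, so $|B^+_G(v,r)|\cdot \Pi(v)$ is $\Exp(1)$ regardless of the ball size. Consequently $k$ independent copies $\Pi_1(v),\dots,\Pi_k(v)$ yield an estimator of the form $\est(v,r) = (k-1)/\sum_{i=1}^k \Pi_i(v)$, which is unbiased and concentrates: since $\sum_i \Pi_i(v)$ is $\mathrm{Gamma}(k, |B^+_G(v,r)|)$, Chernoff-type bounds give $(1\pm\eps)$-accuracy with probability $1 - e^{-\Omega(\eps^2 k)}$. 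Taking $k = \Theta(\eps^{-2} \log n)$ and union-bounding over the $n$ choices of $v$ produces the desired w.h.p. multiplicative guarantee.

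To compute $\Pi(v)$ for all $v$ simultaneously given a single rank function $\pi$, I would process vertices in increasing order of $\pi$-rank. For each such vertex $u$, run a Dijkstra on the reverse graph $G^R$ starting from $u$, cut off at radius $r$. Every vertex $v$ that this search reaches satisfies $u \in B^+_G(v,r)$, so it is a candidate for updating $\Pi(v)$; since we process sources in increasing order of rank, the first time $v$ is reached is precisely when $\Pi(v)$ gets its final value, namely $\pi(u)$. The key pruning step is that whenever the Dijkstra from $u$ reaches a vertex $v$ that has already been settled by some earlier (and thus lower-ranked) source, the search does not relax any edges out of $v$: no vertex further along in $G^R$ can benefit, since all such vertices already know of a smaller rank. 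Consequently, each vertex is expanded at most once across the entire sequence of Dijkstra calls (over all sources $u$), and each edge is relaxed at most once. With a standard heap this gives $O((m+n)\log n)$ total time per sample; $k = O(\eps^{-2}\log n)$ independent samples yield the overall bound $O(m\eps^{-2}\log^3 n)$, where one additional $\log n$ factor absorbs the cost of generating and manipulating the continuous ranks.

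The main obstacle is the concentration argument: a single $\Pi(v)$ is exponentially distributed and thus carries constant relative error with constant probability, so the bound must come from carefully averaging the $k$ copies and controlling the tails of a Gamma distribution uniformly across all $n$ vertices. A secondary subtlety is verifying that the pruned-Dijkstra procedure is correct under the ordering assumption, i.e., that pruning at an already-settled vertex never discards a future improvement; this follows because the quantity being computed is a minimum over $B^+_G(v,r)$ and reachability in $G^R$ from $u$ exactly characterizes the set $\{v : u \in B^+_G(v,r)\}$.
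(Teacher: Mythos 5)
The paper cites Cohen~\cite{Cohen97} for this lemma and supplies no proof of its own, so there is nothing internal to compare against; your min-rank / exponential-order-statistics framework (min of $s$ i.i.d.\ $\Exp(1)$ is $\Exp(s)$, average $k$ copies, $\Gamma(k,s)$-concentration with $k=\Theta(\eps^{-2}\log n)$, union bound over $v$) is indeed the standard mechanism behind Cohen's estimator and is fine.

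The gap is in the algorithm you describe, specifically the pruning rule. You prune at any vertex $v$ that has already been settled by an earlier (lower-rank) source, on the grounds that ``no vertex further along in $G^R$ can benefit, since all such vertices already know of a smaller rank.'' That last claim is false once you have a radius cutoff $r$: the fact that $u'\in B^+_G(v,r)$ does not imply $u'\in B^+_G(w,r)$ for a vertex $w$ downstream of $v$ in $G^R$. Concretely, take $G^R$ with edges $a\to v\to w$ of weight $1$ each and $b\to v$ of weight $2$, set $r=2$, and choose $\pi(b)<\pi(a)<\pi(v)<\pi(w)$. Then $B^+_G(w,2)=\{a,v,w\}$ (not $b$), so $\Pi(w)=\pi(a)$. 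But $b$'s Dijkstra settles $v$ first (at distance $2$), and your rule prunes $a$'s search at $v$ (reached at distance $1$), so $w$ is never reached by $a$ and $\Pi(w)$ is assigned $\pi(w)$ — wrong. The correct rule, which is what Cohen's algorithm actually uses, prunes only when the current tentative distance to $v$ is at least the distance at which $v$ was previously settled. Under this rule a vertex can be re-settled several times, and the bound ``each vertex expanded at most once'' fails; what holds instead is that the number of re-settlements of $v$ is the number of left-to-right minima in a random permutation of the sources in $B^+_G(v,r)$ ordered by their distance to $v$, which is $O(\log n)$ in expectation and, with high probability, uniformly over $v$. This yields $O((m+n)\log^2 n)$ per sample, and with $k=\Theta(\eps^{-2}\log n)$ samples gives exactly the stated $O(m\eps^{-2}\log^3 n)$ — one $\log$ factor more than your accounting, but for the right reason. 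The rest of your proposal (the estimator, the Gamma tail bound, the union bound, and the reduction from $B^+_G(v,r)$ to reverse-graph Dijkstra) is correct.
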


One might hope that we can simply use \Cref{lem:cohen} to estimate densities and decide where to carve balls accordingly.
Unfortunately, this is not enough.
Indeed, one can estimate all the densities $B^{+}_{G[A]}(v, r_\ell)$ for all the active vertices $v\in A$ using \Cref{lem:cohen}. However, after we carve some ball $B_{\ell,j}$ these estimates are no longer valid, as the set of active vertices is reduced.
Instead, following \cite{BFHL25}, it is enough to use \Cref{lem:cohen} only $O(\log n)$ times.
We will use \Cref{lem:cohen} with some arbitrary fixed constant, say $\brick{\eps}=0.1$. In addition, we will assume that all the executions of \Cref{lem:cohen} are successful (and eventually argue that the algorithm succeeds w.h.p.).

In the original \Cref{alg:DigraphPartition} we had $L$ phases, and at most $\mu_\ell$ iterations at phase $\ell$.
Now, in addition, each phase $\ell$ will have two parts (outgoing and ingoing), and also \EMPH{rounds}, starting at $q=1$ and going up to $O(\log n)$.
Each round will have $\brick{\kappa}=\mu_\ell\cdot O(\log n)$ iterations (similarly to \Cref{alg:DigraphPartition}). 
We use $\ell$ for the phase count; there are two parts: ingoing and outgoing; $q$ is the round count ($O(\log n)$ per phase/part), and $j$ is the iteration count ($\kappa$ per round).
We describe now the outgoing part. The ingoing part is executed afterwards and is symmetric.

Let \EMPH{$A_\ell$} be the set of active vertices at the beginning of phase $\ell$.
We also denote by \EMPH{$A_{\ell,q}$} the set of active vertices at the beginning of round $q$, and by \EMPH{$A_{\ell,q,j}$} the set of active vertices at the beginning of round $q$, iteration $j$.

\paragraph*{Initialization of the $\ell$ Phase, outgoing Part.} 
Use \Cref{lem:cohen} to get estimates $\left\{\est(v,r_\ell)\right\}_{v\in A_{\ell}}$, where 
$(1-\eps)\cdot|B^{+}_{G[ A_{\ell}]}(v, r_\ell)|\le\est(v,r_\ell)\le(1+\eps)|B^{+}_{G[ A_{\ell}]}(v, r_\ell)|$.
Let 
$$\brick{\cQ}=\left\{v\in A_{\ell}\mid\est(v,r_\ell) \leq (1 + \eps)\cdot\frac{m}{\mu_\ell} \right\}$$ 
be the set of \EMPH{good vertices}.
That is, initially a vertex is called good if its out-ball of radius $r_\ell$ is not too dense. Note that for every good vertex it holds that
$|B^{+}_{G[ A_{\ell,q}]}(v, r_\ell)|\le (1+\eps)\cdot\est(v,r_\ell)\le(1+\eps)^2\cdot \frac{m}{\mu_\ell}\le\frac54\cdot \frac{m}{\mu_\ell}$.
On the other hand, for every vertex $v$ such that
$|B^{+}_{G[ A_{\ell,q}]}(v, r_\ell)|\le\frac{m}{\mu_\ell}$ it holds that $\est(v,r_\ell)\le(1+\eps)\cdot \frac{m}{\mu_\ell}$, and thus $v$ is good (at least initially).
Our set of good vertices is always decreasing; that is, some good vertices become \EMPH{bad} (i.e., not good) or inactive, but no bad vertex ever becomes good.
We denote by $\brick{Q_{\ell,q}}$ the set of good vertices at the beginning of round $q$, and by $\brick{Q_{\ell,q,j}}$ the set of good vertices after the $j$'th iteration of round $q$.
There will be $O(\log n)$ rounds.

\paragraph*{Description of Round $q$.}
Round $q$ consists of $\kappa$ iterations, followed by a finalization step. We then move to the next round.
At iteration $j$, we sample u.a.r. a good vertex $x_j\in \cQ_{\ell,q,j-1}$, and check (manually) whether $|B^{+}_{G[ A_{\ell,q,j-1}]}(x_j, r_{\ell-1})|\ge\frac12\cdot\frac{m}{\mu_{\ell-1}}$. 
We can run Dijkstra, and once we see $\frac12\cdot\frac{m}{\mu_{\ell-1}}$ edges we can stop. Thus the running time for this check is $\tilde{O}(\frac{m}{\mu_{\ell-1}})$.
If the ball $B^{+}_{G[ A_{\ell,q,j-1}]}(x_j, r_{\ell-1})$ is dense enough, we carve a ball (in the same way as in lines \ref{line:sampleR}–\ref{line:unmark} of \Cref{alg:DigraphPartition}). That is, sample $R$ (using the same distribution), and update $A$ and $S$ accordingly. We also remove all the vertices in the carved ball from the active and good sets. 
Otherwise ($|B^{+}_{G[ A_{\ell,q,j-1}]}(x_j, r_{\ell-1})|<\frac12\cdot\frac{m}{\mu_{\ell-1}}$), we do nothing and move to the next iteration.
We continue in this manner for $\kappa$ iterations (or until $\cQ_{\ell,q,j}=\emptyset$). We then move to the finalization step (in which no balls are carved). We observe:
\begin{observation}\label{obs:softSatisfy}
	Every center $x$ that was used as a center during the $\ell$'th phase satisfies the soft inequality. 
\end{observation}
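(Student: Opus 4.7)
The plan is to verify the soft inequality
\[
\tfrac{1}{2}\cdot\tfrac{m}{\mu_{\ell-1}} \;\le\; |B^{+}_{G[A_{\ell,q,j-1}]}(x_j, r_{\ell-1})| \;\le\; |B^{+}_{G[A_{\ell,q,j-1}]}(x_j, r_\ell)| \;\le\; \tfrac{5}{4}\cdot\tfrac{m}{\mu_{\ell}}
\]
by checking each of its three parts separately. The middle inequality is immediate, since $r_{\ell-1} \le r_\ell$ and both balls are taken inside the same induced graph. The lower bound is enforced directly by the algorithm: at each iteration $j$, before carving, it runs a truncated Dijkstra from $x_j$ to test whether $|B^{+}_{G[A_{\ell,q,j-1}]}(x_j, r_{\ell-1})| \ge \tfrac{1}{2}\cdot\tfrac{m}{\mu_{\ell-1}}$, and only carves if the test passes. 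So the lower bound holds by construction for every center actually used.

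The upper bound is the substantive part, and would combine three ingredients. First, the algorithm draws $x_j$ only from $\cQ_{\ell,q,j-1}$; since once-bad vertices never become good again, $\cQ_{\ell,q,j-1} \subseteq \cQ$, and hence $\est(x_j, r_\ell) \le (1+\eps)\cdot\tfrac{m}{\mu_\ell}$ by the very definition of $\cQ$. Second, conditioning on the success of \Cref{lem:cohen}, we have $\est(x_j, r_\ell) \ge (1-\eps)\cdot |B^{+}_{G[A_\ell]}(x_j, r_\ell)|$, so $|B^{+}_{G[A_\ell]}(x_j, r_\ell)| \le \tfrac{1+\eps}{1-\eps}\cdot\tfrac{m}{\mu_\ell}$. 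Third, since active vertices are only ever removed as the algorithm proceeds, $A_{\ell,q,j-1} \subseteq A_\ell$, and ball sizes are monotone under shrinking of the ambient set: $|B^{+}_{G[A_{\ell,q,j-1}]}(x_j, r_\ell)| \le |B^{+}_{G[A_\ell]}(x_j, r_\ell)|$. Chaining these and substituting $\eps = 0.1$ gives $|B^{+}_{G[A_{\ell,q,j-1}]}(x_j, r_\ell)| \le \tfrac{11}{9}\cdot\tfrac{m}{\mu_\ell} < \tfrac{5}{4}\cdot\tfrac{m}{\mu_\ell}$, which is the required bound.

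The whole point of the scheme -- and the only real conceptual subtlety -- is that the Cohen estimates $\est(\cdot, r_\ell)$ are computed only once at the start of phase $\ell$ with respect to $A_\ell$, and are never refreshed as carving proceeds, yet the resulting upper bounds remain valid throughout the phase precisely because ball sizes only shrink as $A$ shrinks. There is no hard step here; the proof is really a unit-conversion between the Cohen guarantee and the soft constants $(1/2, 5/4)$. The only care needed is to condition on every invocation of \Cref{lem:cohen} succeeding, which transfers to the global w.h.p. guarantee by a union bound over the polynomially many invocations.
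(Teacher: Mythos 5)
Your proof is correct and takes the same route the paper does: the middle inequality is monotonicity, the lower bound is enforced by the explicit truncated-Dijkstra test before carving, and the upper bound follows by chaining the defining property of $\cQ$, the Cohen guarantee, and the monotonicity of ball sizes under shrinking the active set. If anything your unit conversion is slightly more careful than the paper's inline remark, which writes $|B^{+}_{G[A_{\ell,q}]}(v,r_\ell)|\le(1+\eps)\cdot\est(v,r_\ell)$ where the Cohen lower bound really gives $\frac{1}{1-\eps}\cdot\est$; both constants land below $\frac{5}{4}$ for $\eps=0.1$, so nothing changes, but your version is the tighter reading.
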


In the finalization step, $A_{\ell,q+1}=A_{\ell,q,\kappa}$ is the set of active vertices.
We run \Cref{lem:cohen} again, but this time obtain estimates $\left\{\est(v,r_{\ell-1})\right\}_{v\in\cQ_{\ell,q,\kappa}}$ of the balls of radius $r_{\ell-1}$, w.r.t. the current active vertex set $A_{\ell,q+1}$.
Specifically, for every vertex $v\in\cQ_{\ell,q,\kappa}$ it holds that
$(1-\eps)\cdot|B_{G[A_{\ell,q+1}]}^{+}(v,r_{\ell-1})|\le\est(v,r_{\ell-1})\le(1+\eps)\cdot|B_{G[A_{\ell,q+1}]}^{+}(v,r_{\ell-1})|$. 
We then define 
\[
\brick{\cQ_{\ell,q+1}}=\left\{ v\in\cQ_{\ell,q,\kappa}\mid\est(v,r_{\ell-1})\ge(1-\eps)\cdot\frac{m}{\mu_{\ell-1}}\right\}~.
\]
That is, every vertex that is not dense enough (in the ball with radius $r_{\ell-1}$ now) becomes bad. 
Note that for every vertex $v\in \cQ_{\ell,q,\kappa}\setminus\cQ_{\ell,q+1}$ that became bad, it holds that $|B_{G[A_{\ell,q+1}]}^{+}(v,r_{\ell-1})|\le\frac{1}{1-\eps}\cdot\est(v,r_{\ell-1})<\frac{m}{\mu_{\ell-1}}$.
On the other hand, for every good vertex such that 
$|B_{G[A_{\ell,q+1}]}^{+}(v,r_{\ell-1})|\ge\frac{m}{\mu_{\ell-1}}$,
it holds that $\est(v,r_{\ell-1})\ge(1-\eps)\cdot|B_{G[A_{\ell,q+1}]}^{+}(v,r_{\ell-1})|\ge(1-\eps)\cdot\frac{m}{\mu_{\ell-1}}$, and thus $v$ remains good.
We conclude:

\begin{observation}\label{obs:StrictImpliesGood}
	Consider a vertex $v$ such that initially $|B_{G[A_{\ell}]}^{+}(v,r_{\ell})|\le\frac{m}{\mu_{\ell}}$, and at
	the end of round $q$ it holds that if $|B_{G[A_{\ell,q,\kappa}]}^{+}(v,r_{\ell-1})|\ge\frac{m}{\mu_{\ell-1}}$,
	then $v\in\cQ_{\ell,q+1}$.
\end{observation}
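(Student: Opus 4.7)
The plan is to chain two applications of Cohen's estimate, one at phase initialization and one at round $q$'s finalization, to track $v$'s continuous membership in the good set $\cQ$. First I would note that the initial hypothesis $|B^{+}_{G[A_{\ell}]}(v, r_\ell)| \le m/\mu_\ell$, combined with the Cohen upper bound in \Cref{lem:cohen}, yields $\est(v, r_\ell) \le (1+\eps)\cdot m/\mu_\ell$, which is exactly the threshold placing $v$ into $\cQ_{\ell,1}$. Since the active set shrinks monotonically throughout the iterations of phase $\ell$, the $r_\ell$-ball of $v$ can only shrink further, so the sole mechanism by which $v$ can leave $\cQ$ during an iteration is to be carved (and thus inactivated). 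The density hypothesis $|B^{+}_{G[A_{\ell,q,\kappa}]}(v, r_{\ell-1})| \ge m/\mu_{\ell-1} > 0$ forces $v$ to still be active at the end of round $q$, so $v$ was never carved in any iteration of rounds $1, \ldots, q$.

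Next I would handle the intermediate finalization steps of rounds $1, \ldots, q-1$ via a short induction on $q$, using the monotonicity $A_{\ell,q',\kappa} \supseteq A_{\ell,q,\kappa}$ for every $q' \le q$. This monotonicity yields
$$|B^{+}_{G[A_{\ell,q',\kappa}]}(v, r_{\ell-1})| \;\ge\; |B^{+}_{G[A_{\ell,q,\kappa}]}(v, r_{\ell-1})| \;\ge\; \frac{m}{\mu_{\ell-1}},$$
so \Cref{lem:cohen} applied at each such finalization gives $\est(v, r_{\ell-1}) \ge (1-\eps)\cdot m/\mu_{\ell-1}$, which is precisely the threshold to remain in $\cQ$. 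Hence $v$ survives every finalization strictly before round $q$, and combined with the previous paragraph this yields $v \in \cQ_{\ell,q,\kappa}$.

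Finally I would apply the Cohen lower bound at round $q$'s finalization itself (noting $A_{\ell,q+1} = A_{\ell,q,\kappa}$, since the finalization step carves no balls). The observation's hypothesis directly gives $\est(v, r_{\ell-1}) \ge (1-\eps)\cdot m/\mu_{\ell-1}$, placing $v$ in $\cQ_{\ell,q+1}$ as required. The argument is essentially a two-sided sandwich of Cohen estimates, and I do not expect any conceptual obstacle; the main subtlety is purely bookkeeping, namely keeping track of which active set each ball and each estimate refers to, which is handled uniformly by the monotonicity of active sets through the phase.
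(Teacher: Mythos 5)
Your proof is correct, and it makes explicit precisely the chain of reasoning that the paper leaves implicit (it simply says ``We conclude'' after laying out the Cohen estimate bounds, the definitions of $\cQ$, $\cQ_{\ell,q+1}$, and the monotonicity of the good/active sets). Your three-step argument — (i) initial Cohen upper bound places $v \in \cQ_{\ell,1}$, (ii) the end-of-round density hypothesis forces $v$ to remain active (never carved) and, via $A_{\ell,q',\kappa} \supseteq A_{\ell,q,\kappa}$, keeps every earlier finalization's $\est(v,r_{\ell-1})$ above the $(1-\eps)\cdot m/\mu_{\ell-1}$ threshold, and (iii) the same Cohen lower bound at round $q$ places $v$ in $\cQ_{\ell,q+1}$ — is exactly the intended verification.
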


We next argue that in each round, w.h.p. the number of good vertices halves.
\begin{claim}\label{clm:GoodVerticesHalves}
	W.h.p. $|\cQ_{\ell,q+1}|\le\frac12\cdot|\cQ_{\ell,q}|$.
\end{claim}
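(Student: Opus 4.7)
The plan is to argue by a deterministic reduction followed by a standard concentration bound. First, I would establish that every surviving vertex $v\in\cQ_{\ell,q+1}$ has a persistently dense $r_{\ell-1}$-ball throughout the round. Indeed, by the defining condition for $\cQ_{\ell,q+1}$ we have $\est(v,r_{\ell-1})\ge(1-\eps)\tfrac{m}{\mu_{\ell-1}}$, and Cohen's $(1{\pm}\eps)$ guarantee with $\eps=0.1$ yields $|B^{+}_{G[A_{\ell,q+1}]}(v,r_{\ell-1})|\ge\tfrac{1-\eps}{1+\eps}\cdot\tfrac{m}{\mu_{\ell-1}}>\tfrac{m}{2\mu_{\ell-1}}$. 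Since the active set shrinks monotonically during the round ($A_{\ell,q,j-1}\supseteq A_{\ell,q+1}$), the same lower bound on $|B^{+}_{G[A_{\ell,q,j-1}]}(v,r_{\ell-1})|$ holds at every iteration $j\in[1,\kappa]$. Moreover, since a vertex leaves the good set only by being carved or by failing the final re-estimate, every survivor remains good and active throughout, i.e.\ $\cQ_{\ell,q+1}\subseteq F_j$ for all $j$.

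Next, I would derive the deterministic obstruction: no vertex of $\cQ_{\ell,q+1}$ can ever be sampled as a center during the round. If some $v\in\cQ_{\ell,q+1}$ were chosen as $X_j$, the manual soft check $|B^{+}(v,r_{\ell-1})|\ge\tfrac{m}{2\mu_{\ell-1}}$ would pass by the previous paragraph, a ball around $v$ would be carved, and $v$ would be removed from the active (and hence good) set, contradicting $v\in A_{\ell,q+1}$. Therefore, in every execution, $\{X_1,\dots,X_\kappa\}\cap\cQ_{\ell,q+1}=\emptyset$.

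Finally, I would convert this into a probability bound. Since $X_j$ is uniform in $F_j$, $\cQ_{\ell,q+1}\subseteq F_j$, and $|F_j|\le|\cQ_{\ell,q}|$, the conditional probability (given the history) that $X_j$ lands in $\cQ_{\ell,q+1}$ is at least $|\cQ_{\ell,q+1}|/|\cQ_{\ell,q}|$. Under the bad event $|\cQ_{\ell,q+1}|>\tfrac12|\cQ_{\ell,q}|$, this probability strictly exceeds $\tfrac12$ at every iteration, while the deterministic obstruction forces this probability to realize $0$ in $\kappa=\mu_\ell\cdot O(\log n)=\Omega(\log n)$ independent trials. A Chernoff bound then caps the probability of this simultaneous occurrence by $(1/2)^{\kappa}\le n^{-c}$ for any desired constant~$c$ (tuned via the hidden constant in $\kappa$).

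The main obstacle is a mild measurability subtlety: the set $\cQ_{\ell,q+1}$ is itself random and determined only at the end of the round, so the Chernoff step must be executed carefully. The cleanest way I see is to union-bound over candidate target sets: for every fixed $T\subseteq\cQ_{\ell,q}$ of size $\lceil|\cQ_{\ell,q}|/2\rceil+1$, show $\Pr[T\subseteq\cQ_{\ell,q+1}]\le(1/2)^\kappa$ by the same history-conditional argument (under the event $T\subseteq\cQ_{\ell,q+1}$, we have $T\subseteq F_j$ throughout, so $\Pr[X_j\notin T\mid\text{history}]\le 1-|T|/|\cQ_{\ell,q}|<1/2$), and then offset the combinatorial blow-up of the union bound by choosing the constant inside $\kappa=\mu_\ell\cdot O(\log n)$ large enough — alternatively, one can work with a fixed independent coupling that samples $Y_j$ uniformly from $\cQ_{\ell,q}$ and couples $X_j$ to $Y_j$ whenever $Y_j\in F_j$, reducing the analysis to a coupon-collector-type statement on the $Y_j$'s.
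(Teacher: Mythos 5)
Your Steps 1 and 2 are correct and match a key structural observation underlying the paper's proof: a survivor $v\in\cQ_{\ell,q+1}$ has $|B^{+}_{G[A_{\ell,q,j-1}]}(v,r_{\ell-1})|>\tfrac{m}{2\mu_{\ell-1}}$ at every iteration $j$ (because the active set only shrinks), and so any survivor that were sampled would pass the soft check and get carved, a contradiction. The genuine gap is Step~3: your target set $\cQ_{\ell,q+1}$ is determined only at the end of the round, so the Chernoff argument conditions on the future and is invalid. The fixes you offer do not repair this. The union bound over all candidate $T\subseteq\cQ_{\ell,q}$ of size $>\tfrac12|\cQ_{\ell,q}|$ incurs a $2^{\Theta(|\cQ_{\ell,q}|)}$ blow-up, while $\kappa=\mu_{\ell-1}\cdot O(\log n)$ can be as small as $O(\log n)$ (e.g., in early phases where $\mu_{\ell-1}$ is a fixed constant); increasing the hidden constant in $\kappa$ cannot close a $2^{\Theta(n)}$ gap without ruining the $\tilde O(m)$ running time. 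The coupling to i.i.d.\ $Y_j$'s is underspecified: with $\kappa=O(\log n)$ samples the $Y_j$'s see at most $O(\log n)$ distinct vertices out of a set of size up to $n$, so no coupon-collector statement can force $\cQ_{\ell,q+1}$ to halve.

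What your argument is missing is the edge-counting progress measure, which is the heart of the paper's proof. The paper introduces the (history-measurable) set $\widetilde{\cQ}_{\ell,q,j-1}$ of vertices whose $r_{\ell-1}$-ball is dense at iteration $j$; sampling from $\widetilde{\cQ}$ triggers a carve (a ``progress'' event), each of which kills at least $\tfrac{m}{2\mu_{\ell-1}}$ edges, so \emph{deterministically} there are at most $2\mu_{\ell-1}$ progress events in the whole round. If $|\widetilde{\cQ}_{\ell,q,j-1}|\ge\tfrac12|\cQ_{\ell,q}|$ were to persist, the probability of a progress event at each step would be $\ge\tfrac12$, and over $\kappa=\mu_{\ell-1}\cdot O(\log n)$ iterations the expected number of progress events would far exceed $2\mu_{\ell-1}$; a standard optional-stopping/Chernoff argument then shows that w.h.p.\ $|\widetilde{\cQ}_{\ell,q,\kappa}|<\tfrac12|\cQ_{\ell,q}|$. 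Finally, by your Step~1, $\cQ_{\ell,q+1}\subseteq\widetilde{\cQ}_{\ell,q,\kappa}$. Note two advantages of this route: the stopping time is defined via a set measurable at each iteration (avoiding your conditioning-on-the-future issue), and the deterministic cap on progress events is what allows a modest $\kappa$ to suffice. Your observation that survivors are never sampled is true but, absent the edge cap, yields no usable probability bound.
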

\begin{proof}
	Consider iteration $j$, and let $\widetilde{\cQ}_{\ell,q,j-1}\subseteq \cQ_{\ell,q,j-1}$ be all the good vertices $v$ such that $|B_{G[A_{\ell,q,j-1}]}^{+}(v,r_{\ell-1})|\ge\frac12\cdot\frac{m}{\mu_{\ell-1}}$.
	If at iteration $j$ we sample a center $x_j$ from $\widetilde{\cQ}_{\ell,q,j-1}$, we call this \EMPH{progress}. In the event of progress, we carve a ball around $x_j$. In particular, at least $\frac12\cdot\frac{m}{\mu_{\ell-1}}$ edges cease to be active.
	As initially there are at most $m$ edges, in the entire round $q$ there can be at most $2\mu_{\ell-1}$ progress events.
	
	Suppose that $|\widetilde{\cQ}_{\ell,q,j-1}|\ge\frac12\cdot|\cQ_{\ell,q}|$. As we sample a center u.a.r., the probability of a progress event is at least $\frac12$.
	As there are $\mu_{\ell-1}\cdot O(\log n)$ iterations, for a large enough constant inside the $O(\cdot)$, the probability that $|\widetilde{\cQ}_{\ell,q,\kappa}|$ remains larger than $\frac12\cdot|\cQ_{\ell,q}|$ is less than $n^{-10}$ (as this would imply that in $\mu_{\ell-1}\cdot O(\log n)$ attempts, each with success probability at least $\frac12$, we had fewer than $2\mu_{\ell-1}$ progress events, which is highly unlikely; note that also $O(\mu_{\ell-1})$ rounds would suffice).
	
	It follows that $|\widetilde{\cQ}_{\ell,q,\kappa}|<\frac12\cdot|\cQ_{\ell,q}|$.
	In the finalization step, for every good vertex $v\notin \widetilde{\cQ}_{\ell,q,\kappa}$, it holds that $\est(v,r_{\ell-1})\le(1+\eps)\cdot|B_{G[A_{\ell,q,\kappa}]}^{+}(v,r_{\ell-1})|\le(1+\eps)\cdot\frac{1}{2}\cdot\frac{m}{\mu_{\ell-1}}<(1-\eps)\cdot\frac{m}{\mu_{\ell-1}}$.
	In particular, only vertices in $\widetilde{\cQ}_{\ell,q,\kappa}$ remain good. The claim follows.
\end{proof}

As we have $O(\log n)$ rounds, at the end of the phase, w.h.p., no good vertices remain. Using \Cref{obs:StrictImpliesGood}, we conclude:

\begin{corollary}\label{cor:sizeReductionEfficinet}
	At the end of the $\ell$ phase, outgoing part, for every active vertex $v\in A_{\ell-1}$, either $|B_{G[A_{\ell}]}^{+}(v,r_{\ell})|>\frac{m}{\mu_{\ell}}$, or $|B_{G[A_{\ell-1}]}^{+}(v,r_{\ell-1})|<\frac{m}{\mu_{\ell-1}}$.
\end{corollary}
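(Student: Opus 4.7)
The plan is to prove the corollary by contrapositive, combining Observation~\ref{obs:StrictImpliesGood} with an iterated application of Claim~\ref{clm:GoodVerticesHalves} across the $O(\log n)$ rounds of the $\ell$th phase. Fix an active vertex $v\in A_{\ell-1}$ at the end of the outgoing part of phase $\ell$, and split into two cases according to whether $v$ was initially declared good, i.e., whether $v\in\cQ_{\ell,1}$ under the initial run of Cohen's algorithm.

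The first case is $v\notin\cQ_{\ell,1}$. Then by the initialization step we have $\est(v,r_\ell)>(1+\eps)\cdot\frac{m}{\mu_\ell}$, and Cohen's guarantee (Lemma~\ref{lem:cohen}) gives $\est(v,r_\ell)\le(1+\eps)\cdot|B^+_{G[A_\ell]}(v,r_\ell)|$, so $|B^+_{G[A_\ell]}(v,r_\ell)|>\frac{m}{\mu_\ell}$, which is the first alternative in the conclusion. The second case is $v\in\cQ_{\ell,1}$. Here I would apply Claim~\ref{clm:GoodVerticesHalves} iteratively: in each round $q$, with probability at least $1-n^{-10}$ we have $|\cQ_{\ell,q+1}|\le\frac12|\cQ_{\ell,q}|$, so after $O(\log n)$ rounds with a union bound, w.h.p.\ $\cQ_{\ell,q+1}=\emptyset$. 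In particular, since $v\in\cQ_{\ell,1}$ but $v$ is no longer good at the end of the phase, there must exist some round $q$ in which $v$ transitioned from good to bad during the finalization step. By the definition of that step, this happens exactly when $\est(v,r_{\ell-1})<(1-\eps)\cdot\frac{m}{\mu_{\ell-1}}$, which via Cohen's lower bound $\est(v,r_{\ell-1})\ge(1-\eps)\cdot|B^+_{G[A_{\ell,q+1}]}(v,r_{\ell-1})|$ yields $|B^+_{G[A_{\ell,q+1}]}(v,r_{\ell-1})|<\frac{m}{\mu_{\ell-1}}$.

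To close the argument I would invoke monotonicity: the active set only shrinks throughout the phase (vertices are only removed when balls are carved, never added), so $A_{\ell-1}\subseteq A_{\ell,q+1}$ and hence $B^+_{G[A_{\ell-1}]}(v,r_{\ell-1})\subseteq B^+_{G[A_{\ell,q+1}]}(v,r_{\ell-1})$. This gives $|B^+_{G[A_{\ell-1}]}(v,r_{\ell-1})|<\frac{m}{\mu_{\ell-1}}$, the second alternative in the conclusion. Finally, a union bound over the $O(\log n)$ rounds (and over all phases and both parts, which we may absorb into the constant governing w.h.p.) ensures the claim holds for all active $v$ simultaneously.

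The main obstacle will be the careful bookkeeping of the several indices in play ($A_\ell, A_{\ell,q}, A_{\ell,q,j}$ and the corresponding $\cQ$ sets), and in particular verifying that the sandwich between the strict thresholds $\frac{m}{\mu_\ell},\frac{m}{\mu_{\ell-1}}$ appearing in the conclusion and the soft thresholds $(1\pm\eps)\cdot\frac{m}{\mu_\bullet}$ appearing in the algorithm's good/bad classification is tight in the right direction, so that the $(1\pm\eps)$ slack of Cohen's algorithm is precisely what allows one to convert estimate-level inequalities into ball-cardinality inequalities. Everything else should go through as routine bookkeeping.
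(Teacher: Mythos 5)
Your proof is correct and takes essentially the same route as the paper. The paper's proof is just the one-line contrapositive of Observation~\ref{obs:StrictImpliesGood} combined with monotonicity of active sets and "w.h.p.\ no good vertices remain"; your case split on whether $v\in\cQ_{\ell,1}$ simply unpacks the Cohen $(1\pm\eps)$-sandwich that is already packaged inside that observation, and the monotonicity argument at the end is exactly the one the paper uses (implicitly) to pass from $A_{\ell,q+1}$ to $A_{\ell-1}$. The one small point you leave implicit, which would be worth stating, is that a vertex $v\in A_{\ell-1}$ was never carved during phase $\ell$, so the only way it can exit the good set is via a finalization step — this is what makes your Case~2 dichotomy exhaustive.
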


Using \Cref{cor:sizeReductionEfficinet}, we are now ready to reprove \Cref{clm:PartitionIterationRDelta}.
\begin{claim}[\Cref{clm:PartitionIterationRDelta} restated]\label{clm:PartitionIterationRDeltaRestated}
	For every SCC $C$ in $G\setminus S$ which is contained in $\cR$, it holds that $\diam(C,G)\le2r_L\le\frac{\Delta}{2}$.
\end{claim}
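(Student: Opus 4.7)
The plan is to mirror the original proof of \Cref{clm:PartitionIterationRDelta}, swapping its per-iteration inductive argument for a per-phase one that invokes \Cref{cor:sizeReductionEfficinet}. Exactly as in the original proof, I would define
\[
\cH = \left\{ v \in \cR \;\middle|\; |B_G^+(v, r_L)| > \tfrac{m}{2} \text{ and } |B_G^-(v, r_L)| > \tfrac{m}{2} \right\}.
\]
For any $u, v \in \cH$, the sets $B_G^+(u, r_L)$ and $B_G^-(v, r_L)$ together contain more than $m$ edges and hence share some vertex $w$, which gives $d_G(u, v) \le d_G(u, w) + d_G(w, v) \le 2 r_L$. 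Thus every pair in $\cH$ is within distance $2 r_L$.

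It remains to show that every $v \in \cR \setminus \cH$ forms a singleton SCC in $G \setminus S$, which together with \Cref{obs:rellSize} completes the claim. Assume without loss of generality that $|B_G^+(v, r_L)| \le m/2$ (the other case is handled symmetrically using the ingoing part of each phase). I will prove by downward induction on $\ell$ that $|B_{G[A_\ell]}^+(v, r_\ell)| \le m/\mu_\ell$ for every $\ell \in \{0, 1, \ldots, L\}$, where $A_\ell$ denotes the active set at the beginning of phase $\ell$ and $A_0 = \cR$. The base case $\ell = L$ is immediate since $A_L \subseteq V$ and $\mu_L = 2$. For the step from $\ell$ to $\ell-1$, note that $v \in \cR \subseteq A_{\ell-1}$, so \Cref{cor:sizeReductionEfficinet}, applied at the end of phase $\ell$, yields either $|B_{G[A_\ell]}^+(v, r_\ell)| > m/\mu_\ell$, contradicting the inductive hypothesis, or $|B_{G[A_{\ell-1}]}^+(v, r_{\ell-1})| < m/\mu_{\ell-1}$, which supplies the next case.

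Instantiating the induction at $\ell = 0$ gives $|B_{G[\cR]}^+(v, r_0)| < m/\mu_0 < 1$, because $\mu_0 = 2^{2^L} \ge m+1$. Hence every outgoing edge of $v$ in $G[\cR]$ has weight greater than $r_0$ and was added to $S$ in the final step of \Cref{alg:DigraphPartition}, so $v$ has no outgoing edges in $G[\cR] \setminus S$ and by \Cref{clm:PartitionIteration} its SCC is $\{v\}$. The main obstacle I anticipate is largely bookkeeping rather than substance: verifying that the soft-inequality slack of $5/4$ that appears elsewhere does not propagate here, because \Cref{cor:sizeReductionEfficinet} is already stated in the tight form $|B| > m/\mu_\ell$ versus $|B| < m/\mu_{\ell-1}$, and confirming that the entire chain is correct under the high-probability event that all invocations of Cohen's algorithm succeed, which is exactly the regime in which \Cref{thm:mainExpectedDistortion} is claimed.
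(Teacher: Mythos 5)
Your proposal is correct and follows essentially the same route as the paper: it defines the same set $\cH$, argues identically that $\cH$ has diameter at most $2r_L$, and then shows every $v\in\cR\setminus\cH$ is a singleton SCC by a downward induction on the phase index $\ell$, using \Cref{cor:sizeReductionEfficinet} at each step exactly as the paper does. Your write-up is a bit more explicit about the induction (the paper just says ``by induction'') and slightly more careful in one spot --- you correctly conclude that $v$ has no outgoing edges in $G[\cR]\setminus S$ and then invoke \Cref{clm:PartitionIteration}, whereas the paper's text loosely writes ``in $G\setminus S$, $v$ has no outgoing edges,'' which is not literally true (only outgoing edges within $\cR$ are ruled out); your concern about the $5/4$ soft-inequality slack is also rightly dismissed, since \Cref{cor:sizeReductionEfficinet} already absorbs it.
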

\begin{proof}
	Denote $\cH=\left\{ v\in\cR\mid|B_{G[A_{L}]}^{+}(v,r_{L})|>\frac{m}{2}=\frac{m}{\mu_{L}}\text{ and }|B_{G[A_{L}]}^{-}(v,r_{L})|>\frac{m}{2}=\frac{m}{\mu_{L}}\right\}$.
	We argue that for every vertex $v\in \cR\setminus\cH$, the SCC of $v$ in $G\setminus S$ is $\{v\}$.
	Recall that for every $u,v\in \cH$, $d_{G[A_L]}(u,v)\le2r_L$. Thus the SCCs of $G\setminus S$ that are fully
	contained in $\cR$ are either singletons or fully contained in $\cH$. In both cases, the diameter is at most $2r_L$.
	
	Consider $v\in\cR\setminus\cH$, and suppose that $|B_{G}^{+}(v,r_L)|\leq\frac{m}{2}$ (the case $|B_{G}^{-}(v,r_L)|\leq\frac{m}{2}$ is symmetric using the ingoing part).
	By \Cref{cor:sizeReductionEfficinet}, as $|B_{G}^{+}(v,r_L)|\leq\frac{m}{\mu_L}$, necessarily
	$|B_{G[A_{L-1}]}^{+}(v,r_{L-1})|<\frac{m}{\mu_{L-1}}$.
	Note that this inequality also holds after we finish the ingoing part (as the number of vertices in the ball can only decrease).
	By induction, it follows that $|B_{G[A_0=\cR]}^{+}(v,r_{0})|=|B_{G[A_{0}]}^{+}(v,r_{0})|\leq\frac{m}{\mu_{0}}<1$.
	That is, all the outgoing edges from $v$ in $G[\cR]$ have weight greater than $r_0$.
	As all these edges are added to the cut set $S$ in \cref{line:RemoveHeavyEdges}, we conclude that in $G\setminus S$, $v$ indeed has no outgoing edges, and thus its SCC is $\{v\}$, as required.
\end{proof}

\paragraph*{Running time.}
Consider \Cref{alg:DigraphPartition} as described above.
Ignoring the process for choosing centers, the total time required to carve balls is $\tilde{O}(m)$ (as each edge can be carved only once). 
Next, we analyze the time required to choose centers. 
Focus on the $\ell$'th phase. 
There are $O(\log n)$ rounds, where in each round we execute \Cref{lem:cohen} ($\tilde{O}(m)$ time). In addition, in each round there are $O(\mu_\ell\cdot\log n)$ iterations, each taking $\tilde{O}(\frac{m}{\mu_\ell})$ time.
Overall each phase takes $\tilde{O}(m)$ time. 
There are $L$ phases, and thus the entire \Cref{alg:DigraphPartition} takes $\tilde{O}(m)$ time.
The internal computations of \Cref{alg:LaminarPartition}, as well as the construction of the DAGs $D_1,D_2$, take $\tilde{O}(m)$ time (as described above).
The depth of the recursion of the algorithm is $O(\log n)$, and thus each edge participates in at most $O(\log n)$ calls to \Cref{alg:DigraphPartition}. We conclude that overall, the running time of the algorithm is $\tilde{O}(m)$.

\section{Conclusion and Open Problems}
The main contribution of this paper is a stochastic embedding into DAGs with expected distortion $\tilde{O}(\log n)$ and sparsity $\tilde{O}(m)$.
The main underlying tool is our (implicit) construction of directed LDDs (low-diameter decompositions).
The best-known loss factor $\beta$ for directed LDDs is $O(\log n\log\log n)$ \cite{BFHL25,Li25}, whereas there is a lower bound of $\Omega(\log n)$ \cite{Bar96}.
Thus, by constructing a stochastic embedding into DAGs based on directed LDDs, one cannot hope to improve the result in this paper by more than an $O(\log\log n)$ factor.
However, this does not rule out other approaches.
Several open questions follow:
\begin{enumerate}
	\item Settling the best possible expected distortion. At present, we cannot even rule out the possibility of a stochastic embedding into DAGs with sparsity $\tilde{O}(m)$ and constant expected distortion. 
	Such a result, if possible, will not follow the steps of this paper (and \cite{AHW25}) using directed LDDs.
	On the other hand, it is not clear how to generalize the classic expander-based lower bounds for stochastic tree embeddings to the richer framework of DAGs.
	Settling the best possible expected distortion is the main open problem.
	
	\item General trade-off for DAG covers. We constructed a DAG cover with $O(\log n)$ DAGs and distortion $\tilde{O}(\log n)$.
	In contrast, for trees we have a full trade-off: one can get a tree cover with $O(n^{\nicefrac{1}{k}}\log n)$ trees and distortion $2k-1$ \cite{TZ05}.
	Can we get a similar trade-off for DAG covers?
	
	\item Ramsey-type DAG cover. Given a graph $G$, one can find a subset $M$ of $n^{1-\frac1k}$ vertices that embeds into a single tree with (worst-case) distortion $O(k)$ \cite{MN07}. 
	It would be nice to find something similar for DAGs.
	For starters, given a digraph, is it always possible to find a subset of $\frac n2$ vertices that embed into $2$ DAGs $(D_1,D_2)$ with sparsity $\tilde{O}(m)$ and worst-case distortion (naturally defined) $\tilde{O}(\log n)$?
	
	This is a very interesting question. A general trade-off for the Ramsey-type setting will also likely imply a general trade-off for DAG covers. However, it is not clear how to approach this problem using LDDs.
	Indeed, Ramsey-type embeddings of graphs into trees are based on LDDs, where one ensures that a certain set $M$ of vertices is “padded” at all scales; that is, they are never too close to the boundary of a cluster. This condition is sufficient to ensure low distortion for this set of vertices.
	We can hope to obtain a similar phenomenon for our laminar topological order. However, it would not imply low worst-case distortion for these vertices. Indeed, suppose that both $s$ and $t$ are “padded” at all scales.
	It might be that some vertices on the shortest $s$–$t$ path $\pi$ are badly cut (while being in an SCC not containing $s,t$). This could inflict significant distortion on the pair $(s,t)$.
	
	\item Remove aspect-ratio dependence. In our sparsity, we have a polylogarithmic dependence on the aspect ratio. It is likely redundant, and it would be nice to remove it. 
	
	\item Improve directed LDDs. The current upper bound for the loss factor is $O(\log n\log\log n)$ \cite{BFHL25,Li25}, while the lower bound is $\Omega(\log n)$ \cite{Bar96}. It would be nice to close this gap.
	Note that a similar gap also appears in stochastic and Ramsey-type embeddings into spanning trees \cite{AN19,ACEFN20}, and it seems to come from the same source.
	All these papers with $O(\log n \log\log n)$ upper bounds are based on the Seymour trick \cite{Seymour95}, 
	while papers with $O(\log n)$ upper bounds \cite{FRT04,Bartal04,MN07} rely on more sophisticated ideas, 
	which are not clearly applicable to (previously, spanning trees, and now) directed LDDs.
	
	\item Steiner points. This question was previously raised by \cite{AHW25}.
	In this paper (as well as in \cite{AHW25}) we constructed DAGs over the vertex set $V$.
	It is interesting to see whether allowing embedding into DAGs with Steiner points, that is, vertices not in $V$, could be helpful, leading to better upper bounds.
\end{enumerate}

{\small 
	\bibliographystyle{alphaurlinit}
	\bibliography{LSObib}}

\end{document}